\documentclass[11pt]{article}
\usepackage[letterpaper,margin=0.92in]{geometry}
\usepackage[T1]{fontenc}
\usepackage{lmodern,microtype}
\usepackage{amsmath,amssymb,amsthm,mathtools,bm}
\usepackage{thmtools,thm-restate}
\usepackage{booktabs,tabularx,longtable,enumitem}
\usepackage[numbers,sort&compress]{natbib}
\usepackage{xcolor}
\usepackage[colorlinks=true,linkcolor=blue!48!black,citecolor=blue!48!black,
urlcolor=blue!48!black]{hyperref}
\usepackage[nameinlink,noabbrev]{cleveref}
\setlist[itemize]{leftmargin=1.6em,itemsep=2pt}
\setlist[enumerate]{leftmargin=1.7em,itemsep=2pt}
\newcommand{\bool}{\set{0,1}}

\newcommand{\Ac}{\mathcal{A}}

\newcommand{\Dc}{\mathcal{D}}

\newcommand{\Fc}{\mathcal{F}}

\newcommand{\Rc}{\mathcal{R}}

\newcommand{\Xc}{\mathcal{X}}
\newcommand{\Yc}{\mathcal{Y}}
\newcommand{\Zc}{\mathcal{Z}}

\newcommand{\wt}[1]{\widetilde{#1}}
\newcommand{\wh}[1]{\widehat{#1}}
\newcommand{\eps}{\varepsilon}
\newcommand{\Tr}{\text{Tr}}

\newcommand{\Uni}[1]{\mathcal{U}\left( #1 \right)}

\newcommand{\prob}[2][]{\operatorname*{\mathbb{P}}_{#1 }\brac*{#2}}
\newcommand{\E}[2][]{\operatorname*{\mathbb{E}}_{#1 }\brac*{#2}}
\newcommand{\Var}[2][]{\operatorname*{\normalfont{\text{Var}}}_{#1 }\paren*{#2}}

\newcommand{\bO}[1]{\operatorname*{O}\paren*{#1}}
\newcommand{\bOt}[1]{\operatorname*{\wt{O}}\paren*{#1}}
\newcommand{\lO}[1]{\operatorname*{o}\paren*{#1}}
\newcommand{\bOm}[2][]{\operatorname*{\Omega}_{#1}\paren*{#2}}

\newcommand{\bT}[2][]{\operatorname*{\Theta}_{#1}\paren*{#2}}

\DeclarePairedDelimiter{\floor}{\lfloor}{\rfloor}
\DeclarePairedDelimiter{\ceil}{\lceil}{\rceil}
\DeclarePairedDelimiter{\norm}{\lVert}{\rVert}
\DeclarePairedDelimiter{\abs}{|}{|}
\DeclarePairedDelimiter{\brac}{\lbrack}{\rbrack}
\DeclarePairedDelimiter{\set}{\lbrace}{\rbrace}
\DeclarePairedDelimiter{\paren}{\lparen}{\rparen}

\DeclarePairedDelimiter{\bra}{\langle}{|}
\DeclarePairedDelimiter{\ket}{|}{\rangle}
\DeclarePairedDelimiterX{\braket}[2]{\langle}{\rangle}{#1\delimsize\vert#2}

\newtheorem{theorem}{Theorem}[section]
\newtheorem{informaltheorem}{Theorem}

\crefname{informaltheorem}{Theorem}{Theorems}
\Crefname{informaltheorem}{Theorem}{Theorems}
\newtheorem{lemma}[theorem]{Lemma}
\newtheorem{corollary}[theorem]{Corollary}

\newtheorem{proposition}[theorem]{Proposition}
\newtheorem{observation}[theorem]{Observation}
\theoremstyle{definition}
\newtheorem{definition}[theorem]{Definition}

\theoremstyle{remark}
\newtheorem{remark}[theorem]{Remark}

\newcommand{\ARHM}{\mathsf{ARHM}}
\newcommand{\HM}{\mathsf{HM}}
\newcommand{\MHM}{\mathsf{MHM}}
\newcommand{\SeqHM}{\mathsf{SeqHM}}
\newcommand{\SeqMHM}{\mathsf{SeqMHM}}
\newcommand{\BHX}{\mathsf{BHX}}
\newcommand{\DHX}{\mathsf{DHX}}
\newcommand{\NGC}{\mathsf{NGC}}

\newcommand{\trnorm}[1]{\norm*{#1}_{\mathrm{tr}}}
\newcommand{\fs}{\operatorname{fs}}
\newcommand{\tfor}{\tau_{\mathrm{for}}}
\newcommand{\Squ}[1]{S^{\mathrm {Q}}_{#1}}
\newcommand{\Scl}[1]{S^{\mathrm {C}}_{#1}}

\newcommand{\polylog}{\operatorname{polylog}}

\newcommand{\Prb}{\mathbb{P}}
\newcommand{\wtO}{\widetilde {O}}
\newcommand{\ind}{\mathbf{1}}
\newcommand{\calT}{\mathcal{T}}
\newcommand{\calB}{\mathcal{B}}
\newcommand{\calG}{\mathcal{G}}

\title{Random Order in Quantum Streaming: Replenishment and Robust Lower Bounds}
\date{}
\author{Nadezhda Voronova\footnote{IRIF (CNRS \& Universit\'e Paris Cit\'e),
  \texttt{voronova@irif.fr}}}
\begin{document}
\maketitle

\begin{abstract}
\noindent
How can random order of the classical input data change the role of quantum memory in the streaming setting? In this work, we show that there is no single answer. The effect can depend on the nature of the data, the way information is distributed among items in the stream, and how later computation uses information from the earlier part of the input. In some cases, later stream items may contain enough information to build a new useful quantum state after an earlier one is no longer available; we call this \emph{replenishment}. In others, the relevant information requirement remains essentially unchanged even in the presence of quantum memory under random order.

To demonstrate the first behavior, we construct an artificial problem based on Hidden Matching. The stream contains multiple copies of the input coordinate data $x \in \{0, 1\}^N$ together with $r \leq \sqrt{N}$ independent online matching requests, and is presented in uniformly random order. We show the existence of a quantum streaming algorithm that uses $O(\log{N} \log{r+1})$ space by constructing approximate Hidden Matching phase states from arriving coordinate items, consuming them to answer requests, and replenishing them using later coordinate items. The state-preparation mechanism is closely related to \emph{quantum oracle sketching}, introduced by Zhao et al., which builds useful coherent quantum states from randomly arriving classical data.

We also show that, for polynomially growing $r$, the same problem requires exponentially more space both in the classical random-order streaming setting and in the quantum adversarial-order streaming setting. First, we confirm that the unconditional lower bound $\Omega(\sqrt{N})$ on Hidden Matching extends to our setting. Second, we show that even if all coordinate data instead appears before the requests, every quantum algorithm requires $\Omega(r)$ space unconditionally. This connects replenishment to the \emph{consumability} of quantum data studied by Gilboa, Jain, and McClean: a compact quantum encoding may answer one request but fail to remain useful for many requests about the same underlying data. For this, we strengthen their communication lower bounds for the Multiple Hidden Matching problem. For every fixed $\delta \in (0,1/2)$ and $r \leq N^{1/2-\delta}$, any single quantum encoding supporting $r$ independent matching requests requires $\Omega(r)$ qubits, even when all requests are revealed simultaneously and arbitrary joint decoding is allowed. For sequentially revealed requests, the same linear lower bound extends through $r=\Theta(\sqrt{N})$. Thus, in our streaming problem, later coordinate items do more than change when the requests are revealed: they replenish quantum states that would otherwise have to be replaced by a larger post-data representation. We also apply replenishment to triangle counting, adapting Kallaugher's quantum streaming algorithm to obtain a smaller expected-space bound when edges are repeated equally often in uniformly random order.

For the second behavior, we extend the robust Noisy Gap Cycle framework of Assadi and Sundaresan to quantum streaming. Their reduction is largely unchanged; the additional quantum ingredient is a lower bound for Block Hidden XOR, a problem implicit in the original paper, obtained by replacing the classical Fourier-analytic step with its standard matrix-valued analog. This gives an \(\Omega(n)\) one-pass quantum space lower bound for Noisy Gap Cycle in uniformly random edge order. Their reductions then give quantum random-order lower bounds for estimating connected components and maximum matching, planar maximum independent set, approximating minimum spanning tree weight, and generating random walks. These recover the same asymptotic lower bounds as those for classical random-order streaming algorithms, up to constant factors.

Together, these results give examples of different ways random order can interact with quantum memory: in our constructed problem it enables replenishment of a small consumable representation, while the Assadi-Sundaresan random-order lower-bound framework remains robust in the quantum setting.

\end{abstract}

\section{Introduction}
\label{sec:intro}

Quantum streaming algorithms process a classical input stream using a
small amount of quantum memory.  For some problems, this memory can be
exponentially smaller than what is possible classically
\cite{LeGall2006,KallaugherParekhVoronova2024,WangYang2026,WuYang2026}.
Following Le Gall's separation in online space
complexity~\cite{LeGall2006}, quantum space advantages have been shown
for triangle counting~\cite{Kallaugher2021}, Maximum Directed
Cut~\cite{KallaugherParekhVoronova2024}, and Max-$k$SAT
\cite{WangYang2026}.  Recent work by Feng et al.~\cite{FengEtAl2026}
studies entropy estimation, Zhao et al.~\cite{ZhaoEtAl2026} develop
quantum oracle sketching for processing classical data, and Wu and
Yang~\cite{WuYang2026} study decoded quantum interferometry in streams.
This activity also extends to experiments: Niroula et al.
\cite{NiroulaEtAl2025} implement a quantum streaming algorithm on
trapped-ion qubits.

A quantum sketch can retain useful information in little space, but
querying it may change or destroy it.  Kallaugher, Parekh, and
Voronova~\cite{KallaugherParekhVoronovaPair} describe a common pair-sampling
sketch underlying several quantum streaming algorithms, whose queries can remove
elements or consume the sketch.  This raises a natural question: how
much memory is needed when the algorithm must repeatedly answer
questions about the same input?

Consider the Hidden Matching problem of Bar-Yossef, Jayram, and
Kerenidis~\cite{BarYossefJayramKerenidis2008}.  Alice has a Boolean string
$x\in\bool^N$, and Bob has a perfect matching on $[N]$.  Bob's goal
is to output an edge of his matching together with the parity of its
two endpoints in $x$.  Their quantum protocol lets Alice send a phase
encoding of $x$ using $\bO{\log N}$ qubits, which Bob measures to obtain
a correct answer; they also prove a classical one-way lower bound of
$\bOm{\sqrt N}$ bits.  But this measurement consumes the phase state:
to use the same protocol for another matching, Bob would need another
copy.  Gilboa, Jain, and McClean~\cite{GilboaJainMcClean2025} formalize
this issue through \emph{consumable quantum data}, asking how the
communication needed for several requests grows with their number.
They prove an $\bOm{\sqrt m}$ quantum communication lower bound for
$m$ Hidden Matching requests when $m\le N/2$.  We prove stronger lower
bounds for this problem, showing that the cost persists even if Alice
uses an arbitrary quantum encoding.

In a stream, there is another possibility.  If data describing $x$
continues to arrive, an algorithm may use it to prepare new states
after earlier ones have been measured.  We call this \emph{replenishment}.
An occurrence of a coordinate already seen need not tell us anything
new about $x$, but it can still help build a new quantum state.
A related way to use classical input is quantum oracle sketching,
introduced by Zhao et al.~\cite{ZhaoEtAl2026}, which constructs coherent
quantum operations from random classical samples.  Our Hidden Matching
algorithm uses their phase-accumulation method repeatedly to prepare
replacement states as requests arrive.

The order of these occurrences matters.  If all data arrives before
the requests, the algorithm must store enough information to answer
every request from its memory alone.  If updates arrive between
requests, it may be possible to use the same small memory repeatedly.
Stream order also affects classical space complexity: Guha and
McGregor~\cite{GuhaMcGregor2009}, for example, study the advantages of
random order for quantile estimation.  Here we ask how order affects
the ability to rebuild a quantum state between uses.

Our main result shows that these two situations can have exponentially
different quantum space complexity.  We introduce Atomized Refreshable
Hidden Matching (ARHM), whose stream consists of repeated coordinates
of a fixed string and matching requests that must be answered immediately.
We give a small-space quantum algorithm when the stream is in uniformly
random order, and prove a much larger quantum lower bound when all
updates precede the requests.  We also prove a classical lower bound
under the same random-order promise.  Thus ARHM gives both an exponential
quantum advantage in random-order streaming and an exponential advantage
from interleaving updates with requests within the quantum model.
The random-order algorithm works for every fixed string and request
list, with all answers required to be correct together.

We then apply replenishment to triangle counting.  Kallaugher's
algorithm~\cite{Kallaugher2021} combines a quantum sketch with a
classical sampling estimator.  With repeated edges in random order,
we can rebuild the sketch and resample the classical component, using
the same memory to obtain observations over successive blocks of the
stream.  This improves the space bound under a stronger input promise.
Here we give an algorithmic use of replenishment; the arbitrary-encoding
consumability lower bounds concern Hidden Matching.

Finally, we prove a quantum lower bound for Noisy Gap Cycle (NGC) in
random-order streams where every edge appears once.  Robust communication
complexity, studied by Chakrabarti, Cormode, and
McGregor~\cite{ChakrabartiCormodeMcGregor2016}, provides a way to prove
streaming lower bounds that survive random input order.  Assadi and
Sundaresan~\cite{AssadiSundaresan2023} use this approach to prove
classical lower bounds for NGC and several other graph problems.
We extend their NGC argument to quantum messages, obtaining linear
space lower bounds for NGC and consequences for several graph problems.
This gives a complementary result on the limitations of quantum memory
in random order.  The effect of repeated edges on NGC remains open.

\subsection{Replenishment and a three-way space separation}
\label{sec:intro-arhm}

Can the timing of access to classical data change how much quantum
memory is needed?  To isolate this question, we introduce Atomized
Refreshable Hidden Matching (ARHM).  This is an artificial problem,
designed specifically to separate the three models below.  Its purpose
is to demonstrate an effect of replenishment for which both the upper
and lower bounds can be proved.

\begin{informaltheorem}[Three-way space separation; informal version of \cref{thm:ext-param-threeway}]
\label{thm:intro-threeway-informal}
There is a one-pass streaming problem with online outputs that can be
solved using polylogarithmic quantum space in uniformly random order,
but unconditionally requires polynomial space both for classical
random-order algorithms and for quantum algorithms in adversarial order.
All three models receive the same multiset of input items and must meet
the same output deadlines, with joint success probability at least $2/3$.
\end{informaltheorem}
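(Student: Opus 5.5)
The plan is to instantiate ARHM with parameters $N$, a multiplicity $k$ (each coordinate item $(i,x_i)$ appears $k$ times), and $r=N^{c}$ requests for a fixed $c\in(0,1/2)$. We then prove the three bounds separately, and the theorem follows by comparing them.

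\textbf{Quantum random-order upper bound.} Uniformly random order spreads the $kN$ coordinate items roughly evenly between consecutive requests. With high probability, every gap between requests, and also the prefix before the first request, contains $\Theta(kN/r)$ coordinate items. Within such a window each coordinate appears about $\mathrm{Poisson}(k/r)$ times, independently to good approximation. Following the phase-accumulation idea of quantum oracle sketching, we hold an $O(\log N)$-qubit register prepared in the uniform superposition $\frac{1}{\sqrt N}\sum_j\ket{j}$. On each arriving item $(i,x_i)$ we apply a controlled phase $e^{i\theta(-1)^{x_i}}$ on $\ket{i}$, with $\theta$ chosen so that the accumulated phase concentrates near $\pi$ times the number of occurrences. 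Choosing $k$ large compared to $r$ by a polylogarithmic factor, we bound the fidelity of the result with the Hidden Matching phase state $\frac{1}{\sqrt N}\sum_j(-1)^{x_j}\ket{j}$ by $1-O(1/r)$.

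When a request arrives, we run the Bar-Yossef--Jayram--Kerenidis measurement on the current state, output the edge and parity, and reinitialize the register. To make all $r$ answers correct jointly with probability $2/3$, we amplify each answer: we maintain $O(\log r)$ states in parallel, each built from its own random subset of the items (for instance, routing every item to one register), and take the majority vote per edge. This is what gives the $O(\log N\log(r+1))$ bound. The error is then controlled by a union bound over requests and over the concentration events.

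\textbf{Classical random-order lower bound.} We reduce from one-way Hidden Matching, using its $\Omega(\sqrt N)$ bound. Alice holds $x$ and Bob holds a matching. They embed a single instance at a random request position: Alice simulates the stream prefix, which is determined by $x$ together with public randomness for the order and for dummy requests, and sends the memory state. Bob then continues the stream and reads off the answer to the embedded request. The catch is that coordinate items after the embedded request depend on $x$, which Bob does not know. We handle this by placing the embedded request at a position where, with constant probability, Bob's remaining items can be generated from information Alice ships alongside the memory at small cost. The cleaner alternative is to embed the request as the first request and argue that the random-order distribution conditioned on that is still simulable. This simulation step is the delicate part.

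\textbf{Quantum adversarial-order lower bound and the main obstacle.} Place all coordinate items before all requests. The memory after the data phase is then a single quantum encoding of $x$ that must support $r$ independent matching requests, answered sequentially. The paper's strengthened Multiple Hidden Matching bound gives $\Omega(r)$ qubits for sequentially revealed requests through $r=\Theta(\sqrt N)$, hence $\Omega(N^{c})$. This consumability lower bound is the main obstacle, since the paper announces it as its own strengthening of Gilboa--Jain--McClean. If proved directly, I would attempt it by a hybrid/information argument: each correctly answered random matching parity reveals $\Omega(1)$ fresh bits about $x$ conditioned on previous answers while $r\le N^{1/2-\delta}$ (by a birthday-type argument that the revealed edges are nearly acyclic), and Holevo-type bounds with chain rule then yield $\Omega(r)$ qubits.

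Comparing the polylogarithmic upper bound with the two polynomial lower bounds completes the separation.
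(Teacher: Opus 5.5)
The quantum random-order upper bound does not go through as you describe it.

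First, the claim that with high probability every gap between consecutive requests contains $\Theta(kN/r)$ coordinate items is false. For $r$ requests interleaved uniformly among $kN$ updates, the smallest gap is typically of order $kN/r^2$. So some request will meet a register rebuilt from only about a $1/r$ fraction of the intended data. Moreover, the window length is unknown while phases are being applied, so a fixed per-occurrence angle $\theta$ cannot be calibrated to it.

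Second, majority voting over $O(\log r)$ parallel registers is not available. The Hidden Matching measurement returns a uniformly random edge of the matching, each with probability $2/N$ regardless of phase errors. Independent registers therefore almost surely report different edges, and there is nothing to vote on.

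Third, without amplification, taking $k$ only polylogarithmically larger than $r$ leaves each register with wrong-parity probability of order $r/k$ for typical $x$. That is $1/\polylog$, not $O(1/r)$, and it cannot be union-bounded over $r=N^c$ requests.

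The paper avoids all three problems. It uses $\Theta(r^2)$ copies per coordinate and builds registers over fixed-length batches of $H=\mu N$ updates, with $\mu=Cr$ and angle $\pi/\mu$. The hypergeometric variance then gives expected error at most $\pi^2/\mu$ per register, and a plain union bound over all $r$ requests costs $\pi^2/C$. The $\log(r+1)$ factor pays for bursts of requests, not for amplification. There is a READY bank and a BUILD bank, each with $O(\log(r+1))$ registers. The beta-binomial tail of \cref{lem:queries-one-rebuild} shows that no completed bank is asked for more requests than it holds. A separate $1/A$ bound handles a request arriving before the first bank is complete.

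For the classical random-order lower bound, you stop exactly at the step that needs an idea. The idea is that outputs are online, so Bob never has to continue the stream past the embedded request. Let all $r$ requests name the same matching. Then the prefix before the first request contains only coordinate updates. Alice simulates it from $x$ and public randomness: the update order, the tag order, and the interleaving, which together produce exactly the uniform order. Bob inserts the first request and reads off the answer the algorithm must emit immediately. Joint success implies success on that first answer, so no later $x$-dependent items are ever needed.

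If, as in ARHM, requests name factors of a public one-factorization, you also need the $\Omega(\sqrt N)$ classical bound for a uniformly random factor rather than a uniformly random perfect matching. The paper supplies this in \cref{lem:classical-factorized-hm}, via a level-two hypercontractive argument using that each pair lies in exactly one factor.

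Your update-first quantum bound ships the post-update memory to Bob and reduces to sequential consumability. That is the same as the paper's argument and is fine.
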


Here space is measured as a function of the stream length.  To the best
of our knowledge, this is the first unconditional exponential space
separation between one-pass quantum random-order and quantum
adversarial-order streaming, allowing online outputs.
The adversarial-order lower bound already holds when all updates
precede all requests.

ARHM presents a fixed Boolean string through repeated coordinate
updates, interspersed with Hidden Matching requests.  Each request asks
for an edge of a specified matching and the parity of its endpoints,
and must be answered before the next item arrives.  Random order lets
the algorithm rebuild small quantum states between requests.  When all
updates arrive first, the algorithm must retain enough information to
answer the requests after access to the string has ended.

The two lower bounds explain different aspects of this result.
The classical random-order algorithm receives the same repeated data
as the quantum algorithm, so repetitions alone do not account for the
small space bound.  The quantum adversarial-order algorithm also receives
all the repetitions, so the amount of data supplied does not account
for the difference between the two quantum models.  What changes is
whether that data remains available while the algorithm answers requests.
The construction and its parameters are given in \cref{sec:replenishment}.

\paragraph{Lower bounds for repeated requests.}
A measurement destroying a particular sketch does not by itself show
that the data is consumable: a different encoding or decoding procedure
might permit repeated use.  Gilboa, Jain, and
McClean~\cite{GilboaJainMcClean2025} address this distinction through
one-way communication.  Alice holds one input, while Bob wants to answer
several questions about it.  They ask whether the message must grow with
the number of questions, even though Alice's underlying data stays the
same.  Their motivation includes the possibility of distributing useful
data while limiting its reuse, in contrast to classical data that can
be copied.

For streaming, the same question describes what happens once access to
the source ends.  The retained memory is Alice's message, and future
requests must be answered from that message alone.  Gilboa, Jain, and
McClean prove a square-root lower bound in the number of Hidden Matching
requests and conjecture linear growth for a suitable range.  We establish
linear growth in the following regimes.

\begin{informaltheorem}[Consumability; informal versions of \cref{thm:mhm-linear-main,cor:app-seqmhm-joint}]
\label{thm:intro-consumability-informal}
Without prior entanglement, answering multiple Hidden Matching requests
from one quantum message requires communication linear in the number of
requests.  For simultaneous requests, this holds when their number is
polynomially below the square root of the string length, even with
arbitrary joint decoding.  For sequential requests, it holds through
the square-root scale.  In both cases, all answers must be correct
with probability at least $2/3$.
\end{informaltheorem}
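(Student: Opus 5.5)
The plan is to reduce both statements to a lower bound on the quantum information a single message $\rho_x$ must carry about a uniformly random $x\in\bool^N$. Fix a protocol in which Alice sends an $s$-qubit state $\rho_x$ and Bob answers $r$ independent uniformly random matchings $M_1,\dots,M_r$. The goal is to show $s\ge \bOm{r}$.

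\textbf{Simultaneous requests.} I would proceed by a direct-product-style information argument.
\begin{enumerate}
\item From the joint answers, Bob learns $r$ parities $x_{u_i}\oplus x_{v_i}$ along edges $e_i\in M_i$, with joint success at least $2/3$.
\item Because the matchings are independent and random and $r\le N^{1/2-\delta}$, the edges Bob outputs are, with high probability, vertex-disjoint and form a forest. Any edge that closes a cycle must lie within the at most $2r$ vertices already touched. For a random matching this happens with probability about $2r/N$ per request, and the adaptive choice among $N/2$ edges of each $M_i$ needs care. This is where the $N^{-\delta}$ slack enters, via a union bound that also covers Bob's freedom in choosing edges.
\item Linearly independent parities on a forest are jointly uniform bits. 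So Bob produces $r$ linearly independent parity bits of $x$, together with the edges, all correct with probability $2/3$.
\item To turn this into a qubit bound, I would use a Nayak-style random access or guessing bound in its "guessing $r$ bits with fixed-in-advance structure" form. The subtlety is that the edges are chosen by Bob and depend on the measurement outcome. I would first try to condition on the matchings and count, using the fact that the success probability of guessing $k$ uniform bits from an $s$-qubit state is at most $2^{s-k}$ (no entanglement). Bob's output has the form (edges, bits). There are at most $\prod_i (N/2)$ edge choices, which is too many to pay for directly.
\item So instead I would show that for random $M_i$ the edges' parities remain nearly uniform conditioned on the edges and the message, through a hybrid over requests. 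In this hybrid, each step charges the conditional min-entropy of $x$ given $\rho_x$ and the previous outcomes, using the chain rule $H_{\min}(X|B)\ge N-s$ for the initial state. One then argues that each successful request must decrease an appropriate entropy by $\Omega(1)$. This is a Hidden Matching analog of the matrix-valued hypercontractivity step.
\end{enumerate}

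\textbf{Sequential requests.} Here Bob measures after each request, which gives classical outcomes. I would run the same potential argument round by round: the post-measurement state is still at most $s$ qubits of quantum side information plus classical transcripts. Each correct answer to a fresh random matching costs $\Omega(1)$ bits of entropy as long as the revealed vertex set, of size $O(r)$, is small compared with $\sqrt N$ (birthday bound). This extends the range to $r=\Theta(\sqrt N)$. Finally, I would convert the average entropy decrement into joint success $2/3$ via a standard Markov/union argument.

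\textbf{Main obstacle.} The hardest step is the per-request entropy decrement under joint, arbitrary decoding with adaptively chosen edges. I would first attack it with the matrix-valued hypercontractive (KKLW-type) inequality applied to Fourier coefficients restricted to the matching's edges, bounding the total bias. If the union over edges loses too much, I would fall back on trading a factor of $N^{-\delta}$ per request.
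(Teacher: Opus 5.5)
Your outline names the right objects, but its two load-bearing steps do not go through as proposed. Below, $q$ is the message length and $m$ the number of requests (your $s$ and $r$).

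\textbf{The information step (your steps 4--5).} This is the central gap. Guessing the reported parities forces a union over Bob's edge choices, as you note. The replacement you suggest is the wrong mechanism: you propose that parities stay nearly uniform given the message and edges, with a per-request entropy decrement extracted from matrix hypercontractivity. Hypercontractivity controls biases of \emph{fixed} parity functions of the encoding. Here the decoder chooses which parity to report, and a single request is answered exactly from $O(\log N)$ qubits. So the cost is not a Fourier obstruction; it comes from counting independent parity constraints.

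The missing idea is that adaptivity costs nothing if you count \emph{after} fixing the whole classical transcript $O$, edges included. Data processing and the dimension bound give $I(X:O\mid M)\le q$, however large $O$ is. Let $S$ indicate joint success and $K$ the size of a spanning forest of the selected edges. On $S=1$, the parities on that forest leave at most $2^{N-K}$ compatible strings. Hence $H(X\mid O,M)\le N-\mathbb{E}[SK]+h_2(p)$, and so $q\ge\mathbb{E}[SK]-h_2(p)$.

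Your bound $H_{\min}(X\mid B)\ge N-q$ would also suffice if you applied it once to the whole string rather than per request. Guessing uniformly among the compatible strings gives $\mathbb{E}[S\,2^{K}]\le 2^{q}$.

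For sequential requests the same global bound applies, using the error vector $Z_i=b_i\oplus X_{u_i}\oplus X_{v_i}$. One gets $q\ge\mathbb{E}[L_m]-H(Z)$ with $H(Z)\le 1+(1-p)m$. So you need no round-by-round entropy accounting against quantum side information, and no final Markov/union conversion.

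\textbf{The combinatorial step (your step 2).} As stated it is false. Bob can always make his edges share vertices: take $e_2\in M_2$ through an endpoint of $e_1$. With probability bounded away from zero he can even close a triangle from $M_1,M_2,M_3$, since the expected number of vertices $a$ with $\{M_2(M_1(a)),a\}\in M_3$ is about $1$.

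What you need is that \emph{every} transversal has forest size $\Omega(m)$. A transversal of forest size at most $t$ is supported on at most $2t$ vertices, so you union-bound over vertex sets rather than over requests. For independent balanced matchings the failure probability is at most $\varepsilon_t:=\binom{N}{2t}\bigl(\binom{2t}{2}/(N-1)\bigr)^m$. For $t=\lfloor\delta m/16\rfloor$ and $m\le N^{1/2-\delta}$ this is at most $\exp(-\delta m\ln N)$, giving $q\ge t\,(p-\varepsilon_t)-h_2(p)$.

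In the sequential case, round $i$ fails to grow the forest only if the fresh $M_i$ contains a pair inside a current \emph{component}. There are at most $\binom{i}{2}$ such pairs, and $M_i$ is independent of everything before round $i$. So the failure probability is at most $\binom{i}{2}/(N-1)$, whatever Bob's residual state, and $\mathbb{E}[L_m]\ge m-m(m^2-1)/(6(N-1))$.

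Your birthday heuristic instead counts pairs inside the whole touched vertex set. That loses about $\tfrac{2}{3}m^3/N$, which at $m=\sqrt N$ and $p=2/3$ cancels the bound entirely. The within-component count is what gives $q\ge m/2-1$ up to $m=\lfloor\sqrt N\rfloor$.
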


The simultaneous result allows Bob to see all his requests before
decoding; the sequential result requires him to answer them as they
arrive.  Both allow Alice to choose an arbitrary encoding.  Thus the
cost of repeated use is a property of the task, rather than of the
usual Hidden Matching protocol.  These bounds yield the quantum
update-first lower bound for ARHM.  They also apply when an interleaved
prefix is followed by a sufficiently long sequence of fresh independent
requests with no further updates (\cref{prop:last-replenishment-point}).
The consumability results are proved in \cref{sec:consumability}.

\subsection{Applying replenishment to triangle counting}
\label{sec:intro-triangles}

ARHM is designed to isolate replenishment.  We next ask whether the
same idea can improve an algorithm for an established graph problem.
Triangle counting provides such an example.  Kallaugher's
algorithm~\cite{Kallaugher2021}, the first polynomial quantum space
advantage for a natural streaming problem, combines a quantum sketch
with classical sampling.  Obtaining enough observations for an accurate
estimate requires many copies of the estimator to coexist.

Repeated input offers another way to obtain these observations.  We
consider a stream in which every edge of the underlying simple graph
appears equally often, with all occurrences uniformly permuted.  Later
occurrences let us rebuild the quantum sketch and refresh the classical
sample, obtaining further observations using the same memory.  The
quantity being estimated is still the number of triangles in the
underlying graph; repetitions change access to the graph, not the
answer.

\begin{informaltheorem}[Triangle replenishment; informal version of \cref{thm:main-parameterized,cor:optimized}]
\label{thm:intro-triangle-informal}
With sufficiently many equal repetitions in uniformly random order,
and supplied estimates of the graph parameters, triangle counting
admits a one-pass quantum algorithm with constant relative error.
In the regimes specified in \cref{cor:optimized}, its expected space
bound improves Kallaugher's single-arrival bound by a factor proportional
to the number of repetitions, up to logarithmic factors and an additive
polylogarithmic term.
\end{informaltheorem}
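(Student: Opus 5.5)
We adapt Kallaugher's algorithm by splitting the uniformly random stream into blocks and running one fresh copy of the estimator per block. Each block plays the role of a single-arrival stream, and the copies are run one after another rather than side by side. Let every edge of the underlying simple graph $G$ appear $k$ times, and let $L=km$ be the stream length. We cut the stream into $B$ consecutive blocks of equal length $L/B$.

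The first step is to show that each block behaves like a random subsample of $G$. In a uniformly random permutation of the multiset, the set of distinct edges appearing in a fixed block is close to a random edge subset of $G$. Each edge appears with probability $q=1-(1-1/B)^{k}\approx \min(1,k/B)$, and the dependence between edges is only the mild negative dependence coming from sampling without replacement. Within a block, the arrival order of these edges is uniformly random.

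We will therefore choose $B\approx k$, so that $q$ is a constant. Then each block contains a constant fraction of every triangle's edge-occurrence pattern. A triangle $t$ is fully visible in a block with probability $p_t\approx q^3$, and this probability can be computed exactly from $k$ and $B$. Kallaugher's estimator, applied to the deduplicated edges of a block, then yields an unbiased-up-to-rescaling observation whose mean is $\Theta(T)$ after dividing by the known factor. Duplicates inside a block are filtered by keeping, classically, only a hash-based first-occurrence test restricted to the currently sampled edges. The sampled edges are the only ones the sketch touches, so this filtering costs no extra space beyond what the sketch already uses.

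The second step is to bound the variance across blocks. The estimate is $\hat T=\frac{1}{B}\sum_b X_b/\mathrm{scale}$, where $X_b$ is the block-$b$ observation, built with fresh randomness (hashes, quantum sketch) in each block. The per-block variance is at most Kallaugher's single-copy variance bound $V$, expressed in the supplied parameter estimates. Covariances between blocks arise only from the shared random permutation. We bound them by conditioning on the permutation: given the partition into blocks, the $X_b$ are independent. Hence
\[
\mathrm{Var}(\hat T)\le \frac{V}{B}+\mathrm{Var}\Bigl(\frac{1}{B}\sum_b \mathbb{E}[X_b\mid \pi]\Bigr).
\]
The second term is the variance of the number of triangles that are fully visible in a block, averaged over blocks. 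We control it by a second-moment computation over pairs of triangles sharing edges, using the negative-association or hypergeometric tail. This is the step I expect to be the main obstacle: pairs of triangles sharing an edge produce correlations of order (number of edge-sharing triangle pairs)$/T^2$. These must be shown to be dominated by the same graph quantities, such as the maximum number of triangles per edge, that already appear in Kallaugher's bound, and this is where a sufficiently large $k$ is required. Chebyshev's inequality then gives constant relative error.

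For space, Kallaugher's method needs about $s/\text{(number of copies)}$ per copy. Since $B\approx k$ sequential blocks supply the observations that previously required $B$ parallel copies, we run $s/k$ parallel copies within each block. This gives expected space $\tilde O(s/k)$ plus an additive polylogarithmic term for block counters, hashing and the running average. That is the improvement claimed in \cref{thm:main-parameterized}, and optimizing $B$ against the regime constraints yields \cref{cor:optimized}.
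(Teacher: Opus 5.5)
There is a genuine gap in your first step, namely the claim that duplicates inside a block can be filtered by a hash-based first-occurrence test on the currently sampled edges, at no extra cost.

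\textbf{Why deduplication is not free.} That claim is true for Kallaugher's classical sample, but not for his quantum sketch. The quantum register holds a uniform superposition over \emph{all} directed edges that have arrived. So every arriving edge must be inserted, and every activated arriving edge $uv$ acts as a closing-edge query, whether or not it is sampled. Deciding whether an arbitrary arrival is a first occurrence in the block requires remembering up to $m$ edges, i.e.\ $\Omega(m)$ space, which destroys the bound.

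\textbf{What goes wrong without it.} Your blocks with $B$ about the repetition count $R$ (your $k$) are then not single-arrival streams of a random subgraph. Each edge occurs about $\mathrm{Poisson}(1)$ times per block. An incremental insertion either changes the amplitude of a repeated edge or must somehow be made idempotent. With building and querying interleaved in the same block, a repeated closing edge re-queries. A triangle can also be detected through several orientations depending on arrival order; for example, under the order $wu,wv,uv,wu$, the second $wu$ finds $\overrightarrow{vw},\overrightarrow{vu}$. The block mean is then an order- and multiplicity-dependent weighted count, not $T$ times a known visibility factor. So the ``unbiased up to rescaling'' step fails, and the permutation-variance term you flag is not the real crux.

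\textbf{How the paper supplies the missing ingredients.}
\begin{enumerate}
\item \emph{Duplicate-insensitive state preparation.} The BUILD register starts uniform over $2Lm$ position-labelled dummy states. The $t$-th arrival $uv$ swaps its own two dummies into $\overrightarrow{uv},\overrightarrow{vu}$. A repeat swaps equal amplitudes and leaves the state unchanged, so no dictionary is needed (\cref{lem:build-state}).
\item \emph{Full coverage per block.} This construction needs every edge to appear in each block. That is why blocks have length $Lm$ with $L=\Theta(\log(m(R+1)/\delta))$, rather than your constant-coverage blocks.
\item \emph{Separate build and query blocks.} The state is built in one block and queried in the next, so every wedge is present at every query. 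Per-occurrence activation then makes the low-plus-high target linear in the multiplicities: $T_j^{<}+T_j^{>}=\frac1{3L}\sum_e C_{j,e}\tau_e$.
\item \emph{Exact averaging.} The identity $\sum_j C_{j,e}+C_{\star,e}=R$ makes the block-averaged target deterministic and within $8LT/R$ of $T$ for every good stream. After fixing the stream, fresh lane randomness gives independent block samples, so no variance over the permutation ever enters.
\item \emph{Bias from repeated closing edges.} The classical counters double-count earlier copies of a repeated closing edge. This bias is below $2T/k$, where $k$ is the paper's activation parameter, which is why $k\ge C/\varepsilon$ is required.
\end{enumerate}
The price is a factor $L$ in the number of usable query blocks. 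This is the logarithmic loss in the stated improvement.
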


For example, consider graphs with a triangle count proportional to the
number of edges and with each edge belonging to at most constantly many
triangles.  In this regime, Kallaugher's bound is polynomial in the
number of edges, whereas our bound becomes polylogarithmic with a
suitable sublinear polynomial number of repetitions per edge.  A linear
number of repetitions per edge is more than sufficient.  Thus repeated
access can give an exponential improvement over the space bound of
this known algorithm, while keeping the total stream length polynomial.

This is a comparison of algorithmic upper bounds under different input
promises.  We do not prove a triangle-counting lower bound, or a
quantum--classical separation for the repeated-edge model.  The result
shows how replenishment improves an existing quantum algorithm for a
natural problem; the arbitrary-encoding consumability guarantees are
established for Hidden Matching.  We use expected space complexity,
with a measure comparable to Kallaugher's~\cite{Kallaugher2021}.
The precise convention, supplied-parameter assumptions, and applicable
regimes are given in \cref{sec:triangles}.

\subsection{Quantum lower bounds for Noisy Gap Cycle}
\label{sec:intro-ngc}

Random order and quantum memory both offer ways to reduce space, but
together they need not make a graph problem easy.  To study this
limitation, we consider Noisy Gap Cycle (NGC), a problem used by Assadi
and Sundaresan~\cite{AssadiSundaresan2023} to obtain random-order graph
streaming lower bounds.

The task is to distinguish a graph containing many disjoint cycles
from one containing half as many cycles of twice the length.  Both
cases also contain disjoint paths.  Although these graphs have simple
components, distinguishing them requires information about how edges
connect over long distances.  Short portions of a cycle do not reveal
its length.  This makes the problem useful for studying the difficulty
of recovering global graph structure from a stream.

NGC is also a source of lower bounds for other graph problems.
Assadi and Sundaresan show how its classical hardness implies limitations
on estimating connected components and other graph quantities in random
order~\cite{AssadiSundaresan2023}.  We prove that the central obstruction
persists with quantum memory.

\begin{informaltheorem}[Quantum graph lower bound; informal version of \cref{thm:ngc-robust,thm:ngc-rand}]
\label{thm:intro-ngc-informal}
For admissible cycle lengths at least $\bOm{\log n}$ where $n$ is the number of vertices, Noisy Gap Cycle
requires linear one-pass quantum space in random edge order.
The corresponding one-way quantum communication problem under a uniformly
random edge partition also requires linear communication, without prior
entanglement.  Both bounds hold with error at most $1/3$.
\end{informaltheorem}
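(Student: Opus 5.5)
We would follow the Assadi--Sundaresan route and change only the information-theoretic core. The first step is a quantum lower bound for Block Hidden XOR ($\BHX$). In this problem Alice holds a string $x$, split into blocks. Bob holds a hidden collection of index sets together with side bits, and he must decide whether the XORs of $x$ over his sets agree with the side bits or are uniformly random. Classically this is proved by a Fourier/hypercontractivity argument: a short message $m$ induces a posterior on $x$ whose low-degree Fourier mass, at levels $\le k$, is bounded by roughly $(c\,|m|/k)^k$. For quantum messages we would use the matrix-valued analogue, namely the Ben-Aroya--Regev--de Wolf matrix hypercontractive inequality. Applied to the encoding map $x\mapsto\rho_x$ on $s$ qubits, it gives
\[
\sum_{|S|=k}\trnorm{\widehat{\rho}(S)}^2 \lesssim (c\,s/k)^k .
\]
Following the Gavinsky et al.\ Boolean Hidden Matching analysis, this controls the trace distance between Bob's reduced states in the YES and NO cases. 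After averaging over Bob's hidden sets, that distance is $o(1)$ unless $s=\Omega(n)$ for the relevant block sizes. This step also needs the absence of shared entanglement, since with entanglement the encoding is no longer a fixed state $\rho_x$.

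Next we would show that this yields the robust communication bound for $\NGC$ under a uniformly random edge partition. Assadi and Sundaresan embed $\BHX$ instances into $\NGC$: the cycles are built from path segments whose XOR pattern decides whether they close into length-$k$ cycles or length-$2k$ cycles. Their embedding uses public randomness and is local. Alice's part of the graph depends only on $x$ and shared randomness, and the random partition is simulated by randomly planting the edges. So a quantum one-way protocol for $\NGC$ gives one for $\BHX$ with the same message length and an error loss of $o(1)$. The requirement that cycle lengths be $\Omega(\log n)$ enters here, exactly as in the classical argument: it ensures the random partition makes almost every cycle split into many segments with high probability. We would go through their reduction and check that it only ever applies classical post-processing to Bob's side, so it works unchanged for a quantum message.

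Finally, the streaming statement follows from the standard simulation. Fix a random-order stream, cut it at a random point, and give Alice the prefix and Bob the suffix. This induces a uniformly random edge partition, up to the small correction coming from the distribution of prefix sizes, which the robust framework already handles. Alice runs the quantum streaming algorithm and sends its memory state. That state has $s$ qubits and involves no entanglement with Bob, so $s=\Omega(n)$.

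We expect the main obstacle to be the $\BHX$ step. Specifically, the classical bound averages over $\Omega(n)$ blocks and uses a level-$k$ Fourier bound at each $k$ up to the block size. The matrix version must give the same scaling without losing a factor of $2^s$ when passing from trace norm to Schatten norms. We would first try to follow the proof of the Gavinsky et al.\ quantum Boolean Hidden Matching lower bound, replacing its single matching by the block structure and summing the contributions over levels $k$.
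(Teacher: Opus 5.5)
There is a genuine gap in your second step. Under a uniformly random edge partition, neither party's share is determined by its own Hidden XOR input. Alice owns about half of the permutation edges, which encode where Bob's selected coordinates sit, and Bob owns about half of the XOR edges, which encode Alice's bits. So an $\NGC$ instance cannot be produced from a Hidden XOR or Hidden Hypermatching instance by ``randomly planting'' edges locally, and no reduction with only $o(1)$ loss is available. In addition, the hard distribution forces all $m=n/(4k)$ cycle parities to equal a common $\theta$, so there are no independent side bits.

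The paper, following Assadi--Sundaresan, pays for both issues in two steps:
\begin{enumerate}
\item A neighboring-hybrid argument on Bob's pre-decoding cq-states isolates one parity, but with advantage only $1/(6m)$. This embeds into $\DHX_{w,t}$ with $w=m+1$.
\item One conditions on the ownership functions and fixes all non-clean coordinates. This keeps $r\ge t/512$ usable blocks of widths $w_i\ge w/100$, in which Alice owns only middle XOR edges and Bob only permutation edges. One then checks that the residual strings and indices are independent and uniform, and that Bob can simulate the rest of his share from his indices alone.
\end{enumerate}
The result is a single-output unequal-block Hidden XOR (Bob has one index per block and outputs one XOR) with advantage about $1/(8w)$. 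It is not the YES/NO side-bit problem you describe.

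This changes what the quantum step must prove. You need to show that even advantage $\gamma=\Theta(1/w)$ forces $\Omega(wr)$ qubits, not that a trace distance is $o(1)$ below linear length. The paper does this at a single Fourier level. The optimal parity guess gives $\E[J]{\trnorm{\wh{\rho}(S_J)}}\ge 2\gamma$, and Jensen gives $\sum_J\trnorm{\wh{\rho}(S_J)}^2\ge 4\gamma^2\prod_i w_i$. Every $S_J$ has size exactly $r$, so the trace-norm form of the Ben-Aroya--Regev--de Wolf inequality yields $q\gtrsim rW_{\mathrm{geo}}(2\gamma)^{2/r}$. No Schatten-norm conversion or summation over levels is needed.

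This is also where $k=\Omega(\log n)$ actually enters, which your ``many segments'' explanation does not capture. The condition $t\ge 512\ln w$ does two things:
\begin{itemize}
\item it makes the failure probability $w^{-9/8}$ of the usable-block event smaller than the advantage $1/(6w)$;
\item it gives $r\ge\ln w$, so that $(1/(4w))^{2/r}\ge e^{-4}$.
\end{itemize}

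Separately, Boolean Hidden Matching has an $O(\log n)$-qubit protocol, so there is no quantum BHM lower bound to adapt. The matrix-hypercontractive analogue you have in mind is Shi--Wu--Yu's bound for Hidden Hypermatching.

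Your streaming-to-robust step is fine, and in fact exact with no correction term. A fair partition followed by independent uniform local orders gives a uniform permutation; equivalently, cut at a $\mathrm{Binomial}(M,1/2)$ position.
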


The resulting applications include lower bounds for approximating
connected components, matching, independent set, and minimum spanning
tree weight, and for generating random walks
(\cref{cor:ngc-applications}).  In particular, the approximation results
give restrictions on the dependence of space on accuracy; they do not
assert linear space for every fixed approximation error.  We state
the precise tradeoffs with the results in \cref{sec:ngc}.

These are single-arrival graph streams, with a decision made after the
pass.  Thus NGC supplies a complementary obstruction to the advantages
above: quantum memory can still require linear space despite random
order.  Whether repeated access can reduce this cost remains open.

\subsection{Organization}
\label{sec:intro-perspective}

\Cref{sec:overview} explains the mechanisms behind the results, followed
by related work and model conventions in \cref{sec:related,sec:prelim}.
\Cref{sec:replenishment} gives the ARHM separation, using the consumability
bounds proved in \cref{sec:consumability}.  \Cref{sec:triangles} develops
the repeated-edge estimator.  \Cref{sec:ngc} proves the quantum Hidden
XOR bound for NGC, and \cref{app:ngc-full-proof} supplies the graph
reductions and their consequences.  The remaining appendices contain
consumability refinements and triangle-estimator details.
We discuss open questions at the end of each technical section.

\subsection{Concurrent work}
\label{sec:concurrent-work}

Independently and concurrently, Bouland, Ding, and
Jain~\cite{BoulandDingJain2026} prove exponential quantum--classical
space separations in random-order streaming for Code Intersection and
low-rate Optimal Polynomial Intersection.  Their quantum algorithms work
in every order, while their classical lower bounds hold in random order.
Our ARHM result gives a concurrent exponential quantum--classical
separation under different input and output requirements.  We also compare
two orders of the same stream within the quantum model, showing how
continued access to the data reduces the memory needed for online requests.

\section{Technical overview}
\label{sec:overview}

We first describe the lower bound for repeated Hidden Matching requests,
and then explain how the ARHM algorithm uses later updates to prepare
new states as requests arrive.  We next describe the changes needed to
reuse the triangle estimator across blocks.  Finally, we give an overview
of the quantum communication lower bound used for NGC.
The statements repeated here retain their numbers from the technical
sections, where we give the full proofs.  All communication protocols
are one-way and have no prior entanglement; their messages include any
transmitted classical information.

\subsection{Lower bounds for repeated Hidden Matching requests}
\label{sec:overview-mhm}

Suppose Alice sends Bob a quantum encoding of $x$, and Bob must use it
to answer several matching requests.  To lower-bound the message length,
we cannot assume that Alice sends phase states or that Bob measures
them separately.  Instead, we ask how much information about $x$ is
contained in any correct list of answers.

For even $N$, write $\HM_N^{(m)}$ for $m$ Hidden Matching requests
sharing Alice's string.  The precise task is as follows.

\begin{restatable*}[Multiple Hidden Matching]{definition}{multipleHiddenMatchingDefinition}
\label{def:mhm}
In $\MHM_{N,m}=\HM_N^{(m)}$, Alice receives $x\in\bool^N$ and Bob
receives perfect matchings $M_1,\ldots,M_m$ on $[N]$.  For each $i$ he
must output $e_i=\set{u_i,v_i}\in M_i$ and
$b_i=x_{u_i}\oplus x_{v_i}$.  Success requires every answer to be correct.
\end{restatable*}

We call a choice of one edge $e_i\in M_i$ from each requested matching
a \emph{transversal}.  Thus, in our consumability and ARHM arguments,
a transversal is simply the list of edges selected to answer the requests,
ignoring their reported parity bits.  The edges need not be vertex-disjoint
or distinct.  When discussing its forest size, we consider the graph formed
by the selected edges, with repetitions counted only once.

Our simultaneous lower bound is:

\begin{restatable*}[Linear Multiple Hidden Matching lower bound]{theorem}{linearMHMTheorem}
\label{thm:mhm-linear-main}
Fix $\delta\in(0,1/2)$ and $\eps\in(0,1/2)$.  For
$m\le N^{1/2-\delta}$, every one-way protocol for
$\MHM_{N,m}$ with worst-case joint error at most $\eps$ uses
$\bOm[{\delta,\eps}]{m}$ qubits.
\end{restatable*}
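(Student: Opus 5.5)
The plan is an information-theoretic argument: any correct list of answers reveals roughly $m$ bits about $x$, and Holevo's bound then forces the message to carry $\Omega(m)$ qubits. The key quantity is the forest size of the transversal chosen by Bob.

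\textbf{Setup.} Take $x$ uniform on $\bool^N$. Take $M_1,\ldots,M_m$ to be independent uniformly random perfect matchings on $[N]$. Fix Alice's message state $\rho_x$ on $q$ qubits. Bob applies some joint measurement to $\rho_x$ and outputs a transversal $T=(e_1,\ldots,e_m)$ together with parity bits $b_1,\ldots,b_m$.

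\textbf{Step 1: correct answers reveal linear-algebraic information.} A correct answer list gives the parities $x_u\oplus x_v$ on the edges of $T$. These determine $x$ restricted to the span of the edge vectors. That span has dimension equal to the forest size $\fs(T)$, the number of edges in a spanning forest of the graph formed by $T$.

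Holevo's bound, applied to the classical output $(T,b)$ and conditioned on the matchings, gives
\[
I(x ; T,b \mid M_1,\ldots,M_m) \le q .
\]
Conversely, if the answers are correct, then $b$ reveals $\fs(T)$ independent uniform parity bits. Here the matchings are independent of $x$, and the independence of these bits follows from the forest structure.

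The subtlety is that $T$ itself may depend on $x$. To handle it, I would bound
\[
I(x;T,b\mid M) \ge H(b\mid T,M) - H(b\mid x,T,M).
\]
Given correctness, $b$ is a function of $(x,T)$, and $H(b\mid T,M)$ is roughly $\E{\fs(T)}$ minus Fano-type error terms. The chain rule trick is to note that the parities on a spanning forest of $T$ are uniform and independent of $T$ only if $T$ does not depend on $x$. I would therefore compare against the number of candidate transversals instead. Since $T$ ranges over $(N/2)^m$ possibilities, $H(T)\le m\log(N/2)$, and this is too large to absorb directly. The cleaner route is a union-bound counting argument. Fix a transversal $T$ with $\fs(T)=k$. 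Then for any state on $q$ qubits, the probability of guessing the $k$ parity bits correctly is at most $2^{q-k}$; this is the quantum random access / Nayak-type guessing bound, valid even with $x$-dependent measurement since $T$ is fixed. Summing over $T$ gives
\[
\Pr[\text{success}] \le \sum_T 2^{q-\fs(T)} \cdot \Pr[\text{$T$ is output}] ,
\]
but the output probability itself depends on $x$, so this needs care. I would therefore sum over all possible $T$ weighted by an average over the matchings $M$.

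\textbf{Step 2: combinatorics of random matchings.} The goal is to show that, for random $M_i$ and $m\le N^{1/2-\delta}$, almost every transversal has forest size close to $m$. Concretely, I want
\[
\E[M]{\#\set{T : \fs(T)\le (1-c)m}}\cdot 2^{-\text{(deficit)}}
\]
to be small. A transversal with forest size $m-d$ contains $d$ cycle-closing edges. For each such edge, the probability that a given pair $\{u,v\}$, with $u,v$ already in the touched vertex set of size at most $2m$, lies in a random matching is about $1/N$. There are at most $(2m)^2$ such pairs, so each cycle edge costs a factor of about $m^2/N\le N^{-2\delta}$.

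\textbf{Step 3: combining.} I would use the bound
\[
\Pr[\text{success}]\le \E[M]{\sum_T 2^{q-\fs(T)}}
\]
taken over the at most $(N/2)^m$ transversals. This is too lossy, since it is exponential in $m\log N$. Instead I would exploit the quantum nature: apply the guessing bound to the joint random variable $(T,b)$ restricted to forest edges, which gives success at most $2^{q}\cdot \E{2^{-\fs(T)}}\cdot(\text{number of forest shapes})$. The main obstacle is precisely this step: controlling an $x$-dependent choice of transversal without paying a $\log N$ per request. My first attempt would be to encode $T$ via its edges' lexicographic ranks, which is $O(1)$ bits per edge after conditioning on $M$ only when the structure is forest-like. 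I would rather use a hypercontractivity / matrix-valued KKL bound in the style of Gavinsky et al.: bound $\sum_{|S|=k}\|\widehat{\rho}(S)\|^2$ for parity sets $S$ spanned by forests of size $k$, restricted to sets hit by random matchings. The $N^{-2\delta}$ slack from Step 2 then pays for the $\binom{N}{k}$ counting, giving $q\ge \Omega_{\delta,\eps}(m)$.
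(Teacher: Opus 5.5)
Your starting point, $I(X;T,b\mid M)\le q$, is right. However, the proposal never closes the step you yourself flag as the main obstacle: Bob's transversal $T$ may depend on $x$ through the message.

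\textbf{The information step.} Your decomposition $I\ge H(b\mid T,M)-H(b\mid x,T,M)$ discards $I(x;T\mid M)$, and that term can carry everything. For example, if Bob selects edges whose parity he knows to be $0$, then $H(b\mid T,M)=0$.

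The missing idea, which is how the paper argues, is to write $I(X;O\mid M)=N-H(X\mid O,M)$ with $O=(T,b)$, and to count compatible strings \emph{after} fixing the transcript. On the joint-success event $S=1$, the reported parities on a spanning forest of $T$ confine $X$ to at most $2^{N-\fs(T)}$ strings, however $T$ was chosen. Hence $H(X\mid O,M)\le h_2(p)+N-\E{S\,\fs(T)}$, and so $q\ge\E{S\,\fs(T)}-h_2(p)$.

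Your guessing route can also be repaired, but not with the weighting you wrote. The bound $\prob{\text{success}}\le\sum_T 2^{q-\fs(T)}\,\prob{T\text{ is output}}$ is not valid in general, because conditioning on an outcome can select a rarely occupied part of the message. Weight by traces instead. Absorb Bob's $x$-independent workspace into an effective POVM $\set{E_{T,b}}$ on the $2^q$-dimensional message space. For fixed $T$, the contribution $\E[x]{\Tr(E_{T,b}\rho_x)\,\ind[b\text{ correct on }T]}$ is at most $2^{-\fs(T)}\Tr E_{T,b}$. These traces sum to $2^q$, so $\prob{\text{success}\mid M}\le 2^{q-\tfor(M_1,\ldots,M_m)}$, with no $(N/2)^m$ loss. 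The rank-encoding and matrix-KKL ideas are not needed, and as sketched they do not yield a bound.

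\textbf{The combinatorial step.} Step 2 aims at the wrong quantity. Bob chooses all edges jointly after seeing every matching, so what must be large with high probability is $\tfor(M_1,\ldots,M_m)=\min_{e_i\in M_i}\fs(\set{e_1,\ldots,e_m})$. Statements about ``almost every'' transversal, or about the expected number of bad transversals, do not control his choice. The expected number of bad transversals is in fact huge, since Bob has about $N/2$ free choices for each tree edge. Your per-cycle-edge factor $m^2/N$ is a sequential heuristic; it is what drives the paper's sequential bound, but it is not a bound against a joint choice.

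The paper instead uses $\abs{V(F)}\le 2\fs(F)$. Thus $\tfor\le s$ forces a set of $2s$ vertices containing an edge of every matching. A union bound over such sets gives $\prob{\tfor\le s}\le\binom N{2s}\bigl(\binom{2s}2/(N-1)\bigr)^m\le e^{-\delta m\ln N}$ for $s=\floor{\delta m/16}\ge1$. The slack $N^{-2\delta m}$ pays for $\binom N{2s}$ only when $s=O(\delta m)$. So this certifies a $\Theta(\delta)$ fraction of $m$ rather than your $(1-c)m$. That is still enough for $\bOm[{\delta,\eps}]{m}$ once combined with the information bound above.
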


We use the forest-information approach of Gilboa, Jain, and
McClean~\cite{GilboaJainMcClean2025}.  Their argument relates the information in Bob's answers to the size
of a forest in his output edges.  We show that this forest must be larger
for the request distributions considered here, and account for errors
without conditioning the requests on successful decoding.  Correct edge parities on a forest with $s$ edges impose
$s$ independent constraints on a uniform string $X$.  Once the output transcript is fixed,
at most $2^{N-s}$ strings satisfy those constraints.  This remains true
when the edges were chosen using information about $X$: we count
compatible inputs \emph{after} fixing the transcript.  Cycles add no
independent parity constraint, so the useful quantity is the spanning
forest size $K$ of the selected edges.

To state the information bound, let $X$ be a uniform $N$-bit string,
and let the matching list $M$ be independent of $X$ and the encoding
randomness.  Alice sends a $q$-qubit message without knowing $M$;
independent public coins are allowed.  Let $O$ be Bob's output list of
legal edges and parity bits, let $S$ indicate that all answers are
correct, and let $K(M,O)$ be the number of edges in a spanning forest
of the selected edges.  The common information bound is:

\begin{restatable*}[Forest-information bound]{lemma}{forestInformationLemma}
\label{lem:forest-information-main}
\label{lem:app-conditional-forest-information}
If $p=\prob{S=1}$, then
\[
  q\ge I(X:O\mid M)\ge\E{S K(M,O)}-h_2(p).
\]
In particular, if the matchings are fixed and every transversal
$e_i\in M_i$ has forest size at least $s$, then
$q\ge ps-h_2(p)\ge ps-1$.
\end{restatable*}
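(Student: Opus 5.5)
The plan is to prove the two inequalities separately and then specialize.

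\emph{First inequality.} Since $M$ is independent of $X$, Alice's message $\rho_X$, and the public coins $R$, Bob's output $O$ is obtained by a measurement on $\rho_X$ that depends on $(M,R)$. Treat $R$ as part of the conditioning, since it is independent of $X$ and $M$. By data processing and the Holevo bound,
\[
I(X:O\mid M)\le I(X:O\mid M,R)\le I(X:\rho\mid M,R).
\]
The first step uses $I(X:R\mid M)=0$ and the chain rule. For each fixed $(M,R)$, the quantity $I(X:\rho\mid M=m,R=r)$ is at most the von Neumann entropy of a $q$-qubit state, which is at most $q$. The message does not depend on $M$, so conditioning on $M$ is harmless.

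\emph{Second inequality.} I would write
\[
I(X:O\mid M)=H(X\mid M)-H(X\mid O,M)=N-H(X\mid O,M),
\]
and bound the remaining conditional entropy via $S$. Since
\[
H(X\mid O,M)\le H(X,S\mid O,M)\le h_2(p)+H(X\mid O,M,S),
\]
it suffices to show
\[
H(X\mid O,M,S)\le N-\E{S K(M,O)}.
\]
Expand this by conditioning on $S$:
\begin{itemize}
\item On $S=0$, use the trivial bound $N$.
\item On $S=1$, for each fixed transcript $(m,o)$ with $\prob{S=1,M=m,O=o}>0$, the string $X$ is supported on strings satisfying every reported parity $x_{u_i}\oplus x_{v_i}=b_i$. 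These are linear constraints over $\mathbb{F}_2$ whose rank is at least the spanning forest size $K(m,o)$, because the edge vectors of a forest are linearly independent. Hence the support has size at most $2^{N-K}$, and the entropy is at most $N-K(m,o)$.
\end{itemize}
Averaging gives
\[
H(X\mid O,M,S)\le N-\E{S K(M,O)},
\]
where the weighting is exactly by $\prob{S=1}$ times the conditional expectation. Combining, $I(X:O\mid M)\ge \E{SK}-h_2(p)$. The key point is that conditioning happens \emph{after} the transcript is fixed, so adaptively chosen edges cause no difficulty: the count depends only on $(m,o)$.

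\emph{Main obstacle.} The step needing the most care is the conditioning on $S$. The event $S$ is a function of $(X,O,M)$, so $H(S\mid O,M)\le h_2(p)$ must be justified by concavity of the binary entropy rather than assumed. In addition, the per-transcript support bound on $\{S=1\}$ must be used rather than any independence from $X$. Pointwise, $H(X\mid O=o,M=m,S=1)\le\log_2$ of the support size handles this correctly.

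\emph{Specialization.} For fixed matchings where every transversal has forest size at least $s$, we get $SK\ge Ss$, so $\E{SK}\ge ps$. Using $h_2(p)\le1$ then gives $q\ge ps-h_2(p)\ge ps-1$.
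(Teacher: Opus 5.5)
Your proposal is correct and follows essentially the same route as the paper. The first inequality comes from data processing plus the $q$-qubit dimension bound, with public coins absorbed via $I(X:R\mid M)=0$. The second comes from $H(X\mid O,M)\le H(S\mid O,M)+H(X\mid O,M,S)$, using the per-transcript bound $H(X\mid M=\mu,O=o,S=1)\le N-K(\mu,o)$ on success and the trivial bound $N$ on failure. The step you flag as the main obstacle is routine: $H(S\mid O,M)\le h_2(p)$ is just conditioning reducing entropy, $H(S\mid O,M)\le H(S)$, so no separate concavity argument is needed.
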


The first inequality follows from data processing and the dimension
of the message.  Keeping the product $SK$ is important: a decoder's success
can be correlated with the forest it selects.

\paragraph{Simultaneous requests.}
If all matchings are known before decoding, we must force a large forest
for \emph{every} coordinated edge choice.  A graph with forest size at
most $s$ has at most $2s$ incident vertices.  Thus a small-forest answer
requires a small vertex set containing an edge of every requested
matching.  For independent balanced matchings, each pair occurs with
probability $1/(N-1)$, giving
\[
 \prob{\tfor(M_1,\ldots,M_m)\le s}
 \le\binom N{2s}
       \left(\frac{\binom{2s}{2}}{N-1}\right)^m.
\]
Here we take a union bound over sets of $2s$ vertices.  For a fixed
set, the expression raised to the $m$th power bounds the probability
that every matching has an edge inside it.  When $m\le N^{1/2-\delta}$,
we can take $s$ to be a sufficiently small multiple of $\delta m$
and still make this probability small after the union bound.  With high probability,
all transversals have linear forest size.  Applying the information bound and accounting for the exceptional
matchings gives a linear communication lower bound, even when Bob
decodes all requests jointly (\cref{thm:ext-balanced-quant}).

Balancedness holds both for uniform perfect matchings and for uniform
factors of any one-factorization.  Independent factor draws may repeat;
the assertion concerns the resulting distribution, not every fixed
request list.  Indeed, a list repeating one matching can be answered by
repeating one correct edge and parity.  As $m$ approaches $\sqrt N$, this counting argument gives a weaker
bound.  We give the resulting estimates in
\cref{app:transversal-forest-optimization,sec:ext-endpoint}.

\paragraph{Sequential requests.}
When Bob must answer before seeing the next matching, we can analyze the
forest as it grows.  After $i-1$ output edges, adding an edge fails to
increase forest size only if its endpoints already belong to the same
component.  There are at most $\binom i2$ such pairs.  The next matching
is fresh even conditional on the preceding history, so it contains any
of these pairs with probability at most $\binom i2/(N-1)$.
This bounds the chance of failed growth regardless of Bob's residual
quantum state or his choice of measurement.

Summing these losses gives expected forest size at least
$m-m(m^2-1)/(6(N-1))$.  To account for wrong answers, let $Z$ record
which output parity bits must be flipped.  Giving the decoder this error
vector repairs the forest constraints; its conditional entropy is the
information we must subtract.  Consequently,
\[
 q\ge m-\frac{m(m^2-1)}{6(N-1)}-H(Z\mid O,M)
 \ge m-\frac{m(m^2-1)}{6(N-1)}-H(Z).
\]
This is \cref{thm:sequential-hidden-matching}.  Joint success at least
$p$ implies $H(Z)\le1+(1-p)m$.  For $p=2/3$ and $m\le\sqrt N$,
the resulting lower bound is $q\ge m/2-1$.  Other bounds on error
entropy yield the average-advantage and constant-fraction guarantees in
\cref{app:consumability-extra}.

For update-first ARHM, Alice sends the complete memory after processing
the data and Bob continues on fresh factor requests.  This transfers the
sequential lower bound directly.  The same reasoning works after an
arbitrary simulatable prefix under the independence and state-accounting
conditions of \cref{prop:last-replenishment-point}.  Earlier successful
answers are not conditioned upon, and any earlier output needed for
continuation must be retained in the counted state.  Simultaneous
consumability similarly lower-bounds a post-data representation even
when the decoder is subsequently given the entire request list.

\subsection{ARHM: preparing accurate states before they are needed}
\label{sec:overview-arhm}

To prepare a Hidden Matching state from coordinate updates, we use
the phase-accumulation method of quantum oracle
sketching~\cite{ZhaoEtAl2026}.  We need to show both that the resulting
state is accurate enough and that we can prepare enough states before
the requests arrive.
Consider a build interval containing $H=\mu N$ coordinate updates.
Starting with $N^{-1/2}\sum_j\ket j$, each occurrence of $(j,x_j)$
applies phase $\pi x_j/\mu$ to coordinate $j$.  If $C_j$ occurrences
of coordinate $j$ appear, the completed register is
\[
 \ket{\phi_C(x)}=\frac1{\sqrt N}\sum_{j=1}^N
                    e^{i\pi C_jx_j/\mu}\ket j.
\]
Equal counts $C_j=\mu$ would give the exact Hidden Matching phase state.
We do not keep a table of the counts or require every count to be close
to its mean.  The matching measurement depends on phase differences
between paired coordinates, so its error is bounded by the average
squared count deviation.  Hypergeometric sampling gives
$\E{(C_j-\mu)^2}\le\mu$, hence expected decoding error at most
$\pi^2/\mu$ for any fixed string and matching
(\cref{lem:atomized-rebuilding}).  Taking $\mu=Cr$ leaves a constant
error budget for all $r$ requests together.

Accuracy alone does not ensure that a register is available when a
request arrives.  We use $B=Ar$ build intervals and maintain two banks:
one receives updates, and the other supplies completed registers.
There are therefore $U=BH=\bT{Nr^2}$ updates.  A random stream can be
generated by independently choosing the update order, the request order,
and their interleaving.  This separates the approximation analysis from
the schedule of consumption.

The interleaving spreads requests among the gaps between updates.  For
any bank, the expected demand is bounded by a constant, and the
beta-binomial tail in \cref{lem:queries-one-rebuild} decays sufficiently
fast that $\bO{\log(r+1)}$ registers per bank suffice over the whole
execution.  The initial interval has no completed state; its chance of
containing a request is at most $1/A$.  On a good schedule, a union bound
over decoding errors completes the argument.  Registers built in the
same interval share their count errors, but the proof needs only their
individual error bounds, not independence of those errors.

We can therefore keep only $\bO{\log(r+1)}$ registers in each bank,
reusing them as successive batches arrive.  Since each register uses
$\bO{\log N}$ qubits, the total space is
$\bO{\log N\log(r+1)}$, the upper bound in
\cref{thm:ext-param-threeway}.

For the classical lower bound, it is enough to consider the first request.  Take
all requests to name the same unknown factor.  Alice can simulate the
update prefix knowing only $x$, then send its memory to Bob to answer
that first request.  A level-two Fourier argument proves the classical
Hidden Matching lower bound for this restricted matching family:
averaging over uniform factors covers each possible edge with probability
$1/(N-1)$ (\cref{lem:classical-factorized-hm}).  Thus success on the
first answer alone forces $\bOm{\sqrt N}$ bits.

\subsection{Reusing the triangle estimator}
\label{sec:overview-pair-sketch}

Our triangle-counting algorithm adapts Kallaugher's
algorithm~\cite{Kallaugher2021} to repeated-edge streams.  We reuse its
quantum measurement primitive, the division into quantum low-weight and
classical high-weight contributions, and the classical sampling approach.
Our changes let us rebuild and reuse these estimators across blocks,
handle repeated occurrences, and combine the block estimates into an
estimate for the underlying simple graph.

For triangle counting, the queries are determined by the arriving
edges, and we only need to output an estimate at the end of the stream.
We use repeated edges to run the same estimator over several blocks
and average its observations.  A closing edge $uv$ probes the quantum representation
for pairs of directed edges $\overrightarrow{wu},\overrightarrow{wv}$.
A signed measurement outcome detects the contribution of intact pairs;
a zero outcome removes states pointing into the queried endpoints.
This is the destructive measurement underlying Kallaugher's hybrid
estimator~\cite{Kallaugher2021}.

Partition an exact-$R$ random-repetition stream into blocks of $Lm$
occurrences.  For logarithmic $L$, with high probability every full
block contains every edge between one and $2L$ times.  A block can
therefore prepare the graph representation used in the next block.
The construction must tolerate duplicate arrivals without a dictionary
of all previously seen edges.  Start with uniform amplitude on $2Lm$
dummy states, and swap fresh dummies into the two orientations of each
arriving edge.  Its first occurrence installs the desired amplitude;
later occurrences exchange equal amplitudes and leave it in place.
The completed state has a uniform directed-edge component of squared
mass $1/L$ (\cref{lem:build-state}).

Each lane keeps a READY register and a BUILD register.  A query block
uses the former while preparing the latter.  Activating each occurrence
with probability $1/(kL)$ controls how quickly queries remove useful
pairs.  Along the branch of zero outcomes, the probability of reaching
a query cancels the normalization of the surviving state.  This
telescoping identity makes the expected signed outcome an explicit sum
of triangle-witness survival probabilities
(\cref{lem:telescoping,lem:quantum-mean}).  Retaining the dummy component
also bounds the second moment by $1/L$, since a nonzero outcome can
occur only from the initial directed-edge mass.

A classical lane estimates the complementary weight by sampling centers
and directed edges.  Sampling decisions are keyed to the underlying
objects, so repeated arrivals reuse a decision and never create duplicate
stored entries.  Counters at the endpoints track earlier destructive
occurrences.  Their sum double-counts earlier copies of the closing
edge, whereas the quantum survival weight counts their union.  The
resulting discrepancy is at most $2T/k$ per block
(\cref{lem:high-bias}); this explains the condition $k\ge C/\eps$.

An individual block need not estimate $T$ without bias.  Its exact low
and high targets sum to
\[
 T_j^{<}+T_j^{>}=\frac1{3L}\sum_{e\in E}C_{j,e}\tau_e,
\]
where $C_{j,e}$ is the block multiplicity and $\tau_e$ the number of
triangles containing $e$.  Summing over blocks uses the exact identity
$\sum_j C_{j,e}+C_{\star,e}=R$ for each edge.  The omitted initial
build block and final partial block cost at most $8LT/R$, and the
computable classical complement adds at most $2T/k$.
Thus the average target is close to $T$ when $R\ge CL/\eps$ and
$k\ge C/\eps$ (\cref{lem:structural-bias}).

There are $J=\floor{R/L}-1$ query blocks.  Fixing the whole stream
makes their targets deterministic.  Fresh coins, hash seeds, and states
then give independent samples across blocks, even though the stream's
blocks themselves are dependent.  One physical lane supplies $J$
samples, storing only two buffers and a running sum.  The required
numbers of quantum and classical lanes are, respectively,
\[
 \bT{1+\frac{k^2L^2m^2}{R\eps^2T^2}}
 \quad\text{and}\quad
 \bT{1+\frac{Lm^{3/2}\Delta_E}{R\eps^2T\sqrt k}}.
\]
Balancing these quantities gives \cref{cor:optimized}; the full
parameterized guarantee is \cref{thm:main-parameterized}.
We give the precise space convention
in \cref{sec:triangles} and the implementation details in
\cref{app:triangle-implementation}.

\subsection{The quantum Noisy Gap Cycle lower bound}
\label{sec:overview-triangle-ngc}

We prove the NGC lower bound through robust communication complexity.
Each edge is assigned independently to Alice or Bob with probability
$1/2$.  Ordering each share uniformly
and concatenating Alice's order with Bob's gives an exactly uniform global permutation.  A
streaming algorithm therefore induces a one-way protocol by transferring
its complete live memory once.  Bob continues the computation and obtains
the final answer; no copying or rewinding of the state is needed
(\cref{prop:robust-transfer}).

The construction, neighboring-hybrid argument, and clean-coordinate
reduction come from Assadi and Sundaresan~\cite{AssadiSundaresan2023}.
Their construction hides the distinction between cycle lengths in path
parities across a layered graph.  A neighboring-hybrid
argument isolates one parity and reduces NGC to Distributional Hidden
XOR, losing advantage down to order $1/w$, where $w$ is the construction's
width.  The next reduction uses the random edge partition.  At a clean
coordinate, Alice owns the edges specifying a bit and Bob owns the
permutation edges selecting its position.  With high probability many
blocks contain enough clean coordinates and select one of them.

Fixing classical generation variables then leaves independent strings
for Alice and independent uniform indices for Bob.  To obtain a communication reduction, we show that the remaining inputs
have the required product distribution and that each party can generate
its edge share from its own input
(\cref{lem:residual-usable}).  Bob can generate the unused parts of the
residual permutations locally.  The fixing preserves sufficient average
advantage without conditioning on a successful measurement outcome.
The resulting task is unequal-block Hidden XOR.

\begin{restatable*}[Unequal-block Hidden XOR]{definition}{blockHiddenXORDefinition}
\label{def:ubh-main}
Fix a positive integer $r$ and positive integers $w_1,\ldots,w_r$.  Alice receives independent
uniform strings
\[
    X^i\in\bool^{w_i},
    \qquad i\in[r],
\]
and Bob receives independent uniform indices
$J_i\in[w_i], \qquad i\in[r]$,
independent of all of Alice's strings.  Communication is a
one-way message from Alice to Bob.
Bob must output
\[
    Z_J(X)
    :=
    \bigoplus_{i=1}^r X^i_{J_i}.
\]
When every $w_i=w$, we write $\BHX_{w,r}$.
\end{restatable*}

We need a lower bound even when Bob has only a small advantage over
random guessing:

\begin{restatable*}[Quantum Block Hidden XOR in the DHX regime]{theorem}{blockHiddenXORTheorem}
\label{thm:qbhx-main}
\label{cor:app-bhx-as-regime}
Suppose
\[
    r\ge\ln w,
    \qquad
    w_i\ge\frac{w}{100}
    \quad\text{for all }i\in[r].
\]
For all sufficiently large $w$, every one-way quantum protocol
for unequal-block Hidden XOR with success at least
$\frac{1}{2}+\frac{1}{8w}$
communicates $\bOm{wr}$ qubits.
\end{restatable*}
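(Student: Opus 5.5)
The plan is to replace the classical level-$r$ Fourier argument by its matrix-valued analog: the matrix hypercontractive inequality of Ben-Aroya, Regev, and de Wolf. First we reduce to a fixed protocol. Public randomness can be fixed by averaging, and any transmitted classical information can be folded into the quantum message, so it suffices to treat a map $x\mapsto\rho_x$ into $q$-qubit density matrices, with $d=2^q$. Write $n=\sum_i w_i$ and $x=(X^1,\ldots,X^r)\in\bool^n$. For a fixed index tuple $J$, let $S_J\subseteq[n]$ be the set consisting of coordinate $J_i$ from block $i$, for each $i$; then $|S_J|=r$ and $Z_J(x)=\bigoplus_{k\in S_J}x_k$. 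By Helstrom's bound, Bob's success probability on $J$ is at most $\frac12+\frac12\trnorm{\wh\rho(S_J)}$, where $\wh\rho(S)=\E[x]{(-1)^{\sum_{k\in S}x_k}\rho_x}$. Averaging over $J$, a success probability of at least $\frac12+\frac1{8w}$ forces
\[
 \E[J]{\trnorm{\wh\rho(S_J)}}\ge \frac{1}{4w}.
\]

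Next we bound the left-hand side from above. For $p=1+\gamma$ with $\gamma\in(0,1]$, we have $\trnorm{A}\le d^{1-1/p}\norm{A}_p\le d^{\gamma/(1+\gamma)}\norm{A}_p\le d^{\gamma}\norm{A}_p$. Since $J$ is uniform over $\prod_i w_i$ tuples and the sets $S_J$ are distinct, Cauchy--Schwarz gives
\[
 \E[J]{\trnorm{\wh\rho(S_J)}}\le d^{\gamma}\Bigl(\prod_i w_i^{-1}\sum_{|S|=r}\norm{\wh\rho(S)}_p^2\Bigr)^{1/2}.
\]
Matrix hypercontractivity states that $\sum_S\gamma^{|S|}\norm{\wh\rho(S)}_p^2\le\bigl(\E[x]{\norm{\rho_x}_p^p}\bigr)^{2/p}\le 1$, using $\norm{\rho_x}_p\le\norm{\rho_x}_1=1$. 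Hence the level-$r$ sum is at most $\gamma^{-r}$. Combining this with $w_i\ge w/100$,
\[
 \frac1{4w}\le 2^{q\gamma}\Bigl(\frac{100}{\gamma w}\Bigr)^{r/2}.
\]

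Finally we choose parameters. Take $\gamma=c_0/w$ for a large absolute constant $c_0$, for instance $c_0=10^4$; this satisfies $\gamma\le1$ once $w\ge c_0$. The inequality then becomes $2^{qc_0/w}\ge (c_0/100)^{r/2}/(4w)=10^{r}/(4w)$. Because $r\ge\ln w$, we have $4w\le 4e^{r}$, and therefore $10^r/(4w)\ge (10/e)^r/4\ge 2^{r}/4$ for all $r\ge1$. This yields $q\ge (w/c_0)(r-2)$. It remains to handle small $r$. Since $r\ge\ln w$ and $w$ is sufficiently large, $r$ exceeds any fixed constant, so $r-2\ge r/2$ and the bound becomes $\bOm{wr}$.

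The step I expect to need the most care is correctly invoking the matrix hypercontractive inequality. Its normalization uses the normalized Schatten norm $\norm{\cdot}_p$ together with averaging over $x$, and the constant $\gamma^{|S|}$ must be matched precisely. Equally delicate is the trade-off between $d^{1-1/p}$ and $\gamma^{-r/2}$. The choice $\gamma=\Theta(1/w)$ is what converts a tiny advantage of order $1/w$ into a bound that is linear in $wr$, and the hypothesis $r\ge\ln w$ is exactly what allows the $1/(4w)$ advantage loss to be absorbed into the exponential factor $(c_0/100)^{r/2}$.
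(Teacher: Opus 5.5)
Your proof is correct and follows essentially the paper's route: Helstrom's bound on the parity-conditioned states, Jensen/Cauchy--Schwarz over $J$, the Ben-Aroya--Regev--de Wolf inequality restricted to level $r$, and a hypercontractivity parameter of order $1/w$ so that $r\ge\ln w$ absorbs the $1/w$ advantage. Your H\"older step with unnormalized Schatten norms reproduces exactly the paper's trace-norm form $\sum_S\lambda^{|S|}\trnorm{\wh\rho(S)}^2\le 2^{2\lambda q/(1+\lambda)}$, and the normalization issue you flag is harmless because that inequality is homogeneous under rescaling the norm. The remaining difference is only packaging: the paper proves a general bound $q\ge rA/(2e\ln 2)$ with $A=W_{\mathrm{geo}}(2\gamma)^{2/r}$ by taking $\lambda=e/A$, whereas you set your parameter to $c_0/w$ directly.
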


For classical communication, scalar Fourier bounds apply to the input
distribution conditioned on Alice's message.  For a quantum message we
analyze the full matrix-valued
encoding $X\mapsto\rho_X$.  If $S_J$ selects Bob's one coordinate per
block, the two parity-conditioned message states differ by
$2\wh\rho(S_J)$.  The optimal probability of distinguishing these
equiprobable states is $1/2+\trnorm{\wh\rho(S_J)}/2$, so success advantage
lower-bounds the average trace norm of these Fourier coefficients.
Because there are many choices of one coordinate per block, prediction requires Fourier
mass spread over many degree-$r$ sets.  The matrix hypercontractive
inequality of Ben-Aroya, Regev, and de Wolf~\cite{BenAroyaRegevDeWolf2008}
bounds that mass in terms of the message length:
\[
 \sum_S\lambda^{|S|}\trnorm{\wh\rho(S)}^2
 \le 2^{2\lambda q/(1+\lambda)},\qquad 0\le\lambda\le1.
\]

Comparing Fourier mass with the message dimension proves
\cref{thm:qbhx-main}.  The important feature is
that the advantage enters through a power of order $1/r$: with at least
logarithmically many blocks, inverse-width advantage costs only a constant
factor in the lower bound.  Since the reduction retains a constant
fraction of the original blocks, this yields $\bOm{n}$ communication.
The appendix handles bounded width and the cycle-length congruences,
then transfers the bound to random-order streaming.

To obtain the graph consequences in \cref{cor:ngc-applications}, we
check that each reduction preserves the required input order and uses
only one execution of the quantum algorithm.
A graph-to-matrix transformation that emits two coupled entries per edge
does not automatically preserve uniform order of individual entries,
so a matrix-rank consequence in that model needs a separate argument.

\section{Related work}
\label{sec:related}

\subsection{Quantum memory and stream order}

Le Gall~\cite{LeGall2006} demonstrated an exponential separation between
quantum and classical online space complexity.  Natural graph-streaming
applications subsequently included triangle counting
\cite{Kallaugher2021}, cut problems~\cite{KallaugherParekh2022}, and
Maximum Directed Cut~\cite{KallaugherParekhVoronova2024}.  These algorithms use small quantum sketches that may be changed by
measurement.  We study how later classical input can be used to prepare
new sketches when further computations are needed.

Quantum streaming has seen substantial recent activity, in both theory
and implementation.  Work from 2025--2026 includes the trapped-ion
implementation of Niroula et al.~\cite{NiroulaEtAl2025},
quantum oracle sketching by Zhao et al.~\cite{ZhaoEtAl2026},
entropy estimation by Feng, Xu, Li, Guo, and Lin~\cite{FengEtAl2026},
and streaming approximation algorithms for Max-$k$SAT by Wang and
Yang~\cite{WangYang2026}.  Wu and Yang~\cite{WuYang2026} study decoded quantum interferometry
in streams, while Bouland, Ding, and Jain~\cite{BoulandDingJain2026}
study quantum space advantages in random-order streams.  Broader
frameworks include the pair sketch of Kallaugher, Parekh, and
Voronova~\cite{KallaugherParekhVoronovaPair}.

In random-order streaming, the algorithm receives a uniformly random
permutation of a fixed input.  Classical work includes algorithms for order
statistics~\cite{GuhaMcGregor2009} and lower-bound frameworks adapted to
random or almost-random arrival
\cite{ChakrabartiCormodeMcGregor2016,
ChiplunkarKallaugherKapralovPrice2022,AssadiSundaresan2023}.
Our ARHM separation uses this permutation promise to distribute repeated
source data among online requests.  It compares quantum and classical
space under the same promise and quantum space under different orders
of the same multiset.  The triangle application uses random repeated
edges to rebuild estimators.  Our additional NGC lower bound concerns
the ordinary single-arrival model.

\subsection{Consumable quantum data}

Hidden Matching~\cite{BarYossefJayramKerenidis2008} supplies the
one-request primitive.  Gilboa, Jain, and
McClean~\cite{GilboaJainMcClean2025} study multiple requests about one
input through asymmetric direct sums.  Their Multiple Hidden Matching
lower bound is $\bOm{\sqrt m}$ for $m\le N/2$, and they ask whether
linear dependence is possible below the square-root scale.
Their proof already extracts independent parity information from a
spanning forest of the output edges.  With edge-disjoint matchings,
distinctness of the selected edges forces a forest of square-root size.

Our improvement strengthens the combinatorial part of that argument.
Independent balanced requests force every transversal to have a linear
forest for $m\le N^{1/2-\delta}$; fresh sequential requests instead
force growth round by round.  The entropy arguments account for failure
and for output edges chosen using the quantum message.  The simultaneous
result establishes the proposed linear dependence throughout each
polynomially sub-square-root regime, while the sequential result reaches
$m=\floor{\sqrt N}$ at joint success $2/3$.  We use these communication bounds to prove the ARHM lower bounds
when all coordinate updates precede the requests.

\subsection{Classical samples and repeated access}

Quantum oracle sketching, introduced by Zhao et al.~\cite{ZhaoEtAl2026},
constructs coherent operations from a stream of random classical samples.
Their applications show how a small quantum device can process classical
data much larger than its memory.  ARHM's weak phase updates use the
same type of accumulation mechanism, specialized to Boolean coordinates.
We use this construction repeatedly to serve externally arriving requests.
The additional task is to control both the accuracy of replacement states
and their availability while completed states are consumed.  The samples
here come from a fixed multiset without replacement, and the lower bounds
concern the representation retained after coordinate access ends.

Related tradeoffs arise in learning and multipass streaming.
Liu, Raz, and Zhan~\cite{LiuRazZhan2023} study the sample requirements
of learning with bounded classical and quantum memory.
Hamoudi and Magniez~\cite{HamoudiMagniez2019} obtain quantum pass--space
tradeoffs for frequency-moment estimation.  These models vary sample
supply or permit deliberate rereading of an input.  In our constructions,
repeated occurrences are already part of one stream, and the algorithm
has no control over their positions.  ARHM compares different placements
of that supply relative to the requests it must answer.

\subsection{Triangle estimators under different access promises}

Kallaugher's hybrid estimator~\cite{Kallaugher2021} is the direct
algorithmic starting point.  At constant error, its single-arrival,
arbitrary-order space bound is
$\bOt{m^{8/5}\Delta_E^{4/5}/T^{6/5}}$.
The pair-sketch abstraction of Kallaugher, Parekh, and
Voronova~\cite{KallaugherParekhVoronovaPair} describes the destructive
quantum primitive shared by this and other streaming algorithms.
Our replenished estimator reconstructs that primitive from repeated
blocks and refreshes the classical sample, allowing successive observations
to replace simultaneous estimator copies.  This requires new blockwise
accounting for the hybrid estimator.  Duplicate handling,
complementary weights, and the averaging identity are needed to turn
reconstruction into a bound on the underlying simple graph's triangle
count.

We compare these bounds in expected space, using a measure comparable
to Kallaugher's~\cite{Kallaugher2021}; the precise convention and comparison
are given in \cref{sec:triangles}.

Classical single-arrival work includes hybrid sampling
\cite{KallaugherPrice2017} and the optimal estimator of
Jayaram and Kallaugher~\cite{JayaramKallaugher2021}.
Chakraborty et al.~\cite{ChakrabortyEtAl2026} extend triangle counting
to arbitrary repeated-edge streams, obtaining optimal-space algorithms
for the underlying simple graph.  Their promise permits adversarial
multiplicities and placements; our exact multiplicities and random order
supply the coverage needed to rebuild and reuse estimators.  A comparison
of classical and quantum complexity under our particular promise
remains open.

Single-arrival random-order algorithms form another comparison class,
including McGregor and Vorotnikova~\cite{McGregorVorotnikova2020} and
the threshold algorithm of Etgar et al.~\cite{EtgarEtAl2026}.
They exploit arrival order without receiving additional copies of edges.
Gao et al.~\cite{GaoEtAl2026} give quantum communication techniques
for search problems, including streaming lower bounds for finding a
triangle.  That witness-finding task differs from estimating the triangle
count in our repeated-input model.

\subsection{Robust communication and quantum NGC}

Chakrabarti, Cormode, and McGregor~\cite{ChakrabartiCormodeMcGregor2016}
develop robust communication lower bounds for randomly partitioned input.
Assadi and Sundaresan~\cite{AssadiSundaresan2023} use this framework
for Noisy Gap Cycle and derive graph-streaming consequences.  Their
linear classical robust lower bound holds once the cycle length exceeds
a sufficiently large constant times $\log n$.

We extend this framework to quantum communication.  We use the same
layered graph construction, hybrid reduction, and random edge partition,
but change the analysis of Alice's message:
a quantum encoding requires control of matrix Fourier coefficients
instead of scalar biases in the input distribution conditioned on Alice's
classical message.  The matrix
hypercontractive inequality of Ben-Aroya, Regev, and
de Wolf~\cite{BenAroyaRegevDeWolf2008} provides that control.
We state the communication bound for Block Hidden XOR, allowing the
unequal block widths produced by the random partition and the small
prediction advantage left by the reduction.  The resulting streaming
consequences are stated in \cref{cor:qro-consequences-main}.

\section{Preliminaries and models}
\label{sec:prelim}

We use $[n]=\{1,\ldots,n\}$, base-two logarithms unless written $\ln$,
and binary entropy $h_2$.  Entropies and mutual information are measured
in bits.  The notation $\wt O$ suppresses polylogarithmic factors in the
parameters of the relevant statement.  Parameters such as $m,L,r$ are
local to their problem: in particular, $m$ counts requests in Multiple
Hidden Matching and distinct edges in triangle counting.

\subsection{Input order, access, and output}

Our results use three different types of streams.  ARHM has repeated
coordinate updates and requests that must be answered immediately; we
compare random order with the order in which all updates arrive first.
For triangle counting, edges are repeated and the estimate is output
after the pass.  For NGC, each edge appears once and the decision is
also made after the pass.  We specify the common conventions below.

A random-order input is a fixed multiset of classical items, with its
occurrences labeled to distinguish repetitions in the probability space.
The stream is a uniformly random permutation of those occurrences.
Success is required for every allowed underlying input, averaging over
the permutation and the algorithm's randomness and measurements.  Labels
are visible only when included in the problem's input definition.
ARHM explicitly includes update labels and request tags; the repeated-edge
triangle algorithm sees only edges.  In the latter model, $m$ distinct
edges give $Rm$ stream occurrences.

We use the following quantum streaming model.  The input is classical
and is read once, one item at a time.  The algorithm uses a register of qubits, 
and applies a quantum channel every step of the stream. The choice of the quantum channel may depend
on the current item, supplied public parameters, fresh randomness, and
measurement outcomes retained in its classical memory.  Preprocessing
and postprocessing are allowed, as are measurements during the pass.
We do not require measurements to be deferred until the end: doing so
could require storing the earlier items that determine them. Our resource of interest is space.

\subsection{Space complexity}
\label{sec:space-convention}

Space is the total number of qubits in use.  Classical bits count alongside qubits.

We write $\Squ{\mathrm{ord},\eta}(P)$ and
$\Scl{\mathrm{ord},\eta}(P)$ for the minimum quantum and classical
space for problem $P$ under order promise $\mathrm{ord}$ and error at
most $\eta$, using the problem's specified output and success requirements.

For triangle counting we instead use expected space complexity, with a
measure comparable to that used by Kallaugher~\cite{Kallaugher2021}.
We give the precise convention in \cref{sec:triangles}.

\subsection{One-way quantum communication}
\label{sec:communication-convention}

Throughout, Alice and Bob share no prior entanglement.  Alice sends
one message of dimension at most $2^q$, including any transmitted classical
information.  Bob's local workspace is unrestricted, but starts without
additional information about Alice's input beyond the specified problem
inputs and permitted public randomness.  Alice's encoding cannot depend
on Bob's private requests.  In sequential models, future requests remain
unavailable until their rounds, including through public coins.

Independent public classical coins may be shared.  For an input $X$ and
its $q$-qubit encoding $Q$, the dimension bound is $I(X:Q)\le q$;
with public coins $C$, it is $I(X:Q\mid C)\le q$.  Data processing
bounds the information in any output obtained from that encoding.
We write $Q^\rightarrow$ for one-way quantum communication complexity;
the default error is worst-case joint error at most $1/3$, with other
or distributional guarantees stated explicitly.  Robust NGC additionally
averages over independent fair edge ownership for each fixed graph.

\subsection{The phase encoding for Hidden Matching}

For even $N$ and $x\in\bool^N$, define
\[
 \ket{\psi_x}=\frac1{\sqrt N}\sum_{j=1}^N(-1)^{x_j}\ket j.
\]
Given a perfect matching $M$, measure in the orthonormal basis
\[
 \left\{\frac{\ket u+(-1)^b\ket v}{\sqrt2}:
       \{u,v\}\in M,\ b\in\bool\right\}.
\]
For an edge $\{u,v\}$, the amplitude of the outcome with
$b\ne x_u\oplus x_v$ vanishes.  Each edge is returned with probability
$2/N$, together with its correct parity.  This is an exact one-request
protocol using $\ceil{\log N}$ qubits
\cite{BarYossefJayramKerenidis2008}.  The matching is known before
measurement, and the edge is chosen by its outcome.  The lower bounds
for repeated requests impose no requirement that the encoding or decoder
have this form.

\section{Replenishment with explicit requests: ARHM}
\label{sec:replenishment}

In this section we introduce Atomized Refreshable Hidden Matching (ARHM)
and prove the three-way separation of \cref{thm:ext-param-threeway}.
The stream gives coordinate updates for one fixed string, together with
matching requests that must be answered before the next item arrives.
We compare two orders of the same multiset: uniformly random order,
and an order in which all updates precede the requests.

Our quantum algorithm uses later coordinate updates to prepare new
states while requests consume the states already prepared.  We show
that only a small number of registers are needed in random order.
We then prove a classical lower bound under the same promise, followed
by a quantum lower bound for the update-first order using the
consumability results of \cref{sec:consumability}.

\subsection{Parameterized atomized Hidden Matching}
\label{sec:arhm-definition}
Fix an even $N$ and a public one-factorization
$\Fc=\set{M_1,\ldots,M_{N-1}}$ of $K_N$: each $M_i$ is a perfect
matching, and each unordered pair belongs to exactly one $M_i$.
Recall from \cref{sec:prelim} that the phase state $\ket{\psi_x}$ answers
one matching request exactly.

\begin{restatable}[Parameterized Atomized Refreshable Hidden Matching]{definition}{arhmDefinition}
\label{def:arhm}
Choose an integer $2\le r\le\floor{\sqrt N}$ and sufficiently large
absolute integer constants $A,C$.  Set
\begin{equation}\label{eq:ext-parameters}
\mu=Cr,\qquad L=A\mu r=ACr^2,\qquad
U=NL,\qquad H=\mu N,\qquad B=U/H=Ar.
\end{equation}
For a string $x\in\bool^N$, the stream $\ARHM_{N,r}$ contains
$L$ labeled update occurrences $(j,x_j,\ell)$ for each $j\in[N]$
and $\ell\in[L]$, and $r$ labeled requests $Q_{t,i_t}$, where
$t\in[r]$ and $i_t\in[N-1]$.  A request must immediately be answered
by $(e_t,b_t)$ with $e_t=\set{u_t,v_t}\in M_{i_t}$ and
$b_t=x_{u_t}\oplus x_{v_t}$.  Success means that all $r$ answers are
correct.  In random order, the $U+r$ labeled occurrences are uniformly
permuted; the tag $t$ identifies a request, not its arrival position.
\end{restatable}

Here $L$ is the repetition count per coordinate and $U$ is the total number
of updates.  The algorithm groups updates into $B$ batches of size $H$;
each batch contains $\mu$ occurrences of each coordinate in expectation.
The choices $\mu=Cr$ and $B=Ar$ control decoding error and the arrival of
requests, respectively.

\paragraph{Why immediate output?}
Outputting answers as they are obtained avoids having to store the output
list in working memory.  For the algorithm below, postponing output until
the end would require keeping the answers already obtained; storing $r$
edges and their parity bits explicitly uses $\bT{r\log N}$ classical bits.
Instead, each answer is written to the write-only output tape and need
not be retained, as in \cref{sec:space-convention}.  The requirement to
answer before the next stream item also lets us apply the sequential
consumability lower bound in the update-first order.

\begin{definition}[Update-first order]
\label{def:update-first}
All $U$ updates precede all $r$ requests, which are then revealed and
answered one at a time.  Correctness is required for every input, request
list, and legal order.
\end{definition}

\paragraph{Beyond update-first order.}
Putting all updates first is a convenient extreme, but the lower bound
does not require such a complete separation.  Updates and requests may
be interleaved in the prefix, provided sufficiently many fresh requests
remain after the last update.  More precisely,
\cref{prop:last-replenishment-point} shows that a suffix of
$s\le\sqrt N$ independent uniform matching factors, independent of the
input and prefix, requires $S\ge s/2-1$ space for joint success at least
$2/3$, under the simulation and memory-accounting conditions stated
there.  In particular, a growing suffix with $s=\bT{r}$ gives the same
$\bOm{r}$ lower bound.  Thus even a well-mixed prefix cannot compensate
for a long suffix of fresh requests with no further updates: the data
must remain available when the algorithm needs to rebuild its states.

\begin{restatable}[Parameterized three-way separation]{theorem}{arhmSeparationTheorem}
\label{thm:ext-param-threeway}
Let $r=r(N)\to\infty$ with $2\le r\le\floor{\sqrt N}$.
For sufficiently large absolute integer constants $A,C$ and joint error
at most $1/3$, \cref{def:arhm} satisfies
\begin{equation}\label{eq:ext-triple}
\boxed{\begin{aligned}
\Squ{\mathrm{rand},1/3}(\ARHM_{N,r})&=\bO{\log N\log(r+1)},\\
\Scl{\mathrm{rand},1/3}(\ARHM_{N,r})&=\bOm{\sqrt N},\\
\Squ{\mathrm{upd\text{-}first},1/3}(\ARHM_{N,r})&=\bOm{r},
\end{aligned}}
\end{equation}

where $U+r =\bT{Nr^2}$ is the length of the stream.
\end{restatable}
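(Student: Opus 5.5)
The proof splits into three independent parts, one for each line of \eqref{eq:ext-triple}.

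\emph{Upper bound.} We follow the two-bank construction of \cref{sec:overview-arhm}. We cut the $U$ updates, in arrival order, into $B=Ar$ consecutive batches of $H=\mu N$ updates each. During a batch, each register of the BUILD bank starts in $N^{-1/2}\sum_j\ket j$, and every update $(j,x_j,\ell)$ applies the phase $e^{i\pi x_j/\mu}$ to $\ket j$ in all of these registers. When the batch ends, the banks swap roles. A request $Q_{t,i_t}$ is answered by taking a completed READY register and measuring it in the $M_{i_t}$ basis of the phase encoding.

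The first ingredient is \cref{lem:atomized-rebuilding}. For a fixed $x$ and matching, let $C_j$ be the hypergeometric count of coordinate $j$ in a batch; then $\E{(C_j-\mu)^2}\le\mu$. The probability of a wrong parity is $\frac1N\sum_{\{u,v\}}\sin^2(\pi\Delta_{uv}/2\mu)$, which is at most $\pi^2/\mu$. With $\mu=Cr$ and a union bound over the $r$ answers, the total decoding error is at most $\pi^2/C$. The union bound does not need independence between registers of the same batch.

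The second ingredient is availability. We generate the random order as an independent update order together with a uniform interleaving of the $r$ requests among the $U+1$ gaps. The number of requests landing in any fixed batch then follows a beta-binomial-type law with mean $r/B=1/A$. \Cref{lem:queries-one-rebuild} gives a tail bound strong enough that, after a union bound over the $B$ batches, $\bO{\log(r+1)}$ registers per bank suffice except with small probability. The first batch has no ready registers, so we add the probability that it contains a request, which is at most $1/A$. Each register uses $\ceil{\log N}$ qubits, and the classical counters add only $\bO{\log N}$ bits, so the total space is $\bO{\log N\log(r+1)}$. Choosing $A$ and $C$ large makes the total failure probability at most $1/3$. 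We expect this availability step, i.e., getting the demand tail to support a union bound over $Ar$ batches with only logarithmic slack, to be the main technical obstacle.

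\emph{Classical lower bound.} We reduce from one-way communication. Fix the request list so that every $i_t$ equals a single index $I$, uniform over $[N-1]$ and unknown to Alice. Alice holds $x$, samples a uniformly random order herself, and runs the streaming algorithm on the prefix before the first request. This prefix contains only updates, so she needs only $x$ to simulate it. She sends the memory contents, and Bob feeds in $Q_{1,I}$. Joint success $\ge2/3$ implies that the first answer is correct with probability $\ge2/3$. \Cref{lem:classical-factorized-hm} (Hidden Matching over uniform factors of $\Fc$) then forces $\bOm{\sqrt N}$ bits.

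\emph{Quantum update-first lower bound.} Alice simulates all $U$ updates from $x$ and sends the resulting state, including its classical part. Bob then answers $r$ requests with independent uniform factor indices one at a time, so this is a protocol for sequential Multiple Hidden Matching over fresh factors. Since $r\le\sqrt N$, \cref{thm:sequential-hidden-matching} (equivalently \cref{cor:app-seqmhm-joint}) at joint success $2/3$ gives $S\ge r/2-1=\bOm{r}$. It remains to check that uniform factors are balanced, i.e., each pair appears with probability $1/(N-1)$, which holds because $\Fc$ is a one-factorization.
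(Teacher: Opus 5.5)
Your proposal is correct and matches the paper's proof part for part. The paper uses the same two-bank rebuilding algorithm, analyzed with \cref{lem:atomized-rebuilding}, the beta-binomial tail of \cref{lem:queries-one-rebuild}, and the $1/A$ initial-interval bound; the same first-request reduction to \cref{lem:classical-factorized-hm}; and the same post-update state transfer to \cref{cor:sequential-factorized-hm}. The stream length $U+r=ACNr^2+r$, which you leave implicit, is immediate from \eqref{eq:ext-parameters}.

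One small slip: each edge of $M$ is returned with probability $2/N$, so the wrong-parity probability is $\frac2N\sum_{\{u,v\}\in M}\sin^2\bigl(\pi\Delta_{uv}/(2\mu)\bigr)$, not $\frac1N\sum(\cdot)$. With the correct factor, the step $(a-b)^2\le2a^2+2b^2$ still gives exactly $\E{p_{\rm err}}\le\pi^2/\mu$, so your conclusion is unaffected.
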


For polynomially growing $r$, the quantum random-order bound is
polylogarithmic in $N$, while both lower bounds are polynomial.
For the classical lower bound, it will suffice to consider the first
request, since joint success implies a correct answer to that request.
The quantum lower bound uses the whole sequence.  We give variants
with weaker correctness requirements in \cref{app:consumability-extra}.

\subsection{Random-order quantum upper bound: rebuilding from later data}
\label{sec:arhm-upper}

We maintain two banks of registers.  One holds completed states that
can answer requests, while updates prepare states in the other.
Once a build is complete, we exchange the roles of the two banks.
It is therefore enough to keep as many completed states as we need
before the next build finishes.
Partition the updates, in their arrival order, into $B=Ar$ batches of
$H=\mu N$ updates each.  Requests do not count toward a batch's length.
Set $P=\ceil{K\log(r+1)}$, for a sufficiently large absolute constant $K$.
Starting from the uniform state, each update adds a small phase:
\[
  \ket{+_N}=\frac1{\sqrt N}\sum_{j=1}^N\ket j,
  \qquad
  V_{j,x_j}=\exp\!\left(i\frac{\pi x_j}{\mu}\ket j\!\bra j\right).
\]
If every coordinate occurred exactly $\mu$ times in a batch, these
updates would prepare $\ket{\psi_x}$ exactly.  We will bound the error
caused by the actual coordinate counts.

\paragraph{Algorithm.}
Maintain a \textsc{Ready} bank and a \textsc{Build} bank, each containing
at most $P$ registers of dimension $N$, and an update counter $h$.
\begin{enumerate}[leftmargin=1.8em]
\item \textbf{Initialize.}  Leave \textsc{Ready} empty, prepare $P$
copies of $\ket{+_N}$ in \textsc{Build}, and set $h=0$.
\item \textbf{Update $(j,x_j,\ell)$.}  Apply $V_{j,x_j}$ to each
\textsc{Build} register and increment $h$.  When $h=H$, discard the old
\textsc{Ready} bank, make \textsc{Build} the new \textsc{Ready} bank,
prepare a fresh \textsc{Build} bank, and reset $h=0$.
\item \textbf{Request $Q_{t,i_t}$.}  If \textsc{Ready} is empty, declare
failure.  Otherwise consume one of its registers by measuring in the basis
\[
  \left\{\frac{\ket u+(-1)^b\ket v}{\sqrt2}:
           \set{u,v}\in M_{i_t},\ b\in\bool\right\},
\]
and output the observed edge and bit.  Leave \textsc{Build} and $h$
unchanged.
\end{enumerate}

\paragraph{Relation to quantum oracle sketching.}
The phase update specializes the mechanism of
Zhao et al.~\cite{ZhaoEtAl2026}: their operation
$\exp(i\tau f(j)\ket j\!\bra j/M)$ becomes $V_{j,x_j}$ with
$f(j)=x_j$, $M=\mu N$, and $\tau=\pi N$.
Here coordinate counts are hypergeometric because the stream permutes a
fixed multiset.  We need accuracy only for the matching measurement on the
accumulated state.  The additional streaming issue is to keep enough
completed states available while later updates build replacements.

The analysis has two parts: a completed register has small expected
decoding error, and random interleaving keeps demand on each bank small.

\subsubsection{A completed build register is accurate}
\label{app:atomized-rebuilding}

\paragraph{Hypergeometric sampling.}
Write $Y\sim\operatorname{Hypergeom}(a,b,h)$ when $Y$ counts marked items
in a uniformly chosen sample of $h$ distinct items from a population of
$a$ items, of which $b$ are marked.  For integers $a\ge2$ and
$0\le b,h\le a$, put $p=b/a$.  Then
\[
 \E{Y}=hp,\qquad
 \operatorname{Var}(Y)=hp(1-p)\frac{a-h}{a-1}\le hp.
\]
Indeed, each sampled position has a Bernoulli indicator of mean $p$,
and two distinct positions have covariance $-p(1-p)/(a-1)$.
The factor $(a-h)/(a-1)$ accounts for sampling without replacement.
For a one-item population the count is deterministic and has variance zero.

\begin{lemma}[Atomized rebuilding]
\label{lem:atomized-rebuilding}
\label{lem:app-atomized-rebuilding}
Fix $x\in\bool^N$, a perfect matching $M$, and a batch of $H=\mu N$
positions in the uniformly random update permutation.  Let $p_{\rm err}$
be the probability that a register built from this batch reports an
incorrect parity when measured in the basis of $M$.  Then
$\E{p_{\rm err}}\le\pi^2/\mu$, where the expectation is over the update
permutation.  The bound is uniform in $x$ and $M$.
\end{lemma}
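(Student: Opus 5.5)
Since the updates $V_{j,x_j}$ are diagonal and commute, the register built from the batch is exactly $\ket{\phi_C(x)}=N^{-1/2}\sum_j e^{i\theta_j}\ket j$. Here $C_j$ is the number of occurrences of coordinate $j$ in the batch and $\theta_j=\pi C_jx_j/\mu$. I would first compute $p_{\rm err}$ exactly for a fixed count vector $C$ and only then average over the permutation.

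For an edge $\{u,v\}\in M$ with true parity $b^*=x_u\oplus x_v$, the wrong outcome has amplitude $\tfrac{1}{\sqrt{2N}}\bigl(e^{i\theta_u}-(-1)^{b^*}e^{i\theta_v}\bigr)$. Its squared modulus is $\tfrac{1}{2N}\,\bigl|1-e^{i(\theta_v-\theta_u+\pi b^*)}\bigr|^2$. Write $\theta_j=\pi x_j+\pi x_j(C_j-\mu)/\mu$ and note that $\pi(x_v-x_u)+\pi b^*\in 2\pi\mathbb{Z}$. The exponent therefore reduces to $\Delta_{uv}=\pi\bigl(x_v(C_v-\mu)-x_u(C_u-\mu)\bigr)/\mu$. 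Using $|1-e^{i\Delta}|^2\le\Delta^2$ and $(a-b)^2\le 2a^2+2b^2$, each edge contributes at most
\[
\frac{1}{2N}\cdot\frac{2\pi^2}{\mu^2}\bigl((C_u-\mu)^2+(C_v-\mu)^2\bigr).
\]
Because $M$ is a perfect matching, every coordinate appears in exactly one edge. Summing over the edges gives
\[
p_{\rm err}\le\frac{\pi^2}{N\mu^2}\sum_{j=1}^N (C_j-\mu)^2 .
\]

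Next I take the expectation over the update permutation. A fixed batch of $H$ positions in a uniformly random permutation of the $U=NL$ labeled updates is a uniform $H$-subset. Hence $C_j\sim\operatorname{Hypergeom}(U,L,H)$, with mean $HL/U=\mu N L/(NL)=\mu$. The variance bound stated just before the lemma gives $\E{(C_j-\mu)^2}=\operatorname{Var}(C_j)\le H\cdot L/U=\mu$. Substituting, $\E{p_{\rm err}}\le \pi^2 N\mu/(N\mu^2)=\pi^2/\mu$. None of these steps depends on $x$ or $M$, so the bound is uniform in both.

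The only delicate step is the phase bookkeeping: showing that the exact-$\mu$ phases cancel modulo $2\pi$ against the parity sign, so that the error depends only on the count deviations. Everything else is routine. If the constant needed tightening, I would keep the factor $x_j\in\{0,1\}$, which only helps the bound.
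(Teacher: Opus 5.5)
Your proof is correct and matches the paper's argument step for step. In both, the commuting diagonal phases give $\ket{\phi_C(x)}$; the wrong-parity probability on each matched pair is bounded by the squared count deviations via $1-\cos z\le z^2/2$ and $(a-b)^2\le 2a^2+2b^2$; and the hypergeometric bound $\operatorname{Var}(C_j)\le\mu$ finishes it. Your explicit amplitude computation with the mod-$2\pi$ cancellation is just a more detailed version of the paper's conditional formula $(1-\cos(\cdot))/2$ for an edge selected with probability $2/N$.
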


\begin{proof}
Let $C_j$ count occurrences of coordinate $j$ in the batch.  The
commuting phase updates produce
\[
  \ket{\phi_C(x)}
   =\frac1{\sqrt N}\sum_{j=1}^N e^{i\pi C_jx_j/\mu}\ket j
   =\frac1{\sqrt N}\sum_{j=1}^N(-1)^{x_j}e^{i\pi x_j(C_j-\mu)/\mu}\ket j.
\]
The measurement selects each edge $\set{u,v}\in M$ with probability $2/N$.
Conditioned on that edge, the wrong-parity probability is
$(1-\cos(\pi[x_u(C_u-\mu)-x_v(C_v-\mu)]/\mu))/2$.  Thus, using
$1-\cos z\le z^2/2$ and $(a-b)^2\le2a^2+2b^2$,
\begin{equation}\label{eq:app-rebuild-error}
\begin{aligned}
  p_{\rm err}
  &=\frac1N\sum_{\set{u,v}\in M}\left(1-\cos\left(\frac{\pi[x_u(C_u-\mu)-x_v(C_v-\mu)]}{\mu}\right)\right)\\
  &\le\frac1N\sum_{j=1}^N\left(\frac{\pi x_j(C_j-\mu)}{\mu}\right)^2
   \le\frac{\pi^2}{N\mu^2}\sum_{j=1}^N(C_j-\mu)^2.
\end{aligned}
\end{equation}
Each $C_j$ has distribution $\operatorname{Hypergeom}(NL,L,\mu N)$,
with mean $\mu$ and variance
$\mu(1-1/N)(NL-\mu N)/(NL-1)\le\mu$.
Taking expectations in \eqref{eq:app-rebuild-error} gives
$\E{p_{\rm err}}\le\pi^2/\mu$.
\end{proof}

\begin{remark}[Several registers from one batch]
\label{rem:app-bank-correlations}
Conditional on the count vector $C$, a completed bank is in the product
state $\ket{\phi_C(x)}^{\otimes P}$.  Its registers share the same
approximation error, but each matching measurement has error at most
$\pi^2\sum_j(C_j-\mu)^2/(N\mu^2)$ by
\eqref{eq:app-rebuild-error}.  Serving $q\le P$ requests therefore has
conditional failure probability at most
$ q\pi^2\sum_j(C_j-\mu)^2/(N\mu^2)$ by a union bound.
Independence of the approximation errors is unnecessary.
\end{remark}

\subsubsection{Random interleaving limits requests between rebuilds}
\label{app:atomized-occupancy}

Generate the random stream by three independent choices: the order of
the labeled updates, the order of the labeled requests, and a uniform
interleaving of $U$ update symbols and $r$ request symbols.  The first
choice determines the states; the latter two determine which bank serves
each request and which matching it asks for.

Let $G_j$ be the number of requests arriving after exactly $j$ updates,
for $0\le j\le U$.  The vector $(G_0,\ldots,G_U)$ is uniform over weak
compositions of $r$ into $U+1$ parts: each such vector specifies exactly
one interleaving.  In particular, $\E{G_j}=r/(U+1)$.

The requests before the first bank is ready occupy gaps $0,\ldots,H-1$.
For $1\le b<B$, bank $b$ serves gaps $bH,\ldots,(b+1)H-1$; bank $B$
serves only gap $U$.  Thus every completed bank serves at most $H$ gaps.

\begin{lemma}[Queries during one rebuild]
\label{lem:queries-one-rebuild}
\label{lem:app-atomized-occupancy}
Let $Y$ count requests in any fixed set of at most $H$ update gaps.
For $1\le k\le\min\{r,H\}$,
\[
  \prob{Y\ge k}
  \le\binom rk\frac{H^{\overline k}}{(U+1)^{\overline k}}
  \le\left(\frac{2e}{Ak}\right)^k,
  \qquad a^{\overline k}:=a(a+1)\cdots(a+k-1).
\]
\end{lemma}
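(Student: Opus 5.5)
We plan to use the exchangeable description of $(G_0,\ldots,G_U)$ as a uniform weak composition of $r$ into $U+1$ parts, and to bound $\prob{Y\ge k}$ by a first-moment argument over $k$-subsets of requests. Label the $r$ requests by their tags. The event $Y\ge k$ implies that some $k$-subset of the requests falls entirely into the fixed gap set $\Gamma$, with $|\Gamma|\le H$. A union bound over the $\binom rk$ subsets therefore reduces the claim to the following statement: for a fixed set of $k$ requests, the probability that all of them land in $\Gamma$ is at most $H^{\overline k}/(U+1)^{\overline k}$.

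To compute this probability, we view a uniform weak composition as a uniform arrangement of $r$ stars and $U$ bars. Equivalently, it arises from a Pólya urn (Bose--Einstein statistics): request $t$ is placed into gap $g$ with probability $(1+n_g)/(U+1+(t-1))$, where $n_g$ counts earlier requests already in $g$. This sequential description yields the uniform distribution on compositions, since each composition then has probability $\prod_g n_g!/(U+1)^{\overline r}$ times the multinomial count of labelings. Under the urn, the probability that the first $k$ placed requests all land in $\Gamma$ is
\[
\prod_{i=0}^{k-1}\frac{|\Gamma|+i}{U+1+i}=\frac{|\Gamma|^{\overline k}}{(U+1)^{\overline k}}\le\frac{H^{\overline k}}{(U+1)^{\overline k}} .
\]
By exchangeability of the urn, the same bound holds for any fixed $k$-subset of requests. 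Alternatively, one can count directly: a uniform multiset of size $r$ from $U+1$ types, with a fixed sub-multiset of size $k$ inside $\Gamma$, gives the same rising-factorial ratio. This ratio is the beta-binomial moment $\E{\binom Yk}=\binom rk H^{\overline k}/(U+1)^{\overline k}$, and Markov's inequality then gives the first inequality directly, since $\ind[Y\ge k]\le\binom Yk$. We intend to present the moment identity version, as it is cleanest.

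For the second inequality, we use $\binom rk\le (er/k)^k$ and bound each factor:
\[
\frac{H+i}{U+1+i}\le\frac{H+i}{U}\le\frac{2H}{U}.
\]
The last step uses $i<k\le H$, so $H+i\le 2H$. Since $U/H=B=Ar$, each factor is at most $2/(Ar)$. Hence
\[
\prob{Y\ge k}\le\left(\frac{er}{k}\cdot\frac{2}{Ar}\right)^k=\left(\frac{2e}{Ak}\right)^k .
\]

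The only step requiring care is the exact moment computation $\E{\binom Yk}$ for a uniform weak composition. We will verify it by the stars-and-bars count: choosing which $k$ stars (with labels) lie in $\Gamma$ reduces to counting compositions of $r$ with a prescribed sub-multiset in $\Gamma$. This matches the beta-binomial $(r;\,|\Gamma|,\,U+1-|\Gamma|)$ factorial moment formula. We regard this as the main obstacle; the remaining steps are routine estimates.
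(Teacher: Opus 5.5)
Your proof is correct and follows essentially the same route as the paper. Your union bound over $k$-subsets of requests is exactly the binomial moment $\E{\binom{Y}{k}}=\binom{r}{k}H^{\overline k}/(U+1)^{\overline k}$ followed by Markov's inequality, and your estimates $\binom{r}{k}\le(er/k)^k$ and $(H+i)/(U+1+i)\le 2H/U=2/(Ar)$ are the paper's. The only difference is how the moment is obtained: you use the P\'olya-urn (Bose--Einstein) description and exchangeability, which gives the rising-factorial ratio directly and handles $\abs{\Gamma}<H$ without enlarging the gap set. The paper instead writes down the beta-binomial mass function by counting weak compositions and sums it with Vandermonde's identity.
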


\begin{proof}
Enlarging the set of gaps can only increase $Y$, so it suffices to
consider exactly $H$ gaps.  Counting weak compositions gives
\begin{equation}\label{eq:app-beta-binomial}
  \prob{Y=t}
  =\frac{\binom{t+H-1}{H-1}\binom{r-t+U-H}{U-H}}
         {\binom{r+U}{U}},\qquad 0\le t\le r.
\end{equation}
This is the beta-binomial law with parameters $(r,H,U+1-H)$.
For a general beta-binomial variable $Z$ with parameters $(n,\alpha,\beta)$,
where $n$ is a nonnegative integer and $\alpha,\beta>0$, the mass function is
\[
 \prob{Z=t}=\binom nt
 \frac{\alpha^{\overline t}\beta^{\overline{n-t}}}
      {(\alpha+\beta)^{\overline n}},\qquad 0\le t\le n,
\]
with $a^{\overline0}=1$.  Its mean and variance are
\[
 \E{Z}=\frac{n\alpha}{\alpha+\beta},\qquad
 \operatorname{Var}(Z)=
 \frac{n\alpha\beta(\alpha+\beta+n)}
      {(\alpha+\beta)^2(\alpha+\beta+1)}.
\]
These follow from the factorial moments; below we use their general
binomial-moment form specialized to $Y$.
Multiplying by $\binom tk$ and summing, using Vandermonde's identity,
gives its binomial moment
\begin{equation}\label{eq:app-occupancy-moment}
  \E{\binom Yk}=\binom rk\frac{H^{\overline k}}{(U+1)^{\overline k}}.
\end{equation}
Since $\binom Yk\ge1$ on $Y\ge k$, Markov's inequality proves the first
bound.  For the second, use $H^{\overline k}\le(2H)^k$ for $k\le H$,
$(U+1)^{\overline k}\ge U^k$, $\binom rk\le(er/k)^k$, and
$H/U=1/(Ar)$.
\end{proof}

\begin{corollary}[No bank exhaustion]
\label{cor:app-no-bank-exhaustion}
For $P=\ceil{K\log(r+1)}$ and sufficiently large absolute $K$, the
probability that any completed bank must serve more than $P$ requests
is $\lO{1}$ as $r\to\infty$.
\end{corollary}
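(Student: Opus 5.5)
The plan is a union bound over the completed banks, with each term controlled by \cref{lem:queries-one-rebuild}. There are $B=Ar$ completed banks, indexed $b=1,\ldots,B$. Each serves a fixed set of at most $H$ update gaps: bank $b<B$ serves gaps $bH,\ldots,(b+1)H-1$, and bank $B$ serves only gap $U$. This assignment is determined by the interleaving alone, so it does not depend on the random stream contents. For each bank let $Y_b$ be the number of requests falling in its gaps. The event that some bank must serve more than $P$ requests is $\bigcup_b\{Y_b\ge P+1\}$, and its probability is at most $\sum_b\prob{Y_b\ge P+1}$.

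Next I would apply \cref{lem:queries-one-rebuild} with $k=P+1$. First dispose of the edge cases. If $P+1>r$, then $Y_b\le r<P+1$ and the event is empty. The lemma also requires $k\le H$, and this holds because $H=\mu N=CrN\ge r\ge k$ whenever $k\le r$. So for $P+1\le r$ we get
\[
 \prob{Y_b\ge P+1}\le\left(\frac{2e}{A(P+1)}\right)^{P+1}.
\]
Summing over the $B=Ar$ banks gives a failure probability of at most $Ar\,(2e/(A(P+1)))^{P+1}$.

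It remains to show this is $\lO{1}$. Because $P+1\ge K\log(r+1)\to\infty$, we have $2e/(A(P+1))\le 1/2$ once $r$ is large, for any fixed $A\ge1$. The bound is then at most
\[
 Ar\,2^{-(P+1)}\le Ar\,2^{-K\log(r+1)}=Ar\,(r+1)^{-K}.
\]
Since logarithms are base two, taking $K\ge2$ makes this at most $A/r\to0$. In fact the base $2e/(A(P+1))$ itself tends to $0$, so the decay is superpolynomial, but this crude estimate already suffices.

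I expect no real obstacle here. The only points needing care are the ones already handled: the applicability conditions $k\le\min\{r,H\}$ of the lemma, and the final bank $B$, which serves a single gap. That bank is covered because the lemma allows any fixed set of at most $H$ gaps. The initial pre-ready interval is not a completed bank, so it is excluded here and treated separately in the main theorem.
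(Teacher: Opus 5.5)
Your proof is correct and follows the paper's argument: a union bound over the $B=Ar$ completed banks, each controlled by \cref{lem:queries-one-rebuild}, after setting aside the trivial case $P\ge r$. The differences are cosmetic. You apply the lemma with $k=P+1$ to the exact event, whereas the paper uses $k=P$ to bound the larger event $Y\ge P$. You also spell out the decay $Ar(r+1)^{-K}\le A/r$, which the paper simply asserts as $o(1)$.
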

\begin{proof}
If $P\ge r$, exhaustion is impossible.  Otherwise $P\le r\le H$,
and a union bound over all $Ar$ completed banks, including the final
bank, gives
\[
  \prob{\text{some bank must serve more than }P\text{ requests}}
  \le Ar\left(\frac{2e}{AP}\right)^P=\lO{1}.
\]
We bounded the larger event of at least $P$ requests at a bank.
\end{proof}

The initial interval is different because no bank is ready there.  The
probability of any request before the first bank is ready satisfies
\begin{equation}\label{eq:app-initial-query}
  \prob{\sum_{j=0}^{H-1}G_j>0}
  \le\E{\sum_{j=0}^{H-1}G_j}=\frac{rH}{U+1}\le\frac1A.
\end{equation}

\begin{proposition}[Quantum random-order upper bound]
\label{prop:arhm-quantum-upper}
\label{prop:arhm-qrand-main}
For growing $r\le\floor{\sqrt N}$ and sufficiently large absolute
constants $A,C$,
$\Squ{\mathrm{rand},1/3}(\ARHM_{N,r})=\bO{\log N\log(r+1)}$.
\end{proposition}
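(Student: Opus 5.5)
We analyze the two-bank algorithm above and combine three failure sources with a union bound: a request arriving before any bank is ready, a completed bank exhausted by more than $P$ requests, and a wrong parity on some served request. \textbf{Space.} Each bank holds at most $P=\ceil{K\log(r+1)}$ registers of $\ceil{\log N}$ qubits. The counter $h<H=\mu N=CrN$ needs $\bO{\log N+\log r}=\bO{\log N}$ classical bits, since $r\le\sqrt N$. We also need $\bO{\log r}$ bits to track how many \textsc{Ready} registers remain. The index $i_t$ of the current request is read from the input item and used at once. The total is therefore $\bO{\log N\log(r+1)}$.

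\textbf{Schedule failures.} We use the three-way decomposition of the random stream (update order, request order, interleaving). By \eqref{eq:app-initial-query}, the probability that a request arrives in gaps $0,\ldots,H-1$, where \textsc{Ready} is empty, is at most $1/A$. By \cref{cor:app-no-bank-exhaustion}, the probability that some completed bank must serve more than $P$ requests is $\lO{1}$ as $r\to\infty$. Let $\mathcal G$ be the event that neither happens. On $\mathcal G$, every request is served by a fresh register from the bank completed from the batch immediately preceding its gap. Each request consumes its own register, so other requests do not disturb it: the registers are in the product state of \cref{rem:app-bank-correlations}.

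\textbf{Decoding failures.} We condition on the interleaving and the request order, which are independent of the update order. Then each request $t$ is assigned to a fixed batch $b(t)$ and a fixed matching $M_{i_t}$. By \cref{lem:atomized-rebuilding}, applied to that fixed batch and matching, the expected error probability of that request is at most $\pi^2/\mu$. This holds even though several requests draw on the same batch, as \cref{rem:app-bank-correlations} notes: the union bound needs only the marginal error bounds. Summing over the $r$ requests gives expected total decoding failure at most $r\pi^2/\mu=\pi^2/C$. Measurements on other registers do not affect the outcome statistics of a given register, because the registers are in a product state conditional on $C$. Hence the joint failure probability is at most $1/A+\pi^2/C+\lO{1}$. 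This is below $1/3$ for $A,C$ large and $r$ large, and we may fix a threshold beyond which $r$ counts as large. Correctness holds for every $x$ and request list, since the bounds are uniform.

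\textbf{Main obstacle.} The delicate step is justifying that the per-request error bound survives the dependence between the batch contents and which requests a bank serves. The three-way independent decomposition handles this: the schedule is independent of the update permutation, so conditioning on it leaves \cref{lem:atomized-rebuilding} applicable to each request. There is one remaining point to check. Exhaustion and the empty initial bank are determined by the schedule alone, so restricting to $\mathcal G$ introduces no bias in the decoding analysis.
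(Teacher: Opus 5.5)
Your proposal is correct and follows the paper's proof. You separate out the schedule-only failures (a request before the first completed bank, and bank exhaustion), bounded by \eqref{eq:app-initial-query} and \cref{cor:app-no-bank-exhaustion}, then fix a good interleaving and request order so that \cref{lem:atomized-rebuilding} applies to each request and a union bound gives $\pi^2/C$, and finally count $2P$ registers of $\lceil\log N\rceil$ qubits plus $O(\log N)$ classical control bits.
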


\begin{proof}
Choose $P$ as in \cref{cor:app-no-bank-exhaustion}.  Call a schedule good
if no request precedes the first completed bank and no bank exhausts its
$P$ registers.  By \eqref{eq:app-initial-query} and the corollary, the
schedule is bad with probability at most $1/A+\lO{1}$.

Fix a good interleaving and an order of the labeled requests.  These
choices identify the bank and matching used by each request, while the
update permutation remains uniform.  By \cref{lem:atomized-rebuilding},
each request has error at most $\pi^2/\mu$, averaged over that permutation
and its measurement.  A union bound gives decoding failure at most
$r\pi^2/\mu=\pi^2/C$.  This does not condition on previous answers
being correct and does not require independence between banks or their
errors.  Averaging over good schedules, total failure is at most
\[
  \frac1A+\frac{\pi^2}{C}+\lO{1}<\frac13
\]
for sufficiently large $A,C$ and then sufficiently large $N$.

At most $2P$ registers are live, each using $\ceil{\log N}$ qubits.
Counters and classical control use $\bO{\log N}$ bits, since
$H,U$ are polynomial in $N$.  Initialization, phase updates, and
matching measurements can be performed one register at a time.
In the unrestricted-operation space model, these fixed public operations
need no stored matching table; their outcomes and control require only
$\bO{\log N}$ bits.  Discarding the old bank before allocating the next
keeps the complete live memory, including workspace, within
$\bO{\log N\log(r+1)}$.
\end{proof}

\subsection{Classical random-order lower bound}
\label{sec:arhm-classical-lower}
\label{app:factorized-hm-classical}

We prove the classical streaming lower bound by reducing to the first
request.  For this we need a Hidden Matching communication lower bound
when Bob's matching comes from the public one-factorization.
Alice receives $x$, and Bob receives a factor index $I$ and must return
an edge of $M_I$ together with its parity.

\begin{lemma}[Classical factorized Hidden Matching]
\label{lem:classical-factorized-hm}
\label{lem:factorized-hm-classical-main}
For any one-factorization $\Fc$ of $K_N$ and fixed $p>1/2$, a
public-coin classical one-way protocol with success at least $p$ requires
$\bOm[p]{\sqrt N}$ bits.  This holds already for independent uniform
$X\in\bool^N$ and $I\in[N-1]$.
\end{lemma}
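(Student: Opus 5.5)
We follow the Fourier-analytic route of Gavinsky, Kempe, Kerenidis, Raz and de Wolf, as sketched in \cref{sec:overview-arhm}. By Yao's principle it suffices to fix the public coins. We then bound deterministic one-way protocols with $c$-bit messages on uniform $X$ and uniform $I$. Each message $a$ defines a set $A_a\subseteq\bool^N$, with $X$ uniform on $A_a$ conditional on the message. Given $a$ and $I=i$, Bob's best strategy is to output the edge $\set{u,v}\in M_i$ whose parity $x_u\oplus x_v$ is most biased under the uniform distribution on $A_a$. His success probability is then $\tfrac12+\tfrac12\max_{\set{u,v}\in M_i}\abs{\wh f_a(\set{u,v})}$, where $f_a$ is the balanced density of $A_a$, normalized so that $\wh f_a(S)=\E[X\in A_a]{(-1)^{\sum_{j\in S}X_j}}$.

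\textbf{Step 1: discard small message sets.} Messages with $\abs{A_a}\le 2^{N-c-\log(1/\gamma)}$ carry total probability at most $\gamma$. Choosing $\gamma$ a small constant depending on $p$, we may restrict attention to messages with $\abs{A_a}\ge 2^{N-c'}$, where $c'=c+O_p(1)$.

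\textbf{Step 2: level-two KKL bound.} For such sets, the level-two inequality gives
\[
  \sum_{\abs S=2}\wh f_a(S)^2=O(c'^2/N^2)\cdot N^2/N^2\cdot\ldots
\]
In the cleaner normalization, with $\wh f_a(S)$ the correlation bias as above, the level-two inequality states
\[
  \sum_{\abs S=2}\wh f_a(S)^2\le O\!\left(\bigl(\log(2^N/\abs{A_a})\bigr)^2\right)=O(c'^2).
\]
This holds once $c'\le\eps N$, which we may assume, since otherwise $c=\Omega(N)$ and we are done.

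\textbf{Step 3: average over the factor.} Here the one-factorization enters, and this is the only place the argument departs from the uniform-matching case. Bound the maximum by the sum: $\max_{e\in M_i}\abs{\wh f_a(e)}\le(\sum_{e\in M_i}\wh f_a(e)^2)^{1/2}$. Apply Jensen over $I$. Since every pair of $[N]$ lies in exactly one factor, a uniform $I$ covers each pair with probability exactly $1/(N-1)$. Hence
\[
  \E[I]{\max_{e\in M_I}\abs{\wh f_a(e)}}\le\Bigl(\tfrac1{N-1}\sum_{\abs S=2}\wh f_a(S)^2\Bigr)^{1/2}=O\bigl(c'/\sqrt N\bigr).
\]
Success at least $p$ then forces $p-\gamma-\tfrac12\le O(c'/\sqrt N)$. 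Taking $\gamma=(p-\tfrac12)/2$ yields $c=\Omega_p(\sqrt N)$.

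\textbf{Main obstacle.} The key step is the correct form of the level-two inequality for a set of density $2^{-c'}$. In the normalization above it reads $\sum_{\abs S=2}\wh f(S)^2\le O(c'^2)$, or equivalently $\le O(\ln^2(1/\text{density}))$; I would cite it as in Gavinsky et al. Some care is needed with the regime restriction $c'\le\eps N$ and with the constant loss from discarding small sets. Steps 1 and 3 are routine once the covering property of the one-factorization is in hand.
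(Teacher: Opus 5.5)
Your proof is correct and follows the paper's argument. Both use the level-two inequality on each message class, then bound the maximum by the $\ell_2$ norm, apply Jensen over $I$, and use the fact that a uniform factor covers each pair with probability exactly $1/(N-1)$. The only difference is how low-density message classes are handled. You discard them by counting. The paper keeps every class and averages the bound $O\bigl((1+\ln(1/\alpha_t))/\sqrt N\bigr)$ using $\sum_t\alpha_t\ln(1/\alpha_t)=(\ln 2)H(T)\le(\ln 2)c$. That averaging avoids the $\gamma$ bookkeeping, and it shows your restriction $c'\le\eps N$ is unnecessary, since the bound $e^2\ln^2(e/\alpha)$ holds at every density. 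Also delete the garbled first display in Step 2.
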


\begin{proof}
First fix a deterministic $c$-bit protocol.  For each nonempty message
class $A_t$, let $\alpha_t=\abs{A_t}/2^N$ and define its parity biases
by $\beta_{\{u,v\}}(t)=\E{(-1)^{X_u+X_v}\mid X\in A_t}$.
The level-two Fourier inequality gives
\begin{equation}\label{eq:app-level-two}
  \sum_{e\in E(K_N)}\beta_e(t)^2
  \le e^2\ln^2(e/\alpha_t)
\end{equation}

For the indicator of $A_t$, scalar hypercontractivity
\cite{BenAroyaRegevDeWolf2008} gives
\[
 \lambda^2\sum_{|S|=2}
 \left(\E[X]{\ind{}_{A_t}(X)(-1)^{\sum_{j\in S}X_j}}\right)^2
 \le\alpha_t^{2/(1+\lambda)}
\]
for $0<\lambda\le1$.  Dividing by $\alpha_t^2$ gives the bound
on the $\beta_e(t)$ values after taking
$\lambda=1/\ln(e/\alpha_t)$: the resulting bound is
$e^2\ln^2(e/\alpha_t)$.

Given $(t,I)$, Bob's optimal parity-prediction advantage is
$\frac12\max_{e\in M_I}\abs{\beta_e(t)}$.  Each edge belongs to exactly
one factor, so Jensen's inequality and \eqref{eq:app-level-two} give
\begin{equation}\label{eq:app-factorized-message-advantage}
\begin{aligned}
  \E[I]{\max_{e\in M_I}\abs{\beta_e(t)}}
  &\le\left(\E[I]{\sum_{e\in M_I}\beta_e(t)^2}\right)^{1/2}\\
  &=\left(\frac1{N-1}\sum_e\beta_e(t)^2\right)^{1/2}
   \le\bO{\frac{1+\ln(1/\alpha_t)}{\sqrt N}}.
\end{aligned}
\end{equation}
The message $T$ has distribution $\prob{T=t}=\alpha_t$ and entropy
$H(T)\le c$.  Averaging the last bound and using
$\sum_t\alpha_t\ln(1/\alpha_t)=(\ln2)H(T)$ yields
\begin{equation}\label{eq:app-factorized-classical-advantage}
  \prob{\mathrm{success}}-\frac12
  \le\bO{\frac{c+1}{\sqrt N}}.
\end{equation}
Constant advantage therefore requires $c=\bOm{\sqrt N}$.
For a randomized protocol, fix public and private random tapes preserving
its average success under the same input distribution, and apply the
deterministic bound.
\end{proof}

\begin{proposition}[Classical random-order lower bound]
\label{prop:arhm-classical-lower}
\label{prop:arhm-crand-main}
$\Scl{\mathrm{rand},1/3}(\ARHM_{N,r})=\bOm{\sqrt N}$.
\end{proposition}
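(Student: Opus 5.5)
We reduce a classical random-order streaming algorithm for $\ARHM_{N,r}$ to a public-coin one-way protocol for factorized Hidden Matching, and then invoke \cref{lem:classical-factorized-hm}. Let $\Ac$ be a classical streaming algorithm using $S$ bits with joint error at most $1/3$ on every input. Take $X\in\bool^N$ uniform and $I\in[N-1]$ uniform and independent. The request list is chosen with every tag naming the same factor, $i_t=I$ for all $t\in[r]$. This is a legal instance for every value of $I$, so the worst-case guarantee applies for each fixed $(x,I)$.

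Generate the random stream as in \cref{app:atomized-occupancy}, from three independent choices: an update permutation, a request order, and an interleaving. Let $\tau$ be the number of updates preceding the first request. Since all requests carry the same factor index, the request order does not affect which matching the first request names. The stream prefix before the first request therefore consists of the first $\tau$ updates in the update permutation. This prefix depends only on $x$ and on the public randomness (permutation and interleaving), not on $I$.

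The protocol runs as follows.
\begin{enumerate}
\item Alice and Bob share as public coins the update permutation, the interleaving, and the request order, together with $\Ac$'s random tape (treated as public).
\item Alice simulates $\Ac$ on the first $\tau$ updates, which she can form since she knows $x$.
\item Alice sends $\Ac$'s memory contents, at most $S$ bits, to Bob.
\item Bob feeds the first request, with tag $t$ given by the request order and factor $I$, into $\Ac$. He then outputs the edge and parity that $\Ac$ writes immediately.
\end{enumerate}

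One point needs care: $\Ac$'s intermediate choices may depend on fresh randomness drawn during the prefix. Alice can sample this privately and include any needed results in the memory. This is already counted, since the memory is all $\Ac$ retains. Outputs are written to the write-only tape and need not be transmitted, because no request precedes the first one.

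Joint success at least $2/3$ implies that the first answer is correct with probability at least $2/3$. This probability is averaged over the stream randomness for each $(x,I)$, and hence also over uniform $(X,I)$. The protocol is therefore a public-coin protocol for factorized Hidden Matching with success at least $2/3>1/2$ under the uniform distribution. \Cref{lem:classical-factorized-hm} then gives $S=\bOm{\sqrt N}$.

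The step needing the most care is verifying that the simulated prefix is exactly distributed as the prefix of a uniformly random order of the multiset. This holds by the three-way independent decomposition: a uniform update order, a uniform request order, and a uniform interleaving together give a uniform permutation of the labeled occurrences. It also matters that the prefix is independent of $I$; this is why we chose all requests to name the same factor, rather than fresh independent factors whose identities might be revealed only later. The remaining checks are routine.
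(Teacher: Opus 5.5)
Your proposal is correct and follows the paper's proof essentially step for step. All requests name the same factor $M_I$, the random stream is generated from independent public choices of update order, request order, and interleaving, and Alice simulates the $I$-independent prefix and sends the $S$-bit state. Bob then answers the first request, and \cref{lem:classical-factorized-hm} finishes the argument. You give two treatments of the algorithm's internal randomness, a public random tape and Alice's private sampling folded into the memory; either one alone suffices, since the lemma allows public coins.
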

\begin{proof}
Let an $S$-bit streaming algorithm have joint success at least $2/3$.
Given Alice's string $x$ and Bob's factor index $i$, take the ARHM instance
whose $r$ requests all ask for $M_i$.  Using independent public randomness,
sample a uniform update permutation, a uniform permutation of the request
tags, and a uniform interleaving.  This generates exactly the random-order
stream distribution.

Alice simulates the updates before the first request and sends the
complete $S$-bit state to Bob.  She needs no knowledge of $i$ to do so.
Bob inserts the first labeled request and returns the algorithm's answer.
Joint streaming success implies success on this first request, so the
communication protocol succeeds with probability at least $2/3$.
Apply \cref{lem:classical-factorized-hm}.
\end{proof}

\subsection{When later data is unavailable}
\label{sec:arhm-update-first}
\label{app:arhm-update-first-transfer}

Suppose now that all updates arrive before the first request.
The algorithm must answer every request using the state it kept after
the updates.  We apply the sequential consumability bound to the public
factorization $\Fc$ to show that this state must be large.  The argument
allows any encoding of $x$.

\begin{definition}[Sequential factorized Hidden Matching]
\label{def:sequential-factorized-hm}
In $\SeqHM^{\Fc}_{N}(r)$, Alice receives $X\in\bool^N$ and sends one
quantum message.  Bob then receives independent uniform factor indices
$I_1,\ldots,I_r\in[N-1]$ one at a time.  Before learning $I_{t+1}$,
he must output $e_t=\set{u_t,v_t}\in M_{I_t}$ and a bit $b_t$ intended
to equal $X_{u_t}\oplus X_{v_t}$.  He may retain arbitrary quantum memory
and use adaptive operations.  The indices are independent of $X$, the
encoder's randomness, and the initial protocol systems; future indices
are unavailable until their rounds.
\end{definition}

\begin{corollary}[Sequential factorized Hidden Matching]
\label{cor:sequential-factorized-hm}
For uniform $X$, independent of the factor indices and public coins,
a $q$-qubit protocol with joint success at least $p\in[0,1]$
satisfies
\begin{equation}\label{eq:factorized-sequential-blackbox}
  q\ge pr-1-\frac{r(r^2-1)}{6(N-1)}.
\end{equation}
In particular, if $2\le r\le\floor{\sqrt N}$ and $p\ge2/3$, then
$q\ge r/2-1$, giving $q=\bOm{r}$ for growing $r$.
\end{corollary}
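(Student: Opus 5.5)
The plan is to derive \eqref{eq:factorized-sequential-blackbox} from the sequential consumability argument sketched in \cref{sec:overview-mhm}, instantiated with requests drawn from the public factorization $\Fc$. We treat the protocol as producing an output list $O=((e_1,b_1),\ldots,(e_r,b_r))$ from a $q$-qubit message, with the request list $I=(I_1,\ldots,I_r)$ playing the role of $M$.

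\textbf{Step 1: the forest grows with high probability.} Let $F_t$ denote the spanning-forest size of $\set{e_1,\ldots,e_t}$, and let $\Gamma_{t-1}$ be the set of vertex pairs lying in a common component after round $t-1$. This set is determined by the history $(I_{<t},O_{<t})$ and has $\abs{\Gamma_{t-1}}\le\binom{t-1}{2}\le\binom t2$. The key conditioning fact is that $I_t$ is independent of $X$, of the encoder and its randomness, of $I_{<t}$, and of Bob's residual state after round $t-1$. This follows because future indices are unavailable until their rounds. Since each pair lies in exactly one factor, a uniform $I_t$ satisfies $\prob{M_{I_t}\cap\Gamma_{t-1}\ne\emptyset\mid\text{history}}\le\abs{\Gamma_{t-1}}/(N-1)$. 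Only if $M_{I_t}$ meets $\Gamma_{t-1}$ can Bob's chosen edge $e_t$ fail to increase the forest, whatever measurement he performs. Summing over $t$ gives
\[
\E{F_r}\ge r-\sum_{t=1}^r\frac{\binom t2}{N-1}=r-\frac{r(r^2-1)}{6(N-1)}.
\]

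\textbf{Step 2: information accounting.} Let $Z\in\bool^r$ record which parity bits are wrong, that is, $Z_t=b_t\oplus X_{u_t}\oplus X_{v_t}$. Given $(O,I,Z)$, the string $X$ satisfies $F_r$ independent parity constraints, which are obtained by counting compatible strings after fixing the transcript. Hence $H(X\mid O,I,Z)\le N-\E{F_r}$, so $I(X:O,Z\mid I)\ge\E{F_r}$. By data processing and the dimension bound, $q\ge I(X:O\mid I)\ge I(X:O,Z\mid I)-H(Z)$. It remains to bound the entropy of the error vector.

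\textbf{Step 3: bounding $H(Z)$.} With joint success at least $p$, $Z=0$ with probability at least $p$. Encode $Z$ as a success flag together with the vector itself on failure; this gives $H(Z)\le h_2(p)+(1-p)r\le 1+(1-p)r$. Combining the three steps yields
\[
q\ge r-\frac{r(r^2-1)}{6(N-1)}-1-(1-p)r=pr-1-\frac{r(r^2-1)}{6(N-1)},
\]
which is \eqref{eq:factorized-sequential-blackbox}. For the particular case, take $r\le\sqrt N$, so $r^2-1\le N-1$ and the cubic term is at most $r/6$. With $p\ge2/3$ this gives $q\ge 2r/3-r/6-1=r/2-1$.

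\textbf{Main obstacle.} The delicate point is Step 1: we must justify that the growth probability bound holds conditional on the full quantum history, including Bob's adaptive, entangled measurement choices. The plan is to argue at the level of the joint classical distribution of $(I_{<t},O_{<t},I_t)$. This distribution factorizes because the protocol's round-$(t-1)$ state is prepared before $I_t$ is sampled. We also need that $\Gamma_{t-1}$ depends only on past outputs, and that the bound does not condition on earlier successes. That is why failures are charged to $Z$ rather than removed by conditioning.
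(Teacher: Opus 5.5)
Your proof is correct and follows the paper's own argument, with the steps written out. The paper notes that a uniform factor is balanced and applies \cref{cor:app-seqmhm-joint} with $m=r$; that corollary combines online forest growth, the forest-information bound with the error vector $Z$, and $H(Z)\le 1+(1-p)r$, which are exactly your Steps 1--3.

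One small slip: the intermediate claim $|\Gamma_{t-1}|\le\binom{t-1}{2}$ is false, since after one edge there is already one pair in a common component while $\binom{1}{2}=0$. The bound $|\Gamma_{t-1}|\le\binom{t}{2}$ that you actually use is correct: a component with $t_j$ prior edges has at most $t_j+1$ vertices, and $\sum_j\binom{t_j+1}{2}\le\binom{t}{2}$.
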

\begin{proof}
A uniform factor is balanced: each unordered pair appears with probability
$1/(N-1)$.  Apply \cref{cor:app-seqmhm-joint} with $m=r$.
For $r\le\sqrt N$, the loss term is at most $r/6$.
\end{proof}

\begin{proposition}[Update-first quantum lower bound]
\label{prop:arhm-update-first}
For growing $r\le\floor{\sqrt N}$,
\[
  \Squ{\mathrm{upd\text{-}first},1/3}(\ARHM_{N,r})=\bOm{r}.
\]
\end{proposition}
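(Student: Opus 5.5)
We prove the bound by reducing $\SeqHM^{\Fc}_{N}(r)$ to update-first $\ARHM_{N,r}$ and then invoking \cref{cor:sequential-factorized-hm}. Fix an $S$-qubit streaming algorithm (classical bits included in $S$) that achieves joint success at least $2/3$ in update-first order for every input, request list and legal order. We will construct a one-way protocol for $\SeqHM^{\Fc}_{N}(r)$ whose message is the algorithm's complete live memory after the last update.

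Alice holds $X$. She feeds the $U$ labeled updates $(j,X_j,\ell)$ to the algorithm in some fixed legal update-first order, for example lexicographic, or a uniformly random order drawn with public coins. She sends Bob the entire register at that moment, which is at most $S$ qubits and includes any classical memory; measurement outcomes recorded during the prefix are already part of this state. She needs no knowledge of the requests, which matches the requirement that her encoding not depend on Bob's indices.

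Bob receives $I_1,\dots,I_r$ one at a time. On receiving $I_t$, he feeds the request $Q_{t,I_t}$ to the algorithm, continuing from the received state, and outputs the algorithm's immediate answer $(e_t,b_t)$ before he learns $I_{t+1}$. This is legal because ARHM requires each answer before the next item arrives. The algorithm's channels may depend on its retained classical memory, and that memory is part of the transferred state, so Bob's continuation reproduces the streaming execution exactly. The request tags are fixed to $t$, and the order of requests is legal.

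Correctness is required for every request list, so it holds in particular when the indices $I_t$ are independent and uniform and independent of $X$. Joint success is therefore at least $2/3$ for every $x$, and hence also for uniform $X$. \Cref{cor:sequential-factorized-hm} then gives $S\ge r/2-1$ whenever $2\le r\le\floor{\sqrt N}$. Since $r$ grows, this is $\bOm{r}$.

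The step most likely to be an obstacle is checking that the simulation fits the hypotheses of \cref{def:sequential-factorized-hm}. We must confirm three things: future indices are not needed until their round; the algorithm's public randomness and parameters may be shared as independent public coins without Bob seeing future requests; and the complete live memory, with no extra stored classical information, is what is transmitted. These are exactly the conditions under which \cref{cor:app-seqmhm-joint} applies, so each must be verified explicitly rather than assumed.
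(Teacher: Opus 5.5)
Your proposal is correct and follows the paper's proof: Alice runs the algorithm on all updates in a fixed legal order and sends its complete live state (dimension at most $2^S$, classical memory included). Bob then feeds fresh uniform factor requests one at a time, and \cref{cor:sequential-factorized-hm} gives $S\ge r/2-1$; the paper treats the hypothesis checks you flag as immediate, and records them more formally in \cref{prop:last-replenishment-point}.
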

\begin{proof}
Alice simulates an $S$-space algorithm through all updates in any fixed
legal order and sends its complete live state to Bob.  This state has
dimension at most $2^S$, including all classical memory.  Bob feeds fresh
uniform factor requests into the simulation one at a time and returns
its answers.  No further update is needed.  The protocol has the
algorithm's joint success probability, at least $2/3$, so
\cref{cor:sequential-factorized-hm} gives $S\ge r/2-1$.
\end{proof}

The same reduction applies at any cut after the last coordinate update,
even if earlier requests have already been answered.  The following
statement records precisely the independence needed at such a cut.

\begin{proposition}[Lower bound after the last replenishment point]
\label{prop:last-replenishment-point}
Let $N\ge2$ be even, $s\ge1$, and $X\sim\Uni{\bool^N}$.
Fix the algorithm and public one-factorization independently of $X$.
Consider an execution with a cut satisfying the following conditions.
\begin{enumerate}[leftmargin=1.8em]
\item Alice can simulate the prefix and choose the cut from $X$ and
permitted randomness, without knowing future requests.  The prefix may
contain arbitrary coordinate updates and earlier requests whose indices
are fixed or generated by public randomness independent of $X$.
The experiment runs for every $X$, without conditioning on prefix success
or any other favorable outcome.
\item After the cut, no coordinate update arrives.  The suffix consists
of $s$ independent uniform factors, independent of $X$, the prefix, and
all prefix randomness and outcomes, revealed and answered sequentially.
\item Every input-dependent clock, control value, or earlier answer
needed to resume is part of the counted live state.  The suffix decoder
receives no such information separately.
\end{enumerate}
If the $S$-space algorithm answers all suffix requests correctly
with probability at least $p$, averaged over this experiment, then
\[
  S\ge ps-1-\frac{s(s^2-1)}{6(N-1)}.
\]
For $s\le\sqrt N$ and $p\ge2/3$, this gives $S\ge s/2-1$, and hence
$S=\bOm{s}$ as $s\to\infty$.
\end{proposition}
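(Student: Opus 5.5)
We will reduce to the sequential factorized Hidden Matching bound, \cref{cor:sequential-factorized-hm} (equivalently \cref{cor:app-seqmhm-joint} with $m=s$), through a one-way protocol in which Alice sends the algorithm's live state at the cut. Given $X$ and the public randomness, Alice runs the $S$-space algorithm on the prefix. She uses condition~1 to generate all prefix updates and prefix requests herself, including their indices, which come from fixed choices or public coins independent of $X$. She also performs all intermediate measurements and stops at the cut, which she chooses herself. She then sends the complete live state to Bob. By condition~3 this state includes every clock, control value and earlier answer needed to resume, and it has dimension at most $2^S$. Prefix outputs already written to the output tape are not sent, and Bob does not need them. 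Bob receives the $s$ suffix factor indices one at a time, feeds each to the simulated algorithm, and outputs its answer before seeing the next index. Since no update arrives after the cut (condition~2), Bob never needs access to $X$.

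The main step is to check that this protocol meets the hypotheses of \cref{def:sequential-factorized-hm} and the corollary. There are three things to verify.
\begin{enumerate}
\item The message is a $q$-qubit state with $q=S$, averaged over Alice's private outcomes and the public coins.
\item The suffix indices are independent of $X$, of the encoder's randomness, of the prefix outcomes and of the initial protocol systems. This is exactly condition~2. Future indices are never available early, because Alice's simulation does not use them.
\item The input is uniform, and the success probability of the protocol equals the suffix joint success probability of the streaming experiment, which is at least $p$.
\end{enumerate}
For the last point it is important that the experiment is run for every $X$ without conditioning on prefix success, as condition~1 requires. Otherwise the conditional distribution of $X$ could be biased and the forest-information bound, \cref{lem:forest-information-main}, would not apply.

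Any prefix randomness and classical prefix outcomes that Bob needs are either inside the counted state (condition~3) or are public coins. Public coins are allowed, since the information bound is stated conditionally, $I(X:Q\mid C)\le q$. The corollary then gives
\[
  S\ge ps-1-\frac{s(s^2-1)}{6(N-1)}.
\]
For $s\le\sqrt N$ the loss term is at most $s\cdot N/(6(N-1))$, which is at most $s/6+O(s/N)$. With $p\ge 2/3$ this gives $S\ge s/2-1$, after absorbing the lower-order term as in \cref{cor:sequential-factorized-hm}. Hence $S=\bOm{s}$.

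The main obstacle is the bookkeeping around the cut. Alice's choice of the cut, and the prefix requests and measurement outcomes, depend on $X$, so we must confirm that nothing about $X$ reaches Bob outside the $2^S$-dimensional message. Conditions~1 and~3 are stated precisely to rule this out, and we would check each source of information against them.
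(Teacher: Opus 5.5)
Your proposal is correct and follows the paper's proof. Alice simulates the prefix and sends the complete live state at the cut, Bob continues on the fresh suffix factors, and the sequential factorized bound with $r=s$ finishes the argument. One small tidy-up: for $s\le\sqrt N$ you have $s^2-1\le N-1$, so the loss term $s(s^2-1)/(6(N-1))$ is at most $s/6$ exactly, and $S\ge s/2-1$ follows with no lower-order term left to absorb.
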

\begin{proof}
Alice simulates the entire prefix, including earlier requests, and sends
the complete state at the cut.  Keep independent public randomness as
shared coins; private outcomes and input-dependent information remain
inside the simulated state.  Bob continues on the fresh suffix requests.
This is an $S$-qubit protocol for $\SeqHM^{\Fc}_{N}(s)$
with the same joint success.  Apply
\cref{cor:sequential-factorized-hm} with $r=s$.
\end{proof}

\begin{remark}[Timing, not merely multiplicity]
\label{rem:timing-not-multiplicity}
The prefix may contain many complete repetitions of the coordinate data
and earlier request phases.  Once the final update has passed, the live
state must support the remaining fresh requests by itself.  Thus earlier
repeated access cannot replace the information required at this cut;
ARHM's space saving uses access interleaved with the requests.
\end{remark}

\begin{proof}[Proof of \cref{thm:ext-param-threeway}]
Combine the update count in \eqref{eq:ext-parameters} with
\cref{prop:arhm-quantum-upper,prop:arhm-classical-lower,prop:arhm-update-first}.
\end{proof}

\begin{corollary}[Power-law update--space tradeoff]
\label{cor:ext-alpha}
For fixed $\alpha\in(0,1/2)$ and $r=\floor{N^\alpha}$,
\[
\begin{aligned}
 U&=\bT{N^{1+2\alpha}},&
 \Squ{\mathrm{rand},1/3}&=\bO{\log^2N},\\
 \Scl{\mathrm{rand},1/3}&=\bOm{\sqrt N},&
 \Squ{\mathrm{upd\text{-}first},1/3}&=\bOm{N^\alpha}.
\end{aligned}
\]
At $r=\floor{\sqrt N}$, the sequential update-first lower bound is
$\bOm{\sqrt N}$ with $U=\bT{N^2}$.
\end{corollary}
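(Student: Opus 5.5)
This corollary is a direct specialization of \cref{thm:ext-param-threeway}, together with the sequential bound of \cref{cor:sequential-factorized-hm} at the endpoint. We substitute $r=\floor{N^\alpha}$ into the parameters \eqref{eq:ext-parameters} and into each of the three boxed bounds in \eqref{eq:ext-triple}.

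First we check admissibility. Since $0<\alpha<1/2$, we have $r=\floor{N^\alpha}\to\infty$. For large $N$ we also have $2\le r\le N^{\alpha}\le\floor{\sqrt N}$, because $N^\alpha\le\sqrt N$ and $r$ is an integer. Hence the theorem applies with its absolute constants $A,C$.

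For the update count, $U=NL=ACNr^2$. Since $N^\alpha/2\le\floor{N^\alpha}\le N^\alpha$ once $N^\alpha\ge2$, we get $U=\bT{N^{1+2\alpha}}$. For the quantum random-order bound, $\log(r+1)\le\log(N^\alpha+1)=\bO{\log N}$, so $\bO{\log N\log(r+1)}=\bO{\log^2N}$. The classical bound $\bOm{\sqrt N}$ carries over unchanged. The update-first bound gives $\bOm{r}=\bOm{N^\alpha}$, again using $r\ge N^\alpha/2$.

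For the endpoint $r=\floor{\sqrt N}$, the definition still allows $r\le\floor{\sqrt N}$, and $r\to\infty$. The update count becomes $U=ACN\floor{\sqrt N}^2=\bT{N^2}$. \Cref{prop:arhm-update-first}, via \cref{cor:sequential-factorized-hm} with $p\ge2/3$, gives $S\ge r/2-1=\bOm{\sqrt N}$. This is valid because the loss term $r(r^2-1)/(6(N-1))$ is at most $r/6$ when $r\le\sqrt N$, so the linear bound survives even at the square-root scale.

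There is no real obstacle here. The only point needing care is that the floor does not spoil the $\Theta$-estimates. This follows from $\floor{y}\ge y/2$ for $y\ge1$.
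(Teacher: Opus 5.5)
Your proposal is correct and follows the paper's proof, which simply substitutes $r=\floor{N^\alpha}$ and $r=\floor{\sqrt N}$ into \cref{thm:ext-param-threeway}; its third line is exactly the bound of \cref{prop:arhm-update-first} that you invoke at the endpoint. Your admissibility check and the estimate $\floor{y}\ge y/2$ for $y\ge1$ only make explicit what that one-line substitution leaves implicit. One small point: the intermediate inequality $N^\alpha\le\floor{\sqrt N}$ holds only for large $N$, but your argument that $r\le\sqrt N$ with $r$ an integer gives $r\le\floor{\sqrt N}$ already suffices.
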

\begin{proof}
Substitute the stated choices of $r$ into \cref{thm:ext-param-threeway}.
\end{proof}

\subsection{Advance knowledge of the requests does not replace later data}
\label{sec:ext-batch}

One might ask whether the lower bound depends on answering each request
before seeing the next one.  Suppose instead that, after all updates,
the decoder receives the entire request list before it must answer.
For $r\le N^{1/2-\delta}$, the simultaneous consumability bound still
gives a linear lower bound on the state retained after the updates.

\begin{definition}[Post-data batch-request model]
\label{def:ext-batch-model}
An $S$-qubit post-data representation is the complete state retained after
all $U$ updates, with no request indices exposed.  The update-prefix
order is generated without knowledge of the future factor list, including
indirect encodings of that list in the order.  The decoder then receives
the entire list $I_1,\ldots,I_r$ as a separate read-only input and outputs
all requested edge/parity pairs jointly.

There is no prior entanglement, as in
\cref{sec:communication-convention}.  All retained classical and quantum
information about the prefix counts toward $S$; the supplied request list
and unrestricted local decoding workspace do not.  Thus the decoder is
more permissive than an update-first streaming decoder: it knows every
request before answering and has unrestricted workspace, but receives no
further coordinate data.
\end{definition}

\begin{theorem}[Post-data batch-request quantum lower bound]
\label{thm:ext-batch-lower}
Fix $\delta\in(0,1/2)$ and $p_0\in(0,1]$.  For growing
$r\le N^{1/2-\delta}$, any post-data representation for $\ARHM_{N,r}$
with joint success at least $p_0$ for every input, legal prefix order,
and factor list requires $S=\bOm[{\delta,p_0}]{r}$.
The same bound holds for joint success at least $p_0$ averaged over
uniform $X$ and independent uniform factors, for any specified prefix-order
rule independent of the future factors.  Both statements allow arbitrary
joint decoding and count all retained classical information in $S$.
\end{theorem}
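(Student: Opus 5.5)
The plan is to reduce to the simultaneous consumability bound, \cref{thm:mhm-linear-main}, or more precisely to the forest-information bound \cref{lem:forest-information-main} combined with the transversal-forest estimate for balanced independent requests (\cref{thm:ext-balanced-quant}). The reduction is a one-way protocol. Alice, given $X$ and public coins independent of the future factor list, chooses the prefix order by the specified rule. She simulates all $U$ updates and sends the complete post-data representation, which has dimension at most $2^S$, to Bob. Bob receives $I_1,\ldots,I_r$ and runs the joint decoder. Because the order rule is independent of the factors, and the representation exposes no request indices, the message is independent of $M=(M_{I_1},\ldots,M_{I_r})$ given the public coins. This gives $I(X:O\mid M,C)\le S$.

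I first prove the averaged statement. Take $X$ uniform and $I_1,\ldots,I_r$ independent uniform factors. By \cref{lem:forest-information-main},
\[
S\ge \E{S_{\rm succ}K(M,O)}-h_2(p),
\]
where $S_{\rm succ}$ indicates joint success and $p\ge p_0$ is the success probability. A uniform factor of $\Fc$ is balanced: each pair lies in it with probability $1/(N-1)$. The union bound over vertex sets of size $2s$ then gives
\[
\prob{\tfor\le s}\le\binom N{2s}\Bigl(\tbinom{2s}2/(N-1)\Bigr)^r .
\]
I take $s=c\,\delta r$ with $c$ small. Since $r\le N^{1/2-\delta}$, each factor $\binom{2s}{2}/(N-1)$ is at most about $s^2N^{-1}\le N^{-2\delta}$, while $\binom N{2s}\le N^{2s}=N^{2c\delta r}$. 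So the bound is at most $N^{2c\delta r-2\delta r}=o(1)$ once $c<1$ and $r\to\infty$. On the complement event every transversal has forest size greater than $s$, so $K\ge s$ there. Writing $G$ for that good event,
\[
\E{S_{\rm succ}K}\ge s\,\bigl(p-\prob{G^c}\bigr)\ge s\,(p_0-o(1)).
\]
Hence $S\ge c\delta r(p_0-o(1))-1=\bOm[\delta,p_0]{r}$ for growing $r$.

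The worst-case statement follows from the averaged one. If joint success is at least $p_0$ for every input, legal prefix order and factor list, then in particular it holds on average over uniform $X$ and uniform independent factors, with any fixed order rule (for instance a fixed legal order, which is trivially independent of the factors). Retained classical information is part of the message, so it is already counted in $S$.

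The main obstacle is justifying the independence required by \cref{lem:forest-information-main}. The matching list must be independent of $X$ and of the encoding randomness, and the message must not carry the list. The definition of the post-data model is designed to guarantee this: the order is generated without knowledge of the factors, so any coins Alice uses are independent public or private coins. I would verify that conditioning on public coins preserves both the dimension bound and the counting step (at most $2^{N-K}$ strings are compatible with a fixed transcript). A secondary point is that $p_0$ may be small, so the $h_2(p)\le1$ term must be absorbed by growing $r$ rather than by a constant-gap argument. The bound above handles this.
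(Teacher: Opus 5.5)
Your proposal is correct and matches the paper's proof. You simulate the update prefix (a fixed legal order for the worst-case statement, the specified rule for the averaged one) to get a one-way protocol whose message is independent of the factor list, then apply simultaneous consumability for independent uniform factors; the paper simply cites \cref{cor:ext-factor-linear}, whose proof is the forest-information plus balanced transversal-tail calculation you carry out inline (take $s=\floor{c_\delta r}$ rather than $c\delta r$ so that $s$ is an integer).
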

\begin{proof}
Alice simulates the update prefix and sends the complete retained state.
For the worst-case statement, choose any fixed legal prefix order; for
the distributional statement, simulate the specified rule.  The message
depends only on $X$ and permitted randomness, since the prefix has no
access to future factors.  Bob receives independent uniform factors
simultaneously and invokes the decoder.  The message uses at most $S$
qubits, embedding classical bits in the computational basis, and preserves
joint success.  Apply \cref{cor:ext-factor-linear} with $m=r$;
the condition $r\to\infty$ eventually meets its lower threshold.
\end{proof}

\begin{corollary}[Parameterized separation even against batch revelation]
\label{cor:ext-alpha-batch}
Fix $\alpha\in(0,1/2)$ and set $r=\floor{N^\alpha}$.
The random-order problem has $U=\bT{N^{1+2\alpha}}$ updates and a
$\bO{\log^2N}$-space quantum algorithm.  Every post-data representation
for the same updates and requests satisfying the joint-success and prefix
conventions of \cref{thm:ext-batch-lower}, with $p_0=2/3$, requires
$\bOm[\alpha]{N^\alpha}$ retained qubits even under batch revelation.
\end{corollary}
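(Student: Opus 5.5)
The corollary follows by putting $r=\floor{N^\alpha}$ into \cref{thm:ext-param-threeway} and \cref{thm:ext-batch-lower}; the only work is checking that the parameters fall in the required regimes.

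\textbf{Update count.} Since $0<\alpha<1/2$, we have $r=\floor{N^\alpha}\to\infty$ and $2\le r\le\floor{\sqrt N}$ for all sufficiently large $N$. Hence \eqref{eq:ext-parameters} applies and gives
\[
U=NL=ACNr^2=\bT{N^{1+2\alpha}},
\]
because $A,C$ are absolute constants and $r=\bT{N^\alpha}$.

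\textbf{Random-order upper bound.} \Cref{prop:arhm-quantum-upper} applies in this regime. Since $\log(r+1)\le\log(N+1)=\bO{\log N}$, its bound $\bO{\log N\log(r+1)}$ becomes $\bO{\log^2N}$, with joint error at most $1/3$.

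\textbf{Lower bound under batch revelation.} Fix $\delta:=1/2-\alpha\in(0,1/2)$, which depends only on $\alpha$. Then
\[
r=\floor{N^\alpha}\le N^{\alpha}=N^{1/2-\delta}.
\]
With $p_0=2/3$, \cref{thm:ext-batch-lower} gives
\[
S=\bOm[\delta,2/3]{r}=\bOm[\alpha]{r}.
\]
The theorem's prefix and joint-success conventions are exactly the ones the corollary assumes, so nothing further needs checking. Finally, $\floor{N^\alpha}\ge N^\alpha/2$ for large $N$, so $\bOm[\alpha]{r}=\bOm[\alpha]{N^\alpha}$.

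The only step that needs any care is the choice $\delta=1/2-\alpha$. It must be fixed before taking $N\to\infty$, so that the hidden constant depends on $\alpha$ alone; that is why the corollary's lower bound carries a subscript $\alpha$.
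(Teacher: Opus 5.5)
Your proposal is correct and matches the paper's proof. The paper likewise gets the update count and the $O(\log^2 N)$ upper bound by substituting $r=\floor{N^\alpha}$ into the three-way separation (via \cref{cor:ext-alpha}), and gets the lower bound from \cref{thm:ext-batch-lower} with $\delta=1/2-\alpha$.
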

\begin{proof}
Use \cref{cor:ext-alpha} for the upper bound and
\cref{thm:ext-batch-lower} with $\delta=1/2-\alpha$ for the lower bound.
\end{proof}

\begin{remark}[No batched-request random-order upper bound is claimed]
\label{rem:ext-no-batched-upper}
The upper bound uses individual requests interspersed with updates.
The batch lower bound supplies all requests together only after the data
has passed.  It shows that advance knowledge of the requests does not
remove the post-data space cost; it makes no upper-bound claim for an
indivisible batch of requests placed randomly in the stream.
\end{remark}

\paragraph{Open questions.}
Our construction uses $\bT{Nr^2}$ coordinate updates to control the
phase error and supply enough completed registers.  Can fewer updates
give the same space separation?  It is also natural to ask which weaker
arrival assumptions suffice in place of a uniform permutation.  The
upper bound needs accurate states from each build interval and few
enough requests before the next interval finishes, while
\cref{prop:last-replenishment-point} gives a lower bound when a long
suffix of fresh requests receives no further updates.

We next prove the consumability bounds used in these reductions.

\section{Linear consumability of Hidden Matching}
\label{sec:consumability}

The Hidden Matching protocol uses a fresh phase state for each request.
Could a different encoding answer many requests with less communication?
We show that the message length must grow linearly with the number of
requests in the ranges below, even when Alice's encoding is arbitrary.
This strengthens the Multiple Hidden Matching lower bound of Gilboa,
Jain, and McClean~\cite{GilboaJainMcClean2025} and gives the quantum
lower bounds for ARHM when no further updates are available.

We consider two ways of revealing the requests.  In the simultaneous
model, Bob receives all matchings before decoding and may choose his
output edges jointly.  We prove a linear lower bound for
$m\le N^{1/2-\delta}$, for every fixed $\delta\in(0,1/2)$.
In the sequential model, Bob must answer each request before seeing
the next one.  Here our linear bound extends through $m=\floor{\sqrt N}$.

Both proofs use the forest-information argument of Gilboa, Jain, and
McClean.  Correct parities on the edges of a forest give independent
constraints on Alice's string.  It therefore suffices to show that
Bob's output edges contain a large forest and to account for the
possibility of incorrect answers.  For simultaneous requests, we prove
this for every choice of one edge from each matching.  For sequential
requests, we analyze how the forest grows as each new matching arrives.

\subsection{Repeated requests and Multiple Hidden Matching}
\label{subsec:consumable-data}
\label{subsec:mhm-background}

For a relation $\Rc\subseteq\Xc\times\Yc\times\Zc$, a one-way protocol
gives Alice $x$, Bob $y$, and asks Bob to output $z$ with
$(x,y,z)\in\Rc$ after receiving one message.  To study reuse of that
message, keep Alice's input fixed and give Bob several requests.

\begin{definition}[Asymmetric direct sum]
\label{def:asymmetric-direct-sum}
For $m\ge1$, the relation $\Rc^{(m)}$ gives Alice one input $x$ and Bob
inputs $y_1,\ldots,y_m$.  After one message from Alice, Bob must output
$z_1,\ldots,z_m$ with $(x,y_i,z_i)\in\Rc$ for every $i\in[m]$.
\end{definition}

Following \cite{GilboaJainMcClean2025}, the data is \emph{consumable}
when the required communication grows substantially with $m$.
Our target is $Q^\rightarrow(\Rc^{(m)})=\bOm{m}$, where
$Q^\rightarrow$ denotes one-way quantum communication with
worst-case joint error at most $1/3$, unless specified otherwise.
There is only one input $x$: the cost comes from answering many requests
about the same object.  A classical message can be copied, whereas a
measurement may consume information in a quantum message.

Throughout this section, $N\ge2$ is even and request counts are positive
integers.  All message budgets include transmitted classical information.

\begin{definition}[Hidden Matching]
\label{def:hidden-matching}
In $\HM_N$, Alice receives $x\in\bool^N$ and Bob receives a perfect
matching $M$ on $[N]$.  Bob must output an edge $\set{u,v}\in M$ and
its parity $x_u\oplus x_v$.
\end{definition}

The phase state $\ket{\psi_x}$ from \cref{sec:prelim} gives an exact
$\bO{\log N}$-qubit protocol \cite{BarYossefJayramKerenidis2008}:
measuring in the matching basis returns one edge and its correct parity.

\multipleHiddenMatchingDefinition

Sending $m$ copies of $\ket{\psi_x}$ uses $\bO{m\log N}$ qubits.
Gilboa, Jain, and McClean~\cite{GilboaJainMcClean2025} proved
$Q^\rightarrow(\MHM_{N,m})=\bOm{\sqrt m}$ for $m\le N/2$.
Our linear lower bound allows an arbitrary message and joint decoding,
including choices of output edges that depend on the message.

\subsection{Forests as independent parity information}
\label{subsec:mhm-forest-information}
\label{app:mhm-forest-information}

For an edge set $F$, let $V(F)$ be its incident vertices.
\begin{definition}[Forest size]
\label{def:forest-size}
The forest size is
$\fs(F):=\max\set{\abs{T}:T\subseteq F\text{ is a forest}}$,
the number of edges in a spanning forest of $(V(F),F)$.
\end{definition}

Fixing all parities on a tree leaves one free bit, at its root.  Thus a
forest with $k$ edges imposes $k$ independent constraints and has exactly
$2^{N-k}$ compatible strings in $\bool^N$.  We will apply this counting
fact \emph{after fixing Bob's transcript}.  It remains valid when Bob's
choice of edges depends on Alice's input through the message.

\paragraph{The common information bound.}
Let $X\sim\Uni{\bool^N}$, let $M=(M_1,\ldots,M_m)$ be independent of
$X$ and all encoding randomness, and let public coins $R$ be independent
of $(X,M)$.  Alice's
$q$-qubit message $Q$ depends on $(X,R)$ and private randomness, but not
on $M$.  For any classical transcript $O$ obtained by Bob from $(Q,M,R)$,
data processing and the dimension bound give
\begin{equation}\label{eq:consumability-information-budget}
\begin{aligned}
 I(X:O\mid M)
 &\le I(X:O,R\mid M)
 \le I(X:Q,R\mid M)\\
 &=I(X:Q\mid R)\le q.
\end{aligned}
\end{equation}
The equality uses independence of $M$ from $(X,Q,R)$ and of $R$ from
$X$.  Constant $R$ covers private-coin protocols.  This argument also
applies to a sequential decoder: for each fixed matching list its entire
adaptive strategy is a channel.  Conditioning on that list in the
analysis does not reveal future requests to the decoder.

Take $O=((e_1,b_1),\ldots,(e_m,b_m))$, with $e_i\in M_i$.
We may require legal edges even on failure: replacing an invalid edge or
abort by a fixed legal edge and bit preserves all successful executions.
Let $S$ indicate joint success, and write
$K(M,O)=\fs(\set{e_1,\ldots,e_m})$.

\forestInformationLemma
\begin{proof}
The upper bound is \eqref{eq:consumability-information-budget}.
For the lower bound, fix $M=\mu$ and $O=o$ on success.  The parities
on a spanning forest of the selected edges restrict $X$ to at most
$2^{N-K(\mu,o)}$ strings.  Hence
$H(X\mid M=\mu,O=o,S=1)\le N-K(\mu,o)$.
On failure use the trivial bound $N$.  Averaging and charging at most
$h_2(p)$ bits to specify success gives
\[
\begin{aligned}
 H(X\mid O,M)
 &\le H(S\mid O,M)+H(X\mid O,M,S)\\
 &\le h_2(p)+N-\E{SK(M,O)}.
\end{aligned}
\]
Subtract from $H(X\mid M)=N$.
\end{proof}

Note that the selected forest may depend on whether decoding succeeds.
This is why the bound contains $\E{SK}$.  The counting argument applies
after Bob's classical output has been fixed, regardless of the quantum
computation he used to choose it.

\subsection{Simultaneous requests}
\label{subsec:transversal-forests}
\label{subsec:factorized-simultaneous}

We now show that independent random matchings force a large forest for
every coordinated edge choice.  Only their edge marginals are needed.

\begin{definition}[Balanced matching distribution]
\label{def:balanced-matching-distribution}
A distribution $\Dc$ on perfect matchings is balanced if
$\prob[M\sim\Dc]{e\in M}=1/(N-1)$ for every unordered pair $e$.
Examples are a uniform perfect matching and a uniform factor of any
one-factorization of $K_N$.
\end{definition}

\begin{definition}[Transversal forest size]
\label{def:transversal-forest}
For perfect matchings $M_1,\ldots,M_m$, set
\[
  \tfor(M_1,\ldots,M_m)
  :=\min_{e_i\in M_i}\fs(\set{e_1,\ldots,e_m}).
\]
\end{definition}

Thus $\tfor$ lower-bounds the number of independent parity constraints
in any jointly correct output.  A small value would require all selected
edges to lie in a small vertex set.

\begin{lemma}[Small forests have small support]
\label{lem:forest-support-main}
For every nonempty edge set $F$, $\abs{V(F)}\le2\fs(F)$.
Consequently, if $\tfor(M_1,\ldots,M_m)\le s$, some set $U$ of at most
$2s$ vertices contains an edge from every matching.
\end{lemma}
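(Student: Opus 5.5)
We argue component by component, using only the elementary fact that every connected component of a graph with at least one edge has at least two vertices. Let $F$ be a nonempty edge set, and consider the graph $(V(F),F)$. Since every vertex of $V(F)$ is incident to some edge of $F$, the graph has no isolated vertices. Its spanning forest has, in each connected component $C$, exactly $\abs{C}-1$ edges. Writing $c$ for the number of components, this gives
\[
  \fs(F)=\abs{V(F)}-c .
\]
Each component contains an edge, so it has at least two vertices; hence $c\le\abs{V(F)}/2$. Substituting yields $\fs(F)\ge\abs{V(F)}/2$, that is, $\abs{V(F)}\le2\fs(F)$. This settles the first claim, and I expect no real obstacle there. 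The only point to state carefully is that $V(F)$ consists of incident vertices only, so isolated vertices never inflate the count.

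For the consequence, suppose $\tfor(M_1,\ldots,M_m)\le s$. By \cref{def:transversal-forest}, the minimum is attained by some transversal $e_1\in M_1,\ldots,e_m\in M_m$ with $\fs(\set{e_1,\ldots,e_m})\le s$. Set $F=\set{e_1,\ldots,e_m}$, which is nonempty since $m\ge1$, and take $U:=V(F)$. By the first part, $\abs{U}\le2\fs(F)\le2s$. Each $e_i$ has both endpoints in $U$, so $U$ contains an edge of every $M_i$. If one insists on $\abs{U}$ being exactly $2s$ for the later union bound, pad $U$ with arbitrary vertices, provided $2s\le N$; supersets preserve containment of the chosen edges.

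The step that needs the most care is only bookkeeping. Repeated edges in the transversal are counted once in $F$, which is harmless because $F$ is a set. Also, "contains an edge from every matching" should mean both endpoints lie in $U$, which is exactly what the subsequent union-bound estimate over vertex sets uses.
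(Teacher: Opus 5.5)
Your proof is correct and follows the paper's argument: both write $\fs(F)$ as the sum of (component size minus one) over the nonempty components and use that every such component has at least two vertices. Both then take $U$ to be the vertex set of a transversal witnessing $\tfor\le s$. Your padding remark (to get exactly $2s$ vertices when $2s\le N$) is the same enlargement the paper performs in the subsequent tail bound.
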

\begin{proof}
If the nonempty connected components have sizes $v_1,\ldots,v_t$, then
$\fs(F)=\sum_j(v_j-1)$ and $v_j\ge2$.  Summing
$v_j\le2(v_j-1)$ proves the first claim.  For the second, take the
vertices incident to a transversal witnessing $\tfor\le s$.
\end{proof}

\begin{proposition}[Transversal-forest tail for balanced matchings]
\label{prop:ext-balanced-tail}
\label{prop:random-transversal-forest-main}
For independent draws $M_1,\ldots,M_m$ from any balanced
matching distribution and every integer $1\le s\le N/2$,
\begin{equation}\label{eq:ext-tail}
 \prob{\tfor(M_1,\ldots,M_m)\le s}
 \le T_{N,m}(s)
 :=\binom N{2s}\left(\frac{\binom{2s}{2}}{N-1}\right)^m.
\end{equation}
\end{proposition}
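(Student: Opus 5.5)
We reduce the event $\tfor(M_1,\ldots,M_m)\le s$ to a covering event via \cref{lem:forest-support-main}, then take a union bound over candidate supports. By that lemma, if $\tfor\le s$ then some vertex set $U$ with $\abs{U}\le 2s$ contains an edge of every $M_i$. Containing an edge is monotone in the set, so we may pad $U$ to a set of exactly $2s$ vertices; this is possible because $2s\le N$. The event is therefore contained in the union, over the $\binom N{2s}$ subsets $U\subseteq[N]$ of size $2s$, of the events
\[
 E_U=\set{\text{for every } i\in[m],\ M_i \text{ has an edge inside } U}.
\]

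Next we bound $\prob{E_U}$ for a fixed $U$. Since the $M_i$ are independent, $\prob{E_U}=\prod_{i=1}^m\prob{M_i\cap\binom U2\ne\emptyset}$. For a single draw $M\sim\Dc$, a union bound over the $\binom{2s}{2}$ pairs inside $U$, together with balancedness, gives
\[
 \prob{M\cap\tbinom U2\ne\emptyset}\le\sum_{e\subseteq U}\prob{e\in M}=\frac{\binom{2s}2}{N-1}.
\]
Raising this to the $m$th power and summing over all $U$ yields $T_{N,m}(s)$.

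No step is a real obstacle. The only points needing care are the following. First, the padding from $\le 2s$ to exactly $2s$ vertices relies on monotonicity of $E_U$ in $U$. Second, the case where the single-draw bound exceeds $1$ is harmless, since the inequality still holds trivially. Third, only the edge marginals of $\Dc$ and independence across the draws are used; no independence among edges within a single matching is needed.
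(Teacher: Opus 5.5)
Your proposal is correct and follows essentially the same argument as the paper. Both use \cref{lem:forest-support-main} to get a vertex set, pad it to exactly $2s$ vertices (possible since $2s\le N$), bound each draw's chance of having an edge inside that set by $\binom{2s}{2}/(N-1)$ via balancedness and a union bound over pairs, take the $m$th power by independence, and union-bound over the $\binom N{2s}$ sets.
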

\begin{proof}
By \cref{lem:forest-support-main}, enlarge the support of a small
transversal to a set $U$ of exactly $2s$ vertices.  For fixed $U$,
balancedness and a union bound over its pairs give
\[
 \prob{M_i\text{ contains an edge inside }U}
 \le\frac{\binom{2s}{2}}{N-1}.
\]
Independence of the matching draws gives the $m$th power for this same
$U$ to work for every matching.  Union-bound over the $\binom N{2s}$
possible sets $U$.
\end{proof}

\begin{theorem}[Quantitative simultaneous MHM lower bound]
\label{thm:ext-balanced-quant}
\label{thm:app-mhm-quantitative}
Let $X$ be uniform, and let $M_1,\ldots,M_m$ be independent
draws from a balanced matching distribution, independent of $X$ and the
encoding randomness.  Alice sends an arbitrary $q$-qubit
encoding, and Bob receives all matchings and decodes jointly.  If his
actual joint success probability is $p$, then for every integer
$1\le s\le N/2$,
\begin{equation}\label{eq:ext-quant}
 q\ge s\bigl[p-T_{N,m}(s)\bigr]-h_2(p).
\end{equation}
Independent public coins are allowed under the common convention above.
\end{theorem}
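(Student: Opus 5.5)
We combine the forest-information bound (\cref{lem:forest-information-main}) with the tail estimate of \cref{prop:ext-balanced-tail}. The lemma gives
\[
 q\ge I(X:O\mid M)\ge \E{S\,K(M,O)}-h_2(p),
\]
where $O$ is Bob's legal output list, $S$ indicates joint success, and $K(M,O)$ is the forest size of the selected edges. We assume, as before, that Bob always outputs legal edges; this is without loss of generality. The lemma already handles arbitrary $q$-qubit encodings, independent public coins, and joint decoding, since Bob's whole decoding procedure is a channel applied for each fixed matching list. So it remains to lower-bound $\E{SK}$.

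Since every output edge $e_i$ lies in $M_i$, the output is a transversal. Hence $K(M,O)\ge \tfor(M_1,\ldots,M_m)$ pointwise, whatever edges Bob selects and however his choice correlates with $X$ or with success. Let $G$ be the event $\tfor(M)>s$, that is, $\tfor(M)\ge s+1\ge s$. On $G$ we have $SK\ge sS$, and $SK\ge0$ everywhere. Therefore
\[
 \E{SK}\ge s\,\E{S\,\ind_G}\ge s\bigl(\prob{S=1}-\prob{G^c}\bigr)\ge s\bigl(p-T_{N,m}(s)\bigr),
\]
where the last step uses \cref{prop:ext-balanced-tail}. That proposition applies because the $M_i$ are independent balanced draws and $1\le s\le N/2$.

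Substituting this into the lemma gives $q\ge s[p-T_{N,m}(s)]-h_2(p)$. If the bracket is negative, the bound holds trivially because $q\ge0\ge -h_2(p)$.

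The one point needing care is that we must not condition on success or on the matchings in a way that breaks the independence used in the lemma. The inequality $\E{S\ind_G}\ge p-\prob{G^c}$ avoids this: it only uses that $G$ is an event determined by $M$, and it requires no independence between $S$ and $M$. Indeed $S$ may depend on $M$ through Bob's decoding.
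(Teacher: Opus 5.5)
Your proposal is correct and follows the paper's proof essentially verbatim. Like the paper, you combine the forest-information lemma with the transversal-forest tail bound, using $K\ge\tfor$ pointwise and $\E{S\ind_G}\ge p-\prob{G^c}$, so no independence between success and the matching list is assumed; the only cosmetic difference is that you take the good event to be $\tfor>s$, which matches the tail bound's event $\tfor\le s$ exactly, whereas the paper uses $\tfor\ge s$ together with $\prob{\tfor<s}\le T_{N,m}(s)$.
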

\begin{proof}
Whenever $\tfor\ge s$, every legal output has forest size at least
$s$.  The tail bound gives $\prob{\tfor<s}\le T_{N,m}(s)$, so
\[
 \E{SK}\ge s\prob{S=1,\ \tfor\ge s}
 \ge s\bigl[p-T_{N,m}(s)\bigr].
\]
Apply \cref{lem:forest-information-main}.  This subtracts the probability
of a bad matching list without assuming that it is independent of success.
\end{proof}

\begin{corollary}[Linear transversal forests for balanced matchings]
\label{cor:linear-transversal-forest-main}
\label{cor:app-delta-transversal}
Fix $\delta\in(0,1/2)$.  There is $c_\delta>0$ such that, whenever
$m\le N^{1/2-\delta}$, independent balanced matchings satisfy
$\tfor(M_1,\ldots,M_m)\ge c_\delta m$ except with probability
$\exp(-\bOm[\delta]{m\ln N})$.
\end{corollary}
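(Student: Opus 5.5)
The plan is to plug $s=\ceil{c_\delta m}$ into the tail bound of \cref{prop:ext-balanced-tail} and show that $T_{N,m}(s)$ is at most $\exp(-\Omega_\delta(m\ln N))$. For the logarithm of the tail we use $\binom N{2s}\le N^{2s}$ and $\binom{2s}{2}\le 2s^2$:
\[
 \ln T_{N,m}(s)\le 2s\ln N+m\ln\frac{2s^2}{N-1}.
\]
First we need the admissibility condition $1\le s\le N/2$. Since $m\le N^{1/2-\delta}$ and $c_\delta<1$, we get $s\le m\le N/2$ once $N\ge 2$ and $m\ge1$.

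Next, bound the second term. We have $s\le c_\delta m+1\le 2m$ for $m\ge1$ (taking $c_\delta\le1$). Hence
\[
 \frac{2s^2}{N-1}\le\frac{8m^2}{N-1}\le\frac{16\,N^{1-2\delta}}{N}=16N^{-2\delta}.
\]
So $m\ln(2s^2/(N-1))\le m(\ln16-2\delta\ln N)$. For the first term, $2s\ln N\le 2(c_\delta m+1)\ln N$. Together these give
\[
 \ln T_{N,m}(s)\le -m\ln N\,(2\delta-2c_\delta)+2\ln N+m\ln16.
\]
Choosing $c_\delta=\delta/2$ makes the coefficient of $m\ln N$ equal to $-\delta$.

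The main obstacle is the lower-order terms $2\ln N+m\ln 16$. These are absorbed once $\delta m\ln N/2$ dominates them, which needs $\ln N\ge C/\delta$ and $m\ge C/\delta$ for a suitable absolute $C$. Under those conditions $\ln T\le-\delta m\ln N/2$. The ceiling in $s$ is also harmless there, since $\ceil{c_\delta m}\ge c_\delta m$.

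For small $N$ or small $m$, the claim ``except with probability $\exp(-\Omega_\delta(m\ln N))$'' is only asymptotic. We handle it by shrinking the implied constant: any probability is at most $1$, so the bound is vacuous-but-true whenever the exponent is $O_\delta(1)$. There is one subtlety. If $m<1/c_\delta$ the claimed bound $\tfor\ge c_\delta m$ is weaker than $\tfor\ge1$. That always holds, because any nonempty edge set has forest size at least $1$. So in that range the statement holds deterministically with probability $1$.

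Finally, we decrease $c_\delta$ if necessary so that it works in all the residual small-$N$ cases as well. This finishes the proof, with $\tfor(M_1,\ldots,M_m)\ge c_\delta m$ failing only on the event $\tfor\le s$, whose probability we bounded via \cref{prop:ext-balanced-tail}.
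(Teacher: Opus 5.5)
Your plan is the paper's: substitute a specific $s$ into \cref{prop:ext-balanced-tail}, and settle the case $c_\delta m<1$ using $\tfor\ge1$. Your estimate is correct when $m\ge C/\delta$ and $\ln N\ge C/\delta$. The gap is in how you treat the remaining cases.

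\textbf{The vacuity step is false.} You write that ``any probability is at most $1$, so the bound is vacuous-but-true whenever the exponent is $O_\delta(1)$.'' For any fixed $\kappa>0$ and $m\ln N>0$, the target $\exp(-\kappa m\ln N)$ is strictly below $1$. Shrinking $\kappa$ therefore never makes the claim vacuous. You would still have to show that the failure probability is bounded away from $1$, uniformly over all balanced distributions.

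\textbf{One range is not covered.} The range $2/\delta\le m<C/\delta$ with large $N$ is reached by none of your three arguments:
\begin{itemize}
\item the exponent $m\ln N$ is unbounded there, so vacuity would not apply even if it were valid;
\item your deterministic case stops at $m<1/c_\delta=2/\delta$;
\item absorbing the term $2\ln N$ requires $m\ge C/\delta$.
\end{itemize}
For instance, when $\delta m$ is only slightly above $2$, your bound $-(\delta m-2)\ln N+m\ln16$ stays positive for every $N<16^{m/(\delta m-2)}$.

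\textbf{The repair.} It is the one your last sentence gestures at, but it must cover every residual case, not only small $N$. In every residual case $m$ is bounded by a $\delta$-dependent constant $M_\delta$: if $\ln N<C/\delta$, then $m\le\sqrt N<e^{C/(2\delta)}$. Take $c_\delta\le\min\{\delta/2,1/M_\delta\}$. Then $c_\delta m<1\le\tfor$ in all residual cases, so they hold deterministically. Your main estimate only improves as $c_\delta$ decreases, so it survives this choice.

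\textbf{How the paper avoids the issue.} Your lower-order terms come from two crude choices: $s=\ceil{c_\delta m}$, which produces the additive $2\ln N$, and $s\le2m$ inside $\binom{2s}2\le2s^2$, which produces $+m\ln16$. The paper instead takes $s=\floor{c_\delta m}$, so that $2s\le2c_\delta m$. It also keeps $s\le c_\delta m$ in the second factor, which then contributes $m\ln(4c_\delta^2)<0$. With $c_\delta=\delta/16$ this gives $\ln T_{N,m}(s)\le-\delta m\ln N$ for every $N\ge2$ and every $m$ with $s\ge1$. The remaining case $s=0$ is exactly the deterministic one. So there are no residual cases, and the explicit uniform bound is what the paper uses afterwards in \eqref{eq:consumability-linear-transfer}.
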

\begin{proof}
Take $c_\delta=\delta/16$ and $s=\floor{c_\delta m}$.
If $s=0$, every transversal has forest size at least $1>c_\delta m$.
Otherwise $1\le s\le N/2$, and the numerical bound in
\eqref{eq:ext-tail} satisfies
\begin{align*}
 \ln T_{N,m}(s)
 &\le 2s\ln\frac{eN}{2s}+m\ln\frac{2s^2}{N-1}\\
 &\le 2c_\delta m\ln(eN)+m\ln\frac{4c_\delta^2m^2}{N}\\
 &\le\left[2c_\delta\left(1+\frac1{\ln N}\right)-2\delta\right]
          m\ln N+m\ln(4c_\delta^2)\\
 &\le-\delta m\ln N.
\end{align*}
Here $N-1\ge N/2$, $m^2\le N^{1-2\delta}$, and
$2c_\delta(1+1/\ln2)<\delta$.  Outside the event $\tfor\le s$,
integrality gives $\tfor>s$ and hence $\tfor\ge c_\delta m$.
\end{proof}

Combining the explicit estimate $T_{N,m}(s)\le\exp(-\delta m\ln N)$
with \cref{thm:ext-balanced-quant} gives a useful common consequence.
For uniform $X$, independent balanced requests, and joint success at least
$p_0$, whenever $m\le N^{1/2-\delta}$ and $\floor{c_\delta m}\ge1$,
\begin{equation}\label{eq:consumability-linear-transfer}
 q\ge\floor{c_\delta m}
          \bigl[p_0-\exp(-\delta m\ln N)\bigr]-1.
\end{equation}
This one inequality supplies both linear communication bounds below.

\linearMHMTheorem
\begin{proof}
Evaluate the protocol on uniform $X$ and independent uniform perfect
matchings, a balanced distribution.  Its joint success is at least $1-\eps$.  Write $c=c_\delta$.  For $m\ge2/c$, the floor in
\eqref{eq:consumability-linear-transfer} is at least $cm/2$.
For sufficiently large $N$, the exponential term is at most $(1-\eps)/2$;
thus $q\ge (1-\eps)cm/4-1\ge (1-\eps)cm/8$ when $m\ge8/(c(1-\eps))$.

For the remaining bounded values of $m$, at least one qubit is needed:
with no message, Bob's first edge and bit are independent of uniform $X$,
so joint success is at most $1/2$.  This also absorbs the finitely many
smaller values of $N$ by decreasing the implicit constant.
\end{proof}

The same bound applies to the random factors used in ARHM.  Here it is
useful to allow any fixed positive joint success probability, with an
explicit lower threshold on the number of requests.

\begin{corollary}[Linear simultaneous consumability for random factors]
\label{cor:ext-factor-linear}
Fix $\delta\in(0,1/2)$ and $p_0\in(0,1]$.  There are constants
$c_\delta>0$ and $m_0=m_0(\delta,p_0)$ such that, for sufficiently
large even $N$ and $m_0\le m\le N^{1/2-\delta}$, independent uniform
factors $M_{I_1},\ldots,M_{I_m}$ of any public one-factorization satisfy:
\begin{enumerate}[leftmargin=1.6em,label=\textnormal{(\roman*)}]
\item $\tfor(M_{I_1},\ldots,M_{I_m})\ge c_\delta m$ with probability
at least $1-\exp[-\bOm[\delta]{m\log N}]$;
\item every $q$-qubit encoding with joint success at least
$p_0$ under uniform $X$ and these factor draws, independent of $X$ and
the encoding randomness, has $q=\bOm[{\delta,p_0}]{m}$, even with
simultaneous revelation and unrestricted joint decoding.
\end{enumerate}
\end{corollary}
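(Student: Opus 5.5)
Both parts follow by specializing the balanced-matching machinery already in place to the public one-factorization $\Fc$. We first note that a uniform factor $M_I$, $I\sim\Uni{[N-1]}$, is a balanced matching distribution in the sense of \cref{def:balanced-matching-distribution}: each unordered pair lies in exactly one of the $N-1$ factors, so it appears in $M_I$ with probability exactly $1/(N-1)$. The indices $I_1,\ldots,I_m$ are independent, so the factors $M_{I_1},\ldots,M_{I_m}$ are independent draws from this distribution. Independent draws may repeat a factor; this causes no difficulty, since \cref{prop:ext-balanced-tail} uses only the edge marginals and independence.

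For (i), we apply \cref{cor:linear-transversal-forest-main} directly with $c_\delta=\delta/16$. This gives $\tfor\ge c_\delta m$ except with probability at most $T_{N,m}(\floor{c_\delta m})\le\exp(-\delta m\ln N)$ whenever $\floor{c_\delta m}\ge1$. The case $\floor{c_\delta m}=0$ is trivial, because every transversal has forest size at least $1$. Taking $m_0\ge 1/c_\delta$ avoids the degenerate case altogether. The base of the logarithm changes only the constant in $\bOm[\delta]{m\log N}$.

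For (ii), we use the consequence \eqref{eq:consumability-linear-transfer} of \cref{thm:ext-balanced-quant} and \cref{lem:forest-information-main}. The factor draws are independent of $X$ and of the encoding randomness, so the common information bound \eqref{eq:consumability-information-budget} applies. This gives
\[
q\ge\floor{c_\delta m}\bigl[p_0-\exp(-\delta m\ln N)\bigr]-1.
\]
We now choose $m_0$ so that three conditions hold for $m\ge m_0$:
\begin{itemize}
\item $\floor{c_\delta m}\ge c_\delta m/2$, which holds once $m\ge 2/c_\delta$;
\item $\exp(-\delta m\ln N)\le p_0/2$, which holds for $N\ge2$ once $m\ge \ln(2/p_0)/(\delta\ln2)$;
\item $c_\delta p_0 m/4-1\ge c_\delta p_0 m/8$, which holds once $m\ge 8/(c_\delta p_0)$.
\end{itemize}
With $m_0$ the maximum of these three thresholds, we obtain $q\ge c_\delta p_0 m/8=\bOm[\delta,p_0]{m}$. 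The bound holds even when Bob sees all requests at once and decodes jointly, because \cref{lem:forest-information-main} places no restriction on the decoder.

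The one point needing care is that the success probability $p$ in \cref{thm:ext-balanced-quant} is the actual joint success, while the hypothesis only gives $p\ge p_0$. The function $p\mapsto s[p-T]-h_2(p)$ is not monotone because of the entropy term, so we cannot simply substitute $p_0$ for $p$ there. We avoid this by bounding $h_2(p)\le1$ first and then using $p\ge p_0$ in the linear term; this is exactly the form of \eqref{eq:consumability-linear-transfer}. Public coins are also handled by the common convention. Beyond these checks, the argument is bookkeeping of constants, and the main substance has already been supplied by \cref{prop:ext-balanced-tail}.
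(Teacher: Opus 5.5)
Your proposal is correct and follows the paper's proof essentially step for step: balancedness of uniform factors, then (i) from \cref{cor:linear-transversal-forest-main}, and (ii) from \eqref{eq:consumability-linear-transfer} with $m_0\ge\max\{2/c_\delta,8/(c_\delta p_0)\}$. The only difference is cosmetic: you absorb the condition $\exp(-\delta m\ln N)\le p_0/2$ into $m_0$, which is valid for all $N\ge2$, whereas the paper takes $N$ sufficiently large instead.
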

\begin{proof}
Uniform factors are balanced, so (i) is
\cref{cor:linear-transversal-forest-main}.  For (ii), choose
$m_0\ge\max\{2/c_\delta,8/(c_\delta p_0)\}$ and take $N$ large
enough that $\exp(-\delta m\ln N)\le p_0/2$ for all $m\ge m_0$.
Then \eqref{eq:consumability-linear-transfer} gives
$q\ge c_\delta p_0m/4-1\ge c_\delta p_0m/8$.
\end{proof}

\begin{remark}[Scope of the simultaneous bound]
\label{rem:ext-sim}
Factors are drawn independently with replacement, so repetitions are
allowed.  The lower bound is distributional; a fixed list repeating one
factor can be answered by measuring one phase state and repeating its
answer.  Bob may use any joint operation after seeing the entire list.
\end{remark}

\subsection{Sequential requests}
\label{app:seqmhm}

When each answer is due before the next matching is known, fresh requests
force forest growth even if Bob chooses each edge adaptively.  This
extends the linear lower bound to the square-root scale.

\begin{definition}[Sequential Hidden Matching]
\label{def:app-seqmhm}
\label{def:sequential-hidden-matching}
Fix a distribution $\Dc$ on perfect matchings.  In
$\SeqHM_{N,m}(\Dc)$, Alice receives $X\in\bool^N$ and sends one
quantum message.  Bob receives independent matchings
$M_1,\ldots,M_m\sim\Dc$ one at a time.  After receiving $M_i$ and
before receiving $M_{i+1}$, he outputs $e_i=\set{u_i,v_i}\in M_i$ and
a bit $b_i$, correct when $b_i=X_{u_i}\oplus X_{v_i}$.
He may retain arbitrary quantum memory and use adaptive operations.
The matching draws are independent of $X$ and protocol randomness;
no future matching is revealed, even through public coins, before its
round.  Outputs are legal even on failure, using the normalization above.
\end{definition}

For uniform perfect matchings this is $\SeqMHM_{N,m}$.
Throughout the sequential lemmas and their corollaries, $X$ is uniform
and requests follow the fresh-request experiment just defined.
Let $F_i=\set{e_1,\ldots,e_i}$ and $L_i=\fs(F_i)$, with $L_0=0$.
There are two tasks: lower-bound $\E{L_m}$ and account for incorrect
parity labels.

\subsubsection{Online forest growth}

\begin{lemma}[Online forest growth]
\label{lem:app-online-forest-growth}
If $\Dc$ is balanced, then for every positive-probability classical
history before $M_i$ is revealed,
\[
 \prob{L_i=L_{i-1}\mid\text{history}}
 \le\min\set*{1,\frac{\binom i2}{N-1}}.
\]
Consequently, $\E{L_m}\ge m-m(m^2-1)/(6(N-1))$.
\end{lemma}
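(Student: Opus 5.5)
The plan is to observe that adding $e_i$ to $F_{i-1}$ leaves the forest size unchanged exactly when both endpoints of $e_i$ already lie in one connected component of $(V(F_{i-1}),F_{i-1})$. In every other case, including when an endpoint is new, $e_i$ joins two components or adds a new vertex, so $L_i=L_{i-1}+1$. Fix a positive-probability classical history before $M_i$ is revealed: the earlier matchings, the earlier outputs, and any classical coins. This history determines $F_{i-1}$.

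Let $\mathcal P_{i-1}$ be the set of unordered pairs $\set{u,v}$ with $u\ne v$ that lie in a common component of $F_{i-1}$. The components have vertex counts $v_1,\ldots,v_t$ with $\sum_j(v_j-1)=L_{i-1}\le i-1$. The number of such pairs is $\sum_j\binom{v_j}{2}$. Subject to that constraint, this sum is maximized by a single tree, by convexity of $v\mapsto\binom v2$ under a fixed total of $v_j-1$. So $\abs{\mathcal P_{i-1}}\le\binom{L_{i-1}+1}{2}\le\binom i2$.

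The key step is to show that $L_i=L_{i-1}$ forces $M_i\cap\mathcal P_{i-1}\ne\emptyset$, whatever edge Bob picks. Since $e_i\in M_i$, the event $\set{L_i=L_{i-1}}$ is contained in $\set{M_i\cap\mathcal P_{i-1}\ne\emptyset}$. This inclusion is deterministic given the history, so Bob's residual quantum state and adaptive measurement play no role.

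By the fresh-request assumption in \cref{def:app-seqmhm}, $M_i$ is independent of the history, which is a function of $X$, the protocol randomness, and $M_1,\ldots,M_{i-1}$. A union bound with balancedness then gives $\prob{M_i\cap\mathcal P_{i-1}\ne\emptyset\mid\text{history}}\le\abs{\mathcal P_{i-1}}/(N-1)\le\binom i2/(N-1)$. Combined with the trivial bound $1$, this proves the first claim.

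The delicate point is conditioning on a history that includes Bob's measurement outcomes, which depend on the quantum message. What is needed is that $M_i$ is independent of the joint classical–quantum system before round $i$, so it stays independent after conditioning on any classical outcome of that system. I would state this explicitly: the history is produced by operations acting only on systems independent of $M_i$.

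For the expectation, write $L_m=\sum_{i=1}^m\ind\set{L_i>L_{i-1}}$, since each step increases the forest size by at most one. Taking expectations and averaging the conditional bound over histories gives
\[
\E{L_m}\ge m-\sum_{i=1}^m\frac{\binom i2}{N-1}=m-\frac{\binom{m+1}{3}}{N-1}=m-\frac{m(m^2-1)}{6(N-1)},
\]
using $\sum_{i\le m}\binom i2=\binom{m+1}3$.
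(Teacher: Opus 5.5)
Your proposal is correct and follows the paper's proof essentially step for step. You use the same containment of the no-growth event in $\{M_i\cap\bigcup_j\binom{C_j}{2}\neq\emptyset\}$, the same freshness-plus-balancedness union bound, and the same summation $\sum_{i\le m}\binom{i}{2}=\binom{m+1}{3}$.

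The only difference is cosmetic. You bound the pair count through the component vertex counts and $L_{i-1}\le i-1$, which gives the slightly sharper intermediate bound $\binom{L_{i-1}+1}{2}$. The paper instead counts prior edge occurrences $t_j$ in each component. Both arguments reduce to the superadditivity of $a\mapsto\binom{a+1}{2}$.
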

\begin{proof}
Fix the history and let $C_1,\ldots,C_t$ be the nonempty components
of the previously selected edges.  Adding an edge fails to increase
forest size exactly when its endpoints already lie in the same component.
Thus failure to grow requires the fresh matching to meet
$B:=\bigcup_j\binom{C_j}{2}$, regardless of how Bob chooses his edge.

Let $t_j$ count prior edge occurrences in component $C_j$, including
repetitions.  Since $\sum_jt_j=i-1$ and $\abs{C_j}\le t_j+1$,
\[
 \abs B=\sum_j\binom{\abs{C_j}}2
 \le\sum_j\binom{t_j+1}2
 \le\binom{1+\sum_jt_j}2=\binom i2.
\]
Freshness makes $M_i$ independent of the entire pre-round state,
including the quantum memory, conditional on this history.
Balancedness and a union bound therefore give probability at most
$\abs B/(N-1)$ of meeting $B$.  We need not fix or measure Bob's
residual quantum state.

Finally $L_i-L_{i-1}\in\bool$, so
\[
 \E{L_m}=\sum_{i=1}^m\prob{L_i>L_{i-1}}
 \ge m-\frac1{N-1}\sum_{i=1}^m\binom i2
 =m-\frac{m(m^2-1)}{6(N-1)}.
\]
\end{proof}

\subsubsection{Sequential forest information}

Write $M=(M_1,\ldots,M_m)$ and
$O=((e_1,b_1),\ldots,(e_m,b_m))$.  Define the error vector
$Z=(Z_1,\ldots,Z_m)$ by $Z_i=b_i\oplus X_{u_i}\oplus X_{v_i}$.
Knowing $Z$ repairs all reported parity labels, so we can use the entire
forest and subtract the entropy needed to describe the errors.

\begin{lemma}[Sequential forest-information bound]
\label{lem:app-sequential-forest-information}
Every $q$-qubit message satisfies
\[
 q\ge\E{L_m}-H(Z\mid O,M_1,\ldots,M_m).
\]
In particular, $q\ge\E{L_m}-H(Z)$.
\end{lemma}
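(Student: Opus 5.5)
The plan is to combine the information budget \eqref{eq:consumability-information-budget} with a counting argument that, once the transcript and the error vector are both fixed, uses the whole forest rather than only the successful runs. The upper bound $I(X:O\mid M)\le q$ already holds for a sequential decoder, as noted after \eqref{eq:consumability-information-budget}: for each fixed matching list, Bob's adaptive strategy is a single channel applied to $(Q,R)$. It therefore suffices to prove
\[
 I(X:O\mid M)\ge \E{L_m}-H(Z\mid O,M).
\]
Since $H(X\mid M)=N$, this amounts to showing $H(X\mid O,M)\le N-\E{L_m}+H(Z\mid O,M)$.

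For the key step we expand with the chain rule. Because $Z$ is a function of $(X,O)$,
\[
 H(X\mid O,M)=H(X,Z\mid O,M)=H(Z\mid O,M)+H(X\mid O,M,Z).
\]
Now fix values $M=\mu$, $O=o$, $Z=z$. Each selected edge $e_i=\set{u_i,v_i}$ then carries the known constraint $X_{u_i}\oplus X_{v_i}=b_i\oplus z_i$. Restricting these constraints to a spanning forest of $F_m=\set{e_1,\ldots,e_m}$, which has $L_m(\mu,o)$ edges, gives $L_m$ independent linear constraints over $\mathbb{F}_2$. The consistent strings therefore form a subset of an affine subspace of size $2^{N-L_m}$; the remaining constraints only cut the set down further. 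Hence $H(X\mid M=\mu,O=o,Z=z)\le N-L_m(\mu,o)$, and averaging gives $H(X\mid O,M,Z)\le N-\E{L_m}$.

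Here $L_m$ depends only on the edges in $O$, so the average is the same $\E{L_m}$ bounded in \cref{lem:app-online-forest-growth}. As in \cref{lem:forest-information-main}, the counting is done after fixing the transcript, so it stays valid even though Bob's edge choices depend on $X$ through the message. Combining the two displays proves the main inequality. The ``in particular'' statement then follows from $H(Z\mid O,M)\le H(Z)$, since conditioning does not increase entropy.

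The only delicate point is justifying the data-processing step for an adaptive quantum decoder whose future matchings are fresh. Conditioning on the full list $M$ in the analysis is harmless: $M$ is independent of $(X,Q,R)$, and for each fixed list the decoder's rounds compose into one channel acting on the message. This is exactly the observation already recorded after \eqref{eq:consumability-information-budget}, so I expect no real obstacle there. Everything else is the entropy bookkeeping above.
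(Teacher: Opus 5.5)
Your proposal is correct and matches the paper's proof essentially step for step. Like the paper, you bound $q\ge I(X:O\mid M)$ via the information budget, split $H(X\mid O,M)=H(Z\mid O,M)+H(X\mid O,M,Z)$ using that $Z$ is determined by $(X,O)$, bound $H(X\mid O,M,Z)\le N-\mathbb{E}[L_m]$ by counting strings consistent with the repaired parities on a spanning forest once $(O,M,Z)$ is fixed, and finish with $H(Z\mid O,M)\le H(Z)$.
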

\begin{proof}
By \eqref{eq:consumability-information-budget}, $q\ge I(X:O\mid M)$,
including with public coins.  For fixed $(O,M,Z)=(o,\mu,z)$, choose
a spanning forest of the output edges and one occurrence of each forest
edge.  Its true parity is $b_i\oplus z_i$, so these labels impose
$L_m(o)$ independent constraints.  Consequently,
$H(X\mid O,M,Z)\le N-\E{L_m}$.

Since $Z$ is a function of $(X,O,M)$,
\[
\begin{aligned}
 H(X\mid O,M)
 &=H(Z\mid O,M)+H(X\mid O,M,Z)\\
 &\le H(Z\mid O,M)+N-\E{L_m}.
\end{aligned}
\]
Subtract from $H(X\mid M)=N$ and use $H(Z\mid O,M)\le H(Z)$ for
the final assertion.  Equivalently, conditioning throughout on public
coins $R$ gives the stronger subtraction $H(Z\mid O,M,R)$.
\end{proof}

\begin{restatable}[Sequential error-entropy lower bound]{theorem}{sequentialEntropyTheorem}
\label{thm:app-seqmhm}
\label{thm:sequential-hidden-matching}
For balanced $\Dc$ and the uniform-input, fresh-request experiment above,
every protocol for $\SeqHM_{N,m}(\Dc)$ satisfies
\[
 q\ge m-\frac{m(m^2-1)}{6(N-1)}-H(Z\mid O,M_1,\ldots,M_m).
\]
In particular, $q\ge m-m(m^2-1)/(6(N-1))-H(Z)$.
\end{restatable}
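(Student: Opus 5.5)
The plan is to combine the two lemmas just proved. \Cref{lem:app-sequential-forest-information} gives
\[
 q\ge\E{L_m}-H(Z\mid O,M_1,\ldots,M_m),
\]
valid for any $q$-qubit message in the fresh-request experiment, including with public coins. The remaining task is to lower-bound $\E{L_m}$. \Cref{lem:app-online-forest-growth} supplies exactly this for balanced $\Dc$:
\[
 \E{L_m}\ge m-\frac{m(m^2-1)}{6(N-1)}.
\]
Substituting gives the first inequality. The second follows because conditioning cannot increase entropy, so $H(Z\mid O,M)\le H(Z)$.

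Before substituting, I will check that both lemmas apply to the same random variables and the same experiment. Three points need verifying.

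First, the outputs are normalized to legal edges $e_i\in M_i$ even on failure, as in \cref{def:app-seqmhm}. This makes $L_m=\fs(F_m)$ well defined on every run, and it is the quantity appearing in both lemmas.

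Second, the growth lemma conditions on the classical history before $M_i$ is revealed. Its bound holds for every positive-probability history, so averaging over histories gives the unconditional expectation bound without fixing Bob's residual quantum state. The key property used is freshness: $M_i$ is independent of $X$, the message, and all earlier outcomes, given that history.

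Third, the information lemma treats the whole adaptive sequential decoder, for each fixed matching list, as one channel applied to $(Q,R)$. The budget \eqref{eq:consumability-information-budget} therefore applies unchanged. Conditioning on $M$ is purely part of the analysis and leaks no future request to the decoder.

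I do not expect a real obstacle, since the substantive work lives in the two lemmas. The point needing care is that $\E{L_m}$ enters the information bound unconditionally, not multiplied by a success indicator. This is legitimate because $Z$ repairs every wrong label. The chain is $H(X\mid O,M)=H(Z\mid O,M)+H(X\mid O,M,Z)$, and $H(X\mid O,M,Z)\le N-\E{L_m}$ by counting the strings compatible with the repaired forest parities. So the growth bound may be plugged in directly, with no correction for correlation between success and forest size.
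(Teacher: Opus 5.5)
Your proposal is correct and matches the paper's proof, which simply combines \cref{lem:app-online-forest-growth} with \cref{lem:app-sequential-forest-information} and then uses $H(Z\mid O,M)\le H(Z)$. Your three consistency checks (legal outputs on failure, averaging the per-history growth bound, and the adaptive decoder as a channel for each fixed list) and your remark that the repaired forest lets $\E{L_m}$ enter without a success indicator are exactly the points the paper relies on.
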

\begin{proof}
Combine \cref{lem:app-online-forest-growth} with
\cref{lem:app-sequential-forest-information}.
\end{proof}

For $m=o(\sqrt N)$ the forest has $(1-o(1))m$ edges in expectation.
The only remaining loss is the error entropy.  Joint success at least
$p$ gives $H(Z)\le1+(1-p)m$, hence
$q\ge pm-1-m(m^2-1)/(6(N-1))$; the details and other correctness
guarantees appear in \cref{app:consumability-extra}.
In particular, success $2/3$ and $m\le\sqrt N$ give $q\ge m/2-1$.

\paragraph{Back to streaming.}
For update-first ARHM, Alice sends the memory remaining after all
updates, and Bob uses it to answer the requests.  The sequential bound
therefore gives the required streaming lower bound.  The simultaneous
bound also applies if Bob learns all requests at once after receiving
the message, in the stated range below $\sqrt N$.

\paragraph{Open questions.}
For simultaneous requests, our linear lower bound holds when
$m\le N^{1/2-\delta}$ for fixed $\delta>0$.  Can it reach
$m=\floor{\sqrt N}$, as the sequential bound does?
The endpoint estimate in \cref{sec:ext-endpoint} is
$\bOm{\sqrt N\log\log N/\log N}$, leaving a gap to linear dependence.
Even where our lower bound is linear, sending separate phase states
uses $\bO{m\log N}$ qubits.  What is the optimal joint dependence on
$N$ and $m$?

In the next section, we use later input to rebuild a quantum sketch
for triangle counting, where the goal is to obtain more observations
for an estimator.

\section{Replenishment in a graph estimator: triangle counting}
\label{sec:triangles}

We now use replenishment to adapt Kallaugher's triangle-counting
algorithm~\cite{Kallaugher2021} to repeated-edge streams.  His algorithm
combines a quantum sketch with a classical sampling estimator.  Queries
remove directed-edge states from the quantum sketch, so running the
estimator many times requires many copies.

Suppose every edge appears exactly $R$ times in uniformly random order.
We divide the stream into blocks and use each block to rebuild the
quantum sketch and resample the classical component.  This lets us
obtain observations from successive blocks using the same memory.
Both the estimator and much of the analysis below are adaptations of
Kallaugher's work~\cite{Kallaugher2021} to repeated edges.  We reuse his
measurement, the division into quantum low weights and classical high
weights, and the main arguments for the expectation and variance of
these estimators.  The additional analysis handles rebuilding from
repeated occurrences, the bias caused by repeated closing edges, and
averaging the block estimates to recover the triangle count of the
underlying simple graph.
We only need to produce one estimate at the end of the stream.

\subsection{Model and statement}

Let $G=(V,E)$ be a simple graph with $n=\abs V$, $m=\abs E$, and
$T=\abs{\calT(G)}\ge1$.  Set
\[\Delta_E=\max_{e \in E}\{\text{number of triangles containing }e\}.\]

\begin{definition}[Exact-$R$ random-repetition stream]
For each $e\in E$, create $R$ labeled occurrences $(e,1),\ldots,(e,R)$
and uniformly permute all $Rm$ occurrences.  The algorithm sees only the
underlying edge; labels define the probability space.  Its target is the
number of triangles in the underlying simple graph.
\end{definition}

\paragraph{Space complexity.}
We use \emph{expected simultaneous space}.
Hold a legal stream $\sigma$ fixed, and let $M_t(\sigma)$
be its live memory usage, in bits and qubits, at stage $t$.  Stages include
intermediate steps within item transitions as well as initialization
and final processing.
We define
\begin{equation}\label{eq:expected-sim-space}
 S_{\mathrm{sim}}^{\mathrm{exp}}
 :=\sup_\sigma\sup_t\E[\Ac]{M_t(\sigma)}.
\end{equation}
The expectation is over internal algorithmic randomness and measurement
outcomes, not the input permutation.  The stage is fixed before this
expectation; it is not selected after observing an unusually large
sample. 

The stronger expected peak measure is
$S_{\mathrm{peak}}^{\mathrm{exp}}
 :=\sup_\sigma\E[\Ac]{\sup_t M_t(\sigma)}$.
In general, exchanging the expectation and the supremum can increase
the bound.  Our triangle theorem establishes only
\eqref{eq:expected-sim-space}; its space guarantee holds on every fixed
legal stream, although its accuracy guarantee uses random order.
ARHM and NGC keep the worst-case live-space convention.

Choose an integer block multiplicity $L\ge1$.  Put
$B=\floor{R/L}$ and $q=R-BL\in\{0,\ldots,L-1\}$.
The first $BLm$ positions form full blocks $\calB_1,\ldots,\calB_B$,
each of length $H:=Lm$; the remaining $qm$ positions form $\calB_\star$.
Let $C_{j,e}$ and $C_{\star,e}$ count copies of $e$ in those blocks.
Exactly $R$ copies of each edge occur overall:
\begin{equation}\label{eq:total-multiplicity}
 \sum_{j=1}^B C_{j,e}+C_{\star,e}=R\qquad(e\in E).
\end{equation}

An estimator \emph{lane} is one physical instance reused across blocks.
Block $\calB_1$ only builds its first state.  In block $j\ge2$, it queries
the READY state built in block $j-1$ while a separate BUILD buffer prepares
the next state.  There are $J:=B-1$ usable query blocks.  Each occurrence
in a quantum lane's query block is activated independently with probability
\begin{equation}\label{eq:rho}
 \rho:=\frac1{kL},\qquad k\in[1,m].
\end{equation}

\begin{restatable}[Random-repetition triangle counting: parameterized form]{theorem}{triangleCountingTheorem}
\label{thm:main-parameterized}
Suppose the vertex universe $[n]$, exact $m,R$, and estimates
$T/2\le T_0\le T$, $\Delta_E\le\Delta_0\le2\Delta_E$ are supplied
before processing begins (\cref{rem:triangle-known-parameters}).
Fix $\eps,\delta\in(0,1/10)$ and take
$L\ge C\log(m(R+1)/\delta)$ for a sufficiently large absolute $C$.
Assume
\begin{align}
 R&\ge CL/\eps,\label{eq:R-large-condition}\\
 C/\eps&\le k\le m.\label{eq:k-bias-condition}
\end{align}
There is a one-pass quantum streaming algorithm with
$\prob{\abs{\wh T-T}\le\eps T}\ge1-\delta$ and expected simultaneous
space, as defined in \eqref{eq:expected-sim-space}, at most
\begin{equation}\label{eq:main-space-explicit}
 \bO{\log\frac1\delta\;\polylog(n,m,R,1/\eps)
  \left[1+\frac{L^2k^2m^2}{R\eps^2T^2}
           +\frac{Lm^{3/2}\Delta_E}{R\eps^2T\sqrt k}\right]}.
\end{equation}
The budget counts qubits and classical bits.  Choosing
$L=\ceil{C\log(m(R+1)/\delta)}$, subject to
\eqref{eq:R-large-condition}, gives
\begin{equation}\label{eq:main-space-tilde}
 \wtO\!\left(1+\frac{k^2m^2}{R\eps^2T^2}
                +\frac{m^{3/2}\Delta_E}{R\eps^2T\sqrt k}\right).
\end{equation}
\end{restatable}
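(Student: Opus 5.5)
We assemble the guarantee from the block structure described in \cref{sec:overview-pair-sketch}. The proof has four parts: block coverage, per-block targets, bias control over the blocks, and variance and space accounting.

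\emph{Step 1: block coverage.} Each $C_{j,e}$ is hypergeometric with mean $L$. A Chernoff bound for sampling without replacement, together with a union bound over the $m$ edges and the at most $R+1$ blocks, shows the following. With probability at least $1-\delta/3$, every full block contains every edge between $1$ and $2L$ times. This uses $L\ge C\log(m(R+1)/\delta)$. On this event the BUILD procedure is well defined. It starts from $2Lm$ dummy states and swaps fresh dummies into both orientations of each arriving edge. Its result is a state whose uniform directed-edge component has squared mass exactly $1/L$, regardless of duplicates (\cref{lem:build-state}). The classical lane keys its sampling decisions by hashing the underlying vertex or edge, so repeated arrivals never create duplicate entries.

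\emph{Step 2: per-block targets.} Fix the whole stream; we work conditionally on it from here on. A quantum lane queries its READY state in block $j$, activating each occurrence with probability $\rho=1/(kL)$. Following Kallaugher's analysis, the probability of reaching each query along the all-zero branch cancels the renormalization of the surviving state. This telescoping gives the expectation of the signed outcome as an explicit low-weight sum $T_j^{<}$ of witness-survival probabilities (\cref{lem:telescoping,lem:quantum-mean}). A nonzero outcome can only come from the initial directed-edge mass, so its second moment is at most $1/L$. The classical lane estimates a complementary high-weight quantity. Its endpoint counters sum over earlier destructive occurrences, while the quantum survival weight counts their union, so the classical lane double-counts earlier copies of the closing edge. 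We would bound this discrepancy by $2T/k$ per block (\cref{lem:high-bias}); this is where $k\ge C/\eps$ enters.

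\emph{Step 3: bias over the blocks.} By construction $T_j^{<}+T_j^{>}=\frac1{3L}\sum_e C_{j,e}\tau_e$. Summing over $j=2,\dots,B$ and using \eqref{eq:total-multiplicity}, the only terms missing from $\sum_e R\tau_e=3RT$ come from $\calB_1$ and $\calB_\star$. These contribute at most $4L$ copies of each edge, so after normalizing by $J$ the average target is within $8LT/R+2T/k\le\eps T/3$ of $T$ (\cref{lem:structural-bias}). This step needs \eqref{eq:R-large-condition} and \eqref{eq:k-bias-condition}.

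\emph{Step 4: variance and space.} Given the stream, fresh coins, hash seeds and states make the $J\approx R/L$ block samples independent, even though the blocks of the stream are dependent. Each physical lane therefore contributes $J$ independent samples while storing only two buffers and a running sum. Using the second-moment bound $1/L$ for the quantum lanes and Kallaugher's variance bound for classical sampling, Chebyshev's inequality needs the two lane counts displayed in the overview. Running $O(\log(1/\delta))$ independent groups and taking a median boosts the success probability to $1-\delta$. Each quantum lane holds $O(Lm)$-dimensional registers, i.e.\ polylogarithmically many qubits. Each classical lane stores samples whose expected size at every fixed stage is polylogarithmic; this gives a bound on the expected simultaneous space \eqref{eq:expected-sim-space}, not on the expected peak. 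Multiplying yields \eqref{eq:main-space-explicit}, and substituting the stated $L$ gives \eqref{eq:main-space-tilde}.

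The main obstacle is Step 2's classical high-weight bias. Matching the classical counters' sum over earlier copies of the closing edge against the quantum lane's union-of-survival weight, uniformly over arbitrary intra-block multiplicities up to $2L$, is where the single-arrival analysis breaks. I would first try charging each doubly counted witness to a pair of activated copies of the same edge, which occurs with probability $O(\rho^2 L^2)=O(1/k^2)$ per copy pair, and then sum over triangles.
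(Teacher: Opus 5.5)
Your proposal is correct and follows the paper's proof essentially step for step: the coverage event $\calG$, the dummy-swap rebuild, the telescoping mean with $1/L$ second moment, the $2T/k$ double-counting bias plus the $8LT/R$ warm-up/tail loss, independent per-block samples for a fixed stream, Chebyshev lane counts computed from the supplied $T_0,\Delta_0$, and medians. Two small points need tightening: $\calG$ must also bound the tail-block counts $C_{\star,e}\le 2L$ for your $4L$ count, and the classical variance comes from the paper's repeated-edge adaptation (\cref{lem:high-weight-bounds,lem:classical-var}) rather than Kallaugher's bound verbatim.

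The step you flag as the main obstacle needs no charging to activated pairs. Since $(1-\rho)^a-(1-\rho)^{a+c}\le\rho c$, and the numbers of earlier copies of $uv$ at its successive occurrences are $0,1,\ldots,C_{j,uv}-1$, one gets the deterministic bound $0\le\wt h_j-h_j\le\rho\binom{C_{j,uv}}{2}/L<2/k$ from $C_{j,uv}\le 2L$ and $\rho=1/(kL)$.
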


\begin{restatable}[Optimized constant-error form]{corollary}{optimizedTriangleCorollary}
\label{cor:optimized}
Fix constant $\eps,\delta\in(0,1/10)$ and choose
$L=\ceil{C_0\log(m(R+1)/\delta)}$ as in the theorem.
Suppose $R\ge C_{\eps,\delta}\log(mR)$, with a constant large enough
for \eqref{eq:R-large-condition}, and the supplied scales give
\[
 k_0:=\left(\frac{T_0\Delta_0}{L\sqrt m}\right)^{2/5}
 \in[C/\eps,m].
\]
Taking $k=k_0$ gives expected simultaneous space
\begin{equation}\label{eq:optimized-space}
 \bO{\polylog(n,m,R)
   \left[1+\frac{L^{6/5}m^{8/5}\Delta_E^{4/5}}{RT^{6/5}}\right]}
 =\wtO\!\left(1+\frac{m^{8/5}\Delta_E^{4/5}}{RT^{6/5}}\right).
\end{equation}
If $T=\bT{m}$ and $\Delta_E=\bO{1}$, under the same feasibility
conditions this is $\wtO(1+m^{2/5}/R)$.
\end{restatable}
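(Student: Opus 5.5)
The corollary should follow by substituting $k=k_0$ into the parameterized bound \eqref{eq:main-space-explicit} of \cref{thm:main-parameterized} and balancing its two nonconstant terms. First I would check the hypotheses. With $L=\ceil{C_0\log(m(R+1)/\delta)}$ and constant $\eps,\delta$, the condition $R\ge C_{\eps,\delta}\log(mR)$ implies $R\ge CL/\eps$ once $C_{\eps,\delta}$ is large enough; this gives \eqref{eq:R-large-condition}. The assumption $k_0\in[C/\eps,m]$ is exactly \eqref{eq:k-bias-condition}. Because $\eps,\delta$ are constants, the prefactor $\log(1/\delta)\polylog(n,m,R,1/\eps)$ becomes $\polylog(n,m,R)$.

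Next I would substitute. The first term is $L^2k^2m^2/(R\eps^2T^2)$ and the second is $Lm^{3/2}\Delta_E/(R\eps^2T\sqrt k)$. Setting them equal gives $k^{5/2}=T\Delta_E/(L\sqrt m)$, so the idealized choice is $k^\ast=(T\Delta_E/(L\sqrt m))^{2/5}$. The supplied scales satisfy $T/2\le T_0\le T$ and $\Delta_E\le\Delta_0\le2\Delta_E$, so $T_0\Delta_0\in[T\Delta_E/2,\,2T\Delta_E]$. Hence $k_0=\bT{k^\ast}$ with absolute constants, and both terms change only by constant factors when $k_0$ replaces $k^\ast$. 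At $k^\ast$ the first term equals
\[
 \frac{L^2m^2}{R\eps^2T^2}\left(\frac{T\Delta_E}{L\sqrt m}\right)^{4/5}
 =\frac{L^{6/5}m^{8/5}\Delta_E^{4/5}}{R\eps^2T^{6/5}},
\]
and the second term equals the first by construction. Absorbing $\eps^{-2}$ into the constant gives \eqref{eq:optimized-space}. Since $L=\bO{\log(mR)}$, the factor $L^{6/5}$ is polylogarithmic, which gives the $\wtO$ form.

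Finally, for the special regime I would plug in $T=\bT{m}$ and $\Delta_E=\bO{1}$. Then $m^{8/5}\Delta_E^{4/5}/T^{6/5}=\bO{m^{2/5}}$, and the bound becomes $\wtO(1+m^{2/5}/R)$. Feasibility of $k_0$ is assumed as part of the hypotheses here, so nothing further needs to be verified.

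I do not expect a real obstacle. The one point needing care is bookkeeping: $k$ must be a legitimate input to the theorem, which holds because $k_0$ is computed from supplied scales only, before processing begins. I would also check that the constants hidden in $k_0=\bT{k^\ast}$ do not interact with the lower constraint $k\ge C/\eps$; they do not, since that constraint is imposed directly on $k_0$.
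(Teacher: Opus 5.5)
Your proposal is correct and follows the paper's proof: substitute into the parameterized bound of \cref{thm:main-parameterized}, balance the two terms to get $k^{5/2}\asymp T\Delta_E/(L\sqrt m)$, observe that the supplied estimates $T_0,\Delta_0$ make $k_0$ a constant-factor approximation to this value, and absorb $\eps^{-2}$ and $L^{6/5}=\polylog(mR)$ into the stated bounds. Your explicit verification that \eqref{eq:R-large-condition} and \eqref{eq:k-bias-condition} hold, and that the two terms change only by constant factors under $k^\ast\mapsto k_0$, spells out steps the paper's proof leaves implicit.
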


\begin{remark}[Space convention and comparison with Kallaugher]
Kallaugher's single-arrival, arbitrary-order bound is
$\bOt{m^{8/5}\Delta_E^{4/5}/T^{6/5}}$ \cite{Kallaugher2021}.
Our essentially linear gain in $R$ uses the stronger repetition and
random-order promises.  This compares upper bounds in different models.

The comparison uses $S_{\mathrm{sim}}^{\mathrm{exp}}$ from
\eqref{eq:expected-sim-space}.  Kallaugher's classical sample is built
once; its expected size also controls expected peak space, up to the fixed
polylogarithmic overhead.  Here a fresh sample is built in every block.
Each has constant expected size at every fixed stage, but the maximum
over $J=\Theta(R/L)$ independent refreshes has repeated opportunities for
a rare large sample.  Thus this theorem makes no expected-peak-space
claim.  Repetition helps statistical averaging and reduces simultaneous
quantum lanes, while making this stronger classical space measure harder
to control.
\end{remark}

\subsection{Balanced blocks in the random permutation}

We first show that every full block contains every underlying edge
with high probability.  This will let us rebuild the graph state in
each block.  We also bound the number of occurrences of any edge,
which we will need when analyzing the estimator.

\begin{lemma}[Block coverage and upper multiplicity]
\label{lem:good-blocks}
For an absolute $c>0$, each full-block count has mean $L$ and satisfies
\[
 \prob{C_{j,e}=0}\le e^{-L},\qquad
 \prob{C_{j,e}>2L}\le e^{-cL}.
\]
The tail count has mean $q\le L$ and
$\prob{C_{\star,e}>2L}\le e^{-cL}$.
For $L\ge C\log(m(R+1)/\delta)$ and sufficiently large $C$, the event
\begin{equation}\label{eq:good-event}
 \calG:=\left\{
 \begin{array}{ll}
 1\le C_{j,e}\le2L&\text{for every full block and every edge},\\
 C_{\star,e}\le2L&\text{for every edge}
 \end{array}\right\}
\end{equation}
has probability at least $1-\delta/10$.
\end{lemma}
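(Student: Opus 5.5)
Each block count is a hypergeometric variable: a full block is a uniformly random set of $H=Lm$ positions among the $Rm$ occurrences, so $C_{j,e}\sim\operatorname{Hypergeom}(Rm,R,Lm)$ with mean $Lm\cdot R/(Rm)=L$. In the same way, $C_{\star,e}\sim\operatorname{Hypergeom}(Rm,R,qm)$ with mean $q\le L$. The plan is to get each tail bound from a Chernoff-type inequality for the hypergeometric law, and then finish with a union bound over edges and blocks.

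For the zero count, compute directly:
\[
\prob{C_{j,e}=0}=\prod_{i=0}^{R-1}\frac{Rm-Lm-i}{Rm-i}\le\left(1-\frac{L}{R}\right)^R\le e^{-L},
\]
since each factor is at most $(Rm-Lm)/(Rm)$. For the upper tail, use Hoeffding's observation that the hypergeometric law is a convex-order contraction of the binomial $\operatorname{Bin}(Lm,1/m)$ (sampling without replacement versus with replacement). Consequently $\E{e^{\theta C_{j,e}}}\le\E{e^{\theta\operatorname{Bin}(Lm,1/m)}}\le e^{L(e^\theta-1)}$. The multiplicative Chernoff bound then gives $\prob{C_{j,e}>2L}\le (e/4)^L=e^{-cL}$ with $c=\ln(4/e)>0$.

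The tail block works the same way, by comparison with $\operatorname{Bin}(qm,1/m)$. Its mean is $q\le L$, so bounding the MGF by $e^{L(e^\theta-1)}$ (monotone in the mean) gives the same $e^{-cL}$ tail at threshold $2L$. If $q=0$, the count is identically zero. The one point needing care is that Hoeffding's comparison requires the sample size to be at most the population size; here $Lm\le Rm$ because $B\ge1$, which follows from \eqref{eq:R-large-condition}. If no full block exists, the event is vacuous for full blocks.

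Finally, $\calG^c$ is a union of at most $m(2B+1)\le 3m(R+1)$ bad events, each of probability at most $e^{-\min(1,c)L}$. Taking $L\ge C\log(m(R+1)/\delta)$ with $C\min(1,c)\ge 2$ makes the total at most
\[
3m(R+1)\left(\frac{\delta}{m(R+1)}\right)^{2}\le \frac{3\delta^2}{m(R+1)}\le\frac{\delta}{10},
\]
using $\delta<1/10$ and $m(R+1)\ge 2$ (with $C$ enlarged slightly if needed). The main obstacle is justifying the hypergeometric Chernoff bound, which rests on Hoeffding's comparison theorem for sampling without replacement. Alternatively, one can note that hypergeometric indicators are negatively associated, so the standard Chernoff bound for sums of independent indicators applies verbatim.
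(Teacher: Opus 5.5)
Your proof is correct and follows the paper's route. Both arguments identify the counts as hypergeometric, bound the zero count by an elementary product (your dual form $(1-L/R)^R$ versus the paper's sequential-draw bound $(1-1/m)^{Lm}$), use the Hoeffding-justified Chernoff bound at threshold $2L$, dominate the tail block by the full-block case, and finish with a union bound. The only nit is that $3\delta^2/(m(R+1))\le\delta/10$ requires $\delta\le 1/15$ when $m(R+1)=2$, and your parenthetical about enlarging $C$ already covers this.
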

\begin{proof}
A full block samples $Lm$ positions without replacement from $Rm$
occurrences, of which $R$ represent $e$.  Thus $C_{j,e}$ is
distributed as $\operatorname{Hypergeom}(Rm,R,Lm)$, with mean $L$.  Until an $e$ has appeared, its fraction
among remaining items is at least $1/m$, so
$\prob{C_{j,e}=0}\le(1-1/m)^{Lm}\le e^{-L}$.
For a hypergeometric count $Y$ with mean $\nu>0$, the Chernoff upper
bound is $\prob{Y\ge(1+\eta)\nu}\le
[e^\eta/(1+\eta)^{1+\eta}]^\nu$ for $\eta>0$.
At $\eta=1$ it gives $\prob{C_{j,e}>2L}\le e^{-L/3}$.
The tail samples only $qm\le Lm$ positions and is stochastically
dominated by a full-block count, giving the same upper-tail estimate.
Union-bound over at most $m(B+1)\le m(R+1)$ block-edge pairs.
\end{proof}

For the estimator analysis, fix an arbitrary stream $\sigma\in\calG$.
All counts are now deterministic; expectations and variances
below are over lane randomness, conditional on this entire stream.
Different blocks need not be independent under random stream order.

\subsection{Rebuilding the quantum state}

A BUILD register has directed-edge basis states
$\ket{\overrightarrow{uv}}$ and $2H$ dummy states, initially
\begin{equation}\label{eq:dummy-initial}
 \ket{D_H}=\frac1{\sqrt{2H}}\sum_{s=1}^{2H}\ket{d_s}.
\end{equation}
The dummy labels refer to positions in the block, not to edge identities:
position $t$ has its own pair $d_{2t-1},d_{2t}$, regardless of which edge
arrives there.  All dummy states have amplitude $1/\sqrt{2H}$ initially.
At BUILD position $t$ with edge $uv$, swap
$\ket{d_{2t-1}}\leftrightarrow\ket{\overrightarrow{uv}}$ and
$\ket{d_{2t}}\leftrightarrow\ket{\overrightarrow{vu}}$.
Each dummy participates in exactly one swap during the block.  Thus these
updates require only the current edge and the position counter; the
algorithm does not test whether the edge has appeared before or store
a set of previously seen edges.  The dummy states are basis states of
one register, not separate registers.  No query acts on this register
until its block is complete.

\begin{lemma}[Repeated occurrences rebuild a uniform graph component]
\label{lem:build-state}
If the BUILD block contains every edge, its final state is
\begin{equation}\label{eq:build-state}
 \ket\Psi=\frac1{\sqrt{2H}}
   \left(\sum_{\overrightarrow e\in\overrightarrow E}
             \ket{\overrightarrow e}+\sum_{d\in D}\ket d\right),
 \qquad \abs D=2H-2m,
\end{equation}
where $\overrightarrow E$ contains both orientations of each edge.
Its squared mass in the directed-edge subspace is
\begin{equation}\label{eq:directed-mass}
 2m/(2H)=1/L.
\end{equation}
\end{lemma}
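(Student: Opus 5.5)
The plan is to track the register as a permutation of basis-state labels. Every operation in the BUILD block is a transposition of two basis vectors of one register. The initial state has equal amplitude $1/\sqrt{2H}$ on the $2H$ dummy labels and amplitude $0$ on every directed-edge label. A product of transpositions only permutes amplitudes among labels, so the final state is again a $0/(1/\sqrt{2H})$-valued vector. It is therefore determined by the set of labels carrying nonzero amplitude, and I would describe that set inductively over block positions $t=1,\ldots,H$.

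The invariant I would prove is the following. After processing positions $1,\ldots,t$, a directed-edge label $\overrightarrow{uv}$ carries amplitude $1/\sqrt{2H}$ iff the edge $uv$ has appeared among the first $t$ positions. A dummy label $d_s$ carries amplitude $1/\sqrt{2H}$ iff either it belongs to a position not yet processed, or it belongs to a processed position whose edge was already present earlier in the block.

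For the inductive step at position $t$ with edge $uv$, the dummies $d_{2t-1},d_{2t}$ have not yet participated in any swap, because each dummy is used exactly once. Both therefore still carry amplitude $1/\sqrt{2H}$. I would split into two cases.
\begin{itemize}
\item If this is the first occurrence of $uv$, both orientations $\overrightarrow{uv},\overrightarrow{vu}$ carry $0$. The swap moves the two nonzero amplitudes onto the orientations and leaves the two dummies empty, since neither was in the set just defined.
\item If $uv$ occurred earlier, both orientations already carry $1/\sqrt{2H}$. Swapping equal amplitudes changes nothing, so both the orientations and these two dummies remain occupied, matching the invariant.
\end{itemize}
The two orientations of one edge are distinct labels and the two dummies are distinct, so the two transpositions commute and the cases are exhaustive. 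The point needing care is only that amplitudes are exactly equal, so repeated swaps act trivially without any phase bookkeeping.

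At $t=H$, the hypothesis that the block contains every edge gives amplitude $1/\sqrt{2H}$ on all $2m$ directed labels. Each first occurrence empties exactly two dummies, so the number of emptied dummies is $2m$ and the remaining occupied dummy set $D$ has $\abs D=2H-2m$. This yields \eqref{eq:build-state}, and the norm check $(2m+2H-2m)/(2H)=1$ confirms consistency. The directed-edge mass is then $2m/(2H)=m/(Lm)=1/L$ using $H=Lm$, which is \eqref{eq:directed-mass}. I expect no real obstacle; the only subtle step is the invariant that a dummy is untouched before its own position.
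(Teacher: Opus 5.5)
Your proof is correct and follows the paper's argument. Both observe that when position $t$ is processed its two dummies are still untouched at amplitude $1/\sqrt{2H}$, split into first versus repeated occurrences of the edge, and then count two emptied dummies per distinct edge, which gives $\abs{D}=2H-2m$ and mass $2m/(2H)=1/L$; your explicit invariant over positions is simply a formal version of the paper's case analysis.
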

\begin{proof}
Write $a=1/\sqrt{2H}$.  When position $t$ is processed, neither of its
dummy states has been touched, so each still has amplitude $a$.
If this is the first occurrence of $uv$, both directed-edge states
have amplitude zero.  The swaps put amplitude $a$ on each orientation
and leave amplitude zero on the two dummies.

If $uv$ has appeared earlier, each directed-edge state already has
amplitude $a$.  The same swaps move that amplitude into the current
dummies and move the equal dummy amplitudes onto the directed-edge
states.  Hence the state is unchanged.  These dummies will not be used
in another BUILD update: any later occurrence has its own pair of
position labels.  In particular, repetitions neither add amplitude to
an edge nor remove its amplitude.  This distinction between first and
later occurrences is only for the analysis; the algorithm performs the
same swaps in both cases.

Consequently, every directed graph edge has amplitude $a$ once it has
appeared.  Exactly two dummy amplitudes are removed per distinct edge,
and every other dummy retains amplitude $a$.  Block coverage therefore
gives the stated state and directed-edge mass.
\end{proof}

Projecting onto the directed-edge subspace would give the exact uniform
graph state with probability $1/L$.  We keep the dummy component instead:
it reduces both the mean and the second moment of the query output.

\subsection{The replenished quantum estimator}

During the next block, the completed register is READY and receives no
swap updates.  At each activated occurrence $e=uv$, perform Kallaugher's
three-outcome measurement \cite{Kallaugher2021} with projectors
\[
 O_{e,\pm}=\frac12\sum_{w\notin\{u,v\}}
 (\ket{\overrightarrow{wu}}\pm\ket{\overrightarrow{wv}})
 (\bra{\overrightarrow{wu}}\pm\bra{\overrightarrow{wv}}),
 \qquad O_{e,0}=I-O_{e,+}-O_{e,-}.
\]
The lane returns its first nonzero outcome and stops querying for that
block; if there is none, it returns zero.  The BUILD register continues
independently regardless of that outcome.

We describe the state by the set of elements represented in the sketch,
as in the pair-sketch framework of Kallaugher, Parekh, and
Voronova~\cite{KallaugherParekhVoronovaPair}.  Here an element is either a
directed edge or a dummy label, and the quantum state is the uniform
superposition over these elements.  This set is used only for the
analysis; the algorithm does not store a classical list of its elements.
An outcome of $+1$ or $-1$ consumes the READY sketch for this block.
After an outcome of $0$, some elements are removed and the sketch can
be queried again.

Two elementary properties explain the estimator.  Suppose the sketch
currently contains $D$ elements, and the query edge is $uv$.  Let $a$
be the number of vertices $w\notin\{u,v\}$ for which both directed edges
$\overrightarrow{wu}$ and $\overrightarrow{wv}$ remain in the sketch.
\begin{enumerate}[label=(Q\arabic*)]
\item The expected query outcome, with values $-1$, $0$, and $+1$, is
$2a/D$.  If only one of the two directed edges remains, its contributions
to the probabilities of $+1$ and $-1$ are equal and cancel in the
expectation.
\item An outcome of $0$ removes all directed edges pointing into $u$
or $v$ from a third vertex.  The resulting state is the uniform
superposition over the remaining elements.  Dummy labels and the two
orientations of $uv$ itself are unchanged.  If $D'$ elements remain,
the probability of this outcome is $D'/D$.
\end{enumerate}

\subsubsection{Expected output over a sequence of queries}

The calculation below uses the same survival-probability argument as
for the pair sketch: fewer elements remain after each query that does
not consume the sketch, but their amplitudes increase upon
renormalization.  The probability of reaching a later query cancels
this increase when we compute its contribution to the expected output.

\begin{lemma}[Expected output for a fixed query schedule]
\label{lem:telescoping}
Fix a state of the form \eqref{eq:build-state}, a query block, and its
activated positions.  List the activated query edges as
$u_1v_1,u_2v_2,\ldots$.  Starting with the elements of the initial
sketch, apply the deletion rule in (Q2) in this order.  Just before
the $i$th deletion, let $a_i$ count vertices $w\notin\{u_i,v_i\}$
for which both $\overrightarrow{wu_i}$ and $\overrightarrow{wv_i}$
are still present.  If the set becomes empty, all later $a_i$ are zero.

This deterministic sequence describes the elements remaining whenever
the actual algorithm has not yet returned $+1$ or $-1$.  The final lane
output $Y\in\{-1,0,+1\}$ satisfies
\begin{equation}\label{eq:fixed-schedule-mean}
 \E{Y\mid\text{activation schedule}}=\frac1H\sum_i a_i.
\end{equation}
\end{lemma}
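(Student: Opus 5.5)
The plan is to decompose the expected output over the index $i$ of the query at which the lane first returns a nonzero outcome. Let $D_0=2H$ be the number of elements in the initial sketch \eqref{eq:build-state}. Let $D_{i-1}$ be the number remaining just before the $i$th activated query, obtained by applying the deletion rule (Q2) to the first $i-1$ query edges. The first step is an induction on $i$, which justifies the claim that this deterministic sequence describes the actual sketch. At the start, the sketch is the uniform superposition over its $D_0$ elements. If, just before query $i$, it is the uniform superposition over the $D_{i-1}$ listed elements, then (Q2) says that a zero outcome leaves the uniform superposition over exactly the $D_i$ elements prescribed by the deletion rule. This is because the projector $O_{e,0}$ kills precisely the basis states $\ket{\overrightarrow{wu}},\ket{\overrightarrow{wv}}$ with $w\notin\{u,v\}$, whether or not their partners are present. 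The rule does not depend on the measurement outcomes. So, conditioned on the activation schedule and on all earlier outcomes being zero, the state before query $i$ is determined.

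The second step computes the probability of reaching query $i$. By (Q2), each zero outcome occurs with probability $D_i/D_{i-1}$. Multiplying these ratios, the probability that the first $i-1$ queries all return zero telescopes to $D_{i-1}/D_0$. Once a nonzero outcome occurs, the lane stops, so only the branch of all zeros reaches later queries. If the set ever becomes empty, later queries contribute nothing. This matches the convention $a_i=0$, and $D$ stays at zero, so any division by $D$ is vacuous there.

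The third step uses (Q1): given that query $i$ is reached, its expected signed outcome is $2a_i/D_{i-1}$. A zero outcome contributes $0$ to $Y$, and a nonzero outcome terminates the lane with that value. Hence $Y$ equals the sum over $i$ of the outcome at query $i$ on the event that query $i$ is reached. By linearity,
\[
 \E{Y\mid\text{schedule}}=\sum_i\frac{D_{i-1}}{D_0}\cdot\frac{2a_i}{D_{i-1}}=\sum_i\frac{2a_i}{2H}=\frac1H\sum_i a_i .
\]

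The main thing to check carefully is (Q1) and (Q2) themselves, for a state that is uniform over an arbitrary subset that includes dummies. For (Q1), one computes $\bra\phi O_{e,\pm}\ket\phi$. A vertex $w$ with both edges present contributes $(1\pm1)^2/(2D)$, giving a signed difference of $2/D$ per such $w$. A vertex with only one edge present contributes $1/(2D)$ to each sign, and these cancel. For (Q2), the state after $O_{e,0}$ is proportional to the restriction to the surviving elements, with squared norm $D'/D$. Dummies and the two orientations of $uv$ lie in the range of $O_{e,0}$, so they survive. With these verified, the telescoping is immediate.
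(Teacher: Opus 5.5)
Your proposal is correct and follows the paper's proof. You decompose $Y$ over the queries that are reached, telescope the zero-outcome probabilities $D_r/D_{r-1}$ to $D_{i-1}/D_0$, and multiply by the (Q1) conditional mean $2a_i/D_{i-1}$ so that $D_{i-1}$ cancels; the empty-set case is handled by the convention $a_i=0$. Your explicit derivation of (Q1) and (Q2) from the projectors, which the paper only asserts, is also correct.
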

\begin{proof}
Let $D_i$ be the number of elements remaining after the first $i$
deletions, with $D_0=2H$.  The algorithm reaches query $i$ exactly
when every earlier query has returned $0$.  If $D_{i-1}>0$, (Q2)
therefore gives
\[
 \Prb[\text{query }i\text{ is reached}\mid\text{activation schedule}]
 =\prod_{r<i}\frac{D_r}{D_{r-1}}
 =\frac{D_{i-1}}{D_0}.
\]
Conditional on reaching this query, its expected outcome is
$2a_i/D_{i-1}$ by (Q1).  Multiplying the two expressions cancels
$D_{i-1}$: each pair still present contributes $2/D_0$ to the
unconditional expectation.  Since the lane returns its first nonzero
outcome, we may sum these contributions:
\[
 \E{Y\mid\text{activation schedule}}
 =\sum_{i:\,D_{i-1}>0}
    \frac{D_{i-1}}{D_0}\frac{2a_i}{D_{i-1}}
 =\frac1H\sum_i a_i.
\]
If no elements remain after some deletion, the probability of all
queries up to that point returning $0$ is zero.  The algorithm
therefore cannot reach a later query, and our convention $a_i=0$
accounts for these indices without dividing by zero.
\end{proof}

\subsubsection{Low weights and the quantum moments}

Each triangle has three orientations $\omega=(w;u,v)$, specified by its
center $w$ and its unordered closing edge $uv$.  All sums over $\omega$
below run over these $3T$ orientations.  For a closing occurrence $t$
in block $j$, define
\begin{equation}\label{eq:Dcup}
 D^{\cup}_{j,w;u,v}(t)
 :=\#\left\{s\in\calB_j:s<t,\quad
 \begin{array}{l}
 (e_s\ni u\text{ and }e_s\ne wu)\ \text{or}\\
 (e_s\ni v\text{ and }e_s\ne wv)
 \end{array}\right\}.
\end{equation}
These are precisely the earlier occurrences whose activation would
remove $\overrightarrow{wu}$ or $\overrightarrow{wv}$ under the
deletion rule (Q2).  A previous closing edge $uv$ counts once.
An occurrence of $wu$ does not remove $\overrightarrow{wu}$,
and similarly for $wv$, since a query leaves the orientations of its
own edge unchanged.

Define
\begin{align}
 \ell_j(w;u,v)&:=\frac1L
    \sum_{\substack{t\in\calB_j\\e_t=uv}}
        (1-\rho)^{D^{\cup}_{j,w;u,v}(t)},\label{eq:low-weight}\\
 T_j^{<}&:=\frac13\sum_\omega\ell_j(\omega).
       \label{eq:T-low-j}
\end{align}

\begin{lemma}[Quantum mean]
\label{lem:quantum-mean}
For a single READY lane in block $j$,
\begin{equation}\label{eq:quantum-mean}
 \E{Y_j}=\frac{3T_j^{<}}{kLm}.
\end{equation}
\end{lemma}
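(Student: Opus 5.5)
We plan to average \cref{lem:telescoping} over the independent activation coins, with the stream $\sigma\in\calG$ held fixed. The telescoping lemma gives
\[
 \E{Y_j\mid\text{activation schedule}}=\frac1H\sum_i a_i ,
\]
and $H=Lm$. So it suffices to show
\[
 \E{\textstyle\sum_i a_i}=\frac{3T_j^{<}}{k}=\frac{L}{k}\sum_\omega\ell_j(\omega),
\]
since then $\E{Y_j}=3T_j^{<}/(kLm)$.

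First, rewrite $\sum_i a_i$ as a sum of indicators over triangle orientations and query positions. Each activated position $t$ with $e_t=uv$ contributes $a_i$, which counts the vertices $w\notin\{u,v\}$ such that both $\overrightarrow{wu}$ and $\overrightarrow{wv}$ survive just before $t$. For such a $w$, both directed edges are present in the initial READY state by \cref{lem:build-state}, since every edge occurs in the build block on $\calG$. Hence $\{u,v,w\}$ is a triangle, and $(w;u,v)$ is one of the $3T$ orientations. Therefore
\[
 \sum_i a_i=\sum_\omega\ \sum_{t\in\calB_j,\,e_t=uv}\ind[t\text{ activated}]\,\ind[\overrightarrow{wu},\overrightarrow{wv}\text{ survive before }t].
\]

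Second, characterize survival. By (Q2), a deletion at activated position $s<t$ removes $\overrightarrow{wu}$ exactly when $e_s\ni u$ with $e_s\neq wu$. Here we must check that $w$ is a third vertex relative to $e_s$. If $e_s=uv$, the query removes directed edges pointing into $u$ or $v$ from third vertices, and $w$ is such a third vertex. If $e_s=ux$ with $x\neq w$, then $w$ is again a third vertex, so $\overrightarrow{wu}$ is removed. If $e_s=wu$, its own orientations are untouched, so $\overrightarrow{wu}$ survives. The analysis for $\overrightarrow{wv}$ is symmetric.

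The survival event is therefore exactly the event that none of the $D^{\cup}_{j,w;u,v}(t)$ positions in \eqref{eq:Dcup} is activated. Each position is counted once even if it threatens both directed edges; a previous occurrence of $uv$ is the case where a single position threatens both. One subtlety remains: the deterministic sequence in \cref{lem:telescoping} may empty out. But emptiness implies that every pair is gone, so the indicator description still agrees with the convention $a_i=0$.

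Third, take expectations. Activations are independent with probability $\rho$. The event that $t$ is activated involves a different position from the survival event, which concerns only earlier positions. So each term has expectation
\[
 \rho(1-\rho)^{D^{\cup}_{j,w;u,v}(t)}.
\]
Summing over $t$ gives $\rho L\,\ell_j(\omega)$ by \eqref{eq:low-weight}. Summing over $\omega$ and using $\rho=1/(kL)$ gives $\frac1k\sum_\omega\ell_j(\omega)=3T_j^{<}/k$. Dividing by $H=Lm$ yields \eqref{eq:quantum-mean}.

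The main obstacle is the second step: verifying that the deletion rule matches the union count $D^{\cup}$ exactly, including the edge cases where $e_s$ is $uv$, $wu$, or $wv$.
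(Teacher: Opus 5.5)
Your proposal is correct and follows essentially the same route as the paper: average \cref{lem:telescoping} over the independent activation bits, identify survival of both $\overrightarrow{wu}$ and $\overrightarrow{wv}$ with non-activation of the $D^{\cup}_{j,w;u,v}(t)$ earlier positions, and use independence of the activation at $t$ to obtain $\rho(1-\rho)^{D^{\cup}_{j,w;u,v}(t)}$ per term; your case check of the deletion rule spells out what the paper states in the remark after \eqref{eq:Dcup}. One slip: in your first display, $3T_j^{<}/k$ equals $\frac1k\sum_\omega\ell_j(\omega)$, not $\frac{L}{k}\sum_\omega\ell_j(\omega)$, as your own third step correctly computes.
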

\begin{proof}
Both directed edges $\overrightarrow{wu}$ and $\overrightarrow{wv}$
are initially in the sketch by \cref{lem:build-state}.
For a freshly sampled activation schedule, the deterministic deletion
process in \cref{lem:telescoping} retains both until $t$ exactly when
none of the $D^{\cup}_{j,w;u,v}(t)$ occurrences counted above is
activated.  This has probability
$(1-\rho)^{D^{\cup}_{j,w;u,v}(t)}$; activation at $t$ independently
has probability $\rho$.
Averaging \cref{lem:telescoping} over these schedules gives
\[
 \E{Y_j}=\frac\rho H\sum_\omega
   \sum_{\substack{t\in\calB_j\\e_t=uv}}
       (1-\rho)^{D^{\cup}_{j,w;u,v}(t)}
 =\frac\rho H\,3LT_j^{<}=\frac{3T_j^{<}}{kLm}.
\]
Here the randomness is only in the independently sampled activation
bits.  The probability that earlier measurements allow the sketch to
reach a query has already been accounted for in
\cref{lem:telescoping}; we do not condition the activation schedule
on those measurement outcomes.
\end{proof}

\begin{lemma}[Quantum second moment]
\label{lem:quantum-second-moment}
For the same lane,
\begin{equation}\label{eq:quantum-second-moment}
 \E{Y_j^2}\le1/L,\qquad \Var{Y_j}\le1/L.
\end{equation}
\end{lemma}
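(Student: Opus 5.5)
Since $Y_j\in\{-1,0,+1\}$, we have $Y_j^2=\ind\{Y_j\ne0\}$, so $\E{Y_j^2}=\prob{Y_j\neq 0}$. The variance bound then follows from $\Var{Y_j}\le\E{Y_j^2}$. It therefore suffices to show that the probability the lane ever returns a nonzero outcome during block $j$ is at most $1/L$. The overview already suggests the mechanism: a nonzero outcome can only arise from the initial directed-edge mass, which is $1/L$ by \cref{lem:build-state}.

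First fix an activation schedule and reuse the deterministic deletion sequence of \cref{lem:telescoping}, with element counts $D_0=2H\ge D_1\ge\cdots$. Query $i$ is reached with probability $D_{i-1}/D_0$. Conditional on reaching it, the sketch is uniform over $D_{i-1}$ elements by (Q2). A nonzero outcome at query $i$ requires the measured element to lie in the support of $O_{e_i,+}+O_{e_i,-}$, namely the directed edges $\overrightarrow{wu_i},\overrightarrow{wv_i}$ with $w\notin\{u_i,v_i\}$. The probability of a nonzero outcome is $1-D_i/D_{i-1}$, the weight of the set $R_i$ of elements removed by that deletion step. (Pairs with both orientations present give mass entirely in $\pm$ outcomes; singletons split between $\pm$ and $0$. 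In either case the nonzero probability is at most the fraction of the uniform superposition on those elements.) I will check carefully that the nonzero probability is at most $|R_i|/D_{i-1}$, where $R_i$ consists of directed-edge elements pointing into $u_i$ or $v_i$ from a third vertex.

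Multiplying by the reach probability gives an unconditional contribution of at most $|R_i|/D_0$. The sets $R_i$ are disjoint, since deleted elements never return, and they consist only of directed edges, never dummies. Summing over $i$ therefore gives
\[
\prob{Y_j\ne0\mid\text{schedule}}\le\frac{\sum_i|R_i|}{2H}\le\frac{2m}{2H}=\frac1L.
\]
Averaging over schedules gives the claim. As in \cref{lem:telescoping}, queries after the sketch empties contribute nothing.

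The main obstacle is the middle step: verifying precisely from the projectors that the total $\pm$ probability at a reached query equals the measured weight of the removable directed edges. This requires showing that $O_{e,+}+O_{e,-}$ is exactly the projector onto $\operatorname{span}\{\ket{\overrightarrow{wu}},\ket{\overrightarrow{wv}}:w\notin\{u,v\}\}$, so that $\bra\psi(O_{e,+}+O_{e,-})\ket\psi=|R_i|/D_{i-1}$ for the uniform state. It also requires confirming that the post-measurement state on outcome $0$ is uniform on the remaining $D_i$ elements, which is (Q2). Once that identity is written out, the telescoping bound follows immediately.
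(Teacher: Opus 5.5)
Your argument is correct and is essentially the paper's: the dummy component, of mass $1-1/L$, is untouched by every zero outcome and annihilated by $O_{e,\pm}$, so the probability of ever seeing a nonzero outcome is at most the initial directed-edge mass $1/L$. Your telescoping sum $\sum_i|R_i|/D_0=(D_0-D_{\mathrm{final}})/D_0\le 2m/(2H)$ just makes this explicit through the deletion sequence of \cref{lem:telescoping}.

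One small correction to your parenthetical. As you note at the end, $O_{e,+}+O_{e,-}$ is exactly the projector onto the span of the $\ket{\overrightarrow{wu}},\ket{\overrightarrow{wv}}$ with $w\notin\{u,v\}$. So a lone surviving $\overrightarrow{wu}$ goes entirely to the $\pm$ outcomes, split evenly between $+1$ and $-1$, not partly to $0$. The nonzero probability at a reached query is therefore exactly $|R_i|/D_{i-1}$.
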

\begin{proof}
$Y_j^2$ indicates that some query returned a nonzero outcome.  Nonzero
projectors act only on directed-edge states, and zero projections never
move dummy amplitude into that subspace.  The probability of ever
obtaining a nonzero outcome is therefore at most its initial mass $1/L$.
\end{proof}

\subsection{The classical complementary estimator}

\subsubsection{Exact and computable high weights}

The exact complement to $\ell_j$ is
\begin{equation}\label{eq:true-high}
 h_j(w;u,v):=\frac1L\sum_{\substack{t\in\calB_j\\e_t=uv}}
       \left[1-(1-\rho)^{D^{\cup}_{j,w;u,v}(t)}\right].
\end{equation}
Thus
\begin{align}
 \ell_j(w;u,v)+h_j(w;u,v)&=C_{j,uv}/L,
       \label{eq:low-high-orientation}\\
 T_j^{>}&:=\frac13\sum_\omega h_j(\omega).
       \label{eq:T-high-j}
\end{align}
The union count in $h_j$ is inconvenient to maintain.  Instead, one
counter per sampled directed edge records
$D_{j,wu}(t)$, the earlier occurrences incident to its head $u$, excluding
$wu$ itself.  Define $D_{j,wv}(t)$ analogously, and let $N_{j,uv}(t)$
count earlier copies of $uv$.  The two counter sets intersect exactly
in those copies, so
\begin{equation}\label{eq:double-count-identity}
 D_{j,wu}(t)+D_{j,wv}(t)
 =D^{\cup}_{j,w;u,v}(t)+N_{j,uv}(t).
\end{equation}
The counters therefore give a slightly larger, computable weight:
\begin{align}
 \wt h_j(w;u,v)&:=\frac1L\sum_{\substack{t\in\calB_j\\e_t=uv}}
     \left[1-(1-\rho)^{D_{j,wu}(t)+D_{j,wv}(t)}\right],
          \label{eq:tilde-high}\\
 \wt T_j^{>}&:=\frac13\sum_\omega\wt h_j(\omega).
          \label{eq:tilde-T-high}
\end{align}
The next two bounds, proved in \cref{app:triangle-classical-proofs},
control this approximation and its sampling variance.

\begin{lemma}[Bias from double-counting repeated closing edges]
\label{lem:high-bias}
For every orientation in a good block,
\begin{align}
 0\le\wt h_j(\omega)-h_j(\omega)&<2/k,
       \label{eq:orientation-bias}\\
 0\le\wt T_j^{>}-T_j^{>}&<2T/k.
       \label{eq:T-high-bias}
\end{align}
\end{lemma}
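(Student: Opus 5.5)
The argument is pointwise: compare $\wt h_j(\omega)$ and $h_j(\omega)$ term by term over the closing occurrences $t\in\calB_j$ with $e_t=uv$. By the double-counting identity \eqref{eq:double-count-identity}, the exponent in \eqref{eq:tilde-high} is $D^{\cup}_{j,w;u,v}(t)+N_{j,uv}(t)$, while the exponent in \eqref{eq:true-high} is $D^{\cup}_{j,w;u,v}(t)$. Since $0\le 1-\rho<1$ and $N_{j,uv}(t)\ge0$, each summand of $\wt h_j$ is at least the corresponding summand of $h_j$, which gives the lower bound $0$.

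For the upper bound, write $D^{\cup}=D^{\cup}_{j,w;u,v}(t)$ and $N=N_{j,uv}(t)$. The difference of summands is
\[
(1-\rho)^{D^{\cup}}\bigl[1-(1-\rho)^{N}\bigr]\le 1-(1-\rho)^{N}\le \rho N .
\]
Summing over the closing occurrences in block $j$, the $i$th copy of $uv$ has $N=i-1$. With $c=C_{j,uv}$ copies, this gives
\[
\wt h_j(\omega)-h_j(\omega)\le \frac{\rho}{L}\sum_{i=1}^{c}(i-1)=\frac{\rho\,c(c-1)}{2L}.
\]
On the good event $\calG$ we have $c\le 2L$. With $\rho=1/(kL)$ this is at most
\[
\frac{2L(2L-1)}{2kL^2}=\frac{2L-1}{kL}<\frac2k ,
\]
which is \eqref{eq:orientation-bias}. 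It is worth checking that this strict inequality survives when $c=2L$ exactly; it does, because of the $-1$.

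The second claim \eqref{eq:T-high-bias} follows by summing \eqref{eq:orientation-bias} over the $3T$ orientations and dividing by $3$. This gives $0\le \wt T_j^{>}-T_j^{>}<\tfrac13\cdot 3T\cdot 2/k=2T/k$.

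The only delicate point is to use the block multiplicity bound from \cref{lem:good-blocks} to control the quadratic growth in $c$. The naive bound $1-(1-\rho)^N\le1$ would lose a factor of $L$ compared with $\rho N$. I expect the rest to be routine.
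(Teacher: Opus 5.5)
Your proof is correct and follows the paper's argument: the double-counting identity turns each summand difference into $(1-\rho)^{D^{\cup}}[1-(1-\rho)^{N}]\le\rho N$, the earlier-copy counts $0,\ldots,C-1$ sum to $\binom{C}{2}$, $C\le 2L$ with $\rho=1/(kL)$ gives the strict bound $2/k$, and summing over the $3T$ orientations gives the second inequality. One small correction to your closing aside, which the argument does not use: the trivial bound $1-(1-\rho)^N\le 1$ would lose a factor of order $k$, not $L$.
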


\begin{lemma}[Two bounds on the computable high weight]
\label{lem:high-weight-bounds}
If $d_x$ is the degree of $x$ in the underlying graph, then on $\calG$
\begin{equation}\label{eq:high-bounds}
 0\le\wt h_j(w;u,v)\le2,\qquad
 \wt h_j(w;u,v)\le\frac{4(d_u+d_v)}k.
\end{equation}
\end{lemma}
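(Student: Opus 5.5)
Both bounds are proved term by term in the definition \eqref{eq:tilde-high}: we bound each summand $1-(1-\rho)^{D_{j,wu}(t)+D_{j,wv}(t)}$ for a closing occurrence $t$ of $uv$ in block $j$. Then we sum over the at most $C_{j,uv}\le 2L$ such occurrences, which holds on $\calG$, and divide by $L$. Nonnegativity is immediate, since each summand lies in $[0,1]$. The same fact together with $C_{j,uv}\le 2L$ gives $\wt h_j\le 2L/L=2$, which is the first bound.

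For the second bound, apply Bernoulli's inequality in the form $1-(1-\rho)^D\le \rho D$. Each summand is then at most $\rho\,(D_{j,wu}(t)+D_{j,wv}(t))$. The counter $D_{j,wu}(t)$ counts earlier occurrences within block $j$ of edges incident to $u$, so it is at most the total number of occurrences in $\calB_j$ of edges incident to $u$. That total equals $\sum_{e\ni u}C_{j,e}\le 2L\,d_u$ on $\calG$, and likewise $D_{j,wv}(t)\le 2L d_v$. Each summand is therefore at most $\rho\cdot 2L(d_u+d_v)=2(d_u+d_v)/k$, using $\rho=1/(kL)$.

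Summing over the at most $2L$ closing occurrences and dividing by $L$ then gives $\wt h_j(w;u,v)\le 2\cdot 2(d_u+d_v)/k=4(d_u+d_v)/k$, as claimed. The only point needing care is to use the block-multiplicity bound $C_{j,e}\le 2L$ twice: once to count the closing occurrences and once inside the counters. Both uses are valid on the good event $\calG$ of \cref{lem:good-blocks}. With that, no real obstacle remains.
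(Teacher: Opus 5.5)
Your proposal is correct and follows the paper's proof essentially step for step. Each summand lies in $[0,1]$ and there are at most $2L$ of them on $\calG$, which gives the first bound. For the second, you bound the counters by $2Ld_u$ and $2Ld_v$ and apply $1-(1-\rho)^a\le\rho a$, giving each summand at most $2(d_u+d_v)/k$, and summing over the at most $2L$ summands and dividing by $L$ gives $4(d_u+d_v)/k$.
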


\subsubsection{Sampling and querying a classical lane}

During BUILD block $j-1$, sample a center indicator $G_w$ and a
directed-edge indicator $H_{wu}$ with probabilities
\begin{equation}\label{eq:pV-pE}
 p_V:=1/\sqrt{km},\qquad p_E:=\sqrt{k/m}.
\end{equation}
For now take these indicators mutually independent.  Store
$\overrightarrow{wu}$ precisely when it appears and $G_wH_{wu}=1$.
Decisions are keyed by underlying vertices and directed edges, so copies
reuse the same decision and are stored only once.  On $\calG$, every edge
appears in the BUILD block, giving the intended sample of $G$.

At the start of QUERY block $j$, reset all READY counters to zero and set
$X_j=0$.  When $uv$ arrives, first scan the sample for centers $w$ with
both $\overrightarrow{wu},\overrightarrow{wv}$ stored, adding
\[
 \frac1L\left[1-(1-\rho)^{D[\overrightarrow{wu}]+D[\overrightarrow{wv}]}\right]
\]
to $X_j$ for each such center.  Then increment each stored edge's counter
if the occurrence meets its head and is not that edge itself.  This
order of operations ensures that counters include only earlier occurrences.
Consequently,
\begin{equation}\label{eq:X-sum}
 X_j=\sum_\omega\wt h_j(\omega)I_\omega,
 \qquad I_{(w;u,v)}:=G_wH_{wu}H_{wv},
\end{equation}
with inclusion probability
\begin{equation}\label{eq:alpha}
 \alpha:=\E{I_\omega}=p_Vp_E^2=\sqrt{k}/m^{3/2}.
\end{equation}

The moments needed for averaging are as follows; their proofs are in
\cref{app:triangle-classical-proofs}.
\begin{lemma}[Classical expectation]
\label{lem:classical-mean}
For a fixed good stream,
\begin{equation}\label{eq:classical-mean}
 \E{X_j}=3\alpha\wt T_j^{>}.
\end{equation}
\end{lemma}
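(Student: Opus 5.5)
The plan is to compute $\E{X_j}$ directly from the representation \eqref{eq:X-sum} by linearity. For this, the weights $\wt h_j(\omega)$ must be deterministic once the stream is fixed, and the indicators $I_\omega$ must have mean $\alpha$.

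First, fix a good stream $\sigma\in\calG$. For each orientation $\omega=(w;u,v)$, the quantity $\wt h_j(\omega)$ in \eqref{eq:tilde-high} depends only on three things: the positions of the closing edge $uv$ in block $\calB_j$, the counters $D_{j,wu}(t)$ and $D_{j,wv}(t)$, and $\rho$. The counters count earlier occurrences in $\calB_j$ meeting the heads $u$ and $v$. Since the counters are reset at the start of the QUERY block and the stream is fixed, these are deterministic functions of $\sigma$, independent of the lane's coins.

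Second, I would check that \eqref{eq:X-sum} holds pointwise, not just in expectation. The argument has three parts.
\begin{itemize}
\item On $\calG$, every edge appears in BUILD block $j-1$. Hence $\overrightarrow{wu}$ is stored exactly when $G_wH_{wu}=1$, with no dependence on arrival order; decisions keyed to objects make repeated copies harmless.
\item When an occurrence $t$ of $uv$ arrives, the scan adds a term for center $w$ exactly when $G_wH_{wu}H_{wv}=1$.
\item The counters are incremented after the scan and ignore the edge itself. So the stored counter values at that moment equal $D_{j,wu}(t)$ and $D_{j,wv}(t)$, and the added term is the $t$-summand of $\wt h_j(\omega)$.
\end{itemize}
Summing over all closing occurrences in $\calB_j$ and all centers gives $X_j=\sum_\omega \wt h_j(\omega)I_\omega$. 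Each orientation is counted once, since it is determined by its center and its unordered closing edge.

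Third, take expectations. Since $w\notin\{u,v\}$, the three indicators $G_w,H_{wu},H_{wv}$ are distinct independent variables. Therefore $\E{I_\omega}=p_Vp_E^2=\alpha$, giving
\[
\E{X_j}=\alpha\sum_\omega\wt h_j(\omega)=3\alpha\wt T_j^{>}
\]
by \eqref{eq:tilde-T-high}.

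The main obstacle is the second step: justifying that the stored counters equal exactly $D_{j,wu}(t)$ and $D_{j,wv}(t)$. This needs three facts, each of which I would verify directly from the operation order described above:
\begin{itemize}
\item the sample is complete before block $j$ begins, using $\calG$;
\item the counters start at zero;
\item an occurrence of $wu$ itself never increments the counter for $\overrightarrow{wu}$.
\end{itemize}
The expectation computation is otherwise immediate.
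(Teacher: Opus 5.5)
Your proposal is correct and follows the paper's proof, which is exactly linearity of expectation applied to \eqref{eq:X-sum}, using $\E{I_\omega}=\alpha$ and $\sum_\omega\wt h_j(\omega)=3\wt T_j^{>}$. Your pointwise re-check of \eqref{eq:X-sum} is sound. It relies on block coverage on $\calG$, counters reset at the start of the query block, and scanning before incrementing. The paper establishes this in the text just before the lemma, so its proof does not repeat it.
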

\begin{lemma}[Classical variance]
\label{lem:classical-var}
For a fixed good stream,
\begin{equation}\label{eq:classical-var}
 \Var{X_j}\le14\alpha\Delta_E\sum_\omega\wt h_j(\omega)
             =42\alpha\Delta_E\wt T_j^{>}.
\end{equation}
\end{lemma}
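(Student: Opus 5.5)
We prove the variance bound of \cref{lem:classical-var} by writing $X_j=\sum_\omega \wt h_j(\omega)I_\omega$ from \eqref{eq:X-sum}, with weights deterministic for the fixed good stream, and expanding
\[
 \Var{X_j}=\sum_{\omega,\omega'}\wt h_j(\omega)\wt h_j(\omega')\,\mathrm{Cov}(I_\omega,I_{\omega'}).
\]
Since $G_w$ and the $H_{wu}$ are independent Bernoulli variables, $\mathrm{Cov}(I_\omega,I_{\omega'})=0$ unless $\omega$ and $\omega'$ share at least one sampling variable. When they do share one, we drop the negative term and bound the covariance by $\E{I_\omega I_{\omega'}}$. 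So the task reduces to classifying overlapping pairs and bounding each class by $\alpha\Delta_E$ times $\sum_\omega\wt h_j(\omega)$, up to a constant.

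We split the pairs by which variables they share.
\begin{enumerate}
\item \textbf{Diagonal} ($\omega=\omega'$). The term is at most $\alpha\,\wt h_j(\omega)^2\le 2\alpha\wt h_j(\omega)$, using $\wt h_j\le2$ from \cref{lem:high-weight-bounds}.
\item \textbf{Same center $w$, one shared directed edge.} Say $\omega=(w;u,v)$ and $\omega'=(w;u,v')$ with $v'\ne v$. Then $\E{I_\omega I_{\omega'}}=p_Vp_E^3=\alpha p_E$.
\item \textbf{Same center, no shared directed edge.} Here $\E{I_\omega I_{\omega'}}=p_Vp_E^4=\alpha p_E^2\cdot$, hmm, more precisely $\alpha^2/p_V$, i.e. $\alpha p_E^2$.
\end{enumerate}
Different centers share no variable, because $G_w$ differs and each $H_{wu}$ is keyed by its center. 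Those pairs are therefore independent.

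For class 2, fix $\omega$ and count the $\omega'$ sharing the edge $wu$ with the same center. Each such $\omega'$ is a triangle containing the edge $wu$, so there are at most $\Delta_E$ of them, times $2$ for the choice of shared edge. We apply $ab\le(a^2+b^2)/2$ to the weight product, or more simply bound $\wt h_j(\omega')\le 2$. This gives a contribution of $O(\alpha p_E\Delta_E\sum_\omega\wt h_j(\omega))$. Since $p_E\le1$ (because $k\le m$), this is $O(\alpha\Delta_E\sum\wt h_j)$.

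Class 3 is the main obstacle. The pairs share only the center, so naively the number of partners is about $d_w^2$, not $\Delta_E$. Here the factor $p_E^2=k/m$ has to be paid for using the second bound of \cref{lem:high-weight-bounds}, $\wt h_j(w;u',v')\le 4(d_{u'}+d_{v'})/k$.

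To handle it, fix $\omega=(w;u,v)$. Summing $\wt h_j(\omega')$ over the triangles at $w$ gives at most
\[
 \sum_{(w;u',v')}4(d_{u'}+d_{v'})/k\le \frac{8}{k}\sum_{u'\sim w}d_{u'}\,\#\{v':\ wu'v'\text{ a triangle}\}\le \frac{8\Delta_E}{k}\sum_{u'\sim w}d_{u'}\le\frac{16\Delta_E m}{k}.
\]
Multiplying by $p_E^2=k/m$ yields $O(\Delta_E)$ per $\omega$, and hence a class-3 total of $O(\alpha\Delta_E\sum_\omega\wt h_j(\omega))$.

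Collecting constants from the three classes should give the stated factor $14$, up to bookkeeping. The final equality in \eqref{eq:classical-var} then follows from \eqref{eq:tilde-T-high}, $\sum_\omega\wt h_j(\omega)=3\wt T_j^{>}$.
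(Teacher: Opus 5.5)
Your proposal is correct and follows the paper's proof essentially step for step. You use independence across centers, then split each center's second moment into the diagonal term ($2\alpha$), pairs sharing one radial edge (at most $2\Delta_E$ partners of weight at most $2$, giving $4\alpha\Delta_E$), and pairs with four distinct radial edges, where $p_E^2=k/m$ is paid for by the degree-weight bound.

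The only slip is in class 3. The triangles at center $w$ are indexed by unordered closing edges $\{u',v'\}$, so the sum equals $\frac4k\sum_{u'}d_{u'}\,\#\{v'\}\le 8m\Delta_E/k$, not $16m\Delta_E/k$. As written your constants total $2+4+16=22$; with this correction the class-3 constant is $8$ and the total is exactly the stated $2+4+8=14$.
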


\paragraph{Memory of a lane.}
Each directed edge is included with probability $p_Vp_E=1/m$, so a
buffer holds at most two edges in expectation.  This holds at every fixed
stage of \emph{every} stream: even without block coverage, its distinct
stored edges form a subset of the $2m$ possible directed edges.
Two buffers, counters, and an accumulator use only polylogarithmic
expected space.  Pairwise-independent center hashes and four-wise-independent
directed-edge hashes suffice; fresh seeds are used for each block and lane.
\Cref{app:triangle-implementation} gives the seeds, dyadic probabilities,
and arithmetic precision, including their error budget.

\subsection{The two targets reconstruct the triangle count}

Rebuilding alone does not ensure that successive observations estimate
the same quantity.  Each block has its own edge multiplicities and hence
its own hybrid target.  The following decomposition uses the exact total
multiplicity to recover the triangle count after averaging these targets.

\begin{lemma}[Exact blockwise decomposition]
\label{lem:block-decomp}
For every full query block,
\begin{equation}\label{eq:block-decomp}
 T_j^{<}+T_j^{>}=\frac1{3L}\sum_{e\in E}C_{j,e}\tau_e.
\end{equation}
\end{lemma}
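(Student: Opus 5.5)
The plan is to prove the identity orientation by orientation and then sum. By \eqref{eq:low-high-orientation}, for every orientation $\omega=(w;u,v)$ we have
\[
 \ell_j(\omega)+h_j(\omega)=\frac{C_{j,uv}}{L}.
\]
This follows directly from the definitions \eqref{eq:low-weight} and \eqref{eq:true-high}. Both sums run over the same positions $t\in\calB_j$ with $e_t=uv$. For each such $t$, the summand $(1-\rho)^{D^{\cup}}$ in $\ell_j$ and the summand $1-(1-\rho)^{D^{\cup}}$ in $h_j$ add to $1$. Summing over $t$ therefore counts the occurrences of $uv$ in block $j$, which is exactly $C_{j,uv}$.

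Next, sum this identity over all $3T$ orientations and divide by $3$, using \eqref{eq:T-low-j} and \eqref{eq:T-high-j}:
\[
 T_j^{<}+T_j^{>}=\frac1{3L}\sum_{\omega=(w;u,v)}C_{j,uv}.
\]
To finish, regroup the sum according to the closing edge. An orientation is determined by a triangle together with the choice of which of its three edges is the closing edge $uv$; the center $w$ is then the third vertex. So for each edge $e\in E$, the orientations with closing edge $e$ correspond bijectively to the triangles containing $e$, and there are exactly $\tau_e$ of them. The sum therefore equals $\sum_{e\in E}C_{j,e}\tau_e$, which gives \eqref{eq:block-decomp}.

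No step here is a serious obstacle. The only point needing care is this bijection: each orientation is indexed by an \emph{unordered} closing edge $uv$ and its center $w$, so the three orientations of a triangle land on its three distinct edges, with no double counting from the ordering of $u,v$. We also note that the argument does not use the good event $\calG$ or any property of $D^{\cup}$. The identity holds for every full block and every fixed stream, since the exponent $D^{\cup}$ cancels pointwise in $t$.
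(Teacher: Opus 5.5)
Your proof is correct and matches the paper's argument: sum the orientation identity \eqref{eq:low-high-orientation} over all $3T$ orientations, divide by three, and regroup by closing edge, since each edge $e$ is the closing edge of exactly $\tau_e$ orientations. Your derivation of the orientation identity from the definitions and your remark that the good event $\calG$ is not needed are both accurate.
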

\begin{proof}
Sum \eqref{eq:low-high-orientation} over the three orientations of each
triangle and divide by three.  An edge $e$ is the closing edge in
exactly $\tau_e$ such orientations, giving the displayed sum.
\end{proof}

Write $\overline T^{<}$, $\overline T^{>}$, and
$\overline{\wt T}^{>}$ for the averages of the corresponding targets
over $j=2,\ldots,B$.

\begin{lemma}[Warm-up/tail loss]
\label{lem:warmup-tail}
On $\calG$, if $R\ge4L$, then
\begin{equation}\label{eq:warmup-tail}
 \abs{\overline T^{<}+\overline T^{>}-T}\le\frac{8L}R T.
\end{equation}
\end{lemma}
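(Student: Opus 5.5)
By \cref{lem:block-decomp}, averaging over the $J=B-1$ query blocks gives
\[
 \overline T^{<}+\overline T^{>}
 =\frac1{3LJ}\sum_{e\in E}\tau_e\sum_{j=2}^{B}C_{j,e}.
\]
The identity $\sum_e\tau_e=3T$ holds because each triangle has three closing edges. Using the exact multiplicity identity \eqref{eq:total-multiplicity}, I will write
\[
 \sum_{j=2}^{B}C_{j,e}=R-C_{1,e}-C_{\star,e}.
\]
On $\calG$ we have $0\le C_{1,e}+C_{\star,e}\le4L$. Hence for every edge $e$ there is some $\theta_e\in[0,4L]$ with $\sum_{j\ge2}C_{j,e}=R-\theta_e$, and therefore
\[
 \overline T^{<}+\overline T^{>}=\frac{1}{3LJ}\left(3TR-\sum_e\tau_e\theta_e\right)
 =\frac{TR}{LJ}-\frac{1}{3LJ}\sum_e\tau_e\theta_e.
\]
The subtracted term lies in $[0,\,4LT/(LJ)]=[0,\,4T/J]$.

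Next I will compare $R/(LJ)$ with $1$. Since $B=\floor{R/L}$, we have $BL\le R<(B+1)L$, and $J=B-1$. This gives
\[
 \frac{R}{LJ}-1=\frac{R-(B-1)L}{(B-1)L}\in\left[\frac{L}{(B-1)L},\ \frac{2L}{(B-1)L}\right)=\left[\frac1J,\frac2J\right).
\]
So the first term is $T(1+\gamma)$ with $\gamma\in[1/J,2/J)$, and the second term removes between $0$ and $4T/J$. The total deviation from $T$ therefore lies in $[-3T/J,\,2T/J]$, so its absolute value is at most $3T/J$.

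It remains to convert $1/J$ into $L/R$. With $R\ge4L$ we get $B\ge4$, so $J=B-1\ge3B/4$. Also $B\ge R/L-1\ge 3R/(4L)$. Together these give $J\ge 9R/(16L)$, and hence $3T/J\le (16/3)LT/R\le 8LT/R$.

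The only delicate point is the sign bookkeeping between the overcount $R/(LJ)>1$ and the undercount coming from the warm-up block $C_{1,e}$ and the tail $C_{\star,e}$. I will handle it by bounding the two effects separately in absolute value, which still fits within the constant $8$. The good event $\calG$ is used only for the upper bounds $C_{1,e},C_{\star,e}\le2L$.
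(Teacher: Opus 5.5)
Your proof is correct and is essentially the paper's argument. Your overcount $T(q+L)/(LJ)$ and undercount $\frac{1}{3LJ}\sum_e\tau_e\theta_e$ combine into the paper's single coefficient $q+L-C_{1,e}-C_{\star,e}$; you bound it using sign information and a slightly different lower bound on $J$, where the paper uses $LJ=R-q-L\ge R/2$.

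One correction to your closing remark: the bound $3T/J$, and hence the constant $16/3\le 8$, comes from the two effects entering with opposite signs. It does not come from bounding them separately in absolute value, since a triangle inequality would give $6T/J$ and, with your $J\ge 9R/(16L)$, only the constant $32/3$.
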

\begin{proof}
Use \eqref{eq:total-multiplicity}, \cref{lem:block-decomp},
$LJ=R-q-L$, and $\sum_e\tau_e=3T$ to obtain
\[
 \overline T^{<}+\overline T^{>}-T
 =\frac1{3(R-q-L)}\sum_e(q+L-C_{1,e}-C_{\star,e})\tau_e.
\]
On $\calG$, each coefficient in parentheses has absolute value at most
$4L$.  Also $R-q-L\ge R-2L\ge R/2$.  The claimed bound follows.
\end{proof}

\begin{lemma}[Structural bias of the computable hybrid target]
\label{lem:structural-bias}
On $\calG$, if $R\ge4L$, then
\begin{equation}\label{eq:structural-bias}
 \abs{\overline T^{<}+\overline{\wt T}^{>}-T}
 \le\left(\frac{8L}R+\frac2k\right)T.
\end{equation}
\end{lemma}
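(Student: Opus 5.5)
The proof is a triangle inequality that splits the deviation into the warm-up/tail term and the double-counting term. I would write
\[
 \overline T^{<}+\overline{\wt T}^{>}-T
 =\bigl(\overline T^{<}+\overline T^{>}-T\bigr)
  +\bigl(\overline{\wt T}^{>}-\overline T^{>}\bigr),
\]
and bound the two brackets separately.

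The first bracket is already controlled by \cref{lem:warmup-tail}. On $\calG$ with $R\ge4L$, its absolute value is at most $(8L/R)T$, and nothing further is needed there.

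For the second bracket, $\overline{\wt T}^{>}-\overline T^{>}$ is the average over the query blocks $j=2,\ldots,B$ of $\wt T_j^{>}-T_j^{>}$. Every full block is good on $\calG$, so \cref{lem:high-bias} applies to each of them. Its bound \eqref{eq:T-high-bias} gives $0\le\wt T_j^{>}-T_j^{>}<2T/k$ for each $j$. Averaging preserves these bounds, so the average lies in $[0,2T/k)$.

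One check is needed: \cref{lem:high-bias} must apply to every query block, not just a typical one. This holds because $\calG$ guarantees $1\le C_{j,e}\le2L$ for all full blocks, which is the ``good block'' hypothesis that lemma uses. Combining the two bounds by the triangle inequality gives \eqref{eq:structural-bias}. I see no real obstacle; the only subtle point, the double-counting bias, is already handled by \cref{lem:high-bias}.
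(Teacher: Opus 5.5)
Your proposal is correct and is essentially the paper's proof: the paper likewise averages the per-block bound \eqref{eq:T-high-bias} over the query blocks $j=2,\ldots,B$ and combines the result with \cref{lem:warmup-tail} by the triangle inequality. Your check that $\calG$ makes every full query block good, so that \cref{lem:high-bias} applies to each of them, is exactly the hypothesis that step needs.
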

\begin{proof}
Average \eqref{eq:T-high-bias} across query blocks and combine with
\cref{lem:warmup-tail}.
\end{proof}

\subsection{Reusing lanes across blocks}

At a block boundary, discard the old READY buffer, promote BUILD to
READY, and initialize a new BUILD buffer.  Each lane keeps only a running
sum of past outputs.  For our fixed stream, all block samples use fresh
coins or seeds and fresh states.  They are therefore independent across
blocks, lanes, and amplification groups, even though a block supplies both
one sample's queries and the next sample's build.

\paragraph{Quantum averaging.}
With $P_Q$ lanes, set
\begin{equation}\label{eq:quantum-average-estimator}
 \wh T^{<}:=\frac{kLm}{3P_QJ}
                 \sum_{p=1}^{P_Q}\sum_{j=2}^B Y_{p,j}.
\end{equation}
By \cref{lem:quantum-mean,lem:quantum-second-moment},
\begin{align}
 \E{\wh T^{<}}&=\overline T^{<}:=\frac1J\sum_{j=2}^B T_j^{<},
       \label{eq:quantum-estimator-mean}\\
 \Var{\wh T^{<}}&\le\frac{k^2Lm^2}{9P_QJ}.
       \label{eq:quantum-estimator-var}
\end{align}
For $R\ge4L$, $J=B-1\ge R/(2L)$.  Chebyshev's inequality therefore
gives additive error $\eps T/4$ with constant success when
\begin{equation}\label{eq:PQ}
 P_Q=\bT{1+\frac{k^2L^2m^2}{R\eps^2T^2}},
\end{equation}
with a sufficiently large implicit constant.

\paragraph{Classical averaging.}
The weight bound $\wt h_j\le2$ gives $\wt T_j^{>}\le2T$.
Thus \cref{lem:classical-var} implies, for an absolute $C_1$,
\begin{equation}\label{eq:Z-classical-var}
 \Var{\frac{X_{p,j}}{3\alpha}}
 \le C_1\frac{T\Delta_E}{\alpha}
 =C_1\frac{m^{3/2}T\Delta_E}{\sqrt k}.
\end{equation}
With $P_C$ lanes, set
\begin{equation}\label{eq:classical-average}
 \wh T^{>}:=\frac1{3\alpha P_CJ}
                  \sum_{p=1}^{P_C}\sum_{j=2}^B X_{p,j}.
\end{equation}
Independence and \cref{lem:classical-mean} give
\begin{align}
 \E{\wh T^{>}}&=\overline{\wt T}^{>}:=\frac1J\sum_{j=2}^B\wt T_j^{>},
       \label{eq:classical-average-mean}\\
 \Var{\wh T^{>}}&\le C_1\frac{m^{3/2}T\Delta_E}{P_CJ\sqrt k}.
       \label{eq:classical-average-var}
\end{align}
Again using $J\ge R/(2L)$, it suffices to take
\begin{equation}\label{eq:PC}
 P_C=\bT{1+\frac{Lm^{3/2}\Delta_E}{R\eps^2T\sqrt k}}
\end{equation}
for additive error $\eps T/4$ with constant success.
Each lane now contributes $J$ observations, one per query block, while
storing only its current buffers and a running sum.  This accounts for
the factor $J$ in the variance bounds and reduces the number of lanes
that we need to keep at once.

\subsection{Proof of the main theorem}

\begin{proof}[Proof of \cref{thm:main-parameterized}]
Use $T_0$ and $\Delta_0$ in the lane counts; these substitutions are
conservative and change the counts by only constant factors.
Choose the absolute constants so that \cref{lem:good-blocks} gives
$\prob{\calG}\ge1-\delta/10$ and
\begin{equation}\label{eq:structural-small}
 8L/R+2/k\le\eps/4.
\end{equation}
For every fixed good stream, \cref{lem:structural-bias} then gives
\begin{equation}\label{eq:hybrid-target-close}
 \abs{\overline T^{<}+\overline{\wt T}^{>}-T}\le\eps T/4.
\end{equation}

Take the constants in \cref{eq:PQ,eq:PC} large enough that each
estimator has additive error at most $\eps T/4$ with probability at
least $9/10$.  Run $\bO{\log(1/\delta)}$ independent groups and take
separate medians of their quantum and classical estimates.  Conditional
on the fixed stream, both medians meet their error bounds with probability
at least $1-9\delta/10$.

The output is always the sum of these medians.  On this success event,
the two statistical errors and \eqref{eq:hybrid-target-close} total at
most $3\eps T/4$.  The implementation in
\cref{app:triangle-implementation} adds less than $\eps T/4$.
Averaging the conditional guarantee over good streams proves total
success at least $1-\delta$.  The algorithm need not recognize $\calG$
or test whether its estimates succeeded.

For space, a quantum lane holds two registers of dimension
$\bO{n^2+Lm}$, hence $\bO{\log(n^2+Lm)}$ qubits.  In our
unrestricted-operation model, initialization, swaps, and measurements
act on these registers with logarithmic classical control and no stored
adjacency table.  A classical lane holds two buffers, each with expected
size at most two at every fixed stage of every stream, plus counters,
hash seeds, and polylogarithmic-precision arithmetic.  This space argument
does not condition on $\calG$.  Processing buffers one at a time and
discarding the old READY buffer before allocation bounds temporary
workspace as well.  Running sums replace stored past block outputs.

Multiplying these per-lane bounds by $P_Q+P_C$ and the number of groups
proves \eqref{eq:main-space-explicit} under the expected simultaneous-space
convention.  The logarithmic choice of $L$ gives
\eqref{eq:main-space-tilde}.
\end{proof}

\begin{proof}[Proof of \cref{cor:optimized}]
Apart from common factors and the additive constant, the space is
\[
 \frac1R\left(\frac{L^2k^2m^2}{T^2}
                +\frac{Lm^{3/2}\Delta_E}{T\sqrt k}\right).
\]
Balancing gives $k^{5/2}\asymp T\Delta_E/(L\sqrt m)$.
The supplied estimates make $k_0$ a constant-factor approximation to
this value.  Substitution gives
$L^{6/5}m^{8/5}\Delta_E^{4/5}/(RT^{6/5})$.
For fixed error parameters $L=\bT{\log(mR)}$ is absorbed into
$\wt O$; if $T=\bT{m}$ and $\Delta_E=\bO{1}$, the remaining
polynomial term is $m^{2/5}/R$.
\end{proof}

\paragraph{Open questions.}
What are the optimal classical and quantum space bounds for triangle
counting under the same exact-$R$ random-order promise and a common
space measure?  In particular, can one prove a quantum--classical
separation in this model?  More broadly, ARHM establishes a space
separation for an artificial problem, while triangle counting gives an
improvement over an existing algorithm.  It remains open to establish
a comparable separation between random and adversarial order for a
natural problem using replenishment.  For the algorithm here, it would
also be useful to relax the assumptions of equal edge multiplicities
and supplied parameter estimates.

We next turn to Noisy Gap Cycle, for which we prove a quantum lower
bound in the model where each edge appears once.

\section{Robust quantum lower bounds: Noisy Gap Cycle}
\label{sec:ngc}

In this section we prove a linear quantum space lower bound for Noisy
Gap Cycle (NGC) in random-order streams.  Every edge appears once,
and the algorithm outputs its decision after the pass.  We follow the
robust communication approach of Assadi and
Sundaresan~\cite{AssadiSundaresan2023}, using their graph construction,
neighboring hybrids, and clean-coordinate reduction.

The new step is a communication lower bound for Block Hidden XOR with
quantum messages.  We prove it using matrix-valued Fourier analysis
and the hypercontractive inequality of Ben-Aroya, Regev, and
de Wolf~\cite{BenAroyaRegevDeWolf2008}.  We give this bound here and
the graph reductions in \cref{app:ngc-full-proof}.

\subsection{Noisy Gap Cycle and its streaming consequences}
\label{sec:ngc-main}
\begin{definition}[Noisy Gap Cycle]
\label{def:ngc}
Let $n$ and $k\ge3$ be integers with $4k\mid n$.
The robust one-way communication problem $\NGC_{n,k}$ asks Alice and Bob
to distinguish
the disjoint union of $n/(2k)$ disjoint $k$-cycles from the disjoint union of
$n/(4k)$ disjoint $2k$-cycles.  Both cases additionally contain
$n/(2k)$ disjoint paths of length $k-1$, disjoint from the cycles.
Each edge of the fixed underlying graph is independently assigned to
Alice or Bob with probability $1/2$.  Each player receives only their
assigned edge set.  Alice sends one message (a quantum state in the
quantum model) to Bob, who must output which of the two cases holds.
\end{definition}

Here \emph{robust} means that the error guarantee holds for every fixed
promise graph, averaging over its independent fair edge-ownership coins
and protocol randomness.  In the corresponding random-order streaming
problem, the algorithm sees a uniform permutation of all edges and
outputs the decision bit at the end of the stream.  Its error guarantee
holds for every fixed promise graph, averaging over the edge permutation
and algorithm randomness.

\begin{restatable}[Quantum robust Noisy Gap Cycle]{theorem}{robustNGCTheorem}
\label{thm:ngc-robust}
For all sufficiently large admissible $n,k$ with $k\ge1600\ln n$,
every robust one-way quantum protocol for $\NGC_{n,k}$
with error at most $1/3$ sends $\bOm{n}$ qubits.
\end{restatable}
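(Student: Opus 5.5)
The plan is to follow the Assadi--Sundaresan reduction chain and replace only its final classical step with \cref{thm:qbhx-main}.

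\textbf{Step 1: hybrid argument.} Start from a robust one-way quantum protocol $\Pi$ for $\NGC_{n,k}$ with error at most $1/3$ and message length $q$. Embed the NGC promise in their layered construction: $r'=\Theta(n/k)$ blocks of width $w=\Theta(k)$. In this construction the cycle length ($k$ versus $2k$) is determined by the parity $\bigoplus_i X^i_{J_i}$ of path-selecting bits. Alice's edges encode the strings, and the permutation edges encode the selected positions. Use the neighboring-hybrid argument to isolate one parity. This yields a protocol for Distributional Hidden XOR with advantage $\Omega(1/w)$ over $1/2$. The hybrid step is purely a triangle inequality over success probabilities of a fixed quantum protocol run on different input distributions, so it transfers verbatim. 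The only requirement is that every hybrid remains a valid input for which Alice's and Bob's shares are generated from the fair edge partition.

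\textbf{Step 2: clean-coordinate reduction.} Next apply the reduction to unequal-block Hidden XOR via the random partition. A coordinate is clean if Alice owns the edges fixing its bit and Bob owns the permutation edges selecting its position. A Chernoff bound shows that with high probability a constant fraction of the blocks each contain at least $w/100$ clean coordinates. For those blocks, conditioning on the selected index being clean keeps it uniform among the clean ones. Then fix all remaining classical generation variables: ownership coins, non-clean bits, and the unused parts of the permutations. Choose the fixing to preserve average advantage, which is possible because advantage is an average over these variables, so some fixing is at least average.

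This step calls \cref{lem:residual-usable} to check two things. First, the residual inputs have the product form required by \cref{def:ubh-main}, namely independent uniform $X^i\in\bool^{w_i}$ and independent uniform $J_i\in[w_i]$. Second, each party can build its edge share locally from its own input. Blocks that are not good are absorbed into the fixed variables. Their bits are known to both parties or folded into a known offset in the XOR, so the outcome is a protocol for unequal-block Hidden XOR with $r=\Theta(n/k)$ and $w_i\ge w/100$. The advantage is at least $1/(8w)$ after choosing the constants, and the message is still $q$ qubits. This is the step I expect to be the main obstacle. The difficulty is bookkeeping: the fixing must be done without conditioning on a measurement outcome of the quantum message, and the loss of advantage from bad blocks and bad partition events, which is $o(1/w)$ via the high-probability bound, must stay below $1/(8w)$.

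\textbf{Step 3: apply \cref{thm:qbhx-main}.} The hypothesis $r\ge\ln w$ must hold. Since $r=\Theta(n/k)$ and $w=\Theta(k)\le n$, this needs $n/k\gtrsim\ln n$. The assumption $k\ge1600\ln n$ is where the constant enters, through the width requirement and the high-probability clean-coordinate bound, namely union bounds over $n$ events each of probability $e^{-\Omega(k)}$. In the regime $k$ close to $n$, where $r$ could fall below $\ln w$, admissibility should rule this out, or else fall back to the $r=O(1)$ case absorbed by constants. The theorem then gives $q=\Omega(wr)=\Omega(k\cdot n/k)=\Omega(n)$. The remaining congruence and bounded-width cases are handled by padding with dummy components, deferred to the appendix.
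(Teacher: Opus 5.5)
Your reduction chain (neighboring hybrids, local embedding into $\DHX$, clean coordinates, then \cref{thm:qbhx-main}) is the same as the paper's. The gap is that you have the construction's parameters inverted, and this breaks Step~3 exactly where the hypothesis $k\ge1600\ln n$ is used.

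In the Assadi--Sundaresan construction the cycle length is the \emph{depth} of the layered graph. With $k=3t+1$ there are $t=\Theta(k)$ blocks. The width is set by the number of cycle groups $m=n/(4k)$, and the embedded $\DHX$ instance has width $w=m+1=\Theta(n/k)$. The hybrid argument runs over these $m$ groups, so the advantage is $1/(6m)>1/(6w)$ for \emph{this} $w$. With your $w=\Theta(k)$ the claimed $\Omega(1/w)$ advantage does not follow, since the actual advantage is $\Theta(k/n)$.

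With the roles corrected, the BHX requirement $r\ge\ln w$ becomes $t\gtrsim\ln(n/k)$. This is exactly what $k\ge1600\ln n$ supplies: $t\ge512\ln n\ge512\ln w$. The same inequality makes the usable-block Chernoff failure $\exp(-9t/4096)\le w^{-9/8}$ smaller than the $\Theta(1/w)$ advantage, so conditioning still leaves $1/(8w)$ (\cref{lem:many-usable,lem:success-usable}). No union bound over $n$ events of probability $e^{-\Omega(k)}$ is involved.

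Under your parametrization, the hypothesis $k\ge1600\ln n$ does not give $n/k\gtrsim\ln n$. Your fallback for few blocks is also invalid. Block Hidden XOR at advantage $1/(8w)$ is genuinely cheap when $r\ll\ln w$: Alice can send the first $s\approx w(4w)^{-1/r}$ bits of each block. For $r=1$ a single bit suffices, and for $r=2$ about $\sqrt w$ bits do. So $\Omega(wr)$ cannot be ``absorbed by constants'' there. Indeed \cref{thm:qbhx-master} gives nothing when $A=W_{\mathrm{geo}}(2\gamma)^{2/r}\le1$.

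The genuinely dangerous regime is small $k$ (few blocks, huge width), and the hypothesis excludes it. The opposite extreme $k=\Theta(n)$ gives bounded width and many blocks. The paper does not pad there; it uses \cref{lem:bounded-width-dhx}:
\begin{itemize}
\item keep only fully clean blocks, which occur with constant probability $2^{-6w}$;
\item note that the advantage $1/(6w)$ is now a constant;
\item apply \cref{thm:qbhx-master} with $\lambda=1$.
\end{itemize}
Padding, by identity layers rather than dummy components, is used only for $k\not\equiv1\pmod 3$.

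A smaller point: the intermediate hybrids $\mathcal{H}(h)$ mix $k$- and $2k$-cycles, so they are not valid $\NGC$ inputs. This is harmless, because the distinguishing argument needs no correctness on them. But your stated requirement that every hybrid be a valid input cannot be met and is not needed.
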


\begin{restatable}[Quantum random-order Noisy Gap Cycle]{theorem}{randomOrderNGCTheorem}
\label{thm:ngc-rand}
Under the same conditions, every one-pass quantum streaming
algorithm solving $\NGC_{n,k}$ with error at most $1/3$ in
uniformly random edge order uses $\bOm{n}$ space.
\end{restatable}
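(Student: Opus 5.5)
The plan is to reduce the streaming statement to \cref{thm:ngc-robust}, using the robust-to-streaming transfer of \cref{prop:robust-transfer} described in the overview. Suppose an $S$-space one-pass quantum streaming algorithm $\Ac$ solves $\NGC_{n,k}$ with error at most $1/3$ on uniformly random edge order. Fix a promise graph $G$, and let the fair ownership coins split $E(G)$ into Alice's share $E_A$ and Bob's share $E_B$. Using independent public randomness, Alice draws a uniform ordering of $E_A$ and Bob a uniform ordering of $E_B$. The key distributional fact is that the concatenation ``$E_A$ in Alice's order, then $E_B$ in Bob's order'' is an exactly uniform permutation of $E(G)$. Conditional on $|E_A|=a$, the set $E_A$ is a uniform $a$-subset; given that set, the two orders are uniform and independent. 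Hence every permutation $\pi$ has probability $\prob{|E_A|=a}\cdot \binom{|E|}{a}^{-1}\cdot\frac{1}{a!(|E|-a)!}=\prob{|E_A|=a}/|E|!$, summed over $a$, which equals $1/|E|!$. I will verify this computation first, since the rest depends on it.

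The protocol is then as follows. Alice runs $\Ac$ on her ordered edges, including all mid-stream measurements and classical control, and sends Bob the complete live state. Under the space convention of \cref{sec:space-convention}, this state consists of at most $S$ qubits and bits, and classical bits are embedded in the computational basis. Bob feeds his edges in his order, finishes the pass, and outputs $\Ac$'s decision. He never copies or rewinds the state. The execution is therefore distributed exactly as $\Ac$ on a uniformly random order of $G$, averaged over the ownership coins, so the protocol has error at most $1/3$ for every fixed promise graph. Any public randomness of $\Ac$ becomes shared public coins, which the communication model of \cref{sec:communication-convention} allows, with no prior entanglement.

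This gives a robust one-way quantum protocol for $\NGC_{n,k}$ with error at most $1/3$ and at most $S$ qubits of communication. For the same admissible $n,k$ with $k\ge1600\ln n$, \cref{thm:ngc-robust} then forces $S=\bOm{n}$. I expect the only delicate point to be the memory-accounting step: every input-dependent clock, position counter, and stored classical outcome that $\Ac$ needs in order to resume must already be part of the counted live state. In this model that holds by definition of space, so the reduction should go through. All of the real difficulty lies in \cref{thm:ngc-robust} itself, namely the Block Hidden XOR bound of \cref{thm:qbhx-main}.
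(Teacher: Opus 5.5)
Your proposal is correct and follows essentially the same route as the paper: the partition-plus-local-orders uniformity identity (\cref{lem:random-partition-order}, which your counting argument reproduces exactly), a single transfer of the complete live state from Alice to Bob (\cref{prop:streaming-to-robust-main}), and then \cref{thm:ngc-robust}. The paper adds two small points. If the transition rule depends on the stream position, Bob can reconstruct the cut position as $M-|E_B|$, because the edge count $M=n-n/(2k)$ is the same in both promise cases. A decision emitted before the end of the pass can be retained with $O(1)$ extra space.
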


\begin{corollary}[Quantum random-order graph consequences]
\label{cor:ngc-applications}
The algorithms are one-pass random-order algorithms with end-of-stream
output, and all success guarantees are at least $2/3$ for every input.
The approximation impossibilities exclude space bounds uniform in the
accuracy parameter and graph size; the contradiction uses linked sequences
with $1/\eps=\bT{\log n}$ (and $W/\eps=\bT{\log n}$ for MST),
not a linear lower bound for each fixed accuracy.
The one-shot classical reductions of
Assadi--Sundaresan~\cite{AssadiSundaresan2023} transfer the NGC
lower bound to quantum random-order streaming with the parameters of
their framework.  In particular:
\begin{enumerate}[leftmargin=1.7em]
\item Estimating the number of connected components within additive
  error $\eps n$ for every $\eps>0$ cannot be done using
  $2^{\lO{1/\eps}}n^{\lO{1}}$ space.
\item The same statement holds for maximum matching size and for
  maximum independent set size on planar graphs of maximum degree two.
\item Approximating MST weight within $1+\eps$ on connected graphs
  with integer weights in $[W]$, $W\ge2$, cannot be done, for every $\eps>0$,
  in $2^{\lO{W/\eps}}n^{\lO{1}}$ space.
\item For the sampling problem specified in
  \cref{cor:qro-consequences-main}, generating a sufficiently large
  constant number of length-$\ell$ random walks requires
  $2^{\bOm{\sqrt{\ell}}}$ space when $\ell=\bT{\log^2 n}$.
  Generating $c_2 4^\ell$ such walks, for a sufficiently large absolute
  constant $c_2$, requires $2^{\bOm{\ell}}$ space when
  $\ell=\bT{\log n}$.
\end{enumerate}
\end{corollary}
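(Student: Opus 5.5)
The plan is to derive each item by composing the quantum random-order NGC lower bound of \cref{thm:ngc-rand} with the one-shot reductions of Assadi--Sundaresan~\cite{AssadiSundaresan2023}, checking that each reduction is \emph{streaming-compatible} in our quantum model. Concretely, given a quantum random-order algorithm $\Ac$ for a target problem (connected components, matching, planar MIS, MST weight, or random-walk generation), we would build an NGC algorithm as follows. It maps each arriving NGC edge to a fixed number of target-stream items by a local rule that depends only on that edge and public randomness. It runs $\Ac$ once on the transformed stream, and it applies classical postprocessing to $\Ac$'s final output. The space is then that of $\Ac$ plus $\bO{\log n}$ bits of control, and \cref{thm:ngc-rand} forces it to be $\bOm{n}$ for the appropriate $k$.

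The verification has three parts, carried out in order. First, we check that the input-order promise is preserved. A uniform permutation of NGC edges must induce exactly a uniform permutation of the target items. This is immediate when each edge maps to one item (components, matching, MIS on the degree-two planar instances, and MST with weights attached edgewise). When gadgets emit several items per edge, we would instead randomize their relative order using public coins, or we would only use reductions that are one-to-one; this is exactly the caveat the paper raises for matrix rank. Second, we check that the reduction needs only \emph{one} execution of $\Ac$ and no copying of its quantum state. This holds because the Assadi--Sundaresan reductions invoke the algorithm once and read off a scalar statistic; no rewinding is needed, so no-cloning is irrelevant. Third, we check that success probability transfers: a $2/3$-correct answer for the target quantity yields a $2/3$-correct NGC decision on every promise graph.

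For the approximation items, we would reproduce the counting gap. The two NGC cases differ by $n/(4k)$ components, and correspondingly in matching size parity, MIS size, and MST weight. So an additive $\eps n$ estimate with $\eps<1/(8k)$ distinguishes them. Taking $k=\bT{\log n}$ (above $1600\ln n$) gives $1/\eps=\bT{\log n}$. A space bound $2^{o(1/\eps)}n^{o(1)}$ is then $n^{o(1)}$, contradicting $\bOm{n}$. For MST, weights in $[W]$ scale the gap so that $W/\eps=\bT{\log n}$ plays the same role. For random walks, we would use their argument: walks of length $\ell\approx k$ from random start vertices detect cycle length with constant advantage. We would set $k=\bT{\sqrt\ell}$, respectively $k=\bT{\ell}$ with $c_24^\ell$ walks amplifying a $4^{-\ell}$ bias, which yields $n=2^{\bT{k}}$-type contradictions.

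The main obstacle is the first verification step: ensuring, gadget by gadget, that the transformed stream is exactly uniformly ordered. We also need the parameter linkage (a sequence of instances with $1/\eps$ tied to $\log n$) to be consistent with the admissibility condition $k\ge1600\ln n$ and $4k\mid n$. We would handle divisibility by padding with isolated structure that contributes known amounts. We would handle ordering by restricting to the one-item-per-edge reductions, which cover all listed consequences.
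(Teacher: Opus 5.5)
\textbf{Gap in the MST item.} Your proposal has a genuine gap in item~3, and it is exactly where you declared the ordering difficulty resolved. Attaching weights edgewise to the NGC stream cannot prove the statement. Every NGC instance is a disjoint union of cycles and paths, so it is disconnected, whereas item~3 concerns MST on \emph{connected} graphs. This is also why $W\ge2$ is needed: on a connected $n$-vertex graph with $W=1$ every MST weighs $n-1$.

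The reduction must therefore add edges that are not images of NGC edges. The paper uses the Assadi--Sundaresan augmentation on the structured hard distribution:
\begin{enumerate}
\item weight-$W$ edges $\{a^k_j,b^k_j\}$;
\item weight-one edges from $a^1_j$ to $a^1_{m+j}$ and $b^1_{m+j}$;
\item cyclic weight-one edges $\{a^1_j,b^1_{j+1}\}$.
\end{enumerate}
For crossed parity the weight-one graph is connected and the MST weighs $n-1$. For parallel parity it has $m$ components, linked cyclically by the heavy edges, and the MST weighs $n-m+W(m-1)$. With $W=2$ and accuracy $1/(64k)$ this gives $W/\varepsilon=\Theta(\log n)$.

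This augmentation fails your plan in two ways. First, it uses the public layered labels, so it is a statement about the hard distribution, not a black-box transformation of an arbitrary promise graph fed to \cref{thm:ngc-rand}. Second, the added edges break your one-item-per-edge order argument. Your fallback of randomizing the relative order of several items per edge would not help either: items produced by one edge stay adjacent, which is precisely the Tutte-matrix caveat.

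The missing step is to route through the robust bound instead. Give every added public edge a fresh fair ownership coin, and let each party order its whole augmented share uniformly. Then \cref{lem:random-partition-order} gives an exactly uniform global order, and transferring the memory once contradicts the hard-distribution robust lower bound of \cref{thm:quantum-robust-ngc-main}. So your closing claim that one-item-per-edge reductions cover all listed consequences is false.

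\textbf{Smaller points.} For matching and MIS you must take $k$ odd. For even $k$, a $k$-cycle contributes $k/2$ and a $2k$-cycle contributes $k$ to both quantities, so both cases get $n/4$ from cycles and there is no gap. For odd $k$, the values $(k-1)/2$ versus $k$ give the gap $n/(4k)$; for MIS the $k$-vertex paths contribute $(k+1)/2$ in both cases and cancel.

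In the constant-number random-walk regime, walks of length about $k$ do not traverse a $k$-cycle with constant probability. One needs $\ell=\Theta(k^2)$: lift the walk to the integer line, where hitting displacement $\pm2k$ takes expected time $4k^2$, then apply Markov's inequality with $\ell=40k^2$. This matches your final choice $k=\Theta(\sqrt\ell)$ but not the mechanism you describe. The second regime uses $\ell=2k$ and a one-directional walk of probability at least $2^{-\ell}$. You should also verify that the pointwise $1\pm1/10$ approximation and the joint total-variation error $1/10$ preserve the detection probability, and that a noisy path cannot yield a false cycle.
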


\subsection{A quantum communication lower bound for Block Hidden XOR}
\label{sec:ngc-bhx}

The communication problem we need is the following.  Alice has
independent bit strings, and Bob has one uniformly random index in
each string.  His goal is to predict the XOR of the selected bits.
We allow different string lengths because the graph reduction leaves
a different number of usable coordinates in each block.

\blockHiddenXORDefinition

In the classical proof, one conditions the input distribution on Alice's
message.  For a quantum encoding $X\mapsto\rho_X$, we instead analyze
the matrix Fourier coefficients of the encoding.  Put $d=2^q$ (padding the message space if necessary), and set
\[
 \mathcal I=\bigsqcup_{i=1}^r(\set{i}\times[w_i]),\qquad
 \chi_S(X)=(-1)^{\sum_{(i,j)\in S}X^i_j},\qquad
 \wh{\rho}(S)=\E[X]{\chi_S(X)\rho_X}.
\]
We use the ordinary trace norm $\trnorm{A}=\Tr\abs A$.
The following density-matrix form of hypercontractivity is due to
Ben-Aroya, Regev, and de Wolf~\cite{BenAroyaRegevDeWolf2008}; it is also
used by Shi, Wu, and Yu in their quantum Hidden Hypermatching lower
bound~\cite[Corollary~2.3]{ShiWuYu2015}.

\begin{lemma}[Trace-norm matrix hypercontractivity {\cite[proof of Lemma~6, full version]{BenAroyaRegevDeWolf2008}}]
\label{lem:qbhx-matrix-hc}
\label{lem:app-trace-matrix-hc}
Let $d=2^q$.  For every $0\le\lambda\le1$,
\begin{equation}
\label{eq:qbhx-matrix-hc}
    \sum_{S\subseteq\mathcal {I}}
    \lambda^{\abs{S}}
    \trnorm{\wh{\rho}(S)}^2
    \le
    2^{\frac{2\lambda}{1+\lambda}q}.
\end{equation}
\end{lemma}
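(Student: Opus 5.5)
The plan is to derive \eqref{eq:qbhx-matrix-hc} from the Schatten-norm matrix hypercontractive inequality of Ben-Aroya, Regev, and de Wolf~\cite{BenAroyaRegevDeWolf2008}, and then convert Schatten norms to trace norms using the dimension $d$. First, the disjoint union $\mathcal I$ is just a finite index set of size $n:=\sum_i w_i$. After relabeling, $X\mapsto\rho_X$ is a function $f:\bool^n\to\mathbb C^{d\times d}$, and $\wh\rho(S)$ is its ordinary Fourier coefficient. The block structure plays no role in this lemma.

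The inequality I will quote is their Theorem~1. For any $f:\bool^n\to\mathbb C^{d\times d}$ and $1\le p\le2$, with unnormalized Schatten $p$-norms,
\[
 \Bigl(\sum_{S}(p-1)^{\abs S}\norm{\wh f(S)}_p^2\Bigr)^{1/2}
 \le\Bigl(\E[X]{\norm{f(X)}_p^p}\Bigr)^{1/p}.
\]
I will apply it with $p=1+\lambda$ for $0<\lambda\le1$ and $f(X)=\rho_X$.

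The right-hand side is at most $1$. Each $\rho_X$ is a density matrix with eigenvalues $\sigma_k\in[0,1]$ summing to $1$, so $\norm{\rho_X}_p^p=\sum_k\sigma_k^p\le\sum_k\sigma_k=1$. Hence
\[
 \sum_S\lambda^{\abs S}\norm{\wh\rho(S)}_p^2\le1 .
\]

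To pass to trace norms, I will use H\"older's inequality on singular values: $\trnorm{A}\le d^{1-1/p}\norm{A}_p$ for any $d\times d$ matrix $A$. With $1-1/p=\lambda/(1+\lambda)$, this gives $\trnorm{\wh\rho(S)}^2\le d^{2\lambda/(1+\lambda)}\norm{\wh\rho(S)}_p^2$. Multiplying by $\lambda^{\abs S}$, summing over $S$, and substituting $d=2^q$ yields the claimed bound $2^{2\lambda q/(1+\lambda)}$.

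The endpoint $\lambda=0$ is handled separately under the convention $0^0=1$. Only $S=\emptyset$ survives, and $\trnorm{\E{\rho_X}}=1$, which matches the right-hand side $2^0=1$. If the message is padded to dimension exactly $2^q$, the states $\rho_X$ remain density matrices, so nothing changes.

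The only real care point is the normalization convention. The argument needs the BRdW inequality in the form with unnormalized Schatten norms on the matrices and an expectation over $X$; with the normalized-trace convention the $d$-factors would move around. I would check this against their statement, and the proof of their Lemma~6 cited above. Everything else is routine.
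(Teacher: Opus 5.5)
Your proof is correct and is essentially the paper's argument: the paper defers to the proof of Lemma~6 in \cite{BenAroyaRegevDeWolf2008}, which applies their Theorem~1 with $p=1+\lambda$, bounds $\mathbb{E}\,\|\rho_X\|_p^p$ using that $\rho_X$ is a density matrix, and converts Schatten-$p$ norms to trace norms at the cost of $d^{2(1-1/p)}$. The paper then keeps that exponent instead of weakening it to $2\lambda q$. Your normalization worry is harmless: rescaling the Schatten norms multiplies both sides of their Theorem~1 by the same factor $d^{-1/p}$, so either convention yields the identical final bound $2^{2\lambda q/(1+\lambda)}$.
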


We retain the bound $2^{2q(1-1/p)}$ from the last displayed estimate in
that proof, with $p=1+\lambda$, before its simplification to
$2^{2\lambda q}$.

For Bob's tuple $J$, the set
$S_J=\set{(1,J_1),\ldots,(r,J_r)}$ contains one coordinate from each
block and has size $r$, with $\chi_{S_J}(X)=(-1)^{Z_J(X)}$.
The optimal probability of distinguishing the two parity-conditioned
states links this Fourier coefficient to prediction success.

\begin{lemma}[Parity-conditioned states]
\label{lem:qbhx-parity-conditioned}
\label{lem:app-parity-conditioned}
For fixed $J$ and $b\in\bool$ let
\[
    \rho_{J,b}
    :=
    \E{\rho_X\mid Z_J(X)=b}.
\]
Then
\[
    \rho_{J,0}-\rho_{J,1}
    =
    2\wh{\rho}(S_J).
\]
Consequently, the optimal probability of predicting $Z_J(X)$ from the
message is
\[
    \frac{1}{2}
    +
    \frac{1}{2}\trnorm{\wh{\rho}(S_J)}.
\]
\end{lemma}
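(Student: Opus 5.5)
The plan is to compute the Fourier coefficient directly, then apply Helstrom's theorem. Fix $J$ and write $\chi=\chi_{S_J}$, so $\chi(X)=(-1)^{Z_J(X)}$. Because $X$ is uniform and $S_J$ is nonempty (it contains one coordinate from each of the $r\ge1$ blocks), the parity $Z_J(X)$ is an unbiased bit: flipping the single coordinate $(1,J_1)$ is a bijection on inputs that toggles the parity. Hence
\[
 \prob{Z_J(X)=0}=\prob{Z_J(X)=1}=\tfrac12.
\]
Splitting the expectation defining $\wh\rho(S_J)$ according to the value of the parity then gives
\[
 \wh\rho(S_J)=\E[X]{\chi(X)\rho_X}
 =\tfrac12\E{\rho_X\mid Z_J=0}-\tfrac12\E{\rho_X\mid Z_J=1}
 =\tfrac12(\rho_{J,0}-\rho_{J,1}).
\]
This is the claimed identity $\rho_{J,0}-\rho_{J,1}=2\wh\rho(S_J)$.

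For the consequence, Bob holds $J$ and receives the message, whose state conditioned on $J$ and on the parity $b$ is $\rho_{J,b}$. This uses that $J$ is independent of $X$ and of Alice's encoding, so conditioning on $J$ does not change the distribution of $X$. Predicting $Z_J(X)$ is therefore the task of distinguishing $\rho_{J,0}$ from $\rho_{J,1}$ with equal priors $1/2$. By the Helstrom--Holevo theorem, the optimal success probability is
\[
 \tfrac12+\tfrac14\trnorm{\rho_{J,0}-\rho_{J,1}}
 =\tfrac12+\tfrac12\trnorm{\wh\rho(S_J)}.
\]
The measurement attaining it is the projection onto the positive part of $\rho_{J,0}-\rho_{J,1}$.

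There are three points to check. First, any public randomness can be absorbed into $\rho_X$ as a classical register, since it is independent of $X$ and $J$. Second, Bob's strategy may depend on $J$ arbitrarily, which is exactly why the optimum is taken for each fixed $J$. Third, the statement concerns this per-$J$ optimum; averaging over $J$ comes later. There is no real obstacle here. The only point needing care is the unbiasedness of the parity, which is what makes the two priors equal. It relies on $S_J$ being nonempty, and this holds because every block contributes a coordinate.
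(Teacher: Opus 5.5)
Your proof is correct and follows essentially the same route as the paper: balancedness of the parity gives $\rho_{J,0}-\rho_{J,1}=2\wh{\rho}(S_J)$, and the Helstrom bound $1/2+\trnorm{\rho_{J,0}-\rho_{J,1}}/4$ for two equiprobable states then gives the consequence. Your additional checks only make explicit what the paper leaves implicit; these are the coordinate-flip bijection showing the parity is unbiased and the independence of $J$ from the encoding.
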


\begin{proof}
The parity is balanced, so
$\rho_{J,0}-\rho_{J,1}=2\E[X]{(-1)^{Z_J(X)}\rho_X}
=2\wh{\rho}(S_J)$.  The optimal probability of distinguishing two equiprobable states is
$1/2+\trnorm{\rho_{J,0}-\rho_{J,1}}/4$.
\end{proof}

Thus success forces Fourier mass on the degree-$r$ sets $S_J$, while
hypercontractivity bounds that mass in terms of message dimension.

\begin{theorem}[Unequal-block quantum Hidden XOR]
\label{thm:qbhx-master}
\label{thm:app-unequal-bhx}
Let $r\ge1$ be an integer, let all block widths be positive integers, and let
$0\le\gamma\le1/2$.  Suppose a one-way protocol, with Alice sending a state of
dimension at most $2^q$ to Bob, for
unequal-block Hidden XOR succeeds with probability at least
$1/2+\gamma$, averaged over the product input distribution in
\cref{def:ubh-main} and all protocol randomness.  Let
\[
    W_{\mathrm{geo}}
    :=
    \left(\prod_{i=1}^r w_i\right)^{1/r}
\]
and
\[
    A:=W_{\mathrm{geo}}(2\gamma)^{2/r}.
\]
For every $0<\lambda\le1$ with $A\lambda>1$,
\[
    q
    \ge
    \frac{r(1+\lambda)}{2\lambda}
    \log_2(A\lambda).
\]
In particular, if $A\ge e$, then
\[
    q
    \ge
    \frac{rW_{\mathrm{geo}}}{2e\ln2}
    (2\gamma)^{2/r}.
\]
\end{theorem}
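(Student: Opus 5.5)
By \cref{lem:qbhx-parity-conditioned}, Bob's success probability for a fixed tuple $J$ is at most $\frac12+\frac12\trnorm{\wh\rho(S_J)}$. We may assume Alice's randomness is internal, so $\rho_X$ is the mixed message state averaged over her coins and public coins are fixed by averaging. Averaging over uniform $J$ then gives
\[
\gamma\le\frac12\,\E[J]{\trnorm{\wh\rho(S_J)}}.
\]
By Cauchy--Schwarz (or Jensen),
\[
\E[J]{\trnorm{\wh\rho(S_J)}^2}\ge 4\gamma^2 .
\]
The sets $S_J$ are distinct sets of size exactly $r$, one for each tuple $J\in\prod_i[w_i]$, and there are $\prod_i w_i=W_{\mathrm{geo}}^r$ of them. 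Hence
\[
\sum_J\trnorm{\wh\rho(S_J)}^2\ge W_{\mathrm{geo}}^r(2\gamma)^2 .
\]

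Next we discard all other Fourier coefficients in \cref{lem:qbhx-matrix-hc}, which leaves
\[
\lambda^r W_{\mathrm{geo}}^r(2\gamma)^2\le 2^{2\lambda q/(1+\lambda)} .
\]
Taking $r$th roots, the left side becomes $(\lambda W_{\mathrm{geo}}(2\gamma)^{2/r})^r=(A\lambda)^r$. Taking $\log_2$ gives $r\log_2(A\lambda)\le 2\lambda q/(1+\lambda)$, which rearranges to the first claim whenever $A\lambda>1$; if $A\lambda\le 1$ the claim is vacuous. The inequality is used only for $0<\lambda\le1$, the range where the lemma holds.

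For the second claim, when $A\ge e$ we choose $\lambda=e/A\in(0,1]$, so that $A\lambda=e$ and $\log_2(A\lambda)=1/\ln2$. Then
\[
q\ge \frac{r(1+\lambda)}{2\lambda\ln2}\ge\frac{r}{2\lambda\ln 2}=\frac{rA}{2e\ln2},
\]
and this is exactly the stated bound after substituting $A=W_{\mathrm{geo}}(2\gamma)^{2/r}$.

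The argument is short. The one point needing care is the first step: Bob's averaged success must be reduced to the trace norms of fixed-$J$ Fourier coefficients. This works because $J$ is independent of $X$ and of Alice's message, so for each $J$ Bob faces the equiprobable discrimination problem of \cref{lem:qbhx-parity-conditioned}. Any Bob randomness is absorbed into his measurement, and if $\gamma=0$ there is nothing to prove.
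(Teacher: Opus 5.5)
Your proposal is correct and follows the paper's proof essentially step for step. The parity-conditioned-state lemma gives $\trnorm{\wh\rho(S_J)}\ge 2\gamma$ on average over Bob's tuples, and Jensen turns this into $\sum_J\trnorm{\wh\rho(S_J)}^2\ge 4\gamma^2W_{\mathrm{geo}}^r$ over the distinct degree-$r$ sets. Discarding all other terms in the matrix hypercontractive inequality, taking logarithms, and choosing $\lambda=e/A$ then gives both bounds, exactly as in the paper.
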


\begin{proof}
Fix independent public coins to a seed with at least the average success;
the product input law is unchanged.  Include private encoding randomness
in $\rho_X$.  By \cref{lem:qbhx-parity-conditioned}, success
$1/2+\gamma$ implies
$\E[J]{\trnorm{\wh{\rho}(S_J)}}\ge2\gamma$.
Jensen's inequality and the $\prod_i w_i=W_{\mathrm{geo}}^r$
equiprobable choices of $J$ give
\begin{equation}
\label{eq:qbhx-lower-fourier}
 \sum_J\trnorm{\wh{\rho}(S_J)}^2
 =W_{\mathrm{geo}}^r\E[J]{\trnorm{\wh{\rho}(S_J)}^2}
 \ge4\gamma^2W_{\mathrm{geo}}^r.
\end{equation}
Since all these coefficients have degree $r$, for $0<\lambda\le1$,
\begin{equation}
\label{eq:qbhx-master-ineq}
 (A\lambda)^r
 =4\gamma^2W_{\mathrm{geo}}^r\lambda^r
 \le\sum_S\lambda^{\abs{S}}\trnorm{\wh{\rho}(S)}^2
 \le2^{2\lambda q/(1+\lambda)}.
\end{equation}
For $\gamma=0$, we have $A=0$, so neither conclusion is asserted.
Otherwise, taking base-two logarithms when $A\lambda>1$ proves the first
bound.  If $A\ge e$, choose
$\lambda=e/A$ to obtain
$q\ge r(1+\lambda)/(2\lambda\ln2)\ge rA/(2e\ln2)$.
\end{proof}

We will use the theorem with block widths $\bT{w}$, at least $\ln w$
blocks, and advantage $\bT{1/w}$.

\blockHiddenXORTheorem

\begin{proof}
Apply \cref{thm:qbhx-master} with $\gamma=1/(8w)$, so that
$A=W_{\mathrm{geo}}(1/(4w))^{2/r}$.  For $w\ge4$,
\[
 (1/(4w))^{2/r}=(4w)^{-2/r}
 \ge\exp\left(-\frac{2\ln(4w)}{\ln w}\right)\ge e^{-4}.
\]
Also $W_{\mathrm{geo}}\ge w/100$, so $A\ge w/(100e^4)\ge e$
for sufficiently large $w$.  The second bound in
\cref{thm:qbhx-master} now gives $q=\bOm{wr}$.
\end{proof}

\subsection{From Block Hidden XOR to random-order NGC}
\label{sec:ngc-chain}
\begin{definition}[Distributional Hidden XOR]
\label{def:dhx}
In $\DHX_{w,t}$, Alice and Bob receive complementary shares, obtained by independently
assigning each edge to one party with probability $1/2$, of a depth-$(3t+1)$
Multi-Block graph generated from independent uniform Boolean
strings and permutations.  Bob must recover the designated hidden
path parity $z_G(1)$; see \cref{def:multi-block,def:DHX-main}
for the full construction and input distribution.
\end{definition}

\begin{theorem}[Quantum Distributional Hidden XOR]
\label{thm:dhx-quantum}
For sufficiently large $w$ and $t\ge512\ln w$, any one-way quantum
protocol for $\DHX_{w,t}$ with success at least
$1/2+1/(6w)$ communicates $\bOm{wt}$ qubits.
\end{theorem}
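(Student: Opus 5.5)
We reduce $\DHX_{w,t}$ to unequal-block Hidden XOR and then invoke \cref{thm:qbhx-main}. The reduction follows the clean-coordinate argument of Assadi and Sundaresan, as summarized in \cref{sec:overview-triangle-ngc}. The only genuinely quantum issue is to avoid conditioning on measurement outcomes. We therefore keep every fixing step at the level of classical generation variables and average advantages, never successes.

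First, fix a quantum protocol $\Pi$ for $\DHX_{w,t}$ with success $1/2+1/(6w)$ and message of $q$ qubits. In the Multi-Block graph, the designated parity $z_G(1)$ is the XOR, over the relevant blocks, of the string bit selected by the permutation path. Under the random edge partition, call a coordinate of a block \emph{clean} if Alice owns all edges encoding that bit and Bob owns the permutation edges routing the path to its position. We then define a good event $\mathcal E$ on the ownership coins and on the permutation variables. It should require that at least $r\ge\ln w$ blocks (we aim for a constant fraction of $t$, using $t\ge512\ln w$ and a Chernoff bound) have at least $w/100$ clean coordinates, and that the path selects a clean coordinate in each such block. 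For every other block, fix the classical variables so that its contribution to the parity is a known bit, or one Bob can compute from his share. Showing $\Pr[\mathcal E^c]$ is tiny, say at most $1/(100w)$, requires a Chernoff estimate per block plus a union bound; this is where the requirement on $t$ enters.

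Second, we average over the fixed classical variables. Write the advantage as an average over these variables of the conditional advantage. Because $\mathcal E^c$ has probability below $1/(24w)$, some fixing inside $\mathcal E$ retains advantage at least $1/(6w)-1/(24w)=1/(8w)$. This step is a convex combination of conditional success probabilities of the fixed channel $\Pi$, so no quantum outcome is conditioned on. Conditional on this fixing, the residual inputs are as follows: Alice holds independent uniform strings $X^i$ on the clean coordinates of each kept block, of widths $w_i\ge w/100$; Bob holds independent uniform indices $J_i\in[w_i]$ (the residual permutations restricted to clean positions); and the two are independent. This is the content of \cref{lem:residual-usable}. We must also check that each party can sample its edge share from its own residual input together with the public fixed data, with Bob generating the unused parts of the residual permutations locally. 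Then Alice runs $\Pi$'s encoder and Bob runs $\Pi$'s decoder, XOR-ing in the known parities of the fixed blocks. The result is a protocol for unequal-block Hidden XOR with $r$ blocks, widths $w_i\ge w/100$, and success at least $1/2+1/(8w)$, using $q$ qubits.

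Third, \cref{thm:qbhx-main} gives $q=\Omega(wr)=\Omega(wt)$, since $r=\Theta(t)$ and $r\ge\ln w$. The main obstacle is the second step: we need a product distribution after fixing, with every bit outside the clean coordinates either Bob-computable or fixed, so that the target is exactly $Z_J(X)$ up to a known shift. Here we rely on the details of the Multi-Block construction in \cref{def:multi-block} and follow Assadi and Sundaresan's classical argument verbatim. Their argument never uses classicality of the message, because all of its fixing is performed before Alice encodes.
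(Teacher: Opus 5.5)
Your proposal is correct and follows essentially the same route as the paper. The paper conditions on having at least $t/512$ usable blocks, i.e.\ blocks with at least $w/100$ clean coordinates and $\sigma^i(1)\in C_i$; this event fails with probability at most $w^{-9/8}\le 1/(12w)$, so the advantage stays at least $1/(8w)$. It then fixes a branch consisting of the partition functions, the nonusable blocks, and the non-clean triples $((\sigma^i)^{-1}(j),j,x^i_j)$, and concludes with \cref{lem:residual-usable}, local simulation of each party's share, and \cref{thm:qbhx-main}.

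One wording point: your good event must count blocks that are \emph{simultaneously} clean-rich and have a clean selected coordinate, rather than requiring a clean selection in every clean-rich block. The paper bounds this count with a Chernoff bound over these independent per-block events, not a union bound, and the threshold exactly $t/512$ is what makes $r\ge\ln w$ and failure $o(1/w)$ hold together under $t\ge512\ln w$.
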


\begin{proposition}[Robust-to-random-order transfer]
\label{prop:robust-transfer}
An $S$-space one-pass quantum random-edge-order streaming
algorithm for a graph problem, with all output produced after the pass,
induces a robust one-way quantum protocol using a message of
dimension at most $2^S$, provided continuation is determined by the complete
live state and public parameters together with Bob's remaining input.
In particular, any needed position or control information must be counted
in the state or locally reconstructible.  Error for each fixed graph is
preserved after averaging over the robust partition and local orders.
\end{proposition}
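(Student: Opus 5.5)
The plan is to build the protocol directly from the streaming algorithm, using the observation (already stated in the overview) that a uniform global edge order decomposes as ``random partition, then uniform local orders, then concatenate.'' Fix a promise graph $G$ with edge set $E$. First, check that the following two-stage sampling gives exactly a uniformly random permutation of $E$. Assign each edge independently to Alice or Bob with probability $1/2$. Independently, order Alice's share uniformly and Bob's share uniformly. Finally, put Alice's ordered share first and Bob's second.

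This is \emph{not} literally uniform: the concatenated order always lists Alice's edges first. So the correct statement pairs the partition with the order. Choose a uniform permutation $\pi$ of $E$ and an independent cut position $c$ distributed as $\mathrm{Binomial}(|E|,1/2)$. Give Alice the first $c$ edges of $\pi$ and Bob the rest. The resulting ownership is exactly independent fair coins per edge. Conditioned on the ownership, each party's local order is uniform and independent, and $\pi$ itself is uniform. I would verify this by a direct counting argument. For a fixed partition $(A,B)$ and orders $\sigma_A,\sigma_B$, the probability of $(A,B,\sigma_A,\sigma_B)$ equals $2^{-|E|}/(|A|!\,|B|!)$. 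On the other side, $\pi=\sigma_A\sigma_B$ and $c=|A|$ occur with probability $\frac1{|E|!}\binom{|E|}{|A|}2^{-|E|}$, which is the same quantity. This identity is the key step. The rest is simulation.

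Protocol. Public coins are not even needed, since each party samples its own local order with private randomness. Alice orders her share uniformly and runs the $S$-space algorithm on it from the initial state, including any public parameters (which both parties know). She then sends the complete live state, quantum registers together with classical memory embedded in the computational basis. This is a message of dimension at most $2^S$. Bob orders his share uniformly and continues the algorithm's channels on his items, starting from the received state. At the end he applies the post-pass output procedure and reports the decision. By the hypothesis on continuation, Bob needs nothing beyond the state, the public parameters, and his own items. In particular, position counters and similar quantities either are in the counted state or can be rebuilt locally, because Bob knows his own item count. Hence the composition of channels equals the algorithm's execution on the stream $\sigma_A\sigma_B$.

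Error. For fixed $G$, average the output distribution over ownership coins, local orders, and algorithm randomness. By the identity above, this equals the average over a uniform permutation $\pi$ (and $c$, on which the algorithm's run on $\pi$ does not depend). That average is exactly the streaming error, so the error is at most the algorithm's error. The step I expect to need the most care is the continuation hypothesis. The algorithm may use the stream position $t$ or the stream length, and Bob does not know $c$. So I would require that $c$, which is $O(\log |E|)$ bits, is part of the transmitted live state, as the statement allows. Likewise, no intermediate measurement result may be deferred outside the transferred register. The proof never copies or rewinds the quantum state: it is transferred exactly once.
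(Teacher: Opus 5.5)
Your proof is correct and is essentially the paper's: your bijection $(A,B,\sigma_A,\sigma_B)\mapsto(\sigma_A\sigma_B,|A|)$ with matching weights $2^{-|E|}/(|A|!\,|B|!)$ is the same computation as \cref{lem:random-partition-order}, which fixes the target order and sums $2^{-M}/(a!\,(M-a)!)$ over the cut position $a$, and transferring the complete live state once is the same simulation. One slip: your remark that the concatenation is ``not literally uniform'' is wrong---its law on orders \emph{is} exactly uniform, as your own identity shows after summing over $c$; only independence between ownership and position fails, and that is harmless because the algorithm never sees ownership.
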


Here is how the bounds fit together; the details are in
\cref{app:ngc-full-proof}.
\begin{enumerate}[leftmargin=1.7em]
\item For $n=4km$ and $k=3t+1$, neighboring hybrids isolate one NGC
  path parity.  A local embedding yields $\DHX_{m+1,t}$ with advantage
  at least $1/(6(m+1))$.
\item In DHX, a favorable edge partition leaves $r\ge t/512$ usable
  blocks of widths $w_i\ge w/100$.  Fixing specified classical generation
  variables preserves advantage at least $1/(8w)$.  The residual-law and
  local-simulation arguments establish independent BHX inputs: Alice has
  the strings, and Bob can simulate his share from the indices.
\item The BHX theorem gives $\bOm{wt}=\bOm{n}$ communication.
  The appendix treats bounded $w$ separately and pads the construction
  to cover all admissible $k$.
\item A fair edge partition followed by independent uniform local orders
  produces an exactly uniform global order.  Transferring the complete
  live memory once proves the streaming lower bound.
\end{enumerate}

\begin{remark}[A limitation of the stated consequences]
The current reduction does not automatically give a rank lower bound
under uniformly random \emph{individual matrix-entry} order: one
graph edge produces a coupled pair of nonzero Tutte-matrix entries,
and random graph-edge order yields random order of pairs rather than
necessarily random order of the individual entries.
\end{remark}

\paragraph{Open question.}
Can repeated edges reduce the quantum space required for NGC?
Our lower bound concerns a single arrival of each edge.  Under
repetition, a random partition of stream items can expose copies of
the same edge to both parties, changing the information available in
the communication problem.  Thus the present argument does not
establish a lower bound for repeated-edge streams.  Determining whether
repeated input permits a smaller quantum algorithm would connect this
obstruction to the replenishment effects studied above.

\section*{AI Disclosure}
All the problems and key ideas were developed by the author. AI tools (GPT-6 Astra) were used to prototype the proofs, search for the right parameter regimes, or shorter/cleaner proofs, as well as to get a better understanding of \cite{AssadiSundaresan2023}, \cite{GilboaJainMcClean2025}, and \cite{Kallaugher2021} needed for the proof. All the results in this paper were either developed without AI tools or manually verified by the author. The author is responsible for the correctness of all results in the paper.

\section*{Acknowledgments}

This research was supported in part by the French PEPR integrated projects EPIQ (ANR-22-PETQ-0007) and HQI (ANR-22-PNCQ-0002).

\bibliographystyle{plain}
\bibliography{refs}

\appendix
\section{Additional consumability variants and endpoint estimates}
\label{app:consumability-extra}

We give three consequences of the sequential consumability theorem,
corresponding to different requirements on the number of correct answers.
The joint-success bound is the one used for ARHM; the other two allow
errors on a fraction of the requests.

\subsection{Consequences for different correctness guarantees}

The sequential theorem turns each correctness guarantee into an upper
bound on the entropy of the error vector $Z$.  Throughout, use the
experiment of \cref{thm:sequential-hidden-matching}: $N\ge2$ is even,
$m\ge1$, $X$ is uniform, and requests are fresh independent draws from
a balanced matching distribution, independent of $X$ and protocol coins.
Each answer precedes the next request, including any revelation through
public coins.  The message has dimension at most $2^q$, counting
classical information, and outputs are legal even on failure as in
\cref{def:sequential-hidden-matching}.

The common bound is
\[
 q\ge m-\frac{m(m^2-1)}{6(N-1)}-H(Z).
\]
We apply it without conditioning the request distribution or the
forest-growth estimate on success.

\begin{corollary}[Constant average advantage]
\label{cor:app-seqmhm-average}
Let $0<\eps\le1/2$.  If
$m^{-1}\sum_{i=1}^m\prob{b_i=X_{u_i}\oplus X_{v_i}}\ge1/2+\eps$,
then
\[
 q\ge m\bigl(1-h_2(1/2-\eps)\bigr)-\frac{m(m^2-1)}{6(N-1)}.
\]
For fixed $\eps>0$, some $c_\eps>0$ therefore gives
$q=\bOm[\eps]{m}$ whenever $m\le c_\eps\sqrt N$.
\end{corollary}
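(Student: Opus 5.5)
We start from the common bound
\[
 q\ge m-\frac{m(m^2-1)}{6(N-1)}-H(Z),
\]
given by \cref{thm:sequential-hidden-matching}. It then suffices to show
$H(Z)\le m\,h_2(1/2-\eps)$ under the average-advantage hypothesis. Write
$\eta_i:=\prob{Z_i=1}=1-\prob{b_i=X_{u_i}\oplus X_{v_i}}$ for the
per-request error probabilities. The hypothesis says that
$\bar\eta:=m^{-1}\sum_i\eta_i\le 1/2-\eps$.

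Subadditivity of entropy gives
$H(Z)\le\sum_{i=1}^m H(Z_i)=\sum_i h_2(\eta_i)$. By concavity of $h_2$
(Jensen), $\sum_i h_2(\eta_i)\le m\,h_2(\bar\eta)$. Since $h_2$ is
increasing on $[0,1/2]$ and $\bar\eta\le1/2-\eps\le1/2$, we get
$h_2(\bar\eta)\le h_2(1/2-\eps)$. Substituting into the common bound
yields the displayed inequality. No conditioning on success is involved:
the $\eta_i$ are unconditional marginals over the same experiment, so the
forest-growth estimate of \cref{lem:app-online-forest-growth} applies
unchanged.

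For the asymptotic claim, set $\kappa_\eps:=1-h_2(1/2-\eps)$, which is
strictly positive for $\eps>0$. If $m\le c_\eps\sqrt N$, then
\[
 \frac{m(m^2-1)}{6(N-1)}\le\frac{m\cdot m^2}{3N}\le\frac{c_\eps^2}{3}\,m,
\]
using $N-1\ge N/2$. Choosing $c_\eps^2\le 3\kappa_\eps/2$ makes this at
most $\kappa_\eps m/2$, so $q\ge\kappa_\eps m/2=\bOm[\eps]{m}$. The
argument has no real obstacle. The only care needed is to use
monotonicity of $h_2$ below $1/2$ after applying Jensen, and to note
that the Jensen step needs only the average advantage, not a per-request
bound.
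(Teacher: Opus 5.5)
Your proposal is correct and follows the paper's proof essentially step for step. You bound $H(Z)$ by subadditivity, Jensen, and monotonicity of $h_2$ on $[0,1/2]$, then substitute into the common bound. Your threshold $c_\eps=\sqrt{3(1-h_2(1/2-\eps))/2}$ and resulting bound $q\ge\frac m2\bigl(1-h_2(1/2-\eps)\bigr)$ are exactly the paper's.
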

\begin{proof}
Since $m^{-1}\sum_i\prob{Z_i=1}\le1/2-\eps$, subadditivity,
concavity, and monotonicity of binary entropy on $[0,1/2]$ give
\[
 H(Z)\le\sum_i h_2(\prob{Z_i=1})
 \le m h_2\left(\frac1m\sum_i\prob{Z_i=1}\right)\le m h_2(1/2-\eps).
\]
Substitute into the common bound.  For the linear consequence, if
$m\le\sqrt{\frac32\bigl(1-h_2(1/2-\eps)\bigr)}\,\sqrt N$, then
\[
 \frac{m^2-1}{6(N-1)}\le
 \frac{\frac32\bigl(1-h_2(1/2-\eps)\bigr)N}{6(N-1)}\le
 \frac{1-h_2(1/2-\eps)}2.
\]
Thus $q\ge\frac m2\bigl(1-h_2(1/2-\eps)\bigr)$, including bounded $m$.
\end{proof}

Joint success gives the version used in ARHM, extending through
$m=\floor{\sqrt N}$ at success probability $2/3$.

\begin{corollary}[Joint success]
\label{cor:app-seqmhm-joint}
If all $m$ answers are correct with probability at least $p\in[0,1]$,
then
\[
 q\ge pm-1-\frac{m(m^2-1)}{6(N-1)}.
\]
In particular, $p=2/3$ and $m=\floor{\sqrt N}$ give
$q=\bOm{\sqrt N}$.
\end{corollary}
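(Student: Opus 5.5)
We obtain this from the common sequential bound $q\ge m-\frac{m(m^2-1)}{6(N-1)}-H(Z)$, i.e.\ \cref{thm:sequential-hidden-matching}. All that remains is to bound the error entropy $H(Z)$ using joint success.

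Let $S$ be the indicator that every $Z_i=0$. Since $S$ is a function of $Z$, the chain rule gives
\[
 H(Z)=H(S)+H(Z\mid S)
 \le 1+\prob{S=0}\,H(Z\mid S=0)+\prob{S=1}\,H(Z\mid S=1).
\]
On the event $S=1$ the vector $Z$ is identically zero, so $H(Z\mid S=1)=0$. On the event $S=0$, $Z$ ranges over at most $2^m$ values, so $H(Z\mid S=0)\le m$. Joint success means $\prob{S=0}\le 1-p$, and hence $H(Z)\le 1+(1-p)m$.

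Substituting into the common bound gives
\[
 q\ge m-\frac{m(m^2-1)}{6(N-1)}-1-(1-p)m=pm-1-\frac{m(m^2-1)}{6(N-1)}.
\]
This step needs no conditioning of requests on success and no independence, since the forest-growth estimate in \cref{lem:app-online-forest-growth} is unconditional. The case $p=0$ makes the bound vacuous and causes no trouble.

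For the particular case, take $m=\floor{\sqrt N}$, so that $m^2\le N$. Then $\frac{m^2-1}{6(N-1)}\le\frac16$, and the loss term is at most $m/6$. With $p=2/3$ we get $q\ge \frac{2m}{3}-\frac m6-1=\frac m2-1$. Since $m=\floor{\sqrt N}$, this is $\bOm{\sqrt N}$ for large $N$; small $N$ can be absorbed into the constant because at least one qubit is needed, by the same trivial argument as in \cref{thm:mhm-linear-main}. No step here is a real obstacle. The only point needing care is writing the entropy split with the chain rule through $S$, rather than using subadditivity over coordinates, which would lose too much.
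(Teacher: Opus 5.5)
Your proof is correct and matches the paper's argument: you bound $H(Z)\le H(S)+H(Z\mid S)\le 1+(1-p)m$ using the joint-success indicator $S=\ind{}[Z=0^m]$, substitute into \cref{thm:sequential-hidden-matching}, and use $m^2\le N$ to bound the forest-growth loss by $m/6$, which gives $q\ge m/2-1$. Your extra remark that small $N$ is absorbed because at least one qubit is needed is a harmless refinement that the paper leaves implicit.
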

\begin{proof}
Let $S=\ind{}[Z=0^m]$.  On $S=1$, $Z$ is fixed; on $S=0$, use
$H(Z\mid S=0)\le m$.  Therefore
\[
 H(Z)\le H(S)+H(Z\mid S)\le1+(1-p)m.
\]
Substitution proves the bound.  If $m\le\sqrt N$, the forest-growth
loss is at most $m/6$, so $p=2/3$ gives $q\ge m/2-1$.
\end{proof}

Requiring only a constant fraction of correct answers still forces a
linear message once the number of requests is sufficiently large.

\begin{corollary}[Correctness on a constant fraction]
\label{cor:app-seqmhm-fraction}
Fix $1/2<\alpha\le1$ and $p\in[0,1]$.  If Bob answers at least
$\alpha m$ requests correctly with probability at least $p$, then
\[
 q\ge pm\bigl(1-h_2(1-\alpha)\bigr)
       -1-\frac{m(m^2-1)}{6(N-1)}.
\]
For fixed $p>0$ and $\alpha>1/2$, there are
$c_{p,\alpha}>0$ and $m_0=m_0(p,\alpha)$ such that
$m_0\le m\le c_{p,\alpha}\sqrt N$ implies $q=\bOm[{p,\alpha}]{m}$.
\end{corollary}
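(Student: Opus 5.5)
The approach mirrors \cref{cor:app-seqmhm-joint}: we bound $H(Z)$ by an entropy charge for the event of being correct on at least $\alpha m$ requests, and substitute into the common bound
\[
 q\ge m-\frac{m(m^2-1)}{6(N-1)}-H(Z).
\]

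Let $S=\ind{}[\abs{Z}\le(1-\alpha)m]$, where $\abs{Z}$ is the Hamming weight of the error vector, so $\prob{S=1}\ge p$. Then
\[
 H(Z)\le H(S)+\prob{S=1}H(Z\mid S=1)+\prob{S=0}H(Z\mid S=0).
\]
On $S=1$, $Z$ lies in a Hamming ball of radius $(1-\alpha)m<m/2$. Hence $H(Z\mid S=1)\le\log\sum_{i\le(1-\alpha)m}\binom mi\le m\,h_2(1-\alpha)$ by the standard volume bound. On $S=0$, use the trivial bound $m$, and bound $H(S)\le1$. Writing $p'=\prob{S=1}\ge p$, this gives
\[
 H(Z)\le1+p'm\,h_2(1-\alpha)+(1-p')m=1+m-p'm\bigl(1-h_2(1-\alpha)\bigr).
\]
The coefficient $1-h_2(1-\alpha)$ is nonnegative, so replacing $p'$ by $p$ only weakens the bound. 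Substituting into the common bound yields exactly the displayed inequality. No conditioning on success is needed, since the common bound already holds unconditionally.

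For the linear consequence, set $\beta=p(1-h_2(1-\alpha))>0$, which is positive because $\alpha>1/2$ and $p>0$. Choose $c_{p,\alpha}=\sqrt{3\beta/2}$. For $m\le c_{p,\alpha}\sqrt N$, the loss term satisfies
\[
 \frac{m(m^2-1)}{6(N-1)}\le\frac{m\,c_{p,\alpha}^2N}{6(N-1)}\le\frac{\beta m}{2},
\]
using $N/(N-1)\le2$. Thus $q\ge\beta m/2-1$. Taking $m_0=4/\beta$ gives $q\ge\beta m/4=\bOm[{p,\alpha}]{m}$.

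The only delicate point is the entropy step. The binomial-sum bound $\sum_{i\le\theta m}\binom mi\le2^{m h_2(\theta)}$ requires $\theta\le1/2$, which is why we need $\alpha>1/2$. The rest is routine.
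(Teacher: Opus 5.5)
Your proposal is correct and matches the paper's proof essentially step for step. Both bound $H(Z)$ via the indicator of $\abs{Z}\le(1-\alpha)m$ and the Hamming-ball volume, substitute into \cref{thm:sequential-hidden-matching}, and use the same constants $c_{p,\alpha}=\sqrt{3\gamma/2}$ and $m_0=\ceil{4/\gamma}$ with $\gamma=p(1-h_2(1-\alpha))$. Your explicit note that replacing $\prob{S=1}$ by the smaller $p$ is valid because $1-h_2(1-\alpha)\ge0$ spells out a step the paper leaves implicit.
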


On the stated event, $Z$ lies in a Hamming ball of radius $(1-\alpha)m$.
Counting that ball bounds its entropy; \cref{app:seqmhm-fraction-proof}
gives the calculation and the lower threshold $m_0$.

\paragraph{Role in ARHM.}
The sequential joint-success bound applies to uniform factors of the
public one-factorization and yields the update-first streaming lower
bound.  The simultaneous result \cref{cor:ext-factor-linear} instead
permits all requests to be known before decoding, for
$m_0(\delta,p_0)\le m\le N^{1/2-\delta}$ and fixed $\delta,p_0>0$.
Both transfer the complete post-data state as one message.  The
simultaneous bound is for independent factor draws and does not assert
hardness of every fixed list or include the sequential square-root
endpoint.

\section{A sharper transversal-forest bound below the square-root scale}
\label{app:transversal-forest-optimization}

We refine the transversal-forest estimate for $m=o(\sqrt N)$.
This includes values closer to $\sqrt N$ than those covered by the
linear bound for $m\le N^{1/2-\delta}$.
Write $a:=\ln(N/m)$ and $b:=\ln(N/m^2)$.

\begin{proposition}[Optimized transversal-forest scale]
\label{prop:app-optimized-transversal-forest}
Let $N$ tend to infinity through even integers, and let $M_1,\ldots,M_m$
be independent draws from any balanced matching distribution.  Suppose
$m=m(N)\to\infty$ and $m=o(\sqrt N)$.  For a universal $c>0$ and all
sufficiently large $N$,
\[
 \prob{\tfor(M_1,\ldots,M_m)\le cm\frac{b+\ln a}{a}}
 \le\exp[-\bOm{m(b+\ln a)}].
\]
Consequently, some deterministic list in the distribution's support has
\[
 \tfor(M_1,\ldots,M_m)
 =\bOm{m\frac{\ln(N/m^2)+\ln\ln(N/m)}{\ln(N/m)}}.
\]
\end{proposition}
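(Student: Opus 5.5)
The plan is to apply the tail bound of \cref{prop:ext-balanced-tail} directly, choosing $s=\floor{cm(b+\ln a)/a}$, and to check that $\ln T_{N,m}(s)\le-\Omega(m(b+\ln a))$. The second assertion then follows by the probabilistic method: once the failure probability is below $1$, some list in the support attains $\tfor>s$.

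First I record the parameter regime. Since $m=o(\sqrt N)$, we have $b=\ln(N/m^2)\to\infty$. We also have $a=\ln(N/m)=b+\ln m$, so $a\ge b$ and $a\to\infty$. Moreover $a\le\ln N$, and $b\le a$ gives $(b+\ln a)/a\le 2$ for large $N$. Hence $s\le 2cm\le N/2$. If $s=0$ the conclusion is trivial, since every transversal has forest size at least $1$, so assume $s\ge1$.

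Next I bound the log of the tail using $\binom N{2s}\le(eN/2s)^{2s}$ and $\binom{2s}2\le 2s^2$. This gives
\[
 \ln T_{N,m}(s)\le 2s\ln\frac{eN}{2s}+m\ln\frac{2s^2}{N-1}.
\]
In the second term, write $s=\theta m$ with $\theta=c(b+\ln a)/a$. Then $\ln(2s^2/(N-1))\le \ln(4\theta^2)-b$, using $N-1\ge N/2$. Since $\theta\le 2c$ is small, $\ln(4\theta^2)\le 2\ln\theta+O(1)$. Now $\ln\theta\le\ln(2c)$, and more precisely $\ln\theta=\ln c+\ln(b+\ln a)-\ln a$. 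The term $-\ln a$ supplies the $\ln a$ gain, while the $+\ln(b+\ln a)$ term is negligible relative to $b+\ln a$ once $b+\ln a$ is large. So the second term is at most
\[
 -m\bigl(b+2\ln a-2\ln(b+\ln a)-O(1)-2\ln c\bigr).
\]
This is $\le -\tfrac{m}{2}(b+\ln a)$ for large $N$, since $b+\ln a\to\infty$ and $2\ln(b+\ln a)=o(b+\ln a)$.

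In the first term, $\ln(eN/2s)=\ln(N/m)+\ln(e/(2\theta))=a+\ln(1/\theta)+O(1)$. I need $\ln(1/\theta)=O(a)$. This holds because $\theta\ge c\ln a/a$, so $\ln(1/\theta)\le\ln a+\ln(1/c)$, which is $O(a)$ for large $a$. The first term is therefore at most $2\theta m(a+O(\ln a))\le 4cm(b+\ln a)$ for large $N$. Choosing $c$ small, say $c=1/16$, makes this at most $\tfrac14 m(b+\ln a)$. Combining the two terms gives $\ln T_{N,m}(s)\le-\tfrac14 m(b+\ln a)$.

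The main obstacle is this balancing step, which makes the $\ln a$ in the numerator actually pay off. The fix is to use the exact value of $\theta$ inside $\ln(s^2/N)$, rather than the crude bound $s\le m$ used in \cref{cor:linear-transversal-forest-main}. Doing so recovers the extra $-2\ln a$. I must also avoid the constant $\ln(4c^2)$ spoiling the estimate when $b$ is small. This is handled because $b+\ln a\to\infty$, so additive constants are absorbed for large $N$. Integrality, as in \cref{cor:linear-transversal-forest-main}, converts the event $\tfor>s$ into the stated $\Omega$-bound.
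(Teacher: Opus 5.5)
Your proposal is correct and follows essentially the paper's route: the same tail bound from \cref{prop:ext-balanced-tail}, the same choice $s=\lfloor c\,m(b+\ln a)/a\rfloor$, and the same balance of the vertex-set cost $2ya$ against the matching gain $-b+2\ln y$. The one difference is that you absorb the $-2\ln(b+\ln a)$ loss directly, using $b+2\ln a\ge b+\ln a\to\infty$, which replaces the paper's two-case check of $b+2\ln(a/d)\ge d/3$. One small point to tidy: $s$ is a floor, so it is not equal to $\theta m$; note that both terms of the bound are nondecreasing in $s$ for $2s\le N$, which justifies substituting $\theta m$ for $s$.
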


The event controls every coordinated edge choice at once, including the
output of a joint quantum decoder.  The deterministic conclusion is an
existence statement about a list in the support.

\begin{proof}
For integers $1\le s\le N/2$, \cref{prop:random-transversal-forest-main}
and the binomial bound give
\begin{equation}\label{eq:app-transversal-estimate}
 \prob{\tfor\le s}
 \le\left(\frac{eN}{2s}\right)^{2s}
       \left(\frac{2s^2}{N-1}\right)^m.
\end{equation}
With $y=s/m$, the logarithm of the right-hand side, divided by $m$, is
\begin{equation}\label{eq:app-transversal-exponent}
 -b+2ya+2y\ln\frac{e}{2y}+2\ln y+\ln2+\lO{1}.
\end{equation}
The term $2ya$ is the cost of choosing a small supporting vertex set;
the negative terms from the matching probabilities must dominate it.

Put $d=b+\ln a$ and take $s=\floor{c_0md/a}$ for a sufficiently small
universal constant $c_0>0$.  If $s=0$, the claimed bad event is empty:
its real threshold is less than one, but every transversal is nonempty.
Otherwise, $a,b\to\infty$, $b\le a$, and eventually $d\le2a$.
For $c_0\le1/8$, we have $y\le1/4$ and hence $s\le m/4\le N/2$
eventually, so \eqref{eq:app-transversal-estimate} applies.

Since $y\ln(e/(2y))$ is bounded on $(0,1/4]$ and $y\le c_0d/a$,
\eqref{eq:app-transversal-exponent} yields
\begin{equation}\label{eq:app-transversal-exponent-two}
 \frac1m\ln\prob{\tfor\le s}
 \le-\left(b+2\ln\frac a d\right)+2c_0d+\bO{1}.
\end{equation}
If the probability vanishes, interpret its logarithm as $-\infty$.
We claim that, for sufficiently large $N$,
\begin{equation}\label{eq:app-transversal-claim}
 b+2\ln(a/d)\ge d/3.
\end{equation}
There are two cases.  If $b\ge\ln a$, then $d\le2b$ and $d\le2a$,
so $b+2\ln(a/d)\ge b-2\ln2\ge d/3$ eventually.
If $b<\ln a$, then $d<2\ln a$, and
$\ln(a/d)\ge\ln a-\ln(2\ln a)$; hence $2\ln(a/d)\ge d/2$
eventually.  This proves the claim.

Choose, for example, $c_0\le1/24$.  As $d\to\infty$,
\eqref{eq:app-transversal-exponent-two} is at most $-\bOm{d}$.
Thus $\prob{\tfor\le s}\le\exp[-\bOm{md}]$.
Because $\tfor$ is integer-valued, this is exactly the event with real
threshold $c_0md/a$.  Its complement has positive probability, proving
the deterministic existence statement as well.
\end{proof}

\begin{remark}
Balancing $2ay+2\ln y$ against $b$ suggests
$y\asymp W(ae^{b/2})/a$, where $W$ is the Lambert $W$ function.
Since $W(ae^{b/2})=\bT{b+\ln a}$, this agrees with the proposition's
scale up to constant factors.
\end{remark}

\subsection{Proof of the constant-fraction sequential bound}
\label{app:seqmhm-fraction-proof}

We use the uniform-input, fresh-request experiment of
\cref{cor:app-seqmhm-fraction}.  The event below is used only to bound
error entropy; the forest-growth estimate remains unconditional.

\begin{proof}[Proof of \cref{cor:app-seqmhm-fraction}]
Let $J$ indicate that $\abs Z\le(1-\alpha)m$, where $\abs Z$
denotes Hamming weight.  The Hamming-ball bound
$\abs{\set{z\in\bool^m:\abs z\le(1-\alpha)m}}\le2^{m h_2(1-\alpha)}$
gives
\[
\begin{aligned}
 H(Z)&\le H(J)+H(Z\mid J)\\
 &\le1+\prob{J=1}m h_2(1-\alpha)+(1-\prob{J=1})m\\
 &\le1+m-pm(1-h_2(1-\alpha)).
\end{aligned}
\]
Substitute into \cref{thm:sequential-hidden-matching} to obtain the
claimed lower bound.

For fixed $p>0$, put $\gamma=p(1-h_2(1-\alpha))>0$.
If $m\le\sqrt{3\gamma/2}\,\sqrt N$, then
$m(m^2-1)/(6(N-1))\le\gamma m/2$, using $N/(N-1)\le2$.
Hence $q\ge\gamma m/2-1\ge\gamma m/4$ whenever
$m\ge\ceil{4/\gamma}$.
The lower threshold is needed: when $m=1$, $\alpha=3/4$, and $p=1/2$,
a parity guess with no message meets the correctness requirement.
\end{proof}

\section{Simultaneous-request estimate at \texorpdfstring{$r=\floor{\sqrt N}$}{r = floor(sqrt N)}}
\label{sec:ext-endpoint}

For sequential requests, we proved a linear lower bound through
$m=\floor{\sqrt N}$.  For simultaneous requests at this endpoint,
the balanced-matching tail bound gives the following weaker estimate,
losing a factor of $\log N/\log\log N$.

\begin{proposition}[Near-square-root simultaneous factorized consumability]
\label{prop:ext-endpoint}
Let $m=\floor{\sqrt N}$ and draw $m$ independent uniform factors of any
one-factorization of $K_N$.  For a universal $c>0$ and all sufficiently
large even $N$,
\[
  \prob{\tfor(M_{I_1},\ldots,M_{I_m})
       \le c\frac{\sqrt N\log\log N}{\log N}}
  \le\exp[-\bOm{\sqrt N\log\log N}].
\]
Consequently, for uniform $X\in\bool^N$ independent of the factor draws
and protocol coins, every one-way protocol with joint success
at least a fixed $p_0\in(0,1]$ requires
\[
  q=\bOm[{p_0}]{\frac{\sqrt N\log\log N}{\log N}}.
\]
Alice encodes $X$ before learning any factor indices; Bob receives all
indices simultaneously and may decode jointly.  The budget $q$ includes
all transmitted classical information, and independent public coins are
allowed as in \cref{lem:forest-information-main}.
The same lower bound holds for post-data representations of
$\ARHM_{N,\floor{\sqrt N}}$ with the prefix-order and joint-success
conventions of \cref{thm:ext-batch-lower}.
\end{proposition}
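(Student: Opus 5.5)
The plan is to specialize the optimized transversal-forest estimate, \cref{prop:app-optimized-transversal-forest}, to $m=\floor{\sqrt N}$, and then feed the resulting tail bound into \cref{thm:ext-balanced-quant}. One caveat comes first: that proposition assumes $m=o(\sqrt N)$, where $b=\ln(N/m^2)\to\infty$. At $m=\floor{\sqrt N}$ this fails, since $b$ is bounded and lies in $[0,\ln 4]$ for large $N$.

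So I will return to the raw estimate \eqref{eq:app-transversal-estimate}. It comes from \cref{prop:random-transversal-forest-main}, and uniform factors are balanced. By \eqref{eq:app-transversal-exponent}, its normalized logarithm is at most
\[
-b+2ya+2y\ln\frac{e}{2y}+2\ln y+\ln2+\lO{1}, \qquad y=\frac{s}{m},\quad a=\ln\frac{N}{m}=\tfrac12\ln N+\lO{1}.
\]
Here the useful negative term is $2\ln y$ rather than $-b$.

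Next I choose $s=\floor{c\sqrt N\log\log N/\log N}$, so that $y\asymp c\ln\ln N/\ln N$. Then
\[
2ya\approx c\ln\ln N, \qquad 2\ln y=-2\ln\ln N+\bO{\ln\ln\ln N}+\bO{1},
\]
while $2y\ln(e/(2y))=\lO{1}$ and $-b+\ln2=\bO{1}$. Taking $c$ small, for instance $c\le1/2$ after converting base-two logarithms to natural ones, the exponent is at most $-(2-c-\lO{1})\ln\ln N\le-\tfrac12\ln\ln N$. Multiplying by $m\asymp\sqrt N$ gives the tail $\exp[-\bOm{\sqrt N\log\log N}]$.

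I also check that $1\le s\le N/2$ holds eventually, so that \eqref{eq:app-transversal-estimate} applies. Finally, since $\tfor$ is integer-valued, the event with real threshold $c\sqrt N\log\log N/\log N$ equals the event $\tfor\le s$, which handles rounding exactly as in the earlier proof.

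For the communication consequence, I apply \cref{thm:ext-balanced-quant} with $s+1$ in place of $s$, so that its bad event $\tfor\le s$ is the one just bounded. This gives
\[
q\ge (s+1)\bigl[p_0-\exp(-\bOm{\sqrt N\log\log N})\bigr]-h_2(p_0).
\]
For large $N$ the exponential term is at most $p_0/2$, so $q\ge p_0 s/2-1=\bOm[p_0]{\sqrt N\log\log N/\log N}$. Small $N$ are absorbed into the constant using the one-qubit argument from the proof of \cref{thm:mhm-linear-main}. Public coins and classical information are covered by \cref{lem:forest-information-main} and \eqref{eq:consumability-information-budget}.

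The ARHM statement then follows by the reduction in the proof of \cref{thm:ext-batch-lower}. Alice simulates a fixed legal prefix, or the specified prefix rule, and sends the retained state, whose dimension is at most $2^S$. Bob receives the uniform factors simultaneously and runs the joint decoder. I then invoke the bound just proved in place of \cref{cor:ext-factor-linear}.

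The main obstacle is the exponent bookkeeping at the endpoint. The term $-b$ has vanished, so the gain must come entirely from $2\ln y$ beating $2ya$. This pins $y$ at order $\ln\ln N/\ln N$ and explains the $\log N/\log\log N$ loss. I would verify carefully that the lower-order terms (the $\ln\ln\ln N$ term, $2y\ln(e/(2y))$, and the constants) cannot overturn the margin $(2-c)\ln\ln N$.
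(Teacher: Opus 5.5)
Your proposal is correct and follows the paper's proof. The paper also bypasses \cref{prop:app-optimized-transversal-forest}, applies the raw binomial tail from \cref{prop:ext-balanced-tail} with $s\approx\sqrt N\log\log N/(2\log N)$ to get exponent $-(3/2+o(1))\,m\log\log N$, and then feeds this into \cref{thm:ext-balanced-quant} and the prefix simulation of \cref{thm:ext-batch-lower}; your differences are cosmetic (floor instead of ceiling for $s$, rerunning the theorem's argument at $s+1$, and writing $h_2(p_0)$ where the bound actually used is $h_2(p)\le 1$).
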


This is a distributional lower bound for independent factor draws.
A fixed list repeating one factor can be answered from one phase state
by repeating the same edge and parity.

\begin{proof}
Write $\ell=\log\log N$ and let
$s=\ceil{\sqrt N\ell/(2\log N)}$.  For large $N$, $1\le s\le N/2$.
By \cref{prop:ext-balanced-tail} and $\binom N{2s}\le(eN/(2s))^{2s}$,
\begin{align*}
  \log T_{N,m}(s)
  &\le 2s\log\frac{eN}{2s}+m\log\frac{2s^2}{N-1}\\
  &= (1/2+\lO{1})m\ell+(-2+\lO{1})m\ell\\
  &= -(3/2+\lO{1})m\ell.
\end{align*}
Indeed, $\log(eN/(2s))=(1/2+\lO{1})\log N$, whereas
$\log(2s^2/(N-1))=-2\ell+2\log\ell+\bO{1}$.
Rounding $m$ and $s$ changes only lower-order terms.  Thus $T_{N,m}(s)\le\exp[-\bOm{m\ell}]$.  Since $s$ is the
ceiling of the claimed threshold, this also bounds the probability in
the proposition.

For communication, \cref{thm:ext-balanced-quant} and $h_2(p)\le1$ give
\[
  q\ge s\bigl(p_0-T_{N,m}(s)\bigr)-1.
\]
Since $T_{N,m}(s)$ tends to zero, for sufficiently large $N$ it is at most
$p_0/2$.  If also $s\ge4/p_0$, this gives $q\ge p_0s/4$.  Both conditions hold for
$s=\bT{\sqrt N\log\log N/\log N}$.
For the post-data bound, use the same prefix-to-message simulation as
in \cref{thm:ext-batch-lower}, applying this endpoint estimate in place
of \cref{cor:ext-factor-linear}.
\end{proof}

\section{Triangle-counting parameter and implementation details}

We give the parameter choices and implementation details for the
triangle estimator.  In particular, we show how to store the sampling
seeds and perform the arithmetic within the expected simultaneous-space
bound, using the graph parameters supplied in the theorem.

\begin{remark}[Known parameters]
\label{rem:triangle-known-parameters}
The theorem supplies $[n]$, exact $m,R$, and estimates
$T/2\le T_0\le T$, $\Delta_E\le\Delta_0\le2\Delta_E$ before
processing.  Substituting $T_0,\Delta_0$ in the lane counts is conservative
and changes space by constant factors.  Exact $T$ is not supplied, and
the promise $T\ge1$ remains in force.  If the stream length and $R$ are
known, their ratio determines $m$.

A parameter-free guarantee would need a proved procedure for choosing
and validating scales within the total space budget.  Running all
geometric guesses in parallel can be polynomially more expensive than
the correctly tuned copy; no such extension is asserted here.
\end{remark}

\begin{remark}[Small $R$]
The requirement $R\gtrsim L/\eps$ makes the warm-up and tail error
$\bO{\eps T}$.  If it fails, storing the distinct edge set and counting
exactly is a valid fallback using $\bO{m\log n}$ bits.
Kallaugher's original algorithm also applies when $R=1$; for $R>1$, its
arbitrary-order guarantee alone does not handle repeated edges.

With $L=\ceil{C\log(m(R+1)/\delta)}$ and fixed $\eps,\delta$, failure
of this condition confines $R$ to logarithmic scale in $m$, excluding
no polynomial factor-$R$ improvement.  For variable $\eps$, the excluded
range is $R=\bO{L/\eps}$ and may contain polynomially many repetitions.
\end{remark}

\begin{remark}[General accuracy and the choice of $k$]
The optimized corollary assumes the balancing value lies in $[C/\eps,m]$.
When this interval is nonempty and the repetition condition holds, use
\[
 k=\min\left\{m,\ \max\left\{\frac C\eps,
        \left(\frac{T_0\Delta_0}{L\sqrt m}\right)^{2/5}\right\}\right\}
\]
and substitute in \eqref{eq:main-space-explicit}.  Round up to an integer
for implementation.  If $m<C/\eps$, the exact-storage fallback has its
separate $\bO{m\log n}$ bound.  The lower constraint on $k$ controls the
classical counter bias from repeated closing edges
(\cref{lem:high-bias}); it is not required by the quantum moment bounds.
\end{remark}

\subsection{Hashes, arithmetic, and live space}
\label{app:triangle-implementation}

We implement the ideal classical sampler using small seeds and finite
precision.  Quantum circuit generation and control are subject to the
same space accounting as the other classical computation, according to
\cref{sec:space-convention}.

\paragraph{Consistent decisions on repetitions.}
Use pairwise-independent hashes on center vertices and four-wise-independent
hashes on directed edges, with the two families independent.  Each decision
is keyed by the underlying vertex or directed edge.  Reevaluate the same
hash on repeated arrivals, and insert an eligible directed edge only if
it is not already in the stored buffer.  No record of rejected edges is
needed.  Hashes use fresh seeds for each build block, lane, and group.
Products of two wedge indicators involve at most two center keys and
four directed keys, so these families reproduce all first and second
moments used in \cref{app:triangle-classical-proofs}.  Fixed-degree
polynomials over finite fields provide such families with polylogarithmic
seed length.

\paragraph{Finite sampling probabilities.}
First replace $k$ by $\ceil k\le m$.  This preserves the bias condition
and changes the lane bounds by only absolute factors; use the rounded
value throughout.  The activation probability $1/(kL)$ is now rational
and can be sampled exactly using $\bO{\log(kL)}$ bits of workspace.

Choose dyadic thresholds $p'_V\in[p_V/2,p_V]$ and
$p'_E\in[p_E/2,p_E]$ using $\bO{\log m}$ fractional bits, and normalize
by the actual inclusion probability $\alpha'=p'_V(p'_E)^2$.
Then $\alpha/8\le\alpha'\le\alpha$ and $p'_Vp'_E\le1/m$.
The within-center second-moment terms in the variance proof are
nonnegative and only decrease with the probabilities; distinct centers
still have zero covariance.  Therefore
\[
 \E{X_j}=3\alpha'\wt T_j^{>},\qquad
 \Var{X_j}\le112\alpha'\Delta_E\sum_\omega\wt h_j(\omega).
\]
Replacing $\alpha$ by $\alpha'$ in normalization thus changes the
required lane counts by constants only.

\paragraph{Arithmetic error on a good stream.}
Each block has at most $6LT$ oriented closing occurrences, since every
edge occurs at most $2L$ times and $\sum_e\tau_e=3T$.
Approximate each power contribution $1-(1-\rho)^D$ to error $\zeta$
and round each accumulator increment to error $\zeta/L$.
The resulting error in $X_j$ is at most $12T\zeta$, and after division
by $3\alpha'$ it is at most $4T\zeta/\alpha'$.
Take $\zeta\le\eps\alpha'/1000$; allow a further $\eps T/100$ for
accumulator and final normalization rounding.  This fits within the
reserved $\eps T/4$ implementation budget.  Averaging and medians
preserve a uniform additive bound on their inputs.

Repeated squaring with guard bits evaluates the powers in polylogarithmic
space.  The exponent is at most $2Lm$, and an elementary rounding error
$\bO{\zeta/(Lm)}$ suffices.  Since $\alpha'\ge m^{-3/2}/8$, the required
precision is polylogarithmic in $n,m,R,1/\eps$.  The quantum outputs are
integers in $\{-1,0,1\}$, so their running sums can be stored exactly;
only their final scaling needs rounding.

\paragraph{Space on every stream.}
At every fixed stage, each buffer is a subset of the $2m$ underlying
directed edges, each included with probability at most $1/m$.
Its expected size is therefore at most two, even on streams outside
$\calG$.  Counters, edge identifiers, and hashes need logarithmic or
polylogarithmic storage.  On all streams, counters are bounded by the
block length and accumulators sum polynomially many bounded terms, so
their allocated precision also has a deterministic polylogarithmic bound.
Scan stored samples using small work registers, padding scans to a fixed
length if necessary so the stage does not select a buffer by its random
size.  Discard old READY buffers before allocating replacements;
transient workspace adds only a constant factor to the expected buffer
size.  This proves the required
expected simultaneous-space bound at intermediate stages as well as
block boundaries, without an expected-peak claim.

\section{Proofs for the classical complementary triangle estimator}
\label{app:triangle-classical-proofs}

We prove the bias and variance bounds for the classical component of
the triangle estimator.  These determine how many classical lanes are
needed when each lane is resampled over successive blocks.

Fix a stream in $\calG$ and a query block $\calB_j$.  The weights and
occurrence counts are deterministic; expectations are over the lane's
sampling randomness.  Counters start at zero and closing contributions
are evaluated before increments.  We use the ideal probabilities
$p_V,p_E$; the implementation preserves the moments up to constants.
Directed keys $wu$ and $uw$ are distinct.

\begin{proof}[Proof of \cref{lem:high-bias}]
For integers $a,c\ge0$ and $0\le\rho\le1$,
\[
 0\le(1-\rho)^a-(1-\rho)^{a+c}
   =(1-\rho)^a[1-(1-\rho)^c]\le\rho c,
\]
with the convention $(1-\rho)^0=1$.
Use \eqref{eq:double-count-identity} and let $C=C_{j,uv}$.
The numbers of earlier $uv$ copies at its successive occurrences are
$0,1,\ldots,C-1$, so
\[
 0\le\wt h_j(w;u,v)-h_j(w;u,v)
 \le\frac\rho L\sum_{\substack{t\in\calB_j\\e_t=uv}}N_{j,uv}(t)
 =\frac\rho L\binom C2<\frac2k.
\]
The last step uses $C\le2L$ and $\rho=1/(kL)$.
Sum over the $3T$ orientations and divide by three to get
\eqref{eq:T-high-bias}.
\end{proof}

\begin{proof}[Proof of \cref{lem:high-weight-bounds}]
There are at most $2L$ summands in \eqref{eq:tilde-high}, each in $[0,1]$,
so $0\le\wt h_j\le2$.  Also
$D_{j,wu}(t)\le2Ld_u$ and $D_{j,wv}(t)\le2Ld_v$.
Using $1-(1-\rho)^a\le\rho a$, each summand is at most
$2(d_u+d_v)/k$.  Multiplying by at most $2L/L$ gives
$\wt h_j\le4(d_u+d_v)/k$.
\end{proof}

\begin{proof}[Proof of \cref{lem:classical-mean}]
In \eqref{eq:X-sum}, every indicator has mean $\alpha$, and
$\sum_\omega\wt h_j(\omega)=3\wt T_j^{>}$.
\end{proof}

\begin{proof}[Proof of \cref{lem:classical-var}]
For each center $w$, let $\calT_w$ be the unordered pairs
$\{u,v\}\subseteq N(w)$ with $uv\in E$.  Put
\[
 z_{uv}:=\wt h_j(w;u,v),\qquad
 W_w:=\sum_{\{u,v\}\in\calT_w}z_{uv},\qquad
 X_w:=G_w\sum_{\{u,v\}\in\calT_w}z_{uv}H_{wu}H_{wv}.
\]
Then $X_j=\sum_wX_w$.  Distinct centers use distinct directed keys;
their contributions are independent under ideal sampling and have zero
covariance under the limited-independence implementation.  Indeed, a
product of two wedge indicators uses at most two centers and four
directed keys, so pairwise-independent center hashes and four-wise-independent
edge hashes reproduce every moment below.  It follows that
$\Var{X_j}=\sum_w\Var{X_w}\le\sum_w\E{X_w^2}$.

Fix $w$.  Expanding $X_w^2$ gives three types of ordered pairs.
\begin{enumerate}[leftmargin=1.8em]
\item \textbf{The same triangle.}  Since $z_{uv}^2\le2z_{uv}$,
the diagonal contribution is at most
$2p_Vp_E^2W_w=2\alpha W_w$.

\item \textbf{One shared radial edge.}  A centered triangle has at most
$2\Delta_E$ partners sharing $wu$ or $wv$.  Joint inclusion has
probability $p_Vp_E^3\le\alpha$, and the partner's weight is at most
two.  Their total contribution is at most
$4\alpha\Delta_EW_w$.

\item \textbf{Four distinct radial edges.}  Joint inclusion has
probability $p_Vp_E^4$.  The degree-weight bound gives
\begin{equation}\label{eq:Ww-bound}
 W_w\le\frac4k\sum_{\{u,v\}\in\calT_w}(d_u+d_v)
 \le\frac{4\Delta_E}k\sum_{u\in N(w)}d_u
 \le\frac{8m\Delta_E}k.
\end{equation}
Each $u\in N(w)$ appears in at most $\Delta_E$ pairs because all
those triangles contain $wu$.  Hence the disjoint-pair contribution is
at most
\[
 p_Vp_E^4W_w^2\le p_Vp_E^4\frac{8m\Delta_E}k W_w
              =8\alpha\Delta_EW_w,
\]
where $p_E^2=k/m$ and $\alpha=p_Vp_E^2$.
\end{enumerate}
Since $\Delta_E\ge1$, the three cases total at most
$14\alpha\Delta_EW_w$.  Summing over centers proves
\eqref{eq:classical-var}.
\end{proof}

\section{Full quantum Noisy Gap Cycle lower bound}
\label{app:ngc-full-proof}
\label{app:random-order}

We give the reductions from random-order Noisy Gap Cycle to Block
Hidden XOR, completing the lower bound from \cref{sec:ngc}.
The graph construction and reductions follow Assadi and
Sundaresan~\cite{AssadiSundaresan2023}; we use the quantum communication
bound of \cref{thm:qbhx-master} in the final step.
\Cref{fig:qro-pipeline} summarizes the reductions.  A small-space
streaming algorithm would give a short quantum message for BHX,
contradicting that bound.

\begin{figure}[t]
\centering
\begin{tabular}{p{0.29\linewidth} p{0.62\linewidth}}
\toprule
Reduction & Information retained \\
\midrule
Random-order NGC to robust NGC & The complete live streaming state is sent once. \\
Robust NGC to DHX & A neighboring hybrid isolates one hidden path parity. \\
DHX to unequal-block BHX & Clean bits belong to Alice; independent uniform indices belong to Bob. \\
\bottomrule
\end{tabular}
\caption{Protocol reductions.  The BHX lower bound propagates back to random-order NGC.}
\label{fig:qro-pipeline}
\end{figure}

\subsection{Noisy Gap Cycle}
\label{subsec:qro-main-result}

We restate the target promise and guarantees to fix the parameters used
throughout the construction.

\begin{definition}[Noisy Gap Cycle]
\label{def:NGC-main}
Let $n$ and $k\ge3$ be integers with $4k\mid n$.  In $\NGC_{n,k}$ the input is an
$n$-vertex graph promised to be in one of the following two cases:
\begin{enumerate}[leftmargin=1.6em]
    \item it contains $n/(2k)$ vertex-disjoint cycles of length $k$; or
    \item it contains $n/(4k)$ vertex-disjoint cycles of length $2k$.
\end{enumerate}
In both cases the graph additionally contains $n/(2k)$ vertex-disjoint
paths of length $k-1$, disjoint from all cycles and from one another.
There are no other edges: the graph is exactly the disjoint union of
these cycles and paths.  In the robust one-way communication problem,
each edge is independently assigned to Alice or Bob with probability
$1/2$.  Each player receives their assigned edge set; Alice sends one
message to Bob, who must distinguish the two cases.  In the quantum
model Alice's message is a quantum state.
\end{definition}

The term \emph{robust} requires success for every fixed promise graph,
averaged over the independent fair edge-ownership coins and protocol
randomness.  In streaming, average instead over a uniform edge
order and algorithm randomness, with output at the end of the pass.

\begin{theorem}[Quantum robust Noisy Gap Cycle]
\label{thm:quantum-robust-ngc-main}
For all sufficiently large admissible $n,k$ satisfying
$k\ge1600\ln n$,
every robust one-way quantum protocol for $\NGC_{n,k}$ with
error at most $1/3$ communicates $\bOm{n}$ qubits.
\end{theorem}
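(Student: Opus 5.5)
We prove \cref{thm:quantum-robust-ngc-main} by composing the reductions outlined in \cref{sec:ngc-chain} and ending with \cref{thm:qbhx-master}. Fix a robust $q$-qubit protocol for $\NGC_{n,k}$ with error at most $1/3$. First, we handle admissibility by padding. Write $k=3t+1+a$ with $a\in\{0,1,2\}$. The extra $a$ vertices per cycle and path are absorbed by subdividing one fixed layer, since subdivided edges can be simulated by the party owning the original edge with fresh fair ownership coins. We then set $m=n/(4k)$, so that $t=\bT{k}\ge 512\ln(m+1)$ once $k\ge1600\ln n$ and $n$ is large.

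Second, we run the neighboring-hybrid argument of Assadi and Sundaresan. Order the hidden path parities of the Multi-Block construction into a chain of hybrid distributions. The chain runs from all-$k$-cycle instances to all-$2k$-cycle instances and has $\bO{m}$ steps. A constant distinguishing gap of $1/3$ then gives some neighboring pair with gap at least $1/(3(m+1))$. The local embedding places one $\DHX_{m+1,t}$ instance inside the graph and lets both parties generate the remaining edges from public randomness plus their own ownership coins. This yields a $q$-qubit protocol for $\DHX_{m+1,t}$ with success at least $1/2+1/(6(m+1))$. Only one execution of the protocol is used and no outcome is conditioned on, so quantum messages pass through unchanged.

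Third, we apply \cref{thm:dhx-quantum} with $w=m+1$. Inside it we reduce DHX to unequal-block BHX. With high probability over the edge partition, at least $r\ge t/512\ge\ln w$ blocks have $w_i\ge w/100$ clean coordinates, meaning Alice owns the bit edges and Bob owns the selecting permutation edges. We fix the remaining classical generation variables to values that retain average advantage at least $1/(8w)$. This is an averaging step over classical variables independent of the message, so it is valid for quantum encodings. \cref{lem:residual-usable} then ensures the residual inputs are product and locally simulable. \cref{thm:qbhx-main} gives $q=\bOm{wr}=\bOm{mt}=\bOm{n/k\cdot k}=\bOm{n}$. When $w$ is bounded we have $m=\bO{1}$, so $k=\bT{n}$, and we use the separate treatment in which the blocks themselves provide $\bOm{n}$ width. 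Concretely, we rebalance so that the block count is $\bT{\log}$ and the width is $\bT{n/\log}$.

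The main obstacle is the residual-law step. After fixing generation variables and conditioning on a good partition, we must show three things: Alice's strings are exactly independent uniform, Bob's indices are independent uniform and independent of them, and each party can regenerate its full edge share. This must hold while losing only a constant factor of the $1/w$ advantage and without conditioning on measurement outcomes. I would first try to define the good event purely in terms of ownership coins and the permutation values on non-clean positions. Since the clean bits and clean index choices are independent of these, conditioning leaves them uniform. Markov's inequality over the fixed variables then preserves half the advantage, which is at least $1/(8w)$ from $1/(6w)$ after a $1-o(1)$ loss for the bad partition event.
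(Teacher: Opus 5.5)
Outside one regime your outline is the paper's proof: public padding, neighboring hybrids, the local DHX embedding, usable blocks with the residual law, then \cref{thm:qbhx-main}. The gap is the bounded-width regime. The theorem must cover it, because admissible $k$ can be as large as $n/4$.

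There $w=m+1=O(1)$, and your proposed rebalancing to width $\Theta(n/\log n)$ with $\Theta(\log n)$ blocks is not available. The width of the Multi-Block graph is tied to the number of cycles, $2m=n/(2k)$. Cycles of length $k=\Theta(n)$ force depth $\Theta(n)$. Building long cycles from a wide, shallow construction would require chaining many groups with public edges. That turns the hidden bit into a XOR of many selected coordinates per block, which is no longer a BHX instance. If one chain used every group, the per-block XOR would be the full parity of $x^i$, which Alice can send in one bit.

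Your usable-block route also fails here. The condition $w_i\ge w/100$ only guarantees $w_i\ge 1$. A width-one residual block hides no index, since Alice knows the selected bit. Accordingly, \cref{thm:qbhx-main} is stated only for sufficiently large $w$, and \cref{thm:qbhx-master} gives nothing when $W_{\mathrm{geo}}=1$.

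What is needed is the argument of \cref{lem:bounded-width-dhx}. Keep width $w\ge 2$ and all $t=\Theta(k)=\Theta(n)$ blocks. Call a block \emph{fully clean} if all $w$ of its coordinates are clean; this happens independently with probability $2^{-6w}\ge 2^{-6W}$. With probability $1-\exp(-\Omega_W(t))$ there are $\Omega_W(t)$ fully clean blocks. Since $1/(6w)$ is now a constant advantage, conditioning on this event and fixing everything outside these blocks by averaging leaves advantage at least $1/(8w)$. In a fully clean block Alice owns all XOR edges and Bob owns all permutation edges. The residual problem is therefore equal-width BHX with $r=\Omega_W(t)$ blocks of width $w$. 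Then \cref{thm:qbhx-master} with $\lambda=1$ gives $q\ge r\log_2 w-2\log_2(4w)=\Omega(t)=\Omega(n)$.

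Two smaller points. First, fixing classical variables by averaging loses no advantage; only conditioning on the good-partition event does. So ``Markov halving'' is unnecessary, and it would actually give $1/(12w)$ rather than the $1/(8w)$ that \cref{thm:qbhx-main} assumes. Second, the owners of filler and padding edges must be drawn with public coins. The padding edges must be public identity edges, as in the paper's appended layers, because the owner of a private edge cannot give an independent fair owner to a subdivided half of it.
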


The corresponding streaming theorem follows from the robust-to-random
order transfer proved in \cref{subsec:robust-to-streaming-main}.

\begin{theorem}[Quantum random-order Noisy Gap Cycle]
\label{thm:quantum-random-order-ngc-main}
For all sufficiently large admissible $n,k$ satisfying
$k\ge1600\ln n$,
every one-pass quantum streaming algorithm that solves $\NGC_{n,k}$ with
error at most $1/3$ in uniformly random edge order requires $\bOm{n}$
space.
\end{theorem}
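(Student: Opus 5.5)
The plan is to derive \cref{thm:quantum-random-order-ngc-main} from \cref{thm:quantum-robust-ngc-main} through the robust-to-random-order transfer of \cref{prop:robust-transfer}. Suppose an $S$-space one-pass quantum streaming algorithm solves $\NGC_{n,k}$ with error at most $1/3$ in uniformly random edge order. We build a robust one-way protocol as follows. Each edge of the fixed promise graph is assigned to Alice or Bob by the independent fair ownership coins. Using independent public randomness, Alice orders her share uniformly and Bob orders his share uniformly. Alice runs the algorithm on her ordered share and sends the complete live state, including all classical control, to Bob. Bob continues the simulation on his ordered share and outputs the algorithm's decision.

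The first step is to check that the concatenated order is exactly uniform over all permutations of the edge set. Let $a$ be the size of Alice's share and $m_G$ the number of edges. Conditional on $a$, Alice's set is a uniform $a$-subset of the edges, because the coins are i.i.d.\ fair. The two local orders are uniform and independent. So every permutation $\pi$ arises from exactly one triple (Alice's subset, her order, Bob's order), namely the one whose subset is the first $a$ entries of $\pi$. Each such triple has probability $\binom{m_G}{a}^{-1}(a!)^{-1}((m_G-a)!)^{-1}=1/m_G!$, so the conditional order is uniform for every $a$, and hence unconditionally. The protocol therefore reproduces the streaming algorithm's output distribution on each fixed promise graph, and its error averaged over ownership coins is at most $1/3$.

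The second step is to control the message size. The message has dimension at most $2^S$, which requires that continuation depends only on the transmitted state, the public parameters, and Bob's own edges. Position counters and any measurement outcomes are already part of the live state, and Bob does not need to know Alice's share size beyond what the state records. Applying \cref{thm:quantum-robust-ngc-main} then gives $S=\bOm{n}$.

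The main point needing care is the exact uniformity in the first step. The rest is bookkeeping about counting all live memory, which is precisely what \cref{prop:robust-transfer} records.
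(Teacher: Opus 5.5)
Your proposal is correct and takes the same route as the paper. You use the robust-to-random-order transfer (fair edge partition, independent uniform local orders, one transfer of the complete live state) and then apply the robust NGC lower bound; your conditioning on $|E_A|=a$ is a slightly sharper form of the paper's sum over cut positions, and the paper also notes that Bob can recover the cut position from $|E_B|$ (both promise cases have exactly $n-n/(2k)$ edges) and that an early-emitted decision costs only $O(1)$ extra space to carry to the end.
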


To prove the robust theorem we need the structured hard distribution of
Assadi and Sundaresan.  We spell out the construction because it explains
both the definition of DHX and the later clean-coordinate reduction.

\subsection{The Assadi--Sundaresan graph construction}
\label{subsec:as-construction}

\paragraph{Grouped layered graphs.}
Fix a width $w$ and a depth $d$.  A grouped layered graph has layers
$V^1,\ldots,V^d$,
each containing $2w$ vertices, partitioned into $w$ ordered pairs
\[
    g^\ell_j=(a^\ell_j,b^\ell_j),
    \qquad
    j\in[w],\ \ell\in[d].
\]
Between every two consecutive layers there is a perfect matching with the
property that the two vertices of any group $g^\ell_j$ are matched into
the two vertices of a single group in the next layer.  Consequently, the
two vertices of every first-layer group start two vertex-disjoint paths
through the graph.

All widths, depths, and multiplicities in these constructions are positive
integers.  Two elementary matchings are used to build the hard instances.

\begin{definition}[XOR matching]
\label{def:xor-matching}
For $x\in\bool^w$, $\operatorname{XOR\text{-}Matching}(x)$ is the
width-$w$, depth-$2$ grouped layered graph in which, for every $j\in[w]$,
\begin{itemize}[leftmargin=1.6em]
    \item if $x_j=0$, the matching contains
    $(a^1_j,a^2_j)$ and $(b^1_j,b^2_j)$;
    \item if $x_j=1$, the matching contains
    $(a^1_j,b^2_j)$ and $(b^1_j,a^2_j)$.
\end{itemize}
Thus the matching stays inside group $j$ and swaps the two paths exactly
when $x_j=1$.
\end{definition}

\begin{definition}[Permutation matching]
\label{def:perm-matching}
For a permutation $\sigma\in S_w$,
$\operatorname{Perm\text{-}Matching}(\sigma)$ is the width-$w$,
depth-$2$ grouped layered graph containing
\[
    (a^1_j,a^2_{\sigma(j)})
    \qquad\text{and}\qquad
    (b^1_j,b^2_{\sigma(j)})
\]
for every $j\in[w]$.
\end{definition}

The basic four-layer gadget conjugates an XOR matching by a random
permutation.

\begin{definition}[Block]
\label{def:as-block}
For $x\in\bool^w$ and $\sigma\in S_w$, define
\[
    \operatorname{Block}(x,\sigma)
    :=
    \operatorname{Perm\text{-}Matching}(\sigma)
    \circ
    \operatorname{XOR\text{-}Matching}(x)
    \circ
    \operatorname{Perm\text{-}Matching}(\sigma^{-1}),
\]
where consecutive gadgets are concatenated by identifying the last layer
of one with the first layer of the next.
\end{definition}

\begin{observation}[Path behavior in one block]
\label{obs:block-path-behavior}
Let $B=\operatorname{Block}(x,\sigma)$, with layers $1,2,3,4$.  For every
$j\in[w]$ the two paths beginning at $g^1_j$ end at $g^4_j$.  More
precisely,
\begin{itemize}[leftmargin=1.6em]
    \item if $x_{\sigma(j)}=0$, then
    $a^1_j$ is connected to $a^4_j$ and $b^1_j$ to $b^4_j$;
    \item if $x_{\sigma(j)}=1$, then
    $a^1_j$ is connected to $b^4_j$ and $b^1_j$ to $a^4_j$.
\end{itemize}
\end{observation}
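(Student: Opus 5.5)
We prove \cref{obs:block-path-behavior} by following each path through the three depth-two gadgets of $\operatorname{Block}(x,\sigma)$ in order. Because consecutive gadgets are glued by identifying layers, the path from a first-layer vertex is the composition of the three perfect matchings, read one vertex at a time. Layers $1\to2$ come from $\operatorname{Perm\text{-}Matching}(\sigma)$, layers $2\to3$ from $\operatorname{XOR\text{-}Matching}(x)$, and layers $3\to4$ from $\operatorname{Perm\text{-}Matching}(\sigma^{-1})$.

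\textbf{Step 1 (first permutation).} By \cref{def:perm-matching}, the first permutation sends $a^1_j$ to $a^2_{\sigma(j)}$ and $b^1_j$ to $b^2_{\sigma(j)}$. It preserves the $a/b$ labels and moves the group index $j$ to $\sigma(j)$.

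\textbf{Step 2 (XOR matching).} By \cref{def:xor-matching}, applied in group $\sigma(j)$, the matching keeps the group index. It keeps the labels when $x_{\sigma(j)}=0$, sending $a^2_{\sigma(j)}\mapsto a^3_{\sigma(j)}$ and $b^2_{\sigma(j)}\mapsto b^3_{\sigma(j)}$. It swaps them when $x_{\sigma(j)}=1$, sending $a^2_{\sigma(j)}\mapsto b^3_{\sigma(j)}$ and $b^2_{\sigma(j)}\mapsto a^3_{\sigma(j)}$.

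\textbf{Step 3 (inverse permutation).} The third gadget is $\operatorname{Perm\text{-}Matching}(\sigma^{-1})$. It sends $a^3_{i}$ to $a^4_{\sigma^{-1}(i)}$ and $b^3_i$ to $b^4_{\sigma^{-1}(i)}$. Taking $i=\sigma(j)$ returns the group index to $j$ and again preserves the labels.

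Composing the three steps, $a^1_j$ reaches $a^4_j$ and $b^1_j$ reaches $b^4_j$ when $x_{\sigma(j)}=0$. When $x_{\sigma(j)}=1$, $a^1_j$ reaches $b^4_j$ and $b^1_j$ reaches $a^4_j$. In either case both paths starting at $g^1_j$ end in $g^4_j$.

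The argument is entirely bookkeeping. The only point that needs care is that the orientation of the third gadget is $\sigma^{-1}$ applied from layer $3$ to layer $4$. This orientation is what makes the group index return to $j$ rather than to $\sigma^2(j)$, so we will fix that convention explicitly when composing.
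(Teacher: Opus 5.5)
Your proposal is correct and follows the same route as the paper: the two permutation gadgets move group $j$ to $\sigma(j)$ and back to $j$ while preserving the $a/b$ labels, and the middle XOR matching swaps the two vertices exactly when $x_{\sigma(j)}=1$. You write out explicitly the vertex-by-vertex bookkeeping that the paper's two-line proof leaves implicit.
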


\begin{proof}
The first and third matchings relabel the group index by $\sigma$ and
$\sigma^{-1}$, so the final group is again $j$.  The only operation that
can exchange the two vertices inside the group is the middle XOR matching,
and it exchanges them exactly when $x_{\sigma(j)}=1$.
\end{proof}

Now concatenate $t$ blocks.

\begin{definition}[Multi-Block graph]
\label{def:multi-block}
Let
\[
    X=(x^1,\ldots,x^t),
    \qquad x^i\in\bool^w,
\]
and
\[
    \Sigma=(\sigma^1,\ldots,\sigma^t),
    \qquad \sigma^i\in S_w.
\]
The graph
\[
    G=\operatorname{Multi\text{-}Block}(X,\Sigma)
\]
is the concatenation
\[
    \operatorname{Block}(x^1,\sigma^1)
    \circ\cdots\circ
    \operatorname{Block}(x^t,\sigma^t).
\]
It has width $w$ and depth $3t+1$.  For each $j\in[w]$, define
\[
    z_G(j)
    :=
    \bigoplus_{i=1}^t x^i_{\sigma^i(j)}.
\]
\end{definition}

\begin{observation}[The hidden XOR is a path-connectivity bit]
\label{obs:multiblock-path-parity}
Let $G=\operatorname{Multi\text{-}Block}(X,\Sigma)$ have depth
$d=3t+1$.  For every $j\in[w]$:
\begin{itemize}[leftmargin=1.6em]
    \item if $z_G(j)=0$, the two paths connect
    $a^1_j$ to $a^d_j$ and $b^1_j$ to $b^d_j$;
    \item if $z_G(j)=1$, the two paths connect
    $a^1_j$ to $b^d_j$ and $b^1_j$ to $a^d_j$.
\end{itemize}
\end{observation}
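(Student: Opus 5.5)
The plan is induction on the number of blocks $t$, with \cref{obs:block-path-behavior} serving as the one-block step. Throughout, I track for each group index $j$ the pair of paths that starts at $g^1_j$. The claim to carry through the induction is the following: after the first $i$ blocks, which end at layer $3i+1$, the two paths starting at $a^1_j$ and $b^1_j$ arrive at the two vertices of the group $g^{3i+1}_j$. They arrive either in order ($a\to a$, $b\to b$) or crossed. They are crossed exactly when $\bigoplus_{i'\le i}x^{i'}_{\sigma^{i'}(j)}=1$. The base case $i=0$ is trivial, since both paths sit at $g^1_j$ uncrossed and the empty XOR is $0$.

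For the inductive step, I use the fact that block $i+1$ occupies layers $3i+1,\dots,3i+4$. Applying \cref{obs:block-path-behavior} to it, with local layers relabeled, gives the following. The two vertices of $g^{3i+1}_j$ are carried to the two vertices of $g^{3i+4}_j$, and the $a$/$b$ roles are swapped exactly when $x^{i+1}_{\sigma^{i+1}(j)}=1$. Since the concatenation identifies the last layer of one block with the first layer of the next, the paths compose. Composing the current arrangement with this swap updates the crossing indicator by XOR with $x^{i+1}_{\sigma^{i+1}(j)}$. This is exactly the updated parity.

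At $i=t$ the final layer is $d=3t+1$, and the parity equals $z_G(j)$ by \cref{def:multi-block}. This yields the two bullet cases. I should also note that these are the entire connected paths, because every internal layer has degree two: it meets one matching edge on each side, so the paths do not branch. The only point needing care, and it is minor, is the bookkeeping of layer indices under concatenation. I need to confirm that the group index $j$ is preserved at block boundaries, which is what \cref{obs:block-path-behavior} guarantees. I also need to confirm that no relabeling by $\sigma^i$ leaks across blocks; it does not, because each block conjugates by $\sigma^i$ and $(\sigma^i)^{-1}$.
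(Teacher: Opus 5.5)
Your proposal is correct and matches the paper's proof: the paper also applies \cref{obs:block-path-behavior} to each block and observes that the composed paths are crossed exactly when an odd number of blocks swap, i.e.\ when $\bigoplus_i x^i_{\sigma^i(j)}=1$. Your induction, with its explicit tracking of layers $3i+1$ through $3i+4$ and of the group index across block boundaries, simply spells out the bookkeeping behind that two-line argument.
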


\begin{proof}
By \cref{obs:block-path-behavior}, block $i$ swaps the two paths from group
$j$ exactly when $x^i_{\sigma^i(j)}=1$.  The composition swaps the paths
exactly when an odd number of blocks swap, namely when
$\bigoplus_i x^i_{\sigma^i(j)}=1$.
\end{proof}

\subsubsection{The hard distribution for Noisy Gap Cycle}
\label{subsubsec:ngc-hard-distribution}

Assume for the moment that
$k=3t+1, \qquad n=4km$,
so a width-$2m$ Multi-Block graph of depth $k$ has exactly $n$ vertices.
The hard distribution $\mu_{\NGC{}}$ is generated as follows.
\begin{enumerate}[leftmargin=1.6em]
    \item Sample $\theta\in\bool$ uniformly.
    \item Sample
    \[
        X=(x^1,\ldots,x^t)\in(\bool^{2m})^t
    \]
    and
    \[
        \Sigma=(\sigma^1,\ldots,\sigma^t)\in(S_{2m})^t
    \]
    uniformly subject to
    \[
        z_G(j)=\theta
        \qquad\text{for every }j\in[m],
    \]
    where $G=\operatorname{Multi\text{-}Block}(X,\Sigma)$.
    \item For every $j\in[m]$, add the two auxiliary edges
    \[
        (a^k_j,a^1_j)
        \qquad\text{and}\qquad
        (b^k_j,b^1_j).
    \]
\end{enumerate}
The groups $j>m$ receive no auxiliary edges.

\begin{lemma}[Validity of the NGC hard distribution]
\label{lem:ngc-hard-validity}
Every graph in the support of $\mu_{\NGC{}}$ is a valid instance of
$\NGC_{n,k}$.  If $\theta=0$ it contains $n/(2k)$ disjoint $k$-cycles;
if $\theta=1$ it contains $n/(4k)$ disjoint $2k$-cycles.  In either case
it also contains $n/(2k)$ disjoint paths of length $k-1$.
\end{lemma}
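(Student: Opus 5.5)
The plan is to read off the component structure of $G=\operatorname{Multi\text{-}Block}(X,\Sigma)$ directly from \cref{obs:multiblock-path-parity}, and then check how the auxiliary edges close these paths. We first count vertices. With $k=3t+1$ and $n=4km$, the graph has width $w=2m$ and depth $k$, so each layer has $2w=4m$ vertices. The total is $4mk=n$ vertices, and the relevant target counts are $n/(2k)=2m$ and $n/(4k)=m$.

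\textbf{Step 1: the layered graph is a disjoint union of $2w$ paths.} Consecutive layers are joined by perfect matchings (each XOR, permutation and identification step produces a perfect matching between consecutive layers). Hence every vertex in an interior layer has degree exactly $2$, and every vertex in layer $1$ or layer $k$ has degree $1$. Since all edges go between consecutive layers, the graph is therefore a disjoint union of $2w$ vertex-disjoint paths. Each path starts in layer $1$, visits each layer once, and ends in layer $k$, so it has $k$ vertices and $k-1$ edges. By \cref{obs:multiblock-path-parity}, the two paths starting at $g^1_j$ end at the two vertices of $g^k_j$. They go straight ($a\to a$, $b\to b$) if $z_G(j)=0$ and crosswise if $z_G(j)=1$.

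\textbf{Step 2: groups $j>m$.} These $m$ groups receive no auxiliary edges, so they contribute $2m=n/(2k)$ disjoint paths of length $k-1$. Their parity $z_G(j)$ is unconstrained, but it does not matter: either way, the group contributes two paths.

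\textbf{Step 3: groups $j\le m$.} The sampling forces $z_G(j)=\theta$ for these groups. Each auxiliary edge joins a layer-$1$ vertex to a layer-$k$ vertex, and $k\ge3$, so it is not already an edge of $G$. It raises the degrees of two path endpoints from $1$ to $2$, which keeps the graph of maximum degree two.
\begin{itemize}
\item If $\theta=0$, the path from $a^1_j$ to $a^k_j$ together with $(a^k_j,a^1_j)$ forms a cycle with $k$ vertices, and similarly for $b$. This gives $2m=n/(2k)$ disjoint $k$-cycles.
\item If $\theta=1$, the cycle runs $a^1_j\to b^k_j$ (path), then $b^k_j\to b^1_j$ (auxiliary edge), then $b^1_j\to a^k_j$ (path), then $a^k_j\to a^1_j$ (auxiliary edge). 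Its length is $2(k-1)+2=2k$, giving $m=n/(4k)$ disjoint $2k$-cycles.
\end{itemize}

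Disjointness across groups follows from Step 1. There are no other edges, so the graph is exactly this union, as \cref{def:NGC-main} requires. The only point needing care is Step 1's claim that each gadget yields a perfect matching between consecutive layers. This is immediate from \cref{def:xor-matching,def:perm-matching}, since the identification of layers only concatenates matchings; there is no real obstacle.
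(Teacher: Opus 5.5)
Your proof is correct and follows essentially the same route as the paper. Both read off the two paths per group from \cref{obs:multiblock-path-parity}, and both close the first $m$ groups with the auxiliary edges into two $k$-cycles or one $2k$-cycle according to $\theta$. Both also leave the remaining $m$ groups as $2m=n/(2k)$ paths and derive disjointness from the perfect-matching structure between layers. Your vertex count and degree check only make explicit details the paper leaves implicit.
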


\begin{proof}
For every $j\le m$, \cref{obs:multiblock-path-parity} gives two paths of
length $k-1$ connecting the first and final copies of group $j$.
If $z_G(j)=0$, the paths are parallel, and the two auxiliary edges close
them separately into two $k$-cycles.  Hence the first $m$ groups produce
$2m=n/(2k)$ disjoint $k$-cycles.

If $z_G(j)=1$, the paths are crossed.  The same two auxiliary edges then
join the two paths into a single cycle of length $2k$.  Hence the first
$m$ groups produce $m=n/(4k)$ disjoint $2k$-cycles.

For every $j>m$ there are no auxiliary edges, so the two paths beginning
at $a^1_j$ and $b^1_j$ remain paths of length $k-1$.  There are $m$ such
groups and therefore $2m=n/(2k)$ noisy paths.  All components are
disjoint because every inter-layer graph is a perfect matching.
\end{proof}

The hard distribution forces the first $m$ parities to agree.  The
following hybrid argument isolates one of them.

\subsection{Decorrelation: from Noisy Gap Cycle to DHX}
\label{app:qngc-hybrids}
\label{subsec:ngc-to-dhx-main}

\begin{definition}[Distributional Hidden XOR]
\label{def:DHX-main}
Fix positive integers $w,t$.  In $\DHX_{w,t}$, sample independently and uniformly
\[
    X=(x^1,\ldots,x^t),
    \qquad x^i\in\bool^w,
\]
and
\[
    \Sigma=(\sigma^1,\ldots,\sigma^t),
    \qquad \sigma^i\in S_w,
\]
and form
\[
    G=\operatorname{Multi\text{-}Block}(X,\Sigma).
\]
Every edge of $G$ is then independently assigned to Alice or Bob with
probability $1/2$.  Alice receives her labeled edges, Bob receives his
labeled edges, Alice sends one message, and Bob must output
\[
    z_G(1)
    =
    \bigoplus_{i=1}^t x^i_{\sigma^i(1)}.
\]
Success probability is over the random choice of $(X,\Sigma)$, the random
edge partition, and the protocol's randomness.
\end{definition}

Unlike NGC, DHX has independent strings and permutations and asks for
only one path parity.  The reduction pays for this independence by
retaining advantage only $\bT{1/w}$.  Continue to assume
$k=3t+1$ and $n=4km$.

For $h\in\set{0,\ldots,m}$ let $\mathcal {H}(h)$ be the distribution obtained
by sampling a width-$2m$ Multi-Block instance uniformly subject to
\[
    z_G(j)=
    \begin{cases}
        0,&j\le h,\\
        1,&h<j\le m,
    \end{cases}
\]
and then adding the same auxiliary edges as in
\cref{subsubsec:ngc-hard-distribution}.  Finally, every edge is robustly
partitioned between Alice and Bob.
Then $\mathcal {H}(0)$ and $\mathcal {H}(m)$ are precisely the two endpoint
cases of $\mu_{\NGC{}}$ (with opposite values of $\theta$).
Intermediate hybrids need not satisfy the NGC promise; the assumed
protocol is extended to those inputs with the same bounded message
register; no correctness on the intermediate hybrids is required.

\begin{lemma}[Quantum neighboring hybrid]
\label{lem:neighboring-hybrid-main}
\label{lem:app-quantum-ngc-hybrid}
Suppose a fixed-coin $q$-qubit protocol distinguishes the two endpoint
distributions $\mathcal {H}(0)$ and $\mathcal {H}(m)$, with equal prior
probabilities, with success at least $2/3$.  Then for some $h^\star\in[m]$, the neighboring pair
$\mathcal {H}(h^\star-1), \qquad \mathcal {H}(h^\star)$
can be distinguished, again with equal priors, with success at least
$\frac{1}{2}+\frac{1}{6m}$.
\end{lemma}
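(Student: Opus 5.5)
Our plan is a standard telescoping hybrid argument expressed through trace distance, so that the intermediate hybrids never have to carry a correctness guarantee.

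\textbf{Setup.} Fix the public coins. The protocol then induces, for each hybrid $h$, an average message-plus-Bob-share state. Since Bob's share is classical, we can write it as a classical--quantum state
\[
 \tau_h:=\E[(G,\text{partition})\sim\mathcal H(h)]{\ket{\text{Bob's share}}\!\bra{\text{Bob's share}}\otimes\rho_{\text{Alice's share}}}.
\]
Alice's message depends only on her share. Because we have fixed coins and include private encoding randomness in $\rho$, each $\tau_h$ is a well-defined density operator. This holds for every $h$, including the intermediate hybrids that violate the NGC promise: the protocol is simply run as a channel on those inputs.

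\textbf{Step 1: trace distance between the endpoints.} By the Helstrom bound, the optimal probability of distinguishing two equiprobable states $\tau,\tau'$ is $\tfrac12+\tfrac14\trnorm{\tau-\tau'}$. Bob's actual decoding is one particular measurement on the classical--quantum state. Its success at least $2/3$ on the endpoint pair therefore gives
\[
\trnorm{\tau_0-\tau_m}\ge 4\left(\tfrac23-\tfrac12\right)=\tfrac23.
\]

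\textbf{Step 2: telescoping.} Write $\tau_0-\tau_m=\sum_{h=1}^m(\tau_{h-1}-\tau_h)$. The triangle inequality for the trace norm gives
\[
\sum_{h=1}^m\trnorm{\tau_{h-1}-\tau_h}\ge\tfrac23,
\]
so some $h^\star\in[m]$ satisfies $\trnorm{\tau_{h^\star-1}-\tau_{h^\star}}\ge 2/(3m)$.

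\textbf{Step 3: back to a protocol.} Apply the Helstrom measurement for the pair $\tau_{h^\star-1},\tau_{h^\star}$. Its success probability is at least
\[
\tfrac12+\tfrac14\cdot\tfrac{2}{3m}=\tfrac12+\tfrac1{6m}.
\]
This measurement acts on Alice's $q$-qubit message together with Bob's classical share. It is therefore realized by a one-way protocol with the same message register, and only Bob's decoding changes.

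\textbf{Main obstacle.} The only subtle point is making sure the optimal distinguisher is actually implementable by Bob. Because Bob's share is classical and block-diagonal in the state $\tau_h$, the Helstrom measurement decomposes into a share-dependent measurement on the message alone, so it can be carried out by Bob. A second point is that the hybrid distributions can correlate the shares. This causes no difficulty: the argument treats each $\tau_h$ as a single joint state and never needs a product structure across hybrids.
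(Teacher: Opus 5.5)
Your proposal is correct and follows essentially the same argument as the paper: Bob's pre-decoding cq-states, the Helstrom bound giving trace distance at least $1/3$ between $\mathcal H(0)$ and $\mathcal H(m)$, telescoping over the $m$ neighboring pairs, and converting the resulting distance $1/(3m)$ back into success $\frac12+\frac1{6m}$. Your added remark, that block-diagonality in Bob's classical share lets him implement the Helstrom measurement by a share-dependent measurement on the message, is a detail the paper leaves implicit.
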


\begin{proof}
Let $\omega_h$ denote Bob's cq-state immediately before decoding under
$\mathcal {H}(h)$, consisting of his classical input registers and the
received quantum message: if $A,B$ denote the two local edge sets and
$\rho_A$ is Alice's encoding, then
$\omega_h=\E[(A,B)\sim\mathcal{H}(h)]{\ket{B}\!\bra{B}\otimes\rho_A}$.
Write $D(\rho,\sigma)=\tfrac12\trnorm{\rho-\sigma}$ for trace distance.
The optimal probability of distinguishing equiprobable states $\rho$
and $\sigma$ is $(1+D(\rho,\sigma))/2$.  Since the original decoder
succeeds with probability at least $2/3$, we have
$D(\omega_0,\omega_m)\ge1/3$.  Thus
\[
 \frac13\le\sum_{h=1}^mD(\omega_{h-1},\omega_h),
\]
so some $h^\star$ has distance at least $1/(3m)$.
For this pair the optimal distinguishing probability is therefore at
least $1/2+1/(6m)$.
This argument uses the pre-decoding cq-states; it does not retain a copy
of a measured message.

\end{proof}

The point of neighboring hybrids is that they differ in only one hidden
parity, $z_G(h^\star)$.  A DHX instance can be embedded precisely into
that parity while all other constrained parities are filled with public
randomness.

\label{app:qngc-reduction}
\begin{lemma}[Local embedding of DHX into a neighboring hybrid]
\label{lem:ngc-dhx-main}
\label{lem:app-ngc-to-dhx}
Let
\[
    k=3t+1,
    \qquad
    n=4km,
    \qquad
    w=m+1.
\]
A $q$-qubit protocol distinguishing the neighboring pair from
\cref{lem:neighboring-hybrid-main} with success
$1/2+1/(6m)$ induces a $q$-qubit protocol for $\DHX_{w,t}$ with success at
least
$\frac{1}{2}+\frac{1}{6w}$.
\end{lemma}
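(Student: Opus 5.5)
The plan is to build the neighboring-hybrid instance as a \emph{local direct sum}. It has two parts: one width-$w$ DHX instance, and a publicly sampled width-$(m-1)$ instance that carries the constrained parities. Fix the index $h^\star$ from \cref{lem:neighboring-hybrid-main}. Let $C=[m]\setminus\{h^\star\}$ be the $m-1$ groups whose parity is prescribed in both hybrids: $0$ for $j<h^\star$ and $1$ for $h^\star<j\le m$. Let $F=\{h^\star\}\cup\{m+1,\ldots,2m\}$, of size $m+1=w$, be the groups on which the two hybrids impose no constraint apart from $z_G(h^\star)$. The hybrids $\mathcal H(h^\star-1)$ and $\mathcal H(h^\star)$ are the same distribution conditioned on $z_G(h^\star)=1$ and $z_G(h^\star)=0$, respectively. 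So it suffices to sample the unconditioned law $\mathcal H^\ast$ exactly, with $z_G(h^\star)$ equal to the DHX target bit $z(1)$. The two hybrids then appear with equal priors, because $z(1)$ is a uniform bit in DHX.

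The first step is a structural decomposition of the width-$2m$ law, which I would verify blockwise. Let $\sigma\in S_{2m}$ be uniform and condition on the image set $P=\sigma(C)$. This set is uniform of size $m-1$. Given $P$, the restrictions $\sigma|_C\colon C\to P$ and $\sigma|_F\colon F\to[2m]\setminus P$ are independent uniform bijections. Next, the constraints $z_G(j)=c_j$ for $j\in C$ involve only the string coordinates indexed by $P$ in each block. Hence, under the conditioned uniform law, the coordinates of each $x^i$ outside $P_i$ remain independent uniform bits, independent of the $C$-part. This shows that $\mathcal H^\ast$ factors as a product of two independent parts. The first is a width-$(m-1)$ Multi-Block instance on $C$ with prescribed parities, with positions relabeled by the $P_i$. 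The second is a uniform unconstrained width-$w$ Multi-Block instance on $F$ with positions relabeled by $[2m]\setminus P_i$. This second part is exactly a DHX instance once the group $h^\star$ is identified with DHX group $1$ and fixed bijections are used for the position sets. The auxiliary edges are the fixed edges on the groups in $[m]$.

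The second step is the protocol. Using public coins, Alice and Bob sample the sets $P_i$ and the identifying bijections. They sample the whole constrained $C$-part, including its XOR and permutation edges, and the auxiliary edges. They also sample fair ownership coins for all of these edges, and each party keeps its own share. Each party then maps its DHX edges, whose ownership is already a fair independent partition, into the corresponding NGC edges through the public relabeling. This is a fixed local map, so the combined edge partition is again independent and fair. Alice runs the hybrid-distinguisher's encoder on her combined edge set and sends the same $q$-qubit message. Bob runs the decoder on his combined edge set and outputs the bit that the decoder associates with $z_G(h^\star)$. The equal-prior distinguishing success of the assumed protocol is then exactly the success of this DHX protocol, which is at least $1/2+1/(6m)\ge1/2+1/(6w)$. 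The public coins can finally be fixed by averaging.

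The main obstacle is the factorization in the first step. One must check carefully that conditioning on the $C$-parities leaves the $F$-part exactly uniform and independent, and that the relabeling of the $F$-part yields uniform permutations in $S_w$ and uniform strings in $\{0,1\}^w$ for DHX. I would prove this by counting: for each fixed $(P_i,\sigma|_C,x|_{P_i})$ satisfying the constraints, the number of completions $(\sigma|_F,x|_{\overline{P_i}})$ does not depend on the constrained data.
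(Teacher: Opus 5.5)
Your proposal is correct and follows essentially the same route as the paper. The paper likewise publicly samples an injective partial map on $[m]\setminus\{h^\star\}$ together with bits on its image, uniformly subject to the hybrid parities. It fills the remaining $m+1$ domain positions, with $h^\star$ playing the role of DHX group $1$, and the complementary image positions through fixed bijections from the DHX permutations and strings; filler and auxiliary edges get public ownership, and the public seed is fixed by averaging at the end. Your counting argument for the factorization is a more explicit version of the paper's check that $\Sigma$ is uniform and that $X$ is uniform given $\Sigma$ subject to the constraints away from $h^\star$.
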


\begin{proof}
Let
\[
    G_{\DHX{}}
    =
    \operatorname{Multi\text{-}Block}(Y,\Phi)
\]
be the DHX instance, where
$Y=(y^1,\ldots,y^t)$ and
$\Phi=(\phi^1,\ldots,\phi^t)$, each $y^i\in\bool^{m+1}$ and
$\phi^i\in S_{m+1}$.
We construct a width-$2m$ Multi-Block instance
$G'=\operatorname{Multi\text{-}Block}(X,\Sigma)$.  After adding the
public closing edges and the robust ownership described below, the
augmented, partitioned instance is distributed as the uniform mixture of
$\mathcal {H}(h^\star-1)$ and $\mathcal {H}(h^\star)$.

For every block $i$, use public randomness to choose a uniformly random
injective partial map
\[
    \sigma^i:
    [m]\setminus\set{h^\star}
    \longrightarrow [2m].
\]
Let
\[
    I_i^\circ
    :=
    \set{\sigma^i(j):j\in[m]\setminus\set{h^\star}}.
\]
Thus $\abs{I_i^\circ}=m-1$ and its complement has size $m+1=w$.
Using further public randomness, choose the bits
$x^i_u$ for $u\in I_i^\circ$, for all $i$, uniformly subject to
\[
    \bigoplus_{i=1}^t x^i_{\sigma^i(j)}
    =
    \begin{cases}
        0,&j<h^\star,\\
        1,&h^\star<j\le m.
    \end{cases}
\]
These are exactly the hybrid constraints away from $h^\star$.

Fix publicly a bijection
\[
    f_i:[m+1]\longrightarrow [2m]\setminus I_i^\circ.
\]
The remaining domain positions of $\sigma^i$ are
\[
    \set{h^\star}\cup\set{m+1,\ldots,2m},
\]
which also has size $m+1$.  Fill these positions using the DHX permutation
$\phi^i$: set
\[
    \sigma^i(h^\star)=f_i(\phi^i(1))
\]
and, for $s\in\set{2,\ldots,m+1}$, set
\[
    \sigma^i(m+s-1)=f_i(\phi^i(s)).
\]
Finally fill the still-unassigned coordinates of $x^i$ using the DHX
string:
\[
    x^i_{f_i(s)}=y^i_s,
    \qquad s\in[m+1].
\]

We verify the two properties needed for the reduction.
First, $\Sigma$ is a tuple of independent uniform permutations of
$[2m]$.  Indeed, the images of the first $m-1$ fixed domain positions are
a uniformly random ordered $(m-1)$-tuple of distinct values, and the
remaining images are filled by an independent uniform permutation
$\phi^i$ of the remaining $m+1$ values.  Conditional on $\Sigma$, the
coordinates of $X$ away from the prescribed hybrid parities are uniform:
the publicly sampled coordinates are uniform subject only to the required
constraints, while the coordinates filled from $Y$ are independent
uniform bits.  Thus the augmented, partitioned $G'$ has exactly the mixture distribution
\[
    \frac{1}{2}\mathcal {H}(h^\star-1)
    +
    \frac{1}{2}\mathcal {H}(h^\star),
\]
because the one unconstrained parity $z_{G'}(h^\star)$ remains uniform.

Second, the unconstrained parity is precisely the DHX answer.  For every
block $i$,
\[
    x^i_{\sigma^i(h^\star)}
    =
    x^i_{f_i(\phi^i(1))}
    =
    y^i_{\phi^i(1)},
\]
so
\[
    z_{G'}(h^\star)
    =
    \bigoplus_{i=1}^t y^i_{\phi^i(1)}
    =
    z_{G_{\DHX{}}}(1).
\]
The two neighboring hybrids differ exactly in the value of this bit, up
to the fixed convention of which hybrid is labeled $0$.

The construction is local under the robust partition.  Edges originating
from the DHX instance are only relabeled, so Alice and Bob can relabel
their own edges independently.  All filler edges and all auxiliary edges
are determined by public randomness, and their owners can also be sampled
publicly and independently.  Therefore no communication is needed before
Alice prepares the message of the assumed neighboring-hybrid protocol.
The quantum message is prepared once, after the classical embedding; it is
never copied or rewound.

Running the neighboring-hybrid distinguisher therefore recovers
$z_{G_{\DHX{}}}(1)$ with success
\[
    \frac{1}{2}+\frac{1}{6m}
    >
    \frac{1}{2}+\frac{1}{6(m+1)}
    =
    \frac{1}{2}+\frac{1}{6w}.
\]
The message length remains $q$.  The embedding introduced public coins
independent of the DHX input and its partition.  Before applying the DHX
lower bound, fix a seed whose average success under that input law is at
least the average over these coins.  This preserves the DHX input law and
message size; the subsequent argument needs only that average success,
not that the fixed-seed embedding itself has the full hybrid law.
\end{proof}

Thus proving a low-advantage lower bound for DHX is enough to prove the
robust NGC lower bound.  We now isolate the quantum information problem
inside DHX.

\subsection{Lower-bounding DHX: clean coordinates and residual Block Hidden XOR}
\label{subsec:quantum-dhx-main}
\label{app:qdhx-bookkeeping}

We expose a BHX instance by fixing only classical input-generation
variables.  Three points need separate checks: the success loss from
conditioning, the residual product input law, and a local simulation of
each party's edge share.  Preserving message dimension alone would not
establish the reduction.

\subsubsection{Representing the random edge partition}

Fix one block $B=\operatorname{Block}(x,\sigma)$ with layers
$V^1,V^2,V^3,V^4$.  Because every inter-layer graph is a perfect matching,
a uniform independent partition of its edges can be generated by three
independent uniform functions
\[
    f_L,f_M,f_R:[2w]\to\set{\alpha,\beta}.
\]
The convention is:
\begin{itemize}[leftmargin=1.6em]
    \item an edge from $V^1$ to a vertex $u\in V^2$ belongs to Alice iff
    $f_L(u)=\alpha$;
    \item an edge from $u\in V^2$ to $V^3$ belongs to Alice iff
    $f_M(u)=\alpha$;
    \item an edge from $v\in V^3$ to $V^4$ belongs to Alice iff
    $f_R(v)=\alpha$.
\end{itemize}
The value $\beta$ assigns the edge to Bob.  Since each relevant vertex is
incident to exactly one edge across the specified interface and all
function values are independent, this is exactly the independent
$1/2$--$1/2$ robust edge partition.  Crucially, these partition functions
are independent of $x$ and $\sigma$.

\begin{definition}[Clean coordinate]
\label{def:clean-coordinate}
An index $j\in[w]$ is \emph{clean} in $B$ if
\[
\begin{aligned}
    f_L(a^2_j)=f_L(b^2_j)&=\beta,\\
    f_M(a^2_j)=f_M(b^2_j)&=\alpha,\\
    f_R(a^3_j)=f_R(b^3_j)&=\beta.
\end{aligned}
\]
Write $C=\operatorname{clean}(B)$ for the set of clean coordinates.
\end{definition}

At a clean coordinate $j$, Alice owns the middle XOR edges encoding
$x_j$, while Bob owns the adjacent permutation edges revealing
$\sigma^{-1}(j)$.  To rule out Alice inferring the selected coordinate
from her other edges, we will establish the full residual law and local
simulation below.

A coordinate is clean with probability $2^{-6}$, independently across
$j$, so $\abs{C}\sim\operatorname{Bin}(w,1/64)$.
Recall that a Bernoulli indicator of success probability $p$ has mean
$p$ and variance $p(1-p)$.  The sum of $s$ independent such indicators
with the same $p$ has the binomial law $\operatorname{Bin}(s,p)$,
with mean $sp$ and variance $sp(1-p)$.
More generally, a sum $Y$ of independent Bernoulli indicators, with
possibly different success probabilities and mean $\nu$, obeys
\[
 \prob{Y\le(1-\eta)\nu}\le e^{-\eta^2\nu/2},\qquad 0<\eta<1.
\]
Applying this Chernoff bound with $\nu=w/64$ and $\eta=9/25$ gives,
for sufficiently large $w$,
\begin{equation}
\label{eq:clean-many}
    \prob{\abs{C}\ge\frac{w}{100}}
    \ge1-w^{-4}.
\end{equation}

For DHX, write $F_i=(f_L^i,f_M^i,f_R^i)$ for the partition-function
tuple in block $i$, and let $C_i$ be its clean set.  Call block $i$
\emph{usable} if
\[
    \abs{C_i}\ge\frac{w}{100}
    \qquad\text{and}\qquad
    \sigma^i(1)\in C_i.
\]
The second condition says that the particular bit from block $i$ appearing
in $z_G(1)$ lies among the clean coordinates.

\begin{lemma}[Many usable blocks]
\label{lem:many-usable}
\label{lem:app-many-usable}
For sufficiently large $w$, if $t\ge512\ln w$, then with probability at
least $1-w^{-9/8}$ there are at least $t/512$ usable blocks.
\end{lemma}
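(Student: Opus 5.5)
The plan is a per-block probability bound followed by a Chernoff bound across blocks. The key structural fact is that in $\DHX_{w,t}$ the tuples $(F_i,\sigma^i)$ are mutually independent across $i\in[t]$. Within a block, the partition functions $F_i$ are independent of $\sigma^i$, as noted when the partition was represented by $f_L,f_M,f_R$. Hence the usability indicators $U_i:=\ind{}[\abs{C_i}\ge w/100,\ \sigma^i(1)\in C_i]$ are independent Bernoulli variables, and it suffices to lower-bound each success probability $\prob{U_i=1}$.

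\textbf{Per-block bound.} Condition on $F_i$, which determines $C_i$. Since $\sigma^i$ is a uniform permutation independent of $F_i$, the value $\sigma^i(1)$ is uniform on $[w]$, so $\prob{\sigma^i(1)\in C_i\mid F_i}=\abs{C_i}/w$. On the event $\abs{C_i}\ge w/100$ this is at least $1/100$. Combining with \eqref{eq:clean-many} gives
\[
 \prob{U_i=1}\ge\frac{1-w^{-4}}{100}\ge\frac1{101}
\]
for sufficiently large $w$.

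\textbf{Concentration.} Let $Y=\sum_iU_i$, with mean $\nu\ge t/101$. Write the target threshold as $t/512=(1-\eta)\nu$ with $\eta:=1-t/(512\nu)$. Then $\eta\ge1-101/512>0.8$, and $\eta<1$. The Chernoff lower-tail bound stated above, for sums of independent Bernoulli indicators with possibly different means, then gives
\[
 \prob{Y<t/512}\le\exp(-\eta^2\nu/2)\le\exp\!\Bigl(-\frac{0.64\,t}{202}\Bigr).
\]
This uses $\eta^2\ge0.64$ and $\nu\ge t/101$.

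\textbf{Using $t\ge512\ln w$.} The exponent is at least $512\cdot0.64/202\cdot\ln w>1.6\ln w$. Therefore the failure probability is at most $w^{-1.6}\le w^{-9/8}$.

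\textbf{Main obstacle.} The only delicate point is justifying independence: across blocks, and between $\sigma^i$ and the clean set $C_i$ within a block. This holds because the edge partition is generated by the functions $f_L,f_M,f_R$, which are independent of $(x,\sigma)$. The numerical constants leave comfortable slack.
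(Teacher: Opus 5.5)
Your proof is correct and follows essentially the paper's route: a per-block lower bound on the probability that a block is usable, independence of usability across blocks, and the multiplicative Chernoff lower tail combined with $t\ge512\ln w$. The only difference is in constants. You restrict to the event $\abs{C_i}\ge w/100$ to get $(1-w^{-4})/100\ge1/101$. The paper instead computes the unconditional probability $1/64$ that $\sigma^i(1)\in C_i$ and subtracts $w^{-4}$ to get $1/128$, which with $\eta=3/4$ gives the exponent $9t/4096$ and hence exactly $w^{-9/8}$.
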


\begin{proof}
Because $C_i$ depends only on the partition functions and $\sigma^i$ is
uniform and independent of them,
\[
    \prob{\sigma^i(1)\in C_i}
    =
    \E{\frac{\abs{C_i}}{w}}
    =
    \frac{1}{64}.
\]
Together with \eqref{eq:clean-many},
\[
\begin{aligned}
    \prob{i\text{ is usable}}
    &\ge
    \prob{\sigma^i(1)\in C_i}
    -
    \prob{\abs{C_i}<w/100}\\
    &\ge
    \frac{1}{64}-w^{-4}
    \ge
    \frac{1}{128}
\end{aligned}
\]
for sufficiently large $w$.  Usability is independent across blocks.
If $R$ is the number of usable blocks, then
$\E{R}\ge t/128$.  A multiplicative Chernoff bound gives
\[
    \prob{R<t/512}
    \le
    \exp(-9t/4096)
    \le
    w^{-9/8}
\]
when $t\ge512\ln w$.
\end{proof}

The following elementary conditioning facts track message size and
success.  The hard input law will be checked separately.

\begin{lemma}[Classical conditioning and fixing]
\label{lem:classical-conditioning-fixing}
\label{lem:app-classical-conditioning}
\label{lem:app-fix-classical-branch}
\label{lem:app-high-prob-conditioning}
Let Alice's encoding be $x\mapsto\rho_x$ in a $q$-qubit register.
The classical input-generation variables considered here are independent
of protocol-private randomness conditional on the classical inputs.
Then the following hold.
\begin{enumerate}[leftmargin=1.6em]
    \item Conditioning the classical input distribution on any
    positive-probability classical event leaves an ensemble of states in
    the same $q$-qubit register.
    \item If $C$ is a finite classical random variable, some positive-probability
    value $c$ has
    conditional protocol success at least the average success over $C$.
    \item If the protocol succeeds with probability at least
    $1/2+\gamma$, where $0\le\gamma\le1/2$, and a classical event
    $E$ has probability at least $1-\delta$, where $0\le\delta<1$, then
    \[
        \prob{\mathrm{success}\mid E}
        \ge
        \frac{1/2+\gamma-\delta}{1-\delta}
        =
        \frac{1}{2}+\frac{\gamma-\delta/2}{1-\delta}.
    \]
\end{enumerate}
\end{lemma}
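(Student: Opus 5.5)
The three parts are elementary and I would treat them separately; the only care needed is to keep the message register fixed and to avoid conditioning on any quantum measurement outcome.

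For part 1, write the protocol's pre-decoding cq-state as a mixture over the classical input-generation variables. A classical input $x$ together with the classical generation variables determines Alice's state $\rho_x$, which lives on the fixed $q$-qubit register. By hypothesis, private randomness is independent of the generation variables given the inputs. So conditioning the classical distribution on a positive-probability event $E$ only reweights the mixture $\sum_x \Pr[x\mid E]\,\ket{x}\!\bra{x}\otimes\rho_x$. The encoding map $x\mapsto\rho_x$ is unchanged, and so is the register dimension $2^q$. I would stress that $E$ is a classical event on inputs, never a post-selection on Alice's or Bob's measurement, so no renormalized quantum state appears.

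For part 2, write success as $\sum_c \Pr[C=c]\,\Pr[\mathrm{success}\mid C=c]$, summing over the finitely many $c$ with positive probability. This is a convex combination, so some $c$ attains at least the average. By part 1, fixing $C=c$ keeps the same register and encoding.

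For part 3, first use
\[
\Pr[\mathrm{success}]\le \Pr[\mathrm{success}\wedge E]+\Pr[\neg E]\le \Pr[\mathrm{success}\wedge E]+\delta .
\]
This gives $\Pr[\mathrm{success}\wedge E]\ge 1/2+\gamma-\delta$. Now divide by $\Pr[E]$. Here $\Pr[E]\ge 1-\delta>0$, and the bound is monotone in the right direction only when the numerator is nonnegative, so I split into two cases:
\begin{itemize}
\item If $1/2+\gamma-\delta\ge 0$, then dividing by $\Pr[E]\le 1$ gives at least $(1/2+\gamma-\delta)$. To reach $(1/2+\gamma-\delta)/(1-\delta)$, I instead write $\Pr[\mathrm{success}\mid E]=1-\Pr[\mathrm{fail}\wedge E]/\Pr[E]$. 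Then I use $\Pr[\mathrm{fail}\wedge E]\le 1/2-\gamma$ and $\Pr[E]\ge 1-\delta$, which gives $\Pr[\mathrm{success}\mid E]\ge 1-(1/2-\gamma)/(1-\delta)$.
\item If the numerator is negative, the claimed bound is trivially true.
\end{itemize}
Finally I check the algebra: $1-(1/2-\gamma)/(1-\delta)=(1/2+\gamma-\delta)/(1-\delta)$, and this equals $1/2+(\gamma-\delta/2)/(1-\delta)$. This failure-side route is the cleanest and avoids the monotonicity issue entirely.

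The only real obstacle is therefore the bookkeeping in part 1. The events $E$ and the fixed values $c$ are functions of classical generation variables (partition functions, permutations, public coins), not of protocol outputs. Because of that, conditioning commutes with Alice's encoding and preserves the $q$-qubit bound. The residual input law itself is deliberately not claimed here; it is checked separately, as the text says.
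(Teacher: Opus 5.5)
Your proposal is correct and follows essentially the same argument as the paper: parts 1 and 2 are the same reweighting and averaging observations, and your failure-side bound $\Pr[\mathrm{fail}\mid E]\le\Pr[\mathrm{fail}]/\Pr[E]\le(1/2-\gamma)/(1-\delta)$ is the paper's inequality $\Pr[\mathrm{success}]\le\delta+(1-\delta)\Pr[\mathrm{success}\mid E]$ rearranged. The case split on the sign of the numerator is unnecessary, since the failure-side route works in every case, as you yourself observe.
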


\begin{proof}
Conditioning changes ensemble weights, not the message register.
The second claim follows by averaging conditional success over $C$.
For the third,
\[
 \prob{\mathrm{success}}
 \le1-\prob{E}\bigl(1-\prob{\mathrm{success}\mid E}\bigr)
 \le\delta+(1-\delta)\prob{\mathrm{success}\mid E}.
\]
Rearrange using $\prob{\mathrm{success}}\ge1/2+\gamma$.

\end{proof}

\begin{lemma}[The DHX advantage survives usable-block conditioning]
\label{lem:success-usable}
\label{lem:app-success-usable}
If a protocol for $\DHX_{w,t}$ succeeds with probability at least
$1/2+1/(6w)$ and $t\ge512\ln w$, then for all sufficiently large $w$ its
success conditioned on having at least $t/512$ usable blocks is at least
$\frac{1}{2}+\frac{1}{8w}$.
\end{lemma}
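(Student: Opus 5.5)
This is a direct application of part 3 of \cref{lem:classical-conditioning-fixing}. We take $E$ to be the classical event that at least $t/512$ blocks are usable, with $\gamma=1/(6w)$.

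\cref{lem:many-usable} applies because $t\ge512\ln w$ and $w$ is sufficiently large. It gives $\prob{E}\ge1-\delta$ with $\delta=w^{-9/8}$.

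Usability depends only on the partition functions $F_i$ and the permutations $\sigma^i$. These are classical input-generation variables, independent of protocol-private randomness given the inputs, so the hypotheses of \cref{lem:classical-conditioning-fixing} hold. Part 1 of that lemma also ensures the message register is unchanged by the conditioning.

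Substituting into part 3 gives
\[
 \prob{\mathrm{success}\mid E}\ge\frac12+\frac{\gamma-\delta/2}{1-\delta}
 \ge\frac12+\gamma-\frac{\delta}{2}
 =\frac12+\frac1{6w}-\frac{w^{-9/8}}{2}.
\]
The middle inequality holds because $\gamma-\delta/2\ge0$ for large $w$ and $1-\delta\le1$.

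It remains to check that $\frac1{6w}-\frac{1}{2w^{9/8}}\ge\frac1{8w}$. This is equivalent to $\frac1{24w}\ge\frac1{2w^{9/8}}$, that is, $w^{1/8}\ge12$. This holds for all sufficiently large $w$ (namely $w\ge12^8$), and fixes the threshold on $w$.

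The argument has no real obstacle. The only point to watch is that the failure probability $w^{-9/8}$ decays strictly faster than the advantage $1/(6w)$. This is exactly why \cref{lem:many-usable} was stated with exponent $9/8>1$: the loss from conditioning is then absorbed by giving up the constant $1/6-1/8$.
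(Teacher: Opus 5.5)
Your proposal is correct and follows the paper's proof. Both take $E$ from \cref{lem:many-usable} with $\delta=w^{-9/8}$, apply part 3 of \cref{lem:classical-conditioning-fixing} with $\gamma=1/(6w)$, and use the threshold $w^{1/8}\ge12$ (which the paper writes as $w^{-9/8}\le1/(12w)$) to keep the conditional advantage at least $1/(8w)$.
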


\begin{proof}
Let $E$ be the event from \cref{lem:many-usable}; then
$\prob{\neg E}\le w^{-9/8}$.  Apply \cref{lem:classical-conditioning-fixing} with
$\gamma=1/(6w)$ and $\delta=w^{-9/8}$.  Since
$w^{-9/8}\le1/(12w)$ for sufficiently large $w$, the resulting
conditional advantage is at least $1/(8w)$.
\end{proof}

\subsubsection{The residual distribution in a usable block}

Condition on the event in \cref{lem:many-usable}.  By
\cref{lem:classical-conditioning-fixing}, fix a branch consisting of:
\begin{itemize}[leftmargin=1.6em]
    \item the exact usable set $I\subseteq[t]$, with
    $r:=\abs{I}\ge t/512$;
    \item all partition functions in every block;
    \item every nonusable block completely; and
    \item in each usable block $i$, all triples
    \[
        \bigl((\sigma^i)^{-1}(j),j,x^i_j\bigr)
        \qquad\text{for }j\notin C_i.
    \]
\end{itemize}
Choose the fixed branch so that the success probability is still at least
$1/2+1/(8w)$.  The branch is selected by averaging its conditional success; within it,
we retain the original conditional input law, without conditioning on the
success event or any message measurement.  All fixed values may be
hardwired into the resulting protocol.

Fix one usable block $i\in I$ and write
$w_i:=\abs{C_i}\ge\frac{w}{100}$.
The fixed inverse images of the nonclean coordinates determine the set
\[
    P_i
    :=
    [w]\setminus
    \set{(\sigma^i)^{-1}(j):j\notin C_i}.
\]
Thus $\abs{P_i}=\abs{C_i}=w_i$.  Since the block is usable,
$\sigma^i(1)\in C_i$, and hence $1\in P_i$.

\begin{lemma}[Residual usable-block distribution]
\label{lem:residual-usable}
\label{lem:app-residual-usable}
Under the preceding fixing, for every usable block $i$:
\begin{enumerate}[leftmargin=1.6em]
    \item the clean bits $(x^i_j)_{j\in C_i}$ are independent uniform
    bits;
    \item the remaining restriction
    $\sigma^i|_{P_i}:P_i\to C_i$ is a uniform bijection;
    \item consequently $J_i:=\sigma^i(1)$ is uniform on $C_i$;
    \item conditioned on $J_i=j$, the rest of the bijection is a uniform
    bijection $P_i\setminus\set{1}\to C_i\setminus\set{j}$;
    \item the clean bits and remaining permutation randomness are
    independent.
\end{enumerate}
The residual variables of distinct usable blocks are independent.
\end{lemma}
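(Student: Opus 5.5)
We compute the residual law directly from the prior, since everything involved is a uniform distribution restricted by the fixed values. Before conditioning, the block's data $(x^i,\sigma^i)$ is uniform on $\bool^w\times S_w$. It is independent of the partition functions $F_i$ and of the data of every other block. The usable set $I$ and the clean sets $C_i$ depend only on the partition functions together with the values $\sigma^{i'}(1)$. So we start by fixing all partition functions; this leaves $(x^i,\sigma^i)$ uniform and independent across blocks. Every subsequent conditioning event is then a product over blocks, which yields the final independence claim once each factor is analyzed separately.

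Now fix one usable block $i$. The fixed data in this block consists of the triples $((\sigma^i)^{-1}(j),j,x^i_j)$ for $j\notin C_i$, together with the event $\sigma^i(1)\in C_i$ and the event $\abs{C_i}\ge w/100$. The last event is already determined by $F_i$. Specifying these triples amounts to prescribing $\sigma^i$ on the complement $[w]\setminus P_i$, with image $[w]\setminus C_i$, and prescribing the bits of $x^i$ outside $C_i$. Under the uniform prior, conditioning on these values makes $x^i|_{C_i}$ uniform on $\bool^{C_i}$. It also makes $\sigma^i|_{P_i}$ uniform over bijections $P_i\to C_i$, and the two remain independent because the prior is a product and the constraint has product form.

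The event $\sigma^i(1)\in C_i$ is equivalent to $1\in P_i$. That is a function of the fixed values, not an additional constraint on the residual, so it leaves the residual law unchanged. This proves items 1, 2, and 5. Items 3 and 4 follow from item 2 by the elementary fact that a uniform bijection, conditioned on the image of one point, is uniform on the remaining points.

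The step needing care is making sure that the branch selection in the preceding fixing did not introduce any dependence. The usable set $I$ involves $\sigma^{i'}(1)$ for every block $i'$, including nonusable ones. Nonusable blocks are fixed completely, so they contribute only factors that do not involve usable blocks. For each usable block, the event $\{i\text{ usable}\}$, with $F_i$ fixed, is the event $\{\sigma^i(1)\in C_i\}$, which is already subsumed as above. The event $\abs{I}\ge t/512$ is then determined by the fixed set $I$. We also invoke \cref{lem:classical-conditioning-fixing} to confirm that choosing the branch only selects a positive-probability value of these classical variables; it does not condition on success, so the conditional input law is exactly the restricted prior computed here.
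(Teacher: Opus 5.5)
Your proof is correct and takes essentially the same approach as the paper. In both, the prior on $(x^i,\sigma^i)$ is uniform and independent of the partition functions and of the other blocks, so fixing the values outside $C_i$ through blockwise, product-form conditioning leaves a uniform string on $C_i$ and an independent uniform bijection $P_i\to C_i$. You are more explicit than the paper in checking that the usability event $\sigma^i(1)\in C_i$ is equivalent to $1\in P_i$, and is therefore already determined by the fixed triples; the paper uses this only implicitly.
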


\begin{proof}
The clean set $C_i$ is determined solely by the already-fixed partition
functions.  Since $x^i$ was originally uniform and independent of
$(\sigma^i,F_i)$, fixing the bits outside $C_i$ leaves the bits inside
$C_i$ independent and uniform.

Similarly, $\sigma^i$ was a uniform permutation.  Fixing the inverse image
of every value outside $C_i$ leaves exactly a uniformly random bijection
from the remaining domain $P_i$ to the remaining image set $C_i$.
Because $1\in P_i$, symmetry makes $\sigma^i(1)$ uniform on $C_i$; fixing
its value leaves a uniform bijection on the remaining points.  The string
and permutation were independent initially, and all conditioning is
blockwise, so the claimed independence is preserved, both within and
across usable blocks.
\end{proof}

We can now see the Block Hidden XOR instance explicitly.  For each usable
$i$, fix an arbitrary bijection
$\tau_i:[w_i]\to C_i$,
and define
\[
    Y^i_a:=x^i_{\tau_i(a)},
    \qquad
    K_i:=\tau_i^{-1}(\sigma^i(1)).
\]
By \cref{lem:residual-usable}, the $Y^i$ are independent uniform strings,
the $K_i$ are independent uniform indices, and the two families are
independent.  All nonusable blocks were fixed, so their contribution is a known offset.
Thus
\begin{equation}
\label{eq:dhx-residual-bhx}
    z_G(1)
    =
    \left(\bigoplus_{i\notin I}x^i_{\sigma^i(1)}\right)
    \oplus\bigoplus_{i\in I}Y^i_{K_i}.
\end{equation}

There is one final simulation point.  In the actual DHX instance Bob sees
more than the single index $K_i$: his clean permutation edges reveal the
rest of the residual permutation as well.  This extra information does
not make the reduction invalid.  Given $K_i$, Bob can generate exactly
that extra randomness himself by sampling a uniform bijection
\[
    P_i\setminus\set{1}
    \longrightarrow
    C_i\setminus\set{\tau_i(K_i)}.
\]
All edges outside the clean coordinates are determined by the fixed
triples and ownership functions.  At clean coordinates Bob owns only
permutation edges, and Alice owns only middle XOR edges.  Thus Bob can
generate his entire share from the indices and his sampled bijections;
Alice can generate her entire share from the strings without knowing
those indices.  A protocol for
the fixed DHX branch therefore gives a protocol for the unequal-block Hidden XOR
instance in \eqref{eq:dhx-residual-bhx}, with no increase in message
length.

\begin{theorem}[Quantum Distributional Hidden XOR]
\label{thm:quantum-dhx-main}
For sufficiently large $w$ and
$t\ge512\ln w$,
every one-way quantum protocol for $\DHX_{w,t}$ with success
probability at least
$\frac{1}{2}+\frac{1}{6w}$
communicates $\bOm{wt}$ qubits.
\end{theorem}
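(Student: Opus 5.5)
The proof chains together the pieces just established; the only new work is checking that the constants survive. We start from a $q$-qubit protocol for $\DHX_{w,t}$ with success at least $1/2+1/(6w)$, where $t\ge512\ln w$ and $w$ is sufficiently large. First, fix any public coins at a seed that achieves at least the average success; the input law is unchanged. \Cref{lem:success-usable} then says that, conditioned on the event of \cref{lem:many-usable} (at least $t/512$ usable blocks), the success probability is still at least $1/2+1/(8w)$. By part 2 of \cref{lem:classical-conditioning-fixing}, we can then fix a branch without lowering this. The branch consists of the usable set $I$, all partition functions, all nonusable blocks, and in each usable block the triples at nonclean coordinates. This fixing only reweights classical inputs and leaves the message register at $q$ qubits.

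Next, we apply \cref{lem:residual-usable} inside the fixed branch. In each usable block the clean bits $Y^i\in\bool^{w_i}$ are independent and uniform, with $w_i\ge w/100$. The indices $K_i$ are independent and uniform on $[w_i]$, and the strings are independent of the indices. By \eqref{eq:dhx-residual-bhx}, $z_G(1)$ equals a known offset XOR $Z_K(Y)$. Bob can remove the offset, since it is hardwired into the fixed branch. The simulation argument after the lemma shows that each party can generate its own edge share locally. Alice's share depends only on the strings $Y$. Bob's share is produced from $K$ together with privately sampled residual bijections $P_i\setminus\{1\}\to C_i\setminus\{\tau_i(K_i)\}$, and these have exactly the conditional law given by part 4 of the lemma. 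This yields a one-way protocol for unequal-block Hidden XOR with $r=\abs I\ge t/512$ blocks, all widths at least $w/100$, success at least $1/2+1/(8w)$, and still $q$ qubits.

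Finally, we invoke \cref{thm:qbhx-main}. Its hypotheses are $w_i\ge w/100$, which holds, and $r\ge\ln w$. The latter is immediate, since $r\ge t/512\ge\ln w$. The theorem gives $q=\bOm{wr}=\bOm{wt/512}=\bOm{wt}$.

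I expect the main obstacle to be confirming that the fixed-branch input law is exactly the product BHX law, and that Bob's extra view can be simulated without any dependence on Alice's strings. This is the only place where the quantum message could break a classical reduction. The residual-law lemma and the local-simulation paragraph already supply this, so the plan is to cite them carefully rather than reprove them. The remaining arithmetic is routine: the constants $1/(6w)\to1/(8w)$ and $t/512\ge\ln w$ are exactly the ones arranged in the preceding lemmas.
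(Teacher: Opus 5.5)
Your proposal is correct and follows the paper's proof essentially step for step. You fix the public coins, condition on the many-usable-blocks event via \cref{lem:success-usable} (retaining advantage $1/(8w)$), and fix the classical branch by averaging. You then use \cref{lem:residual-usable} and the local-simulation argument to expose an unequal-block Hidden XOR instance with $r\ge t/512\ge\ln w$ blocks of width at least $w/100$, and conclude with \cref{thm:qbhx-main}, just as the paper does.
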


\begin{proof}
If the DHX protocol uses independent public coins, first fix a seed
with at least its average success under the DHX input and partition law.
Then condition and fix the classical variables as above.  The residual problem
is unequal-block Hidden XOR with
\[
    r\ge\frac{t}{512},
    \qquad
    w_i\ge\frac{w}{100},
\]
and success at least $1/2+1/(8w)$.  Since
$t\ge512\ln w$, we have $r\ge\ln w$.  By
\cref{thm:qbhx-main},
$q=\bOm{wr}=\bOm{wt}$.
The message register remains the original $q$-qubit register throughout
the conditioning, by \cref{lem:classical-conditioning-fixing}.
\end{proof}

\begin{remark}[Where the quantum proof differs from the classical proof]
\label{rem:qdhx-classical-comparison}
This is the precise point where the quantum proof departs from the
classical Assadi--Sundaresan proof.  The clean-coordinate reduction and all
classical conditioning are the same kind of operations as before.  Once
the residual product parity is exposed, however, the classical proof
would condition the input distribution on Alice's message and apply a
scalar Fourier/KKL argument to that conditional distribution.  Our proof instead invokes the matrix-Fourier
bound of \cref{thm:qbhx-main} on the full quantum encoding.
\end{remark}

\subsection{Completing the robust Noisy Gap Cycle lower bound}
\label{subsec:finish-robust-ngc}
\label{subsec:quantum-ngc-main}

For the main parameter regime, combining the decorrelation reduction and
DHX lower bound is now immediate.  Two small technical issues remain:
$w=n/(4k)+1$ can be bounded when $k=\bT{n}$, while the Multi-Block
construction directly gives only cycle lengths $k\equiv1\pmod{3}$.

\begin{lemma}[Bounded-width DHX]
\label{lem:bounded-width-dhx}
\label{lem:app-bounded-width-dhx}
Fix a constant $W\ge2$.  For every $2\le w\le W$ and all sufficiently
large $t$, every one-way quantum protocol for $\DHX_{w,t}$ with success at
least $1/2+1/(6w)$ communicates $\bOm[W]{t}=\bOm[W]{wt}$ qubits.
\end{lemma}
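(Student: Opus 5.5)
We follow the same pipeline as \cref{thm:quantum-dhx-main}: expose an unequal-block Hidden XOR instance and apply \cref{thm:qbhx-master}, which has no largeness hypothesis on the widths. We redo only the quantitative steps that used large $w$. With $w\le W$ fixed, the conditions ``$\abs{C_i}\ge w/100$'' and ``$w^{-4}$ failure'' are meaningless. Instead we call block $i$ \emph{usable} if $\sigma^i(1)\in C_i$, with no lower bound on $\abs{C_i}$ beyond $\abs{C_i}\ge1$, which is automatic once $\sigma^i(1)\in C_i$. Each coordinate is clean with probability $2^{-6}$, so a block is usable with probability exactly $1/64$, independently across blocks. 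A multiplicative Chernoff bound then gives at least $t/128$ usable blocks except with probability $e^{-ct}$.

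Next we transfer the advantage. The DHX advantage is $\gamma=1/(6w)\ge 1/(6W)$, a constant. By part~3 of \cref{lem:classical-conditioning-fixing} with $\delta=e^{-ct}$, for all sufficiently large $t$ (depending on $W$) the conditional success on the event ``at least $t/128$ usable blocks'' remains at least $1/2+1/(8W)$. We then fix the same classical branch as before: the usable set $I$, all partition functions, the nonusable blocks, and the nonclean triples in usable blocks. Part~2 of the same lemma keeps success at least $1/2+1/(8W)$. \Cref{lem:residual-usable} and the local-simulation argument did not use largeness of $w$, so they apply verbatim. This yields an unequal-block Hidden XOR protocol with $r\ge t/128$ blocks, widths $1\le w_i\le W$, advantage $\gamma'\ge1/(8W)$, and the same $q$ qubits.

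The main obstacle is that \cref{thm:qbhx-master} gives $q\ge\frac{r(1+\lambda)}{2\lambda}\log_2(A\lambda)$ only when $A\lambda>1$, and widths $w_i=1$ make $W_{\mathrm{geo}}$ too small. A block of width $1$ carries no hidden index, and its bit is not necessarily known to Bob, so such blocks cannot be treated as known offsets. We therefore strengthen usability to require $\abs{C_i}\ge2$ as well. For $w\ge2$ this has probability at least $\prob{\sigma^i(1)\in C_i,\ \abs{C_i}\ge2}\ge c_W>0$, a constant depending only on $W$, so Chernoff again gives $r\ge c'_W t$ such blocks. Then $W_{\mathrm{geo}}\ge2$ and $(2\gamma')^{2/r}\to1$ as $r\to\infty$. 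For $t$ large, $A\ge2(4W)^{-2/r}\ge 3/2$, so taking $\lambda=1$ gives $q\ge r\log_2(3/2)=\bOm[W]{t}$.

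Since $w\le W$, we have $\bOm[W]{t}=\bOm[W]{wt}$. Public coins are handled as before by fixing a good seed.
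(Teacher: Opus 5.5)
Your proof is correct and follows the paper's pipeline: condition on linearly many good blocks, fix the remaining classical data by averaging, expose a residual Hidden XOR instance via \cref{lem:residual-usable} and the local simulation, and apply \cref{thm:qbhx-master} with $\lambda=1$. The only difference is which blocks you keep. The paper keeps \emph{fully clean} blocks (probability $2^{-6w}\ge2^{-6W}$), so $\sigma^i(1)$ is automatically clean and the residual instance has equal width $w$, giving $q\ge r\log_2 w-2\log_2(4w)$. Your rule ($\sigma^i(1)\in C_i$ and $\abs{C_i}\ge2$) instead gives unequal widths and relies on $W_{\mathrm{geo}}\ge2$, which works equally well. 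One small correction: width-one blocks can in fact be treated as known offsets. Declaring them nonusable fixes their bit by averaging, which is exactly what your strengthened usability condition does.
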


\begin{proof}
First fix any independent public protocol seed by averaging, preserving
success under the DHX distribution.  Call a block fully clean if all $w$ of its coordinates are clean.  This
has probability $2^{-6w}\ge2^{-6W}$.  Hence with probability
$1-\exp(-\bOm[W]{t})$ there are $\bOm[W]{t}$ fully clean blocks.
Condition on this event.  Fix, by averaging, the exact set $F$ of fully
clean blocks, all ownership data, and all strings and permutations in
blocks outside $F$.  Retain precisely the blocks in $F$, rather than
all merely usable blocks.  Within each retained block the bit string and
permutation remain independent and uniform, by the same ownership and
local-simulation argument as in \cref{lem:residual-usable}.  The fixed
blocks contribute only a known XOR offset.  Because $w$ is bounded, the original
advantage $1/(6w)$ is a positive constant depending only on $W$, so the
high-probability conditioning leaves advantage at least $1/(8w)$ for
large $t$.

The residual problem is equal-width Hidden XOR on
$r=\bOm[W]{t}$ blocks of width $w$.  Apply
\cref{thm:qbhx-master} with $\lambda=1$.  Since
$w\left(\frac{1}{4w}\right)^{2/r}\longrightarrow w\ge2$,
the resulting bound is
$q\ge r\log_2 w-2\log_2(4w)=\bOm[W]{r}=\bOm[W]{t}$,
and boundedness of
$w$ makes this also $\bOm[W]{wt}$.
\end{proof}

We can now prove \cref{thm:quantum-robust-ngc-main}.

\begin{proof}[Proof of \cref{thm:quantum-robust-ngc-main}]
First suppose $k\equiv1\pmod{3}$, and write
\[
    k=3t+1,
    \qquad
    m=\frac{n}{4k},
    \qquad
    w=m+1.
\]
By \cref{lem:neighboring-hybrid-main,lem:ngc-dhx-main}, a $q$-qubit
protocol solving the hard NGC distribution with success $2/3$ induces a
$q$-qubit protocol for $\DHX_{w,t}$ with success at least
$1/2+1/(6w)$.  If $w$ is sufficiently large, apply
\cref{thm:quantum-dhx-main}; if $w$ lies in the bounded exceptional range,
apply \cref{lem:bounded-width-dhx}.  In either case the communication is
$\bOm{wt}$.

Now remove the congruence assumption.  If $k\equiv0\pmod{3}$, set
\[
    k_0=k-2,
    \qquad
    n_0=\frac{k-2}{k}n,
\]
and if $k\equiv2\pmod{3}$, set
\[
    k_0=k-1,
    \qquad
    n_0=\frac{k-1}{k}n.
\]
Then $k_0\equiv1\pmod{3}$ and in both cases
$\frac{n_0}{4k_0}=\frac{n}{4k}$.
Starting from the hard instance with cycle length $k_0$, each party
first deletes its own public closing edges from layer $k_0$ to layer one.
Append $k-k_0$ layers joined by identity perfect matchings on all groups
and both vertex types.  Replace the old closing edges by the same public
closing pattern from the new final layer to layer one.  Thus paths gain
$k-k_0$ edges and the closed components have the required cycle lengths.
All removed and added endpoints are public; assign the new edges fresh
independent fair owners.  Retained and new ownership coins have exactly
the robust partition law conditional on the padded graph.  This padding
costs no communication and does not manipulate Alice's quantum message.

Write $k_0=3t+1$.  For sufficiently large $n$ and
$k\ge1600\ln n$,
\[
    t
    \ge
    \frac{k-3}{3}
    \ge
    512\ln n
    \ge
    512\ln w.
\]
Therefore the preceding DHX lower bound applies.  Finally,
\[
\begin{aligned}
    wt
    &\ge
    \frac{n}{4k}\cdot\frac{k-3}{3}\\
    &=
    \frac{n}{12}\left(1-\frac{3}{k}\right)
    =
    \bOm{n}.
\end{aligned}
\]
Thus the fixed hard distribution requires $\bOm{n}$ qubits.

To pass to robust public-coin communication, suppose a public-coin
protocol of smaller cost succeeded with probability at least $2/3$ on
every fixed valid graph.  Average its success first over the hard graph
distribution and the random robust partition and then over the public
random string.  Some fixed public random string has average success at
least $2/3$ on the hard distribution.  Fixing that string yields a
fixed-coin quantum protocol contradicting the distributional
lower bound above.  Hence the robust communication complexity is
$\bOm{n}$.
\end{proof}

\subsection{From robust communication to standard random order}
\label{subsec:robust-to-streaming-main}
\label{app:qro-transfer}

The transfer uses the following exact permutation identity.

\begin{lemma}[Random partition followed by local random orders is uniform]
\label{lem:random-partition-order}
\label{lem:app-random-partition-order}
Let $E$ be a set of $M$ labeled stream items.  Assign each item
independently to $E_A$ or $E_B$ with probability $1/2$.  Conditional on
the partition, choose independent uniform permutations $\pi_A$ of $E_A$
and $\pi_B$ of $E_B$.  Then
$\pi_A(E_A)\circ\pi_B(E_B)$
is a uniformly random permutation of $E$.
\end{lemma}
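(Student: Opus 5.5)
We will compute the probability that the concatenated sequence equals any fixed ordering $\pi$ of $E$, and check that it is $1/M!$.

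Fix a target ordering $\pi=(e_1,\ldots,e_M)$. The output $\pi_A(E_A)\circ\pi_B(E_B)$ equals $\pi$ exactly when two conditions hold. First, for some split point $a\in\set{0,\ldots,M}$, the partition is $E_A=\set{e_1,\ldots,e_a}$ and $E_B=\set{e_{a+1},\ldots,e_M}$. Second, $\pi_A$ lists $E_A$ in the order $e_1,\ldots,e_a$, and $\pi_B$ lists $E_B$ in the order $e_{a+1},\ldots,e_M$. The length of the Alice part determines $a$, so these events are disjoint for different $a$.

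For a fixed $a$, the partition event has probability $2^{-M}$. Conditional on it, the two local orders are independent and uniform, so the ordering event has probability $1/(a!\,(M-a)!)$. Summing over $a$ gives
\[
\prob{\text{output}=\pi}=2^{-M}\sum_{a=0}^{M}\frac{1}{a!\,(M-a)!}=\frac{2^{-M}}{M!}\sum_{a=0}^{M}\binom{M}{a}=\frac1{M!}.
\]
This value does not depend on $\pi$, so the output is uniform over all $M!$ orderings.

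The only step needing care is bookkeeping. We must confirm that each ordering arises from exactly one split point for each $a$, and that empty parts ($a=0$ or $a=M$) are handled by the conventions $0!=1$ and the trivial permutation of the empty set. Labeled items make all elements distinct, so there are no multiplicity issues from repeated occurrences.
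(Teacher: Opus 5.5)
Your proof is correct and follows essentially the same argument as the paper: fix the target ordering, note that it forces Alice's share to be a prefix of some length $a$, multiply the ownership probability $2^{-M}$ by the local-order probability $1/(a!\,(M-a)!)$, and sum via $\sum_a\binom{M}{a}=2^M$ to get $1/M!$. You are slightly more explicit than the paper about the events for different $a$ being disjoint and about the empty-share edge cases, which is fine.
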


\begin{proof}
Fix a target permutation
$\pi=(e_{\pi(1)},\ldots,e_{\pi(M)})$.  For the concatenation to equal
$\pi$, there must be some cut position $a\in\set{0,\ldots,M}$ such that
Alice receives exactly the first $a$ items of $\pi$ and Bob receives the
rest.  For fixed $a$, the probability of this ownership pattern and the
required two local orders is
$2^{-M}\frac{1}{a!}\frac{1}{(M-a)!}$.
Summing over $a$,
\[
 \prob{\pi_A(E_A)\circ\pi_B(E_B)=\pi}
 =\frac{2^{-M}}{M!}\sum_{a=0}^M{M\choose a}
 =\frac1{M!}.
\]
Thus every global permutation has the same probability.
\end{proof}

\begin{proposition}[Random-order streaming gives robust communication]
\label{prop:streaming-to-robust-main}
\label{lem:app-streaming-to-robust}
Suppose a one-pass quantum streaming algorithm solves a graph problem
$\mathcal {P}$ with error at most $\eps$ in uniformly random edge
order using total live space at most $S$, with all output produced after
the pass.  Assume continuation is determined by the complete live state,
public parameters, and Bob's remaining input: in particular, a necessary
position/control value is either counted or locally reconstructible.
Then there is a
robust one-way quantum protocol for $\mathcal {P}$ with error at most
$\eps$ and message dimension at most $2^S$.
\end{proposition}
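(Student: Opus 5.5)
The plan is to build the protocol directly from the streaming algorithm, using \cref{lem:random-partition-order} to show that the simulated stream has exactly the random-order law. Fix a graph $G$ in the promise. Alice and Bob receive their shares $E_A,E_B$ from the independent fair edge partition. Alice uses private randomness to sample a uniform permutation $\pi_A$ of $E_A$. She initializes the algorithm, including any public-coin parameters, which are shared and independent of the input. She then runs it on the stream $\pi_A(E_A)$ and applies the channels prescribed for each item; intermediate measurements are allowed, and their retained classical outcomes stay in her memory. At the end she sends the complete live state as one message. Because every bit and qubit of live memory is counted in $S$, this message has dimension at most $2^S$.

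Bob receives the message and samples a uniform permutation $\pi_B$ of $E_B$. He continues the algorithm on $\pi_B(E_B)$ and outputs the algorithm's end-of-stream decision. The continuation hypothesis is used exactly here. Bob's channels may depend on the current item, public parameters, retained classical outcomes, and possibly a position counter. By assumption, each of these is either inside the transmitted state or reconstructible by Bob. For example, Alice's count $|E_A|$ is either stored in memory or determined by $M-|E_B|$ when the edge total is public. So Bob applies precisely the channels the algorithm would apply on the concatenated stream. The state is never copied or rewound: one execution is split at a cut, and the composite channel is identical to the streaming run on $\pi_A(E_A)\circ\pi_B(E_B)$.

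For correctness, condition on $(E_A,E_B,\pi_A,\pi_B)$. The joint distribution of the protocol's output and the algorithm's randomness then equals that of the streaming algorithm on the fixed stream $\pi_A(E_A)\circ\pi_B(E_B)$. By \cref{lem:random-partition-order}, averaging over the fair partition and the two local orders makes this concatenation a uniformly random permutation of $E(G)$. So the protocol's error, averaged over the robust partition and local orders, equals the algorithm's random-order error on $G$, which is at most $\eps$. This holds for every fixed promise graph, which is exactly the robust guarantee. Both local orders are private randomness, so the resulting protocol is a legitimate robust one-way quantum protocol with no prior entanglement.

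The main obstacle is the bookkeeping in the continuation step rather than any calculation. One must check that nothing outside the counted state is needed to resume: the cut position, any clock, and any classical control bits. This is why the proposition assumes them counted or locally reconstructible. One must also check that Bob's local ordering is independent of Alice's message, so that it does not bias the order.
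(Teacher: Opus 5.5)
Your proposal is correct and matches the paper's proof. Alice runs the algorithm on a uniformly ordered copy of her share and sends the complete live state once; Bob continues on an independently uniformly ordered copy of his share; and \cref{lem:random-partition-order} makes the concatenated stream an exactly uniform permutation, so the error guarantee transfers for every fixed graph. Your bookkeeping of the continuation hypothesis (e.g.\ recovering the cut position as $M-\abs{E_B}$) and of the independence of the local orders from the algorithm's coins is the same as the paper's.
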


\begin{proof}
Fix the underlying graph and sample its robust random edge partition
$E=E_A\sqcup E_B$.  Alice orders $E_A$ uniformly at random and runs the
streaming algorithm on that prefix.  She then sends the algorithm's
\emph{complete live memory} to Bob.  By the space convention of this
paper, this hybrid classical--quantum state has dimension at most $2^S$.
Bob independently orders $E_B$ uniformly at random, continues the same
streaming computation from the received state, and produces the final
output.  No emitted prefix transcript is needed under the final-output
hypothesis.  The local order coins are independent of the simulated
algorithm's coins, and no ownership or cut information is supplied to
that algorithm beyond its original interface.

By \cref{lem:random-partition-order}, the concatenated sequence seen by
the simulated streaming algorithm is exactly a uniform random permutation
of the full edge set.  Hence the communication protocol inherits the
streaming algorithm's error guarantee.  The live state is transferred
only once, so the reduction never needs to clone, rewind, or branch a
quantum state.
\end{proof}

\begin{proof}[Proof of \cref{thm:quantum-random-order-ngc-main}]
An $S$-space random-order streaming algorithm for $\NGC_{n,k}$ would give,
by \cref{prop:streaming-to-robust-main}, a robust one-way quantum protocol
with message dimension at most $2^S$.  The known edge count is
$M=n-n/(2k)$ in both promise cases, so Bob can recover the cut position as
$M-\abs{E_B}$ if the transition rule needs it.  If an algorithm emits its
single decision earlier, retaining that bit and an emitted flag until the
end costs only $\bO{1}$ extra space, leaving the lower bound unchanged.  By
\cref{thm:quantum-robust-ngc-main}, this requires $S=\bOm{n}$.
\end{proof}

\subsection{Consequences of the Assadi--Sundaresan reductions}
\label{subsec:qro-consequences-main}
\label{app:qngc-applications}

The following consequences use classical reductions from the
Assadi--Sundaresan framework.  We check their input/order laws and use
only one execution of the streaming algorithm, with classical output
available after the pass.

\begin{corollary}[Quantum random-order consequences of NGC]
\label{cor:qro-consequences-main}
The following lower bounds hold for one-pass quantum algorithms in
uniformly random order, with end-of-stream output and success at least
$2/3$ for every input (the sampling guarantee is specified separately).
The space impossibilities are uniform in the displayed parameters, in
particular along the linked accuracy/size sequences used in the proof.
\begin{enumerate}[leftmargin=1.6em]
    \item \textbf{Connected components.}
    There is no algorithm that, for every $\epsilon>0$, estimates the
    number of connected components to additive error $\epsilon n$ with
    success at least $2/3$ using
    $2^{\lO{1/\epsilon}}n^{\lO{1}}$
    space.

    \item \textbf{Maximum matching.}
    The same statement holds for estimating maximum matching size to
    additive error $\epsilon n$, even on planar graphs of maximum degree
    two.

    \item \textbf{Planar maximum independent set.}
    The same statement holds for estimating maximum independent set size
    to additive error $\epsilon n$ on planar graphs of maximum degree two.

    \item \textbf{Minimum spanning tree.}
    For connected graphs with integer edge weights in $[W]$, $W\ge2$, there is no
    algorithm that, for every $\epsilon>0$, obtains a $(1+\epsilon)$
    approximation to the MST weight with success at least $2/3$ using
    $2^{\lO{W/\epsilon}}n^{\lO{1}}$
    space.

    \item \textbf{Random-walk generation.}
    The algorithm outputs classical walks after its single pass.
    Its joint output law must be within total variation $1/10$ of
    independent samples, each starting at a uniform vertex and, conditional
    on its start, pointwise within a factor $1\pm1/10$ of the true
    length-$\ell$ walk law.  For sufficiently large absolute constants
    $c_1,c_2$, generating $c_1$ such walks requires
    $2^{\bOm{\sqrt{\ell}}}$ space when $\ell=\bT{\log^2 n}$, whereas
    generating $c_2 4^\ell$ such walks requires $2^{\bOm{\ell}}$ space
    when $\ell=\bT{\log n}$.
\end{enumerate}
\end{corollary}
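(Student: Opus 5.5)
The plan is to derive each item from the quantum random-order NGC bound (\cref{thm:quantum-random-order-ngc-main}) via the classical one-shot reductions of Assadi and Sundaresan. Each reduction must be checked to satisfy three requirements that matter for quantum memory. First, it runs the streaming algorithm exactly once, with no copying or rewinding of its state. Second, it maps a uniformly random order of NGC edges to a uniformly random order of the target problem's items. Third, it needs only $\bO{\log n}$ extra classical space and uses only the classical end-of-pass output. The common template is as follows. Fix a graph $G$ in the NGC promise and produce a transformed stream by an item-by-item, order-preserving map. This is a relabeling, a fixed reweighting, or the insertion of public items placed at uniformly random positions via public coins. Then feed the result to the hypothetical algorithm and threshold its output. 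Because the map acts on each edge independently of the position, a uniform permutation of $E(G)$ induces a uniform permutation of the target items. Any public items are interleaved uniformly, and a fixed-multiset argument like \cref{lem:random-partition-order} shows the global order is uniform.

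For items 1--3, the two NGC cases differ by $n/(4k)$ in the number of connected components. The same gap holds for maximum matching and for maximum independent set, since odd versus even cycle counts contribute differently, and the paths contribute identically in both cases. For matching and independent set, take $k$ odd using the congruence padding of \cref{thm:quantum-robust-ngc-main}. The graphs are already planar with maximum degree two. An additive $\eps n$ estimate with $\eps<1/(8k)$ therefore distinguishes the cases. Next, set $k=\bT{\log n}$ with $k\ge1600\ln n$ and $1/\eps=\bT{k}$. Then $2^{\lO{1/\eps}}n^{\lO{1}}=n^{\lO{1}}=\lO{n}$, which contradicts the $\bOm{n}$ bound along this linked sequence. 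For item 4, give NGC edges weight $1$ and add public weight-$W$ edges that connect components through a fixed spanning structure, inserted at random positions. The MST weight then depends on the component count, with gap factor $1+\bT{1/(Wk)}$. Taking $W/\eps=\bT{\log n}$ gives the claim.

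For item 5, follow the Assadi--Sundaresan argument directly. A length-$\ell$ walk from a uniform start, with $\ell$ comparable to $k$, reveals whether the component is a $k$-cycle or a $2k$-cycle by whether it returns after $k$ steps along one direction. With $k=\bT{\sqrt{\ell}}$ the first regime needs only constantly many walks. In the second regime, $k=\bT{\ell}$, and about $4^\ell$ walks are needed to see a straight traversal; that traversal is detected with constant probability. The classical tester applied to the output samples distinguishes the cases with success above $2/3$, allowing for the $1/10$ total-variation and $1\pm1/10$ pointwise slack. The resulting space bound is $2^{\bOm{k}}$ in $k$ expressed through $\ell$, matching the stated scales.

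I expect the main obstacle to be the order-preservation check. Any reduction that splits one edge into several coupled items, as in the Tutte-matrix caveat, breaks uniformity. So I will verify that every listed reduction emits at most one target item per NGC edge, with all other items public and placed uniformly. Where an original reduction needs an auxiliary structure, I will make it public and uniformly interleaved rather than position-dependent.
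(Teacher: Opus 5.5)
Items 1--3 and the second random-walk regime match the paper's argument. The genuine gap is in the first random-walk regime. The only detection mechanism you describe is a walk going straight around the cycle (``returns after $k$ steps along one direction'', with $\ell$ comparable to $k$). Such a walk has probability $2^{-\Theta(k)}$, which is exactly why the second regime needs exponentially many samples. So your claim that constantly many walks suffice when $k=\Theta(\sqrt{\ell})$ is asserted, not derived; your own test would detect essentially nothing there.

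The missing idea is diffusive coverage. Lift the walk on a cycle of length at most $2k$ to $\mathbb{Z}$. The exit time from $(-2k,2k)$ has mean $4k^2$, and reaching displacement $\pm2k$ traverses every edge of the cycle. By Markov's inequality, a walk of length $\ell=40k^2$ wraps around with probability at least $0.9$ from a cyclic start. After accounting for the uniform start and the $1\pm1/10$ pointwise slack, this is at least $0.405$ per walk. Any wrapped walk exhibits a cycle and its length, and a noisy path never produces a cycle. Eight walks therefore detect with probability above $0.98$, and the $1/10$ joint TV slack still leaves success above $2/3$. This is what gives $\Omega(n)=2^{\Omega(\sqrt{\ell})}$ at $\ell=\Theta(\log^2 n)$.

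For MST you take a genuinely different route. You add a generic heavy spanning structure on all vertices; a hub vertex joined to every vertex avoids parallel edges. The MST weight is then $n+(W-1)c$ for component count $c$, so you reduce from the NGC theorem on arbitrary instances. The paper instead applies the Assadi--Sundaresan augmentation to the labeled hard distribution, adding only $O(m)$ edges, and reduces from the hard-distribution robust bound. Your route works, but needs two corrections.

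First, the relative gap is about $(W-1)/(4(k+W-1))$, i.e.\ $\Theta(W/k)$ for $W\le k$, not $\Theta(1/(Wk))$. This is harmless at fixed $W=2$, which already suffices.

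Second, and more importantly, your $O(\log n)$ overhead claim fails if the $\Theta(n)$ heavy edges are inserted online into an already ordered stream using the algorithm's own randomness. Emitting them in uniformly random relative order forces the memory to record which ones remain, about $n$ bits, which swamps the $\Omega(n)$ bound. The insertion must instead happen inside the communication simulation, where public coins are free. The paper does this more simply: each added edge gets a fresh fair ownership coin, each party orders its whole augmented share uniformly, and the random-partition lemma makes the global order exactly uniform.
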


\begin{proof}
We check why each classical reduction remains valid with quantum memory.

For connected components no transformation is needed.  The two NGC cases
differ in their number of cycle components by $n/(4k)$, while the noisy
paths contribute equally.  Thus additive error smaller than $n/(8k)$
distinguishes the cases.  Taking $k=\bT{\log n}$ gives
$1/\epsilon=\bT{\log n}$; a hypothetical
$2^{\lO{1/\epsilon}}n^{\lO{1}}$-space algorithm would use $\lO{n}$ space and
contradict \cref{thm:quantum-random-order-ngc-main}.

For maximum matching, take $k$ odd.  Every $2k$-cycle has matching size
$k$, every $k$-cycle has matching size $(k-1)/2$, and every noisy
$(k-1)$-path has matching size $(k-1)/2$.  The two NGC cases therefore
differ by exactly $n/(4k)$ in maximum matching size.  The hard graphs are
disjoint unions of cycles and paths, hence planar with maximum degree two.
The same parameter choice gives the claimed lower bound.

For maximum independent set, again take $k$ odd.  A $2k$-cycle has
independence number $k$, a $k$-cycle has independence number $(k-1)/2$,
and a $(k-1)$-path on $k$ vertices has independence number $(k+1)/2$.
The noisy paths cancel between the two cases, and the cycle contribution
again differs by $n/(4k)$.  This yields the same reduction and the same
space lower bound.

For MST, use the structured hard distribution and its robust lower bound
from the proof of \cref{thm:quantum-robust-ngc-main}.  Work on an admissible
family with $k\equiv1\pmod3$, $k=\bT{\ln n}\ge1600\ln n$, and
$m=n/(4k)\ge2$.  Give original edges weight one.  In the public layered
labels, add weight-$W$ edges $\set{a^k_j,b^k_j}$ for $j\in[m]$;
add weight-one edges from $a^1_j$ to both $a^1_{m+j}$ and $b^1_{m+j}$;
and add weight-one edges $\set{a^1_j,b^1_{j+1}}$, with $j+1$ read cyclically
in $[m]$.  This is the augmentation of
Assadi--Sundaresan~\cite{AssadiSundaresan2023}.
For crossed parity the weight-one graph is connected, so its MST weighs
$n-1$.  For parallel parity the weight-one graph has $m$ components,
connected cyclically by the heavy edges, and its MST weighs
$n-m+W(m-1)$.

Fix $W=2$ and choose accuracy $1/(64k)\le1/4$.  The relative gap is
at least $1/(8k)$, eight times this accuracy, so the two symmetric
relative-error intervals are disjoint.  Moreover,
$W/(1/(64k))=128k=\bT{\ln n}$, and the proposed space bound is
$n^{\lO{1}}=\lO{n}$.  Fixed $W=2$ already suffices to contradict a
uniform bound of the stated form.  The restriction $W\ge2$ is necessary:
for $W=1$ the MST weight is always $n-1$.
For the simulation, assign every public added edge a fresh independent
robust ownership coin, and order each party's whole augmented share
uniformly.  By \cref{lem:random-partition-order}, the resulting stream
has exactly uniform edge order.  Transfer the complete quantum memory
once.  This contradicts the hard-distribution robust lower bound without
requiring an online augmentation of an already ordered stream.

For the random-walk bounds, take an admissible family
$k=\bT{\ln n}\ge1600\ln n$.  A uniform starting vertex is on a cycle with
probability $1/2$.  On a cycle of length at most $2k$, lift the simple walk
to the integer line.  First hitting displacement $2k$ or $-2k$ has expected
time $4k^2$ and traverses every edge of the cycle.  Since the expected hitting time is $4k^2$, Markov's inequality shows
that a length-$40k^2$ walk traverses the cycle with probability at least
$0.9$, conditional on starting on a cycle.  Pointwise approximation
leaves per-walk detection probability at least $0.405$.  With, for example,
eight independent walks, detection probability exceeds $0.98$, and joint
TV error $1/10$ leaves success above $2/3$.  The traversed edge set reveals
a cycle and its length; a noisy path cannot produce a false cycle.
Thus $\ell=40k^2=\bT{\log^2 n}$ gives
$\bOm{n}=2^{\bOm{\sqrt\ell}}$ space.

For the many-walk bound take $\ell=2k$.  Conditional on a cyclic start,
following one cyclic direction for all $\ell$ steps has probability at
least $2^{-\ell}$ and traverses the entire cycle.  After pointwise
approximation the unconditional detection probability is at least
$0.45\,2^{-\ell}$.  For $c_2 4^\ell$ independent samples the failure
probability is at most $\exp(-0.45c_2 2^\ell)$; joint TV error $1/10$
still permits success at least $2/3$ for large parameters.  Here
$\ell=\bT{\log n}$, so the NGC lower bound is
$\bOm{n}=2^{\bOm{\ell}}$.  Bob runs the walk generator once and
postprocesses its classical output.  Neither argument copies or branches
its quantum memory.  The first regime differs from the common
$\ell=\bT{\log n}$ wording in the cited Corollary 5.10: its constant-budget
proof, as here, uses $\ell=\bT{k^2}$.
\end{proof}

\begin{remark}[Why we do not automatically include matrix rank]
\label{rem:rank-caveat-main}
Assadi and Sundaresan also derive a matrix-rank consequence from maximum
matching using the Tutte matrix.  We do not claim that consequence here.
An undirected edge $\set{u,v}$ creates the pair of nonzero entries
\[
    T_{uv}=x_{uv},
    \qquad
    T_{vu}=-x_{uv}.
\]
A uniform random order of graph edges therefore induces a uniform random
order of \emph{pairs} of matrix entries, not obviously a uniform random
order of the individual nonzero entries.  This is an order-preservation
issue, independent of quantum information, and needs a separate argument
before the random-entry-order rank lower bound can be claimed.
\end{remark}

\end{document}